\numberwithin{equation}{section}
\newcommandx{\typo}[2][1=]{\todo[linecolor=red,backgroundcolor=red!25,bordercolor=red,#1]{#2}}
\newcommandx{\change}[2][1=]{\todo[linecolor=blue,backgroundcolor=blue!25,bordercolor=blue,#1]{#2}}
\newcommandx{\question}[2][1=]{\todo[linecolor=green,backgroundcolor=blue!25,bordercolor=green,#1]{#2}}
\newcommandx{\answer}[1]{\todo[linecolor=pink,backgroundcolor=pink!25,bordercolor=pink]{#1}}
\newcommandx{\unsure}[2][1=]{\todo[linecolor=green,backgroundcolor=green!25,bordercolor=green,#1]{#2}}
\newcommandx{\improve}[2][1=]{\todo[linecolor=violet,backgroundcolor=violet!25,bordercolor=violet,#1]{#2}}
\newcommandx{\thiswillnotshow}[2][1=]{\todo[disable,#1]{#2}}
\newtheorem{thm}{Theorem}[section]
\newtheorem{lemma}{Lemma}[section]
\newtheorem{cor}{Corollary}[section]
\newtheorem{ass}{Assumption}[section]
\newtheorem{iterate}{Iterate}[section]
\theoremstyle{remark}
\newtheorem{rem}{Remark}[section]
\newtheorem{obs}{Observation}[section]
\theoremstyle{plain}
\newtheorem*{claim*}{Claim}
\newtheorem{prop}{Proposition}[section]
\newenvironment{nalign}{
	\begin{equation}
		\begin{aligned}
		}{
		\end{aligned}
	\end{equation}
	\ignorespacesafterend
}
\theoremstyle{remark}
\newcommand{\dd}{\mathop{}\!\mathrm{d}}
\renewcommand{\O}{\mathcal{O}}
\newcommand{\M}{\mathcal{M}}
\renewcommand{\div}{\slashed{\mathrm{div}}\,}
\newcommand{\e}{\mathrm{e}}
\newcommand{\scrip}{\mathcal{I}^+}
\newcommand{\hplus}{\mathcal{H}^+}
\newcommand{\tr}{\mathrm{tr}}
\newcommand{\gs}{\slashed{g}}
\newcommand{\gsh}{\hat{\slashed{g}}}
\newcommand{\trx}{\tr\chi}
\newcommand{\trxb}{\tr\underline{\chi}}
\newcommand{\otrx}{\Omega\tr\chi}
\newcommand{\otrxb}{\Omega\tr\underline{\chi}}
\newcommand{\xh}{\hat\chi}
\newcommand{\xhb}{\underline{\hat{\chi}}}
\newcommand{\et}{\eta}
\newcommand{\etb}{\underline{\eta}}
\newcommand{\pv}{\partial_v}
\newcommand{\pu}{\partial_u}
\newcommand{\Xx}{X}
\newcommand{\Xb}{\underline{X}}
\renewcommand{\sl}{\mathring{\slashed{\nabla}}}
\newcommand{\Dl}{\mathring{\slashed{\Delta}}}
\renewcommand{\div}{\slashed{\mathrm{div}}}
\renewcommand{\(}{\left(}
\renewcommand{\)}{\right)}
  \newcommand{\V}{\mathcal{V}_{\mathrm{dec}}}
\newcommand{\Vk}{\mathcal{V}_{\mathrm{Kil}}}
\newcommand{\Vc}{\mathcal{V}_{\mathrm{com}}}
\newcommand{\lesVn}{\overset{\V^n}{\lesssim}}
\newcommand{\lesV}[1][\V]{\overset{#1}{\lesssim}}
\newcommand{\philc}{\check{\phi}_\ell}
\newcommand{\philcm}{\check{\phi}_{\ell,m}}
\newcommand{\apphi}[1][\ell,m]{\phi^{\mathrm{app}}_{#1}}
\newcommand{\appsi}[1][\ell,m]{\psi^{\mathrm{app}}_{#1}}
 \newcommand{\rc}{\mathring{R}}
\newcommand{\detgm}{\sqrt{-\det g_M}}
\newcommand{\detgs}{\sqrt{\det \slashed{g}}}
\newcommand{\detgsm}{\sqrt{\det \slashed{g}_M}}
\newcommand{\R}{\mathbb{R}}
\newcommand{\C}{\mathcal{C}}
\newcommand{\Cbar}{\underline{\mathcal{C}}}
\renewcommand{\S}{\mathcal{S}}
\newcommand{\N}{\mathbb{N}}
\newcommand{\D}{\mathcal{D}}
\crefname{lemma}{Lemma}{Lemmata}
\crefname{thm}{Theorem}{Theorems}
\crefname{prop}{Proposition}{Propositions}
\crefname{conj}{Conjecture}{Conjectures}
\crefname{ass}{Assumption}{Assumptions}
\crefname{cor}{Corollary}{Corollary}
\crefname{rem}{Remark}{Remarks}
\crefname{obs}{Observation}{Observations}
\crefname{defi}{Definition}{Definitions}
\crefname{iterate}{Iterate}{Iterates}
\crefname{equation}{}{}
\crefname{enumi}{}{}
\crefname{subfigure}{Figure}{Figure}
\Crefname{subfigure}{Figure}{Figure}
\title{Linear and nonlinear late-time tails 
	on\\ dynamical black hole spacetimes via time integrals} 
\author[1]{Dejan Gajic\thanks{dejan.gajic@uni-leipzig.de}}
\author[1,2]{Lionor  Kehrberger\thanks{kehrberger@mis.mpg.de}}
\affil[1]{Institute for Theoretical Physics, University of Leipzig, Br\"uderstraße 16,  04103 Leipzig, Germany}
\affil[2]{Max Planck Institute for Mathematics in the Sciences,  Inselstraße 22, 04103 Leipzig, Germany}
\date{November 26, 2025} 
\begin{document}
	\pagenumbering{roman}
	
	\maketitle 
	\begin{abstract}
	
We prove the global leading-order late-time asymptotic behaviour of solutions to inhomogeneous wave equations on dynamical black hole exterior backgrounds that settle down to Schwarzschild backgrounds with arbitrarily small decay rates.  In particular, we show that for non-spherically symmetric solutions arising from compactly supported initial data, the late-time decay deviates from Price's law---governing the decay for stationary black hole backgrounds---by exhibiting  slower time decay by exactly one additional power. 

Our proof is based around the observation that the emergence of late-time ``tails'', featuring inverse-polynomial decay in \emph{time}, is intimately connected to conformal irregularity properties in \emph{space} (towards future null infinity) of time integrals of the solutions. This relationship is exploited through a purely physical-space approach based around energy estimates, in which the dynamical wave operator is treated as a Schwarzschild wave operator with an inhomogeneous term. Going from almost-sharp decay to global asymptotics is achieved by exploiting this relation for the difference between the solution and a carefully chosen global tail function. 

 We further apply our method to several examples of nonlinear wave equations and comment on its robustness to more general settings, such as dynamical spacetimes converging to sub-extremal Kerr spacetimes and higher-dimensional wave operators with even or odd spacetime dimensions. 

%
		
\end{abstract}
\tableofcontents
\pagenumbering{arabic}
\section{Introduction}\label{sec:intro}
Isolated gravitational systems radiate away gravitational energy \cite{ligo16}. This gravitational radiation can be studied mathematically by considering dynamical, asymptotically flat solutions $(\mathcal{M},g)$ to the Einstein vacuum equations
\begin{equation}\label{eq:intro:Einstein}
	\mathrm{Ric}[g]=0,
\end{equation}
and analysing the temporal behaviour of limits of derivatives of the metric $g$ along outgoing null hypersurfaces, i.e.\ at \emph{future null infinity} $\mathcal{I}^+$. 
 A quantitative understanding of the time-decay of gravitational radiation is intimately connected to an understanding of the stability properties of spacetimes arising from small initial data perturbations of stationary solutions to \cref{eq:intro:Einstein}.
  For dynamical black hole solutions, this radiation is moreover strongly correlated with radiation entering the black hole interior through the event horizon $\mathcal{H}^+$, and consequently, with the singularity properties of dynamical black hole interiors and the Strong Cosmic Censorship conjecture.

In the present paper, we showcase how the precise late-time asymptotics of radiation, in the form of \emph{inverse-polynomial tails}, can be derived in a dynamical spacetime setting, starting from weak time-decay estimates. In place of \cref{eq:intro:Einstein}, we consider geometric wave equations of the form:
\begin{equation}\label{eq:intro:wave}
	\Box_g \phi =F
\end{equation}
with respect to a dynamical spacetime background $(\mathcal{M},g)$ describing a black hole exterior, with $g$ settling down to a Schwarzschild metric $g_M$ at rates consistent with the nonlinear stability results in \cite{dafermos_non-linear_2021,klsz20,klainerman_kerr_2021,gks24}.
Here, $F$ can either be taken to be a fixed inhomogeneity ($F=F(x)$), or a nonlinearity ($F=F(x,\phi,\partial\phi,\partial^2\phi)$).
 
 One may think of \cref{eq:intro:wave} as modelling the nonlinear wave equations satisfied by certain curvature components of solutions to \cref{eq:intro:Einstein} (these equations reduce to the decoupled Teukolsky or Regge--Wheeler wave equations when linearising around Schwarzschild spacetimes). In the setting of \cref{eq:intro:Einstein}, these wave equations, in turn, govern the decay properties of the difference $g-g_M$, which we here treat as an \emph{assumption} in studying \cref{eq:intro:wave}.
\begin{figure}[htb]
	\includegraphics[width=0.35\textwidth]{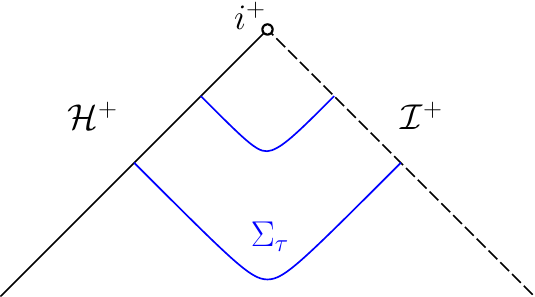}
	\caption{The spacetime $(\mathcal{M},g)$ with its foliation by level sets of $\tau$.}\label{fig:intro}
\end{figure}

The main result of the present paper can be summarised in the following informal theorem (see \cref{thm:main} in \cref{sec:main} for a precise version, and for precise formulations of the assumptions):
\newcommand{\integral}{\mathfrak{I}}
\begin{thm}[Informal version of the main theorem (\cref{thm:main})]
\label{thm:mainrough}
Assume that $|g-g_M|\lesssim (\tau+r)^{-1}\tau^{-\delta}$ and $|r^4F|\lesssim \tau^{-2-\delta}$ with $\delta>0$, $\tau$ a hyperboloidal time function with level sets $\Sigma_{\tau}$ and $r$ a radial function, and make similar assumptions on the $1/r$-expansions of $g-g_M$ and $F$. Let $\phi$ be a solution to \cref{eq:intro:wave} arising from smooth compactly supported initial data on $\Sigma_{\tau_0}$ (see~\cref{fig:intro}) and satisfying weak, global time-decay estimates.

Then $\phi$ has the following leading-order global asymptotic behaviour as $\tau\to \infty$:

\begin{align}
\label{eq:latetimerough1}
\phi\sim &\: \integral_0[\phi]\partial_{\tau}	(\tau^{-1}(\tau+r)^{-1})+\sum_{m=-1}^1\integral_{1 m}[\phi]Y_{1, m}w_1(r) \tau^{-2}(\tau+r)^{-2},\\
\label{eq:latetimerough2}
\phi_{\geq \ell}\sim &\: \sum_{m=-\ell}^{\ell} \integral_{\ell, m}[\phi]Y_{\ell, m}w_{\ell}(r) \tau^{-1-\ell}(\tau+r)^{-1-\ell}\qquad \text{for }1\leq \ell<2+\delta,\qquad \text{where:}
\end{align}
\begin{itemize}[leftmargin=*, labelindent=0pt]
\item 	$\phi_{\geq\ell}$ projects $\phi$ to all spherical harmonics $Y_{\ell',m}$ (defined with respect to the round spheres at $\mathcal{I}^+$) with $\ell'\geq \ell$.
\item The spatial profile is given by $w_{\ell}(r)=c_{\ell}P_{\ell}(M^{-1}(r-M))= r^{\ell}+O(r^{-\ell-1})$, where $P_\ell$ is a Legendre function.
\item In the case $F=0$, the constant $\integral_0[\phi]$ can be expressed as follows: 
\begin{align}\label{eq:intro:I0}
	4\pi \integral_0[\phi]:=&\: \frac{M}{2}\int_{\mathcal{H}^+\cap\Sigma_{\tau_0}}	\phi\dd\slashed{\mu}+\frac{M}{2}\int_{\Sigma_{\tau_0}} \mathbf{n}_{\Sigma_{\tau_0}}(\phi)\dd\mu_{\Sigma_{\tau_0}}+\frac{M}{2}\int_{\mathcal{H}^+} \tr \chi\cdot  \phi\dd\slashed{\mu}\dd v+\int_{\mathcal{I}^+}(\mathfrak{m}-M)r\phi\; r^{-2}\dd\slashed{\mu}\dd u\\
	=&\: \int_{\scrip} \mathfrak{m} r\phi  r^{-2} \dd \slashed{\mu}\dd u.\label{eq:intro:I0b}
\end{align}

%
Here, $\mathbf{n}_{\Sigma_{\tau_0}}$ and $d\mu_{\Sigma_{\tau_0}}$ denote the future-directed normal vector field to, and the induced volume form of $\Sigma_{\tau_0}$, $\slashed{g}$ is the induced metric on spheres foliating $\mathcal{H}^+$ and $\mathcal{I}^+$ with corresponding induced volume form $\dd\slashed{\mu}$, which are obtained by flowing along null generators with parameters $v$ and $u$, respectively. The functions $\tr \chi|_{\mathcal{H}^+}: \mathcal{H}^+\to \R$ and $\mathfrak{m}: \mathcal{I}^+\to \R$ denote the expansion of the null generators of $\mathcal{H}^+$ and the integral in $u$ of the Bondi news function, respectively.\footnote{The spherical average of $\mathfrak{m}$ is the Bondi mass function.}
\item For $\ell>0$, the constants $\integral_{\ell, m}[\phi]$ can be expressed as integrals over $r\phi|_{\mathcal{I}^+}$ and its derivatives, multiplied by functions that vanish when $g=g_M$ and $F\equiv 0$; see \cref{thm:main} for their precise forms. Both the integrals $\integral_0[\phi]$ and $\integral_{\ell,m}[\phi]$ are moreover non-vanishing for a generic subset of initial data.
\end{itemize}
Analogues of \cref{eq:latetimerough1} and \cref{eq:latetimerough2} hold in the case where $F$ is a nonlinearity; see \cref{thm:main:phi3,thm:main:phi4,thm:main:puphipvphi}.
\end{thm}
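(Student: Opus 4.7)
The plan is to use the time-integral framework announced in the introduction: the leading late-time behaviour of $\phi$ is encoded in the spatial profile at $\scrip$ of the formal time integral $\Phi$ of $\phi$, defined so that $\partial_\tau \Phi = \phi$ and $\Phi \to 0$ as $\tau \to \infty$. First I would rewrite \cref{eq:intro:wave} as a Schwarzschild wave equation with a modified source,
\[\Box_{g_M}\phi = F - (\Box_g - \Box_{g_M})\phi =: \widetilde{F},\]
treating the metric perturbation as an inhomogeneity. By the assumed decay $|g - g_M|\lesssim (\tau + r)^{-1}\tau^{-\delta}$ and the weak a priori decay of $\phi$, the new source satisfies $|r^4 \widetilde{F}| \lesssim \tau^{-2-\delta}$ up to subleading corrections. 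The dynamical problem thereby reduces to an inhomogeneous Schwarzschild problem whose formal time integral $\Phi$ solves a static Schwarzschild wave equation sourced by the time integral of $\widetilde{F}$.

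Second, in the $\ell$-th spherical-harmonic sector the equation $\Box_{g_M} W_\ell = 0$ admits a unique static solution that is smooth across $\hplus$ and grows as $r^\ell$ at infinity; one recognises this as the Legendre profile $w_\ell(r) = c_\ell P_\ell(M^{-1}(r - M))$ of \cref{eq:latetimerough1,eq:latetimerough2}. This is exactly the profile picked up by $\Phi$ at $\scrip$, and the constants $\integral_{\ell, m}[\phi]$ can be read off from the leading coefficient of the $1/r$-expansion of $\Phi$ near $\scrip$. For $\ell = 0$ this reproduces the flux integral $\int_{\scrip} \mathfrak{m}\, r\phi\, r^{-2}\dd\slashed{\mu}\dd u$ of \cref{eq:intro:I0b}; for $\ell \geq 1$ one obtains integrals of $r\phi|_{\scrip}$ and its derivatives weighted by dynamical corrections coming from $g - g_M$ and $F$, which vanish in the pure Schwarzschild case $g = g_M$, $F \equiv 0$.

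Third, passing from these formal identifications to sharp global asymptotics is a bootstrap argument. For each admissible $\ell$ I would define the candidate tail
\[\phi_\ell^{\mathrm{tail}}(\tau, r, \omega) := \sum_{m = -\ell}^{\ell} \integral_{\ell, m}[\phi]\, Y_{\ell, m}(\omega)\, w_\ell(r)\, \tau^{-1-\ell}(\tau + r)^{-1-\ell},\]
with the modified profile $\integral_0[\phi]\,\partial_\tau(\tau^{-1}(\tau + r)^{-1})$ substituted for $\ell = 0$, since the Schwarzschild background alone produces no slower-than-Price tail in this mode and the leading correction originates entirely from the dynamical perturbation. Computing $\Box_{g_M} \phi_\ell^{\mathrm{tail}}$, one shows that $\phi_{\geq \ell} - \phi_\ell^{\mathrm{tail}}$ satisfies an inhomogeneous Schwarzschild equation whose effective source decays at least one full power faster in $\tau$ along $\scrip$; the choice of $\integral_{\ell, m}[\phi]$ is exactly what cancels the leading near-$\scrip$ obstruction of the time-integrated equation. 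Running $r^p$-weighted hierarchies in the Dafermos--Rodnianski spirit on this residual, together with commutation by angular derivatives to access higher $\ell$, then yields the claimed pointwise asymptotics inductively in $\ell < 2+\delta$.

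Finally, the equivalence of \cref{eq:intro:I0} and \cref{eq:intro:I0b} would be established by applying the divergence theorem to a current built from $\phi$ on the region bounded by $\Sigma_{\tau_0}$, $\hplus$, and $\scrip$: the $\scrip$-flux involving the Bondi mass aspect $\mathfrak{m}$ (itself a time integral of the news function), after integration by parts against the Schwarzschild wave operator, reproduces the initial-data and horizon terms in \cref{eq:intro:I0}. The hardest step I expect is the precise tracking of the $1/r$-expansion of $(\Box_g - \Box_{g_M})\phi$ near $\scrip$: because $\delta > 0$ can be arbitrarily small, the formal time integral of this error term is only borderline convergent, so extracting the correct dynamical correction in $\integral_{\ell, m}[\phi]$ demands a careful physical-space analysis of how the slowly-decaying non-Schwarzschild part of the metric feeds into the late-time profile. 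This is exactly the kind of near-$\scrip$ asymptotic extraction for which a Fourier-based approach is ill-suited in a genuinely dynamical spacetime, making the physical-space framework essential.
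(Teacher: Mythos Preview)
Your outline follows the paper's strategy (inhomogeneous Schwarzschild reformulation, time integrals, subtraction of a global tail, $r^p$-hierarchies) and would lead to the right result, but two points in your mechanism need sharpening. First, a single time integral $\Phi$ is not enough for $\ell\geq 1$: the paper iterates $T^{-1}$ a total of $\ell+1$ times, and $\integral_{\ell,m}[\phi]$ is read off as the coefficient of a $\log r/r^{\ell}$ term in the \emph{initial data} of $T^{-\ell-1}\phi_{\ell,m}$ along $\Sigma_{\tau_0}$ (computed by integrating the twisted operator $\check{\mathcal{L}}_\ell f=\partial_r\bigl(r^2Dw_\ell^2\,\partial_r(w_\ell^{-1}f)\bigr)$ along the slice), not from a regular $1/r$-expansion of one $\Phi$ at $\scrip$. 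Second, the gain from subtracting $\phi_\ell^{\mathrm{tail}}$ is \emph{not} that the source for the difference decays a full $\tau$-power faster; indeed $\Box_{g_M}\phi^{\mathrm{app}}$ only gains a factor $(\tau+r)^{-1}r^{-2}$ relative to $\phi^{\mathrm{app}}$, which near $\scrip$ is no better in $\tau$ than the original $\widetilde F$. The real gain is that the subtraction kills that leading logarithm in the initial data of the next time integral, so the $r^{p=2}$-flux of $T^{-\ell-1}(\phi_\ell-\phi_\ell^{\mathrm{tail}})$ becomes finite and one can extract one more power of $\tau$-decay. If you try to close purely via improved source decay, the bootstrap will stall at this step.
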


\begin{rem}[Assumptions]
	The assumptions on the background metric $g$ are stated precisely in \cref{ass:dyn:pol}.
	In contrast, the weak decay estimates that we assume on $\phi$ can easily be proved with standard methods. We keep them as part of the statement of the theorem to emphasise that our method takes as an \emph{input} a weak decay estimate and produces as an \emph{output} precise asymptotics. The precise assumptions on $F$ are stated in \cref{ass:main}.
\end{rem}

\begin{rem}[Late-time tails]
	 \cref{thm:mainrough} implies, in particular, late-time asymptotics of the Friedlander radiation field $r\phi|_{\mathcal{I}^+}$ that are governed by the following \emph{late-time tails}:
	\begin{align}\label{eq:intro:asymptoticatScrip}
		r\phi|_{\mathcal{I}^+}\sim -\integral_0[\phi]\tau^{-2}+\sum_{m=-1}^1\integral_{1 m}[\phi]Y_{1, m} \tau^{-2},&&
		r\phi_{\geq \ell}|_{\mathcal{I}^+}\sim  \sum_{m=-\ell}^{\ell} \integral_{\ell, m}[\phi]Y_{\ell, m} \tau^{-1-\ell} \quad \text{for }1\leq  \ell<2+\delta.
	\end{align}
On the other hand, the late-time asymptotics of $\phi$ along constant $\{r=r_0\}$ with $r_0\geq 2M$ are governed by:
	\begin{align}
	\label{eq:intro:asymptoticatr0}
		\phi|_{r=r_0}\sim   -2\integral_0[\phi]\tau^{-3},&&
		\phi_{\geq \ell}|_{r=r_0}\sim  \sum_{m=-\ell}^{\ell} \integral_{\ell, m}[\phi]Y_{\ell, m}w_{\ell}(r_0) \tau^{-2-2\ell}\quad \text{for } 1\leq  \ell<2+\delta
	\end{align}
	and are correlated with the asymptotics along $\mathcal{I}^+$ via the coefficients $\integral_0[\phi]$ and $\integral_{\ell, m}[\phi]$.
	
	Observe that in the case $\phi_{\geq \ell}$ with $\ell \geq 1$, the late-time tails for the dynamical setting $g\neq g_M$ decay exactly \emph{one power slower} than in the stationary setting $g= g_M$ (where they are governed by ``Price's law''); see \cite{hintz_sharp_2022, angelopoulos_prices_2023} and the discussion in \cref{sec:prevresults}.\footnote{Note that when $g=g_M$, the coefficient $\integral_0[\phi]$ in \eqref{eq:latetimerough1} agrees with the time-inverted Newman--Penrose constant introduced in \cite{angelopoulos_late-time_2018}, where it is denoted ``$I_0^{(1)}[\psi]$'', up to a factor 2; see also \cite{angelopoulos_late-time_2021}[Corollary 9.12]. The factor 2 originates from a different choice of time functions: the time functions in \cite{angelopoulos_late-time_2018,angelopoulos_late-time_2021} can be obtained from the time functions $\tau$ of the present paper by rescaling $\tau\mapsto 2\tau$. }
	
	\end{rem}
	
	\begin{rem}[Initial vs dynamical contributions in the coefficients ${\integral_{0}[\phi]}$ and ${\integral_{\ell,m}[\phi]}$]
\label{rem:initialvsdyn}
Note that the expression \cref{eq:intro:I0} for $\integral_0[\phi]$ contains an \emph{initial contribution}, the integrals over $\Sigma_{\tau_0}$ and $\mathcal{H}^+\cap \Sigma_{\tau_0}$, and a \emph{dynamical contribution}, namely the integrals over $\mathcal{H}^+$ and $\mathcal{I}^+$. If $g=g_M$, then the dynamical contribution vanishes, because $\mathfrak{m}\equiv M$ and $\tr\chi|_{\mathcal{H}^+}=0$. For $|g-g_M|$ suitably small, the initial contribution will generically be non-zero and dominate the dynamical contribution.
While cleaner, the alternative representation \cref{eq:intro:I0b}, which follows from an application of the divergence theorem, obfuscates this distinction between initial data and dynamical contributions.

In contrast, the constants $\integral_{\ell, m}[\phi]$ with $\ell\geq 1$ are \emph{purely dynamical} and vanish in the case $g=g_M$, which is why the tails in \cref{eq:intro:asymptoticatScrip,eq:intro:asymptoticatr0} decay one power slower when compared to Price's law. See \cref{eq:main:thm:integralexpression} and \cref{eq:main:thm:integral:l=1} for precise expressions for $\integral_{\ell, m}[\phi]$.
\end{rem}

	\begin{rem}[Mode coupling and the restriction to $\ell<2+\delta$]\label{rem:intro:modecoupling}
The expressions for $\integral_0[\phi]$ and $\integral_{\ell,m}[\phi]$ illustrate the presence of mode coupling, which is absent in the case $g=g_M$. 
For instance, for $\ell>0$, the expression for $\integral_{\ell,m}[\phi]$ features terms that are of the schematic form:
\begin{equation}
	\int_{\scrip} \ell [``(g-g_M)\cdot \partial\phi\cdot \partial Y_{\ell,m} \text{''}]^{(\ell+1)},
\end{equation}
where $[...]^{(\ell+1)}$ denotes the $r^{-\ell-1}$-coefficient in the $1/r$-expansion of $[...]$ towards $\scrip$. 
Since by \cref{eq:intro:asymptoticatScrip}, $r\phi|_{\scrip}\sim \tau^{-2}$, which can easily be shown to imply that $[\partial \phi]^{(\ell)}\sim \tau^{-3+\ell}$, we thus have that $ [``(g-g_M)\cdot \partial\phi\cdot \partial Y_{\ell,m} \text{''}]^{(\ell+1)}\sim \tau^{-3-\delta+\ell}$. This is integrable only for $\ell<2+\delta$.

We note that such mode-coupling would be suppressed  in strength in a quasilinear setting where $g$ depends on $\phi$, as we would then be able to improve the time-decay of $g-g_M$ as a consequence of improved decay for $\phi$ and, in particular, of higher $\ell$-modes of $\phi$. See also \cref{thm:main:phi3} along with \cref{rem:main:mode2}.

%

This mode coupling may be contrasted with the mode coupling with respect to Boyer--Lindquist spheres for $g=g_{M,a}$, with $g_{M,a}$ the rotating sub-extremal Kerr metric, derived in \cite{angelopoulos_late-time_2021}, where high-$\ell$ modes are coupled with higher-order \emph{time derivatives} of low-$\ell$ modes. 
\end{rem}
\begin{rem}[Generalisations]
\label{rem:introgennonlin}
As we will discuss further in \cref{sec:intro:gen}, while \cref{thm:mainrough} treats a specific example, our general method remains applicable also when:
\begin{itemize}
\item $g_M$ is replaced with $g_{M,a}$, the rotating, sub-extremal Kerr metric (in view of \cite{angelopoulos_late-time_2021}), 
\item \cref{eq:intro:wave} is replaced with the quasilinear wave equation $\Box_{g(\phi,\partial \phi)} \phi=F(x,\phi,\partial \phi)$, assuming suitable conditions on the nonlinearities and a priori decay estimates on $\phi$ and its derivatives, consistent with small-data global existence, see for example \cite{lt18, lindblad_weak_2022, dafermos_quasilinear_2022}. See the discussion below  \cref{thm:main:F} and the nonlinear example equations in  \cref{thm:main:phi3,thm:main:phi4,thm:main:puphipvphi}. 
\item $g_M$ is replaced by a higher-dimensional Schwarzschild or Minkowski spacetime; see also \cref{sec:prevresultsinvsq}.
\end{itemize}
In the interest of exposition, we will not consider the above extensions in great detail, as the key relevant new phenomena are already present in the study of \cref{eq:intro:wave}.
\end{rem}

\begin{rem}[Alternative method: \cite{luk_late_2024}]
	In \cite{luk_late_2024}, a method for proving the precise late-time asymptotics for a very general class of equations and metrics (excluding, however, the case of even space dimensions) was developed. This method has been announced to also apply to the setting of \cref{thm:mainrough}, see \cite{luk_late_2024}[Example 3.9].  We compare and contrast the method of \cite{luk_late_2024} with the method of the present paper in \cref{sec:difflukoh}.
\end{rem}

\subsection{Previous results}
\label{sec:prevresults}
In this section, we provide an overview of previous results in the literature pertaining to the analysis of precise late-time asymptotics and late-time tails of waves on asymptotically flat black hole backgrounds. The study of late-time tails has a rich history in the physics literature, starting from the heuristic work of Price in \cite{price_nonspherical_1972}, which featured decay rates that have since been dubbed ``Price's law''. The mathematical study of late-time tails was preceded by a plethora of works in the mathematics literature providing robust, uniform \emph{upper bound} decay estimates in linear and stationary, but also dynamical and nonlinear settings. These results play a fundamental role in the mathematically rigorous derivations of late-time tails of the present paper, and we will refer to them in the main proofs. 
In contrast, in the literature overview below, we will focus on recent developments on mathematically rigorous proofs of the existence and nature of late-time tails. We refer to the introductions of \cite{angelopoulos_late-time_2018, angelopoulos_late-time_2021,luk_late_2024} for a more extensive overview of related literature.

\subsubsection{Linear wave equations on stationary backgrounds}
\label{sec:prevresultslinwave}
The existence of late-time tails in the leading-order late-time asymptotics of solutions to the linear wave equation on sub-extremal Reissner--Nordstr\"om black hole backgrounds was first rigorously proved in \cite{angelopoulos_vector_2018, angelopoulos_late-time_2018}. In these works, precise late-time asymptotics were obtained using physical-space-based methods by exploiting the existence of conserved Newman--Penrose charges\footnote{In the present paper, we do not make use of such conservation laws!} along $\mathcal{I}^+$ \cite{np65,np68} and introducing the notion of \emph{time-inverted} Newman--Penrose charges. Let us also note the earlier work of Luk--Oh \cite{luk_proof_2017}, who derived lower bounds for $L^2$-integrals along $\mathcal{H}^+$ that are consistent with Price's law and already feature the constant $\integral_0[\phi]$ in the form \cref{eq:intro:I0b}.

The existence of late-time tails on sub-extremal Kerr backgrounds was proved in \cite{hintz_sharp_2022,angelopoulos_late-time_2021}, with the former employing spectral methods centred around the regularity of resolvent operators at the zero frequency and the latter extending the physical-space based methods from \cite{angelopoulos_vector_2018, angelopoulos_late-time_2018}. The full $\ell$-dependence in Price's law was obtained in Schwarzschild in \cite{hintz_sharp_2022, angelopoulos_prices_2023} and explored in the Kerr setting in \cite{angelopoulos_late-time_2021}. The methods of \cite{angelopoulos_vector_2018, angelopoulos_late-time_2018,angelopoulos_late-time_2021} have also been extended to the setting of the non-zero spin Teukolsky wave equations on Schwarzschild \cite{mazh22} and Kerr \cite{ma_sharp_2023} (after applying also \cite{shlcosta23} outside the very slowly-rotating regime). See also \cite{millet_optimal_2023} for a derivation of late-time tails for Teukolsky wave equations using spectral methods.

\subsubsection{Conformal regularity vs. late-time tails}
\label{sec:prevresultsnopeeling}
In the works mentioned in \cref{sec:prevresultslinwave}, it is assumed that the initial data for $r\phi$ are conformally regular  at $\mathcal{I}^+$ (i.e.~they admit $1/r$-expansions to sufficiently high order) and are possibly even more rapidly decaying (e.g.~compactly supported, as in our \cref{thm:mainrough}). In the series of works \cite{kehrberger_case_2022, kehrberger_case_2021, kehrberger_case_2022-1,kehrberger_case_2024-1, kehrberger_case_2024}, it was shown that initial data modelling the gravitational radiation arising from $N$ infalling masses with no incoming radiation cannot be expected to be conformally smooth. At the level of metric curvature components, this means that the spacetime \emph{peeling property} fails. 

In \cite{gajic_relation_2022}, we observed that for sufficiently conformally irregular data, the late-time asymptotic behaviour is entirely determined by the conformal irregularity of the initial data, i.e.~tails originating from the conformal irregularity of the initial data dominate the tails satisfying Price's law.

In particular, we put forth heuristics explaining why, for solutions to \cref{eq:intro:Einstein} settling down to a sub-extremal Kerr, the late-time asymptotics of the Newman--Penrose scalar $r\Psi^{[4]}|_{\mathcal{I}^+}$ at future null infinity should feature a $u^{-3}$-tail in the case of infalling masses following hyperbolic orbits in the infinite past, and a $u^{-11/3}$-tail in the case of parabolic orbits. 
This strongly deviates from the linear $u^{-6}$-tail in the case of compactly supported initial data derived in \cite{mazh22,ma_sharp_2023}, and also from the expected $u^{-5}$-tail in the nonlinear theory.

At first sight, the existence of late-time tails coming from conformally irregular initial data may seem more straightforward than the existence of late-time tails for compactly supported initial data.\footnote{Indeed, already in the setting of the wave equation on Minkowski in 3+1 dimensions, conformally irregular data lead to tails.} 
However, even for compactly supported initial data, the global \emph{time integral} of the corresponding solutions to $\Box_g\phi=0$ generically has conformally irregular initial data. This irregularity then determines the global leading-order asymptotic profile for the time integral, and thus, after application of a time derivative, also of the original solution. For this reason, the ideas in \cite{gajic_relation_2022} still play an important role in the present paper.

\subsubsection{Dynamical backgrounds and nonlinear wave equations}
\label{sec:intrononinear}
 In the context of \textit{linear} wave equations on \textit{dynamical} spacetime backgrounds, a deviation of Price's law consistent with \cref{thm:mainrough} was first suggested by the numerics in \cite{gupipu94}. This deviation was, however, only noticed, clarified and revisited, both numerically and heuristically, in \cite{biro10}. Similar deviations have been observed numerically in various \textit{nonlinear} models; see for example \cite{bcr07a,bcr07b,okuzumi}.
 
 The role of dynamical backgrounds and nonlinearities in the decay rates of late-time tails was first understood mathematically by Luk--Oh in the monumental work \cite{luk_late_2024} (see, however, also their previous \cite{luk_quantitative_2015,luk_strong_2019}, where the constant $\integral_0[\phi]$ already plays an important role for deducing \textit{lower bounds}), where a very general procedure for deriving late-time tails was introduced, starting from weak, a priori decay assumptions and elliptic-type properties of the stationary linear wave operator without time derivatives, i.e., in the context of \cref{eq:intro:wave}, $\square_{g_M}$ without $\partial_{\tau}$ derivatives. The work \cite{luk_late_2024} also covers the case of higher even-dimensional spacetime dimensions. In \cref{sec:difflukoh}, we list some of the key differences and similarities between the methods of \cite{luk_late_2024} and the methods introduced in the present paper. 
 
 In the context of wave equations on stationary spacetimes with power nonlinearities, late-time tails have been obtained in \cite{looi_asymptotic_2024} by extending the spectral methods developed in \cite{hintz_sharp_2022}.

 \subsubsection{The Einstein equations}
\label{sec:introeinstein}
Late-time tails in the context of the \textit{spherically symmetric} Einstein--scalar field system were studied numerically in \cite{gupipu94} and were in agreement with Price's law. This led to the incorrect belief that Price's law remains valid more generally in nonlinear and dynamical settings. It was then pointed out in \cite{bichm09} that the agreement with Price's law is a special property of the $3+1$-dimensional spherical symmetric setting. See also Remark \ref{rem:initialvsdyn}. The existence and nature of late-time tails for the Einstein--scalar field system was mathematically proved in \cite{gautam_late-time_2024}; see also the upcoming \cite{rutgprice} which, in particular, produces a class of metrics satisfying the assumptions of \cref{thm:mainrough}.

In the context of the Einstein equations outside of spherical symmetry, the current state of the art are upper bound decay estimates for metric components and their derivatives as part of nonlinear stability proofs \cite{klsz20, dafermos_non-linear_2021,klainerman_kerr_2021, gks24}. In \cref{app:motivation}, we provide an outlook on how stability results of \cite{dafermos_non-linear_2021} may be extended to obtain metrics $g$ that satisfy the assumptions in this paper. The generalisation of our methods to \cref{eq:intro:Einstein} will be the content of future work.

\subsubsection{Inverse-square potentials, odd spacetime dimensions and extremal black holes}
\label{sec:prevresultsinvsq}
The decay rates for solutions to (homogeneous) wave equations on Schwarzschild and Kerr spacetimes are closely connected to the existence of Newman--Penrose conserved charges at future null infinity, which in turn is closely related to the validity of the strong Huygens principle in even spacetime dimensions. By considering wave equations with asymptotically inverse-square potentials or odd spacetime dimensions, these conservation laws are broken. In \cite{gajic_late-time_2023}, a more robust, physical-space-based method was developed for deriving precise late-time asymptotics that applies also to these more general settings, relying on time integrals, like \cite{angelopoulos_vector_2018, angelopoulos_late-time_2018,angelopoulos_late-time_2021}, but circumventing the use of conserved charges at future null infinity and instead applying the subtraction of global tail functions. The method introduced in \cite{gajic_late-time_2023} plays an important role in the present paper, as outlined in \cref{sec:sketchproof} below. See also \cite{hin23} for an alternative derivation of late-time tails in a related, general setting using spectral methods.

Late-time asymptotics become even richer in settings where the absence of conservation laws and the phenomenon of superradiance are coupled. This occurs for wave equations on extremal Kerr black holes and for charged scalar field equations on (extremal or sub-extremal) Reissner--Nordstr\"om backgrounds. The existence and nature of late-time tails was proved in those settings in \cite{gajic_azimuthal_2023, gaj25a, gaj25b} using extensions of the method introduced in \cite{gajic_late-time_2023}, and it was shown that the tails are oscillating. See also \cite{gajic_late-time_2024} for a very different approach to derive late-time tails of a similar nature on Minkowski. We note that in the setting of uncharged scalar fields on extremal Reissner--Nordst\"om, there is no superradiance,  and conservation laws associated to the Aretakis constants \cite{are15} are present. In this setting, late-time tails were proved in \cite{aag20}. 

\subsection{A sketch of the method for $\Box_{g_M}\phi=F$}
\label{sec:sketchproof}
In this section, we sketch our general method of deriving precise late-time asymptotics and explain how it applies, in particular, to the proof of \cref{thm:mainrough}.
For the sketch, we will work with the following wave equation on Schwarzschild:\begin{equation}\label{eq:intro:inhomwave}
	\Box_{g_M} \phi =F[x,\phi,\partial \phi, \partial^2\phi],
\end{equation}
where $F$ may depend, linearly or nonlinearly, on $\phi$.
Notice that \cref{eq:intro:inhomwave} covers the case of $\Box_g\phi=0$ via:
\begin{align}\label{eq:intro:F:G}
	F[\phi,\partial \phi, \partial^2\phi]=(\Box_{g_M}-\Box_g)\phi=-\frac{1}{\sqrt{-\det g_M}}\partial_{\alpha}(G^{\alpha \beta} \partial_{\beta}\phi),\quad 
	G^{\alpha \beta}:=\:g^{\alpha \beta}\sqrt{-\det g}-g_M^{\alpha \beta}\sqrt{-\det g_M}.
\end{align}
Writing $\Box_g\phi=0$ in the form \cref{eq:intro:inhomwave} is not suitable for deriving a preliminary decay (or stability) estimate. However, since our method is to be applied only \emph{after} such an estimate has been obtained, we may freely treat the problem essentially as an inhomogeneous problem on Schwarzschild.

If $F$ is a genuine inhomogeneity (not depending on $\phi$), then our method applies provided $F$ satisfies:\footnote{We stress that it is essential that our method allows $F$ to be non-compactly supported, as the $F$ appearing in non-linear problems modelling the Einstein equations will not be compactly supported in general. This stands in sharp contrast with classical studies in the physical literature of late-time tails from a frequency-space perspective, see for example \cite{leav86}, where the compact support of $F$ is crucial and is related to the fact that standard resolvent operators must be conjugated with cut-off functions in order to meromorphically continue them to the lower complex half-plane. In these studies, late-time tails are connected to the behaviour of the cut-off resolvent near a branch cut of the meromorphic continuation of the cut-off resolvent; see also the discussion in \cite{gajic_quasinormal_2024}[Sections 3.3--3.5] for more details. Such methods are therefore \emph{not} suitable for a late-time tail analysis in nonlinear settings.}

\begin{enumerate}[label=(\Alph*)]
	\item $r$-weighted polynomial time-decay estimates, which are moreover preserved  under the action of $\{r\partial_r|_{\Sigma},\sl\}$.\label{A}
	\item Additional regularity w.r.t.~$\tau T$, i.e.~the decay estimates for $F$ in (A) are preserved when acting with the weighted vector field $\tau {T}$, $T$ denoting a Killing vector field on Schwarzschild generating time translation symmetry.\label{B}
\end{enumerate}
Here, $\sl$ denotes the covariant derivative on the unit sphere, and $r\partial_r|_{\Sigma}$ the radial derivative tangent to $\Sigma_{\tau}$; in particular, $r\partial_r|_{\Sigma}$ equals $rD^{-1}\pv$ close to $\scrip$ in double null coordinates $u, \, v$, for $D=1-2M/r$.

If $F$ depends on $\phi$, then \cref{A} and \cref{B} would follow from analogous decay estimates for $\phi$.\footnote{The reason for stating \cref{A} and \cref{B} separately is that we may think of \cref{A} for $\phi$ as a quantitative stability estimate, whereas \cref{B} contains additional information about higher-order time derivatives typically not necessary for proving stability.} 

We will first give a sketch of our method for a general inhomogeneity $F$ not depending on $\phi$, and discuss at the end the additional iterations that are necessary if $F$ depends on $\phi$.
\subsubsection{Step 0: Decay in time via integrated energy estimates (the $r^p$-hierarchy)}
In this step, we recall the $r^p$-weighted energy method of \cite{dafermos_new_2010} to derive energy decay. 
Let, for $\rc$ sufficiently large, $\mathcal{C}_{\tau}=\{u=u_0+\tau\}\cap\{r\geq\rc\}$ and $\underline{\mathcal{C}}_{\tau}=\{v=u_0+r_*(\mathring{R})+\tau=v_0+\tau\}\cap\{r\leq R\}$ be intersecting null hypersurfaces in Schwarzschild. We will moreover assume that $\tau=u$ for $r\geq\rc= r(u_0,v_0)$.
For the sketch of the $r^p$-weighted energy method, we restrict to spherically symmetric~$\phi$.\footnote{This simplification is only made in this sketch, so that we do not have to feature a loss of derivatives in integrated decay estimates due to null geodesics trapped at the photon sphere; see for example \cite{sbierski_characterisation_2015}. } Let $\dd\sigma$ denote the volume form on $S^2$, the unit round sphere, and define the non-degenerate $p$-weighted energy of $\phi$ as follows: 
\begin{equation*}
	E_p[\phi](\tau):=\int_{\underline{\mathcal{C}}_{\tau}}\left[D^{-2} |\partial_u(r\phi)|^2+\phi^2 \right]\dd \sigma \dd r+\int_{\mathcal{C}_{\tau}}r^p\left[ |\partial_v(r\phi)|^2+\phi^2\right]\dd {\sigma} \dd r.
\end{equation*}
Energy boundedness is the statement that $E_0[\phi](\tau_B)\lesssim E_0[\phi](\tau_A)+\mathfrak{X}[F]$ for all $ \tau_A< \tau_B$, where $\mathfrak{X}[F]$ is an appropriately weighted, spacetime $L^2$ integral; see \cref{prop:inhom:energy}.

The inhomogeneous version of the \emph{Dafermos--Rodnianski hierarchy of $r^p$-weighted energy estimates} \cite{dafermos_new_2010} can be stated as follows (see \cref{prop:inhom:rp}): For $0\leq p\leq 2$, for $0\leq \tau_A<\tau_B\leq \infty$, and for $\epsilon>0$ arbitrarily small:
\begin{equation}
	\label{eq:introrpest}
	E_p[\phi](\tau_B)+\int_{\tau_A}^{\tau_B}p E_{p-1}[\phi](\tau)\,d\tau\lesssim E_p[\phi](\tau_A)+\mathfrak{X}[F]+\int_{\tau_A}^{\tau_B}\int_{\underline{\mathcal{C}}_{\tau}\cup \mathcal{C}_{\tau}}r^{\min(p,\epsilon)+3}|F|^2 \dd r \dd\tau+\mathfrak{X}[F].
\end{equation}
By taking $p=1$ and applying the mean-value theorem in $\tau$, together with energy boundedness, it follows that:
\begin{equation*}
	E_0[\phi](\tau)\lesssim (\tau+1)^{-1}\left[E_1[\phi](0)+\mathfrak{X}[F]+\int_{0}^{\infty}\int_{\underline{\mathcal{C}}_{\tau}\cup \mathcal{C}_{\tau}}r^{4}|F|^2 \dd r \dd\tau\right].
\end{equation*}
Hence, we obtain decay in time of the energy quantity $E_0$ if the right-hand side above is finite, which requires suitable decay of the initial data in the spatial coordinate $r$, as well as suitable decay of $F$ in $r$ and $\tau$.

Assuming even more initial data decay in $r$ and decay of $F$ in $\tau$ and $r$, we can carry out the above argument with $p=2$ and then repeat the $p=1$ argument to obtain an additional power of $\tau^{-1}$:
\begin{equation*}
	E_0[\phi](\tau)\lesssim (\tau+1)^{-2}\left[E_2[\phi](0)+\mathfrak{X}[F]+\int_{0}^{\infty}\int_{\underline{\mathcal{C}}_{\tau}\cup \mathcal{C}_{\tau}}r^{5}|F|^2 \dd r \dd\tau\right].
\end{equation*}
This energy decay can easily be shown to imply the $L^{\infty}$-decay estimate: $|r\phi|\lesssim \tau^{-\frac{1}{2}}$. 
The above estimates remain valid (with a loss of derivatives) also for non-spherically symmetric $\phi$, and for commutations $Z^k \phi$ and $Z^k F$ for $Z\in\Vc= \{T, r\partial_r|_{\Sigma}, \sl\}$. 

Notice that the $\tau^{-2}$ energy-decay corresponds exactly to the length of the hierarchy of $r^p$-weighted estimates, i.e.~$2=\max(p)-\min(p)$. If the initial data only had finite $r^{p}$-energies with $p<2$, then we would have energy decay $\tau^{-p}$ and pointwise decay $r\phi\lesssim \tau^{\frac{p-1}{2}}$.

\subsubsection{Step 1: Improved decay for time derivatives and the relevance of time integrals $T^{-n}\phi$}
 Improved decay for time derivatives via commutation with $r\partial_r|_{\Sigma}$ was first observed in \cite{schlue_decay_2013, moschidis_rp_2016} and used in combination with elliptic-type estimates to show that $|\phi|\lesssim \tau^{-\frac{3}{2}}$.
In \cite{angelopoulos_vector_2018}, this method was extended to obtain faster decay for arbitrarily many time derivatives; we give a brief sketch:
By taking the square norm of \cref{eq:intro:inhomwave} and integrating, we can obtain the following additional energy estimate:
\begin{equation}
	E_{p-1}[T\phi](\tau)\lesssim \sum_{Z\in\{r\partial_r|_{\Sigma}, \sl\}}E_{p-3}[Z\phi](\tau)+\int_{\Cbar_{\tau}\cup \C_{\tau}}r^{p+1}F^2\dd r
\end{equation}
\begin{equation}
	\label{eq:introconvtrder}
	\int_{\tau_A}^{\tau_B}E_{p-1}[{T}\phi](\tau)\,d\tau \lesssim\sum_{Z\in\Vc} \sum_{k\leq 1}\int_{\tau_A}^{\tau_B}E_{p-3}[Z^k\phi](\tau)\,d\tau+\int_{\tau_A}^{\tau_B}\int_{\underline{\mathcal{C}}_{\tau}\cup \mathcal{C}_{\tau}} r^{p+1}|Z^kF|^2\dd r\dd\tau.
\end{equation}
Taking $p=3$ and $p=4$ in \cref{eq:introconvtrder} and applying \cref{eq:introrpest} with $\phi$ replaced by ${T}\phi$ then leads to $\tau^{-4}$ decay for $E_{0}[{T}\phi](\tau)$. The inequality \cref{eq:introconvtrder} therefore allows us to obtain additional decay for extra time derivatives, at the expense of commuting with vector fields in $\Vc$ and requiring additional decay in time for ${T}F$ (cf.~\cref{B}). 
Moreover, by considering additional, $r$-weighted elliptic estimates (see \cref{prop:inhom:elliptic1}) and assuming suitably fast decay for $F_{\geq \ell}$, we may also infer that $r^{-\ell}\phi_{\geq \ell}\lesssim \tau^{-\ell-3/2}$.

While the above argument gives improved decay for ${T}^k\phi$, it does not directly lead to improved decay for~$\phi$. 
In order to make use of the improved decay for time derivatives, we write: $\phi=T^{k}(T^{-k}\phi)$, for $T^{-k}$ the $k$-th time-integral. 
Since we already have pointwise decay $\phi\lesssim \tau^{-3/2}$, we may directly define the time integral of $h=\phi, F$ as follows:
\begin{equation}
	T^{-1}h=-\int_{\tau}^{\infty} h \dd \tau', \qquad \implies \qquad\Box_{g_M}T^{-1}\phi=T^{-1}F.
\end{equation}
(Alternatively, we may also define $T^{-1}\phi$ \textit{without prior decay knowledge of $\phi$} by first inverting a suitable elliptic operator---see \cref{eq:intro:l=0wave} below---along the initial hypersurface to define initial data for $T^{-1}\phi$ and then evolving this data according to $\Box_{g_M}T^{-1}\phi=T^{-1}F$ to obtain $T^{-1}\phi$.) We define higher-order time integrals inductively via $T^{-k-1}\phi=T^{-1}T^{-k}\phi$.

Thus, if we can show that we can apply the hierarchy of $r^p$-estimates with $p=2$ to $T^{-k}\phi$, which, in particular, requires $E_2[T^{-k}\phi](\tau_0)<\infty$, then we can deduce that $r\phi\lesssim \tau^{-3/2-k}$. 
We have therefore reduced the question of finding sharp \textit{time decay} estimates for $\phi$ to the question of understanding the \textit{asymptotic behaviour of the initial data} for $T^{-k}\phi$.
\subsubsection{Step 2: Asymptotics of the initial data of $T^{-k}\phi$ and almost-sharp decay for $\phi$}
We now sketch how to find the initial data asymptotic behaviour of  time integrals for fixed $\ell$-modes, assuming the initial data for $\phi$ to be compactly supported along $\Cbar_0\cup \C_0$. First, we consider $\ell=0$, then we consider higher $\ell$.
For $\Xx=\partial_r|_{\Sigma}=D^{-1}\pv$ the radial derivative along $\C_{\tau}$, the wave equation $\Box_{g_M}T^{-1}\phi=T^{-1}F$ can be written as:
\begin{equation}\label{eq:intro:l=0wave}
\mathcal{L}T^{-1}\phi:=	\Xx(r^2 D\Xx T^{-1}\phi)+\Dl T^{-1}\phi=r^2T^{-1}F+r\Xx(r TT^{-1}\phi).
\end{equation}
\textbf{Projecting onto $\ell=0$,} we can integrate $\mathcal{L}T^{-1}\phi$ from the sphere  $\S_{0}^{\rc}=\Cbar_{0}\cap \C_{0}$ to find:
\begin{equation}\label{eq:intro:T-1l0}
	r^2D\Xx T^{-1}\phi_{\ell=0}(u_0,r)=	r^2D\Xx T^{-1}\phi_{\ell=0}(u_0,\rc) +\int_{\rc}^{r} r'\Xx(r'\phi_{\ell=0}) \dd r'-\int_{\rc}^r\int_{\tau_0}^{\infty}{r'}^2F_{\ell=0 }\dd \tau' \dd r'.
\end{equation}
There is an analogous expression for the term $ r^2D\Xx T^{-1}\phi_{\ell=0}(u_0,\rc)$ purely in terms of $\phi|_{\Cbar_0}$ and the inhomogeneity~$F$. 

Now, assuming that the initial data for $\phi$ are compactly supported and that $F$ has an expansion
\begin{equation}\label{eq:intro:F:expansion}
	F-\sum_{i=3}^n \frac{F^{(i)}(u,\theta^A)}{r^i}\lesssim r^{-n-} u^{-\beta-3+n+}, \qquad \text{for} \quad F^{(i)}\lesssim u^{-\beta-3+i},
\end{equation}
with $n\geq 3$  and $\beta>1$, we may deduce from \cref{eq:intro:T-1l0} that
\begin{equation}
 r^2D \Xx T^{-1}\phi_{\ell=0}(u_0,r) \sim -\log r \int_{\scrip} F^{(3)}\dd u \dd \sigma \implies \Xx(rT^{-1}\phi_{\ell=0})\sim r^{-1}  \int_{\scrip} F^{(3)}\dd u \dd \sigma.
\end{equation}
In particular, we  have that $E_p[T^{-1}\phi_{\ell=0}](0)<\infty$ iff $p< 1$. Thus, writing $\phi=T T^{-1}\phi$, we may deduce $\tau^{-3+}$-decay for the energy (corresponding to a hierarchy of $r^p$-estimates of length $3-$), and $\tau^{-1+}$ pointwise decay for $r\phi_{\ell=0}$; this is sharp, up to an arbitrarily small loss.

On the other hand, if $F^{(3)}$ in \cref{eq:intro:F:expansion} vanishes, and if $\beta>2$ and $n\geq 4$, we find that
\begin{equation}\label{eq:intro:l=0F4}
	r^2D\Xx T^{-1}\phi_{\ell=0}(u_0,r)=C_0 +\frac{\int_{\scrip} F^{(4)}\dd u \dd \sigma}{r}+\dots \implies \Xx (rT^{-1}\phi_{\ell=0}) \sim  r^{-2} \int_{\scrip} F^{(4)}\dd u \dd \sigma+\frac{C'_0}{r^2},
\end{equation}
for $C_0, C'_0$  constants depending on initial data for $\phi$ and on the spacetime integral of $r^2F$. In particular, $E_p[T^{-1}\phi_{\ell=0}]<\infty$ for $p\leq 2$. 
However, if we now consider the initial data for the second time integral $T^{-2}\phi$, then the integral $ \int_{\rc}^{r} r\Xx(rT^{-1}\phi_{\ell=0}) \dd r'\sim \log r$ in \cref{eq:intro:T-1l0} will grow logarithmically, and we have that $E_{p}[T^{-2}\phi_{\ell=0}]=\infty$ for $p\geq 1$. 
This then results in $\tau^{-5+}$-decay for the energy, and pointwise $\tau^{-2+}$-decay for $r\phi$. These decay estimates are sharp, up to an arbitrarily small loss.

\textbf{For higher $\ell$,} we cannot directly integrate \cref{eq:intro:l=0wave} along $\C_{0}$ due to the term $\Dl T^{-1}\phi$. 
The function $w_\ell$ appearing in \cref{thm:mainrough} is non-vanishing everywhere and solves $\mathcal{L}w_\ell=0$. Note that $\square_{g_M}(w_{\ell}Y_{\ell,m})=0$, so $w_{\ell}Y_{\ell,m}$ may be interpreted as a stationary solution to the wave equation on Schwarzschild. We then \textit{twist} the operator $\mathcal{L}$ as follows using $w_{\ell}$:
\begin{equation}
\label{eq:introtwistop}
\check{\mathcal{L}}_{\ell}\philc:=\Xx(r^2 D w_\ell^2 \Xx (\philc))=w_{\ell}\mathcal{L}\phi_{\ell}, \qquad \text{for } \philc=w_{\ell}^{-1}\phi_\ell.
\end{equation}
Thus, by projecting \cref{eq:intro:l=0wave} onto fixed $\ell$ and integrating along $\C_{0}$, we obtain:
\begin{equation}\label{eq:intro:higherell}
	r^2Dw_{\ell}^2 \Xx T^{-1}\philcm(u_0,r)=	r^2Dw_{\ell}^2 \Xx T^{-1}\philcm(u_0,\rc)+\int_{\rc}^r r' w_{\ell}\Xx(r'\phi_{\ell,m}) \dd r'-\int_{\rc}^r \int_{\tau_0}^{\infty} {r'}^2w_{\ell} F_{\ell,m} \dd \tau \dd r'.
\end{equation}
In particular, if \cref{eq:intro:F:expansion} holds with $n\geq \ell+3$ and with $\beta>\ell+1$, then we obtain (the $\dots$-terms denote lower-order terms in the $1/r$-expansion)
\begin{multline}\label{eq:intro:owe}
	\Xx T^{-1}\philcm \sim1+\dots+ \int_{\scrip}(r^{-\ell}w_{\ell}F_{\ell,m})^{(3+\ell)} \dd u\cdot  r^{-2-2\ell}\log r\\ \implies rT^{-1}\phi_{\ell,m}\sim 1+\dots+ \int_{\scrip}(r^{-\ell}w_{\ell}F_{\ell,m})^{(3+\ell)} \dd u\cdot r^{-\ell}\log r.
\end{multline}
In particular, for $\ell>0$, the first time integral $T^{-1}\phi_{\ell}$ has finite $p=2$ initial energy.
Re-inserting these asymptotics into \cref{eq:intro:higherell} for higher-order time integrals, the integral over $rw_{\ell}\Xx(r\phi_\ell)$ then pushes this log-term one order forward for each time integral. We find that
\begin{equation}
	rT^{-\ell-1}\phi_{\ell,m}\sim    \int_{\scrip}(r^{-\ell}w_{\ell}F_{\ell,m})^{(3+\ell)} \dd u \cdot \log r \implies \Xx(rT^{-\ell-1}\phi_{\ell,m})\sim r^{-1}  \int_{\scrip}(r^{-\ell}w_{\ell}F_{\ell,m})^{(3+\ell)} \dd u .
\end{equation}
This means that the initial energy $E_p[T^{-\ell-1}\phi_{\ell}](\tau_0)<\infty$ iff $p<1$, so we deduce (sharp) energy decay $E_0[\phi_\ell]\lesssim \tau^{-2\ell-3+}$ and pointwise decay $r\phi_\ell\lesssim \tau^{-\ell-1+}$.

We can complement the fixed angular mode computations above with a high-$\ell$ elliptic estimate (see \cref{prop:inhom:elliptic2}) to conclude summability in $\ell$ (i.e.~that $\phi_{>\ell}$ decays faster). We can moreover obtain improved decay for higher $\ell$-modes away from $\scrip$ using further elliptic estimates (see \cref{prop:inhom:elliptic1}).
 We thus get that if $\beta>\ell+1$ and $n\geq\ell+3$ in \cref{eq:intro:F:expansion}:
 \begin{equation}
 	r\phi_{\geq\ell}\lesV \tau^{-\ell-1}\min_{q\in[0,\ell+1]} \left(\frac{r}{\tau}\right)^q.
 \end{equation}
This inequality moreover remains valid under commutations with vector fields in $\V=\{r\partial_r|_{\Sigma}, \tau T, \sl\}$.

\subsubsection{Step 3: Precise asymptotics for $\phi$ via subtraction of a global tail function}
\label{sec:introstep3}
The obstruction to proving faster decay in the above steps arose from the conformal irregularity of the initial data of the time integral. On the other hand, it follows from \cite{gajic_relation_2022} that for conformally irregular data, the global asymptotic profile of the solution is fixed completely by the initial data, which is related to the fact that this conformal regularity is conserved along null infinity.

We make use of these insights by subtracting a global approximate solution $T^{-1}\apphi$ whose initial data have exactly the same (conformally irregular) logarithmic term as \cref{eq:intro:owe}. Since the difference $\phi-\phi^{\mathrm{app}}$ then has a time integral with better conformal regularity, it turns out that we can infer that the $(\ell+1)$-th time integral satisfies $E_p[T^{-\ell}(\phi_{\ell}-\apphi[\ell])](\tau_0)<\infty$ for $p\leq 2$, which means that we can prove one power better energy decay (and half a power better pointwise decay) for the difference. Since $\square_{g_M}(T^{-1}\apphi)\neq 0$, this procedure generates an additional inhomogeneity, but this inhomogeneity does not form an obstruction to faster decay. In particular, we may infer, for $\ell>0$, $r\phi_{\ell,m}-r\apphi[\ell,m]\lesssim \tau^{-\ell-3/2}$. Since $r\apphi\sim \tau^{-\ell-1}$, this (together with elliptic estimates away from $\scrip$) proves the global asymptotics for $\phi_{\ell,m}$, namely
\begin{equation}
	r\phi_{\ell,m}\sim r\apphi \int_{\scrip} F^{(\ell+3)} \dd u.
\end{equation}

Notice that for $\ell=0$, in the case where $F^{(3)}=0$, we obtain \cref{eq:intro:l=0F4}, as the bad logarithmic term only appears for the second time integral: Thus, rather than subtracting $\apphi[\ell=0]$ from $\phi_{\ell=0}$, we simply subtract $T\apphi[\ell=0]$ from $\phi_{\ell=0}$.

We can motivate the choice of approximate function $\apphi$ as follows. Any constant multiple of the function $r^{\ell}\tau^{-1-\ell}(\tau+r)^{-1-\ell}$ is a solution to the wave equation on Minkowski and has a time integral with the desired behaviour in $r$. Since $g_M$ is asymptotically flat, $\Box_{g_M}(r^{\ell}\tau^{-1-\ell}(\tau+r)^{-1-\ell})$ therefore decays suitably fast in $r$. In order to obtain a function $\apphi$ for which $\square_{g_M}(\apphi Y_{\ell,m})$ also decays suitably in $\tau$ globally, we recall the twisted elliptic operator $\check{\mathcal{L}}_{\ell}$ from~\cref{eq:introtwistop}. Define $\apphi=w_{\ell}\tau^{-1-\ell}(\tau+r)^{-1-\ell}$, then:
\begin{equation*}
w_{\ell}r^2\square_{g_M}(\apphi Y_{\ell,m})=\tau^{-1-\ell}\check{\mathcal{L}}_{\ell}((\tau+r)^{-1-\ell}) Y_{\ell,m}-rw_{\ell}\Xx(r T \apphi)Y_{\ell,m}.
\end{equation*}
Observe that both the terms on the right-hand side above decay faster in $\tau$ than $\apphi$. By replacing $w_{\ell}$ with $\chi(\frac{r}{\tau})r^{\ell}+(1-\chi(\frac{r}{\tau}))w^{\ell}$, with $\chi$ a suitable cut-off function, we can in fact estimate globally (see \cref{sec:approx} for more details):
\begin{equation*}
 |\Box_{g_M}(\apphi Y_{\ell,m})| \lesssim \frac{1}{(\tau+r)r^2}|\apphi|.
 \end{equation*}

In the setting of inhomogeneities discussed in this sketch, making the arguments above precise results in  \cref{thm:main:F}.

\subsubsection{The case where $F$ depends on $\phi$}
In the sketch above, we took $F$ to be a given inhomogeneity with given decay. 
If $F$ depends on $\phi$ and its derivatives, we need to start out with an initial decay estimate on $\phi$ and its derivatives, which translates into a weak decay estimate on $F$. 
Depending on the decay of $F$, we can then use the above procedure to improve the decay estimate for $\phi$ and its derivatives, which in turn improves the decay of $F$ via an iteration argument. See also the example in the \cref{sec:intro:gen} as well as the more extensive discussion of nonlinear examples in \cref{sec:main:additional}.

In the specific case where $F$ takes the form \cref{eq:intro:F:G}, we then obtain \cref{thm:mainrough} by also exploiting the divergence structure of $F$.

\subsection{Generalisations, further results and future work}\label{sec:intro:gen}
We mention here several generalisations of the results and methods of the present paper.
\paragraph{Conformally irregular data or inhomogeneity:} In the sketch above, we have taken $F$ to have a smooth $1/r$-expansion \cref{eq:intro:F:expansion}, and we have assumed $\phi$ to be initially compactly supported. 
If either of these assumptions are dropped, then we simply have another source of conformally irregular terms appearing in the initial data of suitable time integrals, and the method applies without further changes (except that one may need to take fewer time integrals and subtract different approximate solutions).\footnote{In the physically relevant case where all $\ell$-modes decay the same \cite{kadar_scattering_2025}, one needs an additional argument addressing the issue of summability.}

\paragraph{Nonlinearities:} As discussed above, via iteration, the method applies also to $F(x,\phi,\partial\phi,\partial^2\phi)$, provided an initial global existence and weak quantitative decay result holds. 
Rather than providing a general class of nonlinearities, we simply treat a few examples covering the case of null form nonlinearities and power nonlinearities (\cref{thm:main:phi3,thm:main:phi4,thm:main:puphipvphi}). 
\paragraph{Inverse-square potentials and higher dimensions:} The key ingredients of our method, namely $r^p$-weighted energy estimates, time integrals and subtraction of approximate solutions, have also been shown to hold for $\Box_{g_M}\phi=V(r)\phi$, where $V(r)\sim \alpha r^{-2}$ as $r\to \infty$, with $\alpha>-\frac{1}{4}$, in \cite{gajic_late-time_2023}.\footnote{In this setting, one can in general only take $r^p$-estimates for $p<2$, but the method still applies. Moreover, one needs to introduce an appropriate twisted operator $\check{\mathcal{L}}$ that accounts for the potential.} This setting covers the case of higher even and odd spacetime dimensions: The methods in the present paper therefore extend also to these settings. 

\paragraph{Kerr backgrounds:}
By the same reasoning as above, in view of \cite{angelopoulos_late-time_2021}, our methods also apply to quasilinear perturbations of the wave equation on Kerr.

\paragraph{The Einstein equations:}
The class of dynamical metrics $g$ that we allow for in \cref{thm:mainrough} should in particular include the metrics constructed in the stability proof \cite{dafermos_non-linear_2021}, see \cref{app:motivation}. 
In future work, we will discuss how to apply the methods of the present paper within the framework of the Regge--Wheeler/Teukolsky equations to infer the precise late-time asymptotics for solutions to \cref{eq:intro:Einstein} settling down to Schwarzschild.

\paragraph{Higher-order asymptotics:}
Our methods can be generalised to include higher-order terms in the asymptotic expansion in $\frac{1}{\tau}$ of $\phi$ as $\tau\to\infty$. This will be the subject of future work. 
See also \cref{rem:main:higher}.

 \subsection{A short comparison with \cite{luk_late_2024}}
 \label{sec:difflukoh}
 Since the results of the present paper and \cite{luk_late_2024} are closely related, we list below some of the key differences and similarities between the methods employed in \cite{luk_late_2024} and the present paper. For simplicity, we will mainly focus on the setting of \cref{eq:intro:wave}. 
\begin{enumerate}[label=\arabic*),leftmargin=*]
\item Both \cite{luk_late_2024} and the present paper are purely based in physical space (with a partial decomposition into spherical harmonic modes). Both also take as an assumption decay estimates for the metric difference $g-g_M$ together with weak decay assumptions on $\phi$ that remain valid after commutation with $\{\tau T, r\partial_r|_{\Sigma},\sl\}$, and then use iteration schemes to gain improved decay estimates.
\item In both \cite{luk_late_2024} and the present paper, \cref{eq:intro:wave} is decomposed as a simpler, stationary wave equation with inhomogeneity.  In \cite{luk_late_2024}, \cref{eq:intro:wave} is treated as an inhomogeneous wave equation on \emph{Minkowski} $\square_{\eta} \phi=F$ in a large-$r$ region in order to apply the strong Huygens Principle.

In the present paper, however, \cref{eq:intro:wave} is instead treated globally as an inhomogeneous equation on \emph{Schwarzschild} $\square_{g_M}\phi=F$. The role of the Minkowski operator appears only in the explicit global tail function $\phi^{\mathrm{app}}$ used in the subtraction $\phi-\phi^{\mathrm{app}}$, as $\phi^{\mathrm{app}}$ is closely related to solutions to the Minkowski wave equation; see \cref{sec:introstep3}. This difference is already present in the stationary setting $g=g_M$.

\item In \cite{luk_late_2024}, the key mechanism for deriving late-time tails is an analysis of recurrence relations for higher-order radiation fields along $\mathcal{I}^+$, which is used to derive detailed information about $\frac{1}{r}$-expansions of $r\phi$ after a commutation with vector fields of the form $r^2\partial_r$ in Bondi-type coordinates. This expansion is then used to obtain sufficient control of $F$ when writing $\square_{\eta} \phi=F$ (as in 1)) and to derive late-time tails in the large-$r$ region via the strong Huygens principle. The late-time decay is determined by the lowest $j$ satisfying $\lim_{u\to\infty}(r\phi)^{(j)}\neq 0$ (in the expansion $r\phi= r^{-j}(r\phi)^{(j)}$).

 In contrast, in the present paper, we do not use $r^2\partial_r$ as a commutator for the solution. Instead, late-time decay is determined by the conformal irregularity of the initial data of sufficiently many time integrals. See however \cref{prop:inhom:elliptic2} for the relation between $r^2\partial_r$ commutations and time integration $T^{-1}$ at the level of initial data.


\item In both \cite{luk_late_2024} and the present paper, a key role is played by $\mathcal{L}$, the stationary wave operator $\square_{g_M}$ \emph{without} time derivatives. In fact, the operator $\mathcal{L}$ also plays a crucial role as the resolvent operator at zero frequency in the spectral methods of \cite{tataru_local_2010, hintz_sharp_2022}.

In \cite{luk_late_2024}, the existence of appropriate elliptic-type estimates for $\mathcal{L}$, as proved in \cite{angelopoulos_late-time_2021}, is assumed in order to extend the tails from a large $r$-region to the rest of the spacetime. 

In the present paper, elliptic-type estimates for $\mathcal{L}$ are proved and  used only for \textit{large} angular frequency spherical harmonic modes to prove improved decay away from $\scrip$, and to construct the initial data for time integrals of these modes. However, a \emph{twisted modification}, $\check{\mathcal{L}}$, of $\mathcal{L}$ is employed to construct the time integral initial data for bounded spherical harmonic modes via explicit integration. $\check{\mathcal{L}}$ is also used to prove improved decay away from $\scrip$.


\item  
In \cite{luk_late_2024}, the strong Huygens principle is appealed to in order to obtain asymptotic estimates,  which is the main technical reason for their restriction to odd spacetimes.
In contrast, our present work is applicable in any dimension. Furthermore, the present work is applicable to extremal black hole settings.

\end{enumerate}


\subsection{Structure of the paper}
We outline here the remainder of the paper.
\begin{itemize}
\item We introduce the main geometric objects in Schwarzschild spacetimes and general notational conventions in \cref{sec:Sch}. We then introduce the class of dynamical metrics $g$ in \cref{sec:classmetrics}.
\item We give precise statements of the assumptions and main theorems in \cref{sec:main}: We state a precise version of \cref{thm:mainrough} in the form of \cref{thm:main}, and we state additional results applicable to nonlinear settings in \cref{thm:main:F,thm:main:phi3,thm:main:phi4,thm:main:puphipvphi}.
\item In \cref{sec:inhom}, we provide brief derivations of the main energy, elliptic and pointwise estimates for inhomogeneous wave equations on Schwarzschild. With the exception of some refinements and \cref{prop:inhom:elliptic2}, the results of this section are known in the literature.
\item In \cref{sec:approx}, we prove the relevant decay estimates on the global tail functions $\apphi$ and show that they may indeed be thought of as approximate solutions to the wave equation on Schwarzschild.
\item \cref{sec:time} constitutes the heart of the paper. We prove the main theorem \cref{thm:main} here.
\item In \cref{sec:additionalproof}, we prove \cref{thm:main:F,thm:main:phi3,thm:main:phi4,thm:main:puphipvphi}.
\item In \cref{sec:gen}, we discuss the non-vanishing of our the coefficients appearing in the late-time asymptotics for generic initial data. While we give a proof of genericity in the linear setting, we only give a sketch of a proof in the nonlinear setting. 
\item In \cref{app:motivation}, we motivate our assumptions on the class of dynamical metrics studied in this paper, and, finally, in \cref{app:derivapriori}, we explain how to recover the decay estimates assumed on $\phi$ in our main theorem.
\end{itemize}

\subsection{Acknowledgements}
We thank Istvan Kadar, Jonathan Luk and Sung-Jin Oh for their comments on a previous version of the manuscript, and we thank Gustav Holzegel for helpful discussions. Both authors acknowledge funding through the ERC Starting Grant 101115568.

 \section{The Schwarzschild spacetime and notation}\label{sec:Sch}
 In this section, we first {introduce} the Schwarzschild {spacetime} and {set the geometric notation that will be used throughout the rest of the paper}. 
 We also introduce specific notation used for inequalities in \cref{sec:Sch:notation}.

 \subsection{The Schwarzschild spacetime $(\mathcal{M},g_M)$}
 Fix some $\tau_0=u_0=v_0>0$. The following manifold with corners is a subset of the Schwarzschild black hole exterior manifold:
$
 	\M=[u_0,\infty)_u\times[v_0,\infty)_v\times S^2,
$ 
with $S^2$ the unit round sphere. The Schwarzschild metric in double null coordinates on $\M$ is given by:
 \begin{equation}
 	g_M=-4 D\dd u \dd v +\gs_M, \qquad D=1-\frac{2M}{r}=\Omega^2_M.
 \end{equation}
 Here, $\gs_M=r^2(\dd \vartheta^2+\sin^2\theta \dd \varphi^2)$ is the metric on the sphere of radius $r$, where $r$ is defined implicitly via
 \begin{equation}\label{eq:Sch:tortoise}
 	v-u=r^{\ast}=r+2M\log\((2M)^{-1}{Dr}\)+c_0, 
 \end{equation}
 for some constant $c_0$ to be fixed later.
 We record the following expressions for the inverse and determinant of $g_M$:
\begin{equation}
	g_M^{-1} =-\frac{1}{2D}(\pu \otimes \pv+\pv \otimes \pu)+\slashed{g}_M^{-1}, \qquad \sqrt{-\det{g_M}}=2D r^2\sin \theta=2D \sqrt{\det \slashed{g}_M}.
\end{equation}

We also consider the following null frame on $\mathcal{M}$:
$
 	e^M_3=\Omega_M^{-1}\pu,\,  e^M_4=\Omega_M^{-1}\pv, \, e^M_A=\partial_{\theta^A},
$
for $\{\theta^A\}=\{\vartheta,\varphi\}$ local coordinates on $S^2$. 
 With respect to this frame, the metric has the following ingoing and outgoing null expansions $\otrx_M$ and $\otrxb_M$:
 \begin{equation}
 \Omega_M g_M^{AB}g_M(\nabla_A e^M_4, e^M_B)=:	\otrx_M=\frac{2\Omega^2_M}{r}=-\otrxb_M:= \Omega_M g_M^{AB}g_M(\nabla_A e^M_3, e^M_B),
 \end{equation}
 where $\nabla$ denotes the Levi-Civita connection of $g_M$. 
 We will denote the connection induced on the spheres by $\slashed{\nabla}$, and we will moreover write $\sl=r\slashed{\nabla}$ to denote the connection on the unit sphere.
 Later on, we will introduce different metrics $g$ on $\M$; we will then denote the associated Levi-Civita connection by $\nabla^g$. 

 \subsection{Foliations, functions and vector fields on $\M$}
 \begin{figure}[htb]
 	\includegraphics[width=0.3\textwidth]{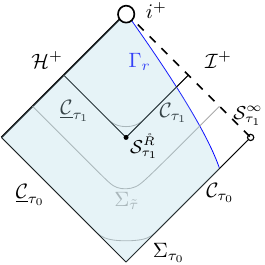}
 	\caption{Depiction of the foliations that we work with.}\label{fig:foliation}
 \end{figure}

Fix a large parameter $\rc> 100 M$, and  fix $c_0$ in \cref{eq:Sch:tortoise} such that $r^{\ast}(\rc)=0$, i.e.~such that $v=u$ along the hypersurface $r=\rc$.
We then define, for any $s\geq \tau_0$, and for any $2M\leq r_1<r_2\leq \infty$ (see~\cref{fig:foliation}):
\begin{nalign}
	\Cbar_{s} & = \M \cap \{v=s\} \cap \{r \leq \rc\}, 
	& \C_s & = \M \cap \{u=s\} \cap \{r \geq \rc\}, 
	& \Sigma_s & = \Cbar_s \cup \C_s, \\
	\Cbar_s^{r_1,r_2} & = \Cbar_s \cap \{r_1 < r < r_2\}, 
	& \C_s^{r_1,r_2} & = \C_s \cap \{r_1 < r < r_2\}, 
	& \Sigma_s^{r_1,r_2} & = \Sigma_s \cap \{r_1 < r < r_2\},
\end{nalign}
and, for any $R\in(2M,\infty)$, we write
\begin{equation}
	\Gamma_R=\M\cap\{r=R\}.
\end{equation}

By working in coordinates $(r,v,\theta^A)$ and $(u, 1/r, \theta^A)$, respectively, we may attach to $\M$ two boundaries\\ $\hplus= [v_0,\infty)_v\times \{2M\}_r\times S^2$ and $\scrip=[u_0,\infty)_u\times \{0\}_{\frac{1}{r}}\times S^2$, called the \emph{event horizon} and \emph{future null infinity}, respectively.
We then write, for $r_1\geq 2M$:
 \begin{equation}
	\S_s^{r_1}=\{r=r_1\}\cap \Sigma_s, \quad  \S_s^{\infty}=\Sigma_s \cap \scrip.
 \end{equation}
 
 Next, we introduce the (continuous, but not differentiable) function $\tau:\mathcal{M}\to \R$ whose level sets are $\Sigma_\tau$: \begin{equation}
 	\tau=\begin{cases}
 		u, \text{ if } r\geq \rc,\\
 		v, \text{ if } r\leq \rc.
 	\end{cases}
 \end{equation}
We also require a smoothed out version of $\tau$: We therefore fix $\tilde{\tau}$ to be a smooth function that agrees with $\tau$ on $\{r\leq \rc-1\}\cup \{r\geq \rc+1\}$ and such that $T\tilde{\tau}=1$ and  the level sets $\tilde{\Sigma}_{\tilde{\tau}}$ of $\tilde{\tau}$ are achronal. We set $\tilde{\tau}_0=\tau_0$.

 For $\infty\geq \tau_2>\tau_1>0$, we then write
 \begin{equation}
 	\D_{\tau_1,\tau_2}^{r_1,r_2}=\M\cap \{r_1<r<r_2\}\cap \{\tau_1<\tau<\tau_2\}, 
 \end{equation}

 
 Given a coordinate chart $\{x^{3},x^4,\varphi, \vartheta\}$, we will use the notation $\partial_{x^3}|_{x^4}$ and $\partial_{x^4}|_{x^3}$ for the corresponding coordinate basis vector fields to emphasize that $x^4$ and $x^3$ are held fixed, respectively. In this notation, we can write:
 \begin{equation}
 	T=\pu|_v+\pv|_u=2\partial_t|_r, \quad \Xb=-D^{-1}\pu|_v, \quad \Xx   =D^{-1}\pv|_u.
 \end{equation}
 This implies
 \begin{equation}
 	\pv|_u= T-D\Xb, \qquad \pu|_v =T-D\Xx  .
 \end{equation}
 We also note that in coordinates $(r,v)$, we have $T=\pv|_r$ and $\Xb=\partial_r|_v$, while, in coordinates $(u,r)$, we have $T=\pu|_r$, $\Xx=\partial_r|_u$. 
 From now on, whenever we write $\pv$ or $\pu$, it will be with respect to the double null coordinates. Furthermore, we will write $\partial_r|_{\Sigma}$ to denote the vector field $\Xx$ for $r\geq\rc$ and $\Xb$ for $r<\rc$.

 \subsection{The wave operator on Schwarzschild and its stationary solutions}\label{sec:Sch:stationary}
For any function $\phi$ and for any metric $g$ on $\M$, the wave operator $\Box_g$ is given by
\begin{equation}
	\Box_g\phi=g^{\mu\nu}\nabla^g_\mu \nabla^g_\nu\phi=\sqrt{\det{g}^{-1}}\partial_\mu((g^{-1})^{\mu\nu}\sqrt{\det g}\partial_\nu \phi).
\end{equation}
Thus, in the case of Schwarzschild, we can write:
\begin{equation}\label{eq:Sch:wave}
 	\frac{\sqrt{-\det g_M}}{2D\sin \theta}\Box_{g_M}\phi=r^2\Box_{g_M}\phi=\Xb(r^2D\Xb\phi)+r\Xb(rT\phi)+\Dl \phi= \Xx  (r^2D \Xx  \phi)-r\Xx  (rT\phi)+\Dl \phi.
 \end{equation}
 Setting $\psi=r\phi$, this may equivalently be written in the form
\begin{equation}\label{eq:inhom:wave:twisted}
	P_{g_M}\psi:= r \Box_{g_M} (r^{-1}\psi)=D^{-1}\pu\pv\psi-\frac{\Dl\psi}{r^2}+\frac{2M\psi}{r^3}.
\end{equation}
 
Given the orthonormal and real-valued spherical harmonics $Y_{\ell,m}$, we denote by $\phi_\ell$ the projection \begin{equation}\phi_{\ell}=\sum_{m=-\ell}^{\ell}\phi_{\ell,m} Y_{\ell,m}= \sum_{m=-\ell}^{\ell} {Y}_{\ell,m}\langle \phi,Y_{\ell,m}\rangle_{L^2(S^2)}.
	\end{equation}
 Projecting the wave equation onto fixed $\ell$ and looking for stationary solutions reduces \cref{eq:Sch:wave} to  Legendre's ODE; thus $w_\ell Y_{\ell,m}=b_\ell P_{\ell}(r/M-1) Y_{\ell,m}$ solves $\Box_{g_M}\phi=0$.  We fix the constants $b_{\ell}$ by requiring that $\lim_{r\to\infty}r^{-\ell}w_{\ell}=1$.
 In particular, $w_\ell$ solves
$
 	\Xb(r^2D\Xb  w_\ell)-\ell(\ell+1)w_\ell=0=	\Xx  (r^2D\Xx   w_\ell)-\ell(\ell+1)w_\ell,
$
implying:
 \begin{equation}\label{eq:Sch:EllEllbar}
 	\Xb(r^2Dw_{\ell}^2\Xb\frac{\phi}{w_\ell})=\Xb(r^2D\Xb\phi)\cdot w_\ell-\phi \Xb(r^2D\Xb  w_{\ell})=\Xb(r^2D\Xb\phi)\cdot w_\ell -\ell(\ell+1)w_{\ell}\phi,
 \end{equation}
 with the same expression being valid for $\Xb$ replaced by $\Xx$.
 Hence, writing $\check\phi_{\ell}=w_{\ell}^{-1}\phi_\ell$, we have that
 \begin{equation}
 	w_{\ell}	\frac{\sqrt{-\det g_M}}{2D\sin \theta}\Box_{g_M}\phi_{\ell}=r^2w_\ell \Box_{g_M}\phi_{\ell}=\Xb(r^2Dw_{\ell}^2\Xb\check{\phi}_{\ell})+w_{\ell}r\Xb(rT\phi_{\ell})=\Xx   (r^2Dw_{\ell}^2\Xx   \check{\phi}_{\ell})-w_{\ell}r\Xx  (rT\phi_{\ell}).
 \end{equation}

\newcommand{\Ell}{\mathcal{L}}
\newcommand{\Ellbar}{\underline{\mathcal{L}}}
\newcommand{\Ellc}{\check{\mathcal{L}}}
\newcommand{\Ellbarc}{\underline{\check{\mathcal{L}}}}

 \subsection{Notation: integration and volume forms}
We consider the following volume forms on the 2-spheres $S^2_{u_1,v_1}:=\{u=u_1\}\cap \{v=v_1\}$: 
 \begin{equation}
 	\dd \sigma= \sin \vartheta \dd \vartheta \dd\varphi, \quad \dd \tilde{\sigma}=\dd \vartheta \dd\varphi.
 \end{equation}
 
 When integrating over spacetime regions, we write $\dd \mu= \dd u \dd v \dd \sigma$, when integrating over sets like $\Cbar$ or $\C$, we write $\dd \mu= \dd u \dd \sigma$ or $\dd \mu= \dd v \dd \sigma$ instead. In all cases, we also write  $\dd \mu_D=D \dd \mu$. 
We also write $\dd\tilde{\mu}$ if $\dd \sigma$ is replaced by $\dd\tilde{\sigma}$. For instance, for a function $f$, we then have the relations
 \begin{equation}
 	\int_{\Cbar_{v}} f \dd \mu=\int_{\Cbar_v} f \dd u \dd \sigma=\int_{\Cbar_v}  \frac{f}{D} \dd r \dd \sigma=\int_{\Cbar_v}  \frac{f}{D} \dd \mu_D.
 \end{equation}
 
 Lastly, we introduce the following convention:
 An integral $\int_{\Sigma_\tau} f \dd \mu$ will always denote $\int_{\Cbar_\tau}f\dd \mu+\int_{\C_\tau}f\dd \mu$. In particular, when integrating over $\Sigma_\tau$, we will interpret an integral of a $\partial_r|_{\Sigma}$ derivative to mean:
 \begin{equation}
 	\int_{\Sigma_\tau} \partial_r|_{\Sigma}f \dd \mu=\int_{\Cbar_\tau}\Xb f \dd \mu+\int_{\C_\tau} \Xx f \dd \mu.
 \end{equation}
 

\subsection{Integral inequalities}
The proofs of the following two inequalities are standard:
\begin{lemma}[Poincar\'e's inequality]
	Let $f\in C^{\infty}(S^2)$. Then, for any $1\leq \ell_0\in\N$, we have
	\begin{equation}
		\int_{S^2}|\sl f_{\ell\geq\ell_0}|^2 \dd \sigma \geq \ell_0(\ell_0+1)\int_{S^2} f_{\ell\geq\ell_0}^2\dd \sigma. 
	\end{equation}
\end{lemma}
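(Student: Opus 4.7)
The plan is to reduce the estimate to the spectral decomposition of the Laplace--Beltrami operator $\Dl$ on the unit sphere. I would begin by expanding
\begin{equation*}
f_{\ell \geq \ell_0} = \sum_{\ell \geq \ell_0} \sum_{m=-\ell}^{\ell} f_{\ell,m} Y_{\ell,m}, \qquad f_{\ell,m} := \langle f, Y_{\ell,m}\rangle_{L^2(S^2)},
\end{equation*}
exploiting the fact that the real spherical harmonics $\{Y_{\ell,m}\}$ form an orthonormal basis of $L^2(S^2)$ and diagonalise the Laplacian via $\Dl Y_{\ell,m} = -\ell(\ell+1) Y_{\ell,m}$.

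Next, since $S^2$ is compact and without boundary, I would integrate by parts to rewrite
\begin{equation*}
\int_{S^2} |\sl f_{\ell \geq \ell_0}|^2 \, \dd \sigma = -\int_{S^2} f_{\ell \geq \ell_0} \, \Dl f_{\ell \geq \ell_0} \, \dd \sigma.
\end{equation*}
Substituting the modal expansion and invoking Parseval's identity reduces the right-hand side to a weighted sum of squared Fourier coefficients:
\begin{equation*}
\int_{S^2} |\sl f_{\ell \geq \ell_0}|^2 \, \dd \sigma = \sum_{\ell \geq \ell_0} \sum_{m=-\ell}^{\ell} \ell(\ell+1)\, f_{\ell,m}^2.
\end{equation*}

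To conclude, I would use that the map $\ell \mapsto \ell(\ell+1)$ is strictly increasing on $\mathbb{N}$, so each factor in the above sum satisfies $\ell(\ell+1) \geq \ell_0(\ell_0+1)$. Pulling out this minimal eigenvalue and reapplying Parseval yields
\begin{equation*}
\int_{S^2} |\sl f_{\ell \geq \ell_0}|^2 \, \dd \sigma \geq \ell_0(\ell_0+1) \sum_{\ell \geq \ell_0} \sum_{m=-\ell}^{\ell} f_{\ell,m}^2 = \ell_0(\ell_0+1) \int_{S^2} f_{\ell \geq \ell_0}^2 \, \dd \sigma,
\end{equation*}
as desired. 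There is no real obstacle here: the proof is entirely structural and rests on the simultaneous diagonalisation of $\Dl$ by the spherical harmonics together with monotonicity of the eigenvalues in $\ell$.
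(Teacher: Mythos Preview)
Your proof is correct and is exactly the standard spectral argument one expects; the paper itself does not spell out a proof but simply declares the inequality standard, so your write-up is entirely in line with what is intended.
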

\begin{lemma}[Hardy's inequality]
	Let $f\in C^{\infty}([a,b])$ for $b>a$. Let moreover $q\neq-1$. Then
	\begin{equation}
		\int_a^b |s|^{q}|f(s)|^2\dd s\leq \frac{2}{q+1}(|s|^q f^2(s))|^{s=b}_{s=a}+\frac{4}{(q+1)^2} \int_{a}^b |s|^{q+2} |f'(s)|^2\dd s.		
	\end{equation}
\end{lemma}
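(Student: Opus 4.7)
The plan is the classical Hardy-type argument: integration by parts to move one derivative onto $f^2$, followed by Cauchy--Schwarz on the resulting cross-term and an absorption step. I will assume $0\notin[a,b]$ so that $|s|^q$ has the smooth antiderivative $\tfrac{|s|^q s}{q+1}$ (if $0\in[a,b]$ one splits the interval at $0$ and passes to the limit, using $q\neq -1$ to ensure integrability of the boundary trace; the hypothesis $q\neq -1$ is precisely what prevents the integration-by-parts step from producing a logarithm).

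Concretely, first I integrate by parts, writing $|s|^q = \frac{d}{ds}\!\left(\frac{|s|^q s}{q+1}\right)$, to obtain
\begin{equation*}
\int_a^b |s|^q f(s)^2\dd s \,=\, \frac{1}{q+1}\bigl[|s|^q s\,f(s)^2\bigr]_{s=a}^{s=b} \,-\, \frac{2}{q+1}\int_a^b |s|^q s\, f(s) f'(s)\dd s.
\end{equation*}
Next, I split the weight in the bulk integral as $|s|^q s = |s|^{q/2}\cdot |s|^{q/2+1}\operatorname{sgn}(s)$ and apply Cauchy--Schwarz to get
\begin{equation*}
\left|\int_a^b |s|^q s\,f f'\dd s\right|\leq \left(\int_a^b |s|^q f^2\dd s\right)^{1/2}\left(\int_a^b |s|^{q+2}|f'|^2\dd s\right)^{1/2}.
\end{equation*}
Denoting the two factors on the right by $A$ and $B$, Young's inequality with parameter $\epsilon=2/|q+1|$ yields $\frac{2}{|q+1|}\,AB \leq \tfrac{1}{2}A^2 + \tfrac{2}{(q+1)^2}B^2$. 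Since $A^2$ is exactly the left-hand side of the lemma, absorbing the half-copy into the left produces the coefficient $\tfrac{4}{(q+1)^2}$ on the derivative term and doubles the boundary coefficient to $\tfrac{2}{q+1}$, matching the stated inequality (with the boundary factor $|s|^q s = \operatorname{sgn}(s)|s|^{q+1}$).

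I do not expect any genuine obstacle; this is a textbook argument. The only subtlety worth tracking is signs: when $q+1<0$ the boundary term may have the ``wrong'' sign, so one carries $|q+1|$ through Young's inequality rather than $q+1$, and then reads off the final signed coefficient $\tfrac{2}{q+1}$ in the statement. All integrals are automatically finite since $f\in C^\infty([a,b])$ and we are bounded away from $s=0$.
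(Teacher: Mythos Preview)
Your argument is correct and is exactly the standard proof; the paper itself omits the proof entirely, stating only that it is ``standard''. Your observation that the boundary weight coming out of the integration by parts is $s|s|^q=\operatorname{sgn}(s)|s|^{q+1}$ rather than the $|s|^q$ printed in the statement is also well spotted---in the paper's applications one always has $s>0$, so this is a harmless typographical slip.
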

Throughout the paper, we will also  make use of the Sobolev embedding $H^2(S^2)\hookrightarrow L^{\infty}(S^2)$ without explicit mention.

  \subsection{{Notation: commutator vector fields and inequality symbols}}\label{sec:Sch:notation}

 As usual, we write $f\lesssim F$ (or $f=O(F)$) if there exists a global (we call a constant global if it only depends on background parameters or if they feature in global (or initial data) assumptions such as \cref{ass:main}) constant $C>0$ such that $f\leq CF$. If this constant $C$ has dependencies on other parameters, e.g.~$C=C(n)$, we will write $f\lesssim_n F$.
 
Next, for $\mathcal{O}_i$ denoting the Killing vector fields on $S^2$, we denote
 \begin{align}
 	\Vk=\{T,\O_1,\O_2,\O_3\}, \quad \Vc=\{r\partial_r|_{\Sigma}\}\cup \Vk=\{r\partial_r|_{\Sigma}, T, \O_1,\O_2,\O_3\},\quad \V=\{\tau T, r \partial_r|_{\Sigma}, \mathcal{O}_1, \mathcal{O}_2, \mathcal{O}_3\}.
 \end{align}
 The vector fields $\Vk$ and $\Vc$ will be the vector fields with which we commute, the vector fields $\V$ will be the vector fields with which we measure pointwise decay.
 
 Let $\mathcal{V}$ denote a finite set of vector fields $Z_i$, like $\Vk$, $\Vc$ $\V$ (in an abuse of notation, we also allow for $\sl$ to feature in $\mathcal{V}$; in expressions like $\mathcal{V}^\alpha$, $\sl$ will then always be taken to act last). Let $\alpha\in \N^{|\mathcal{V}|}$ be a multi-index and $k\in \N$. Denote:
 \begin{equation*}
  \mathcal{V}^{\alpha}=\prod_{i=1}^{|\mathcal{V}|}Z_i^{\alpha_i}, \qquad  \mathcal{V}^{k}=\sum_{|\alpha|\leq k}  \mathcal{V}^{\alpha}.
 \end{equation*}

 Now, for a \textit{functional inequality} of the form $\mathcal{H}[\phi]\lesssim \mathcal{G}[\psi]$ we will write:
 \begin{equation}
 	\mathcal{H}[\phi]\lesV[\mathcal{V}^N] \mathcal{G}[\phi] \quad \text{if }\sum_{|\alpha|\leq N}\mathcal{H}[\mathcal{V}^{\alpha} \phi]\lesssim_N  \sum_{|\alpha|\leq N}\mathcal{G}[\mathcal{V}^{\alpha} \phi].
 \end{equation}
 We shall write $\mathcal{H}[\phi]\lesV[\mathcal{V}]\mathcal{G}[\phi]$ if 	$\mathcal{H}[\phi]\lesV[\mathcal{V}^N] \mathcal{G}[\phi]$ holds for all $N$.
We also generalise this notation for inequalities of the form $\sum_{i=1}^n \mathcal{H}_i[\phi_i]\lesssim \sum_{i=1}^{k}\mathcal{G}_i[\phi_i]$, or for inequalities of the form $\mathcal{F}[\phi]\lesV[\mathcal{V}] 1$.
\textit{In most cases, our inequalities will not verbatim look like $\mathcal{H}[\phi]\lesssim \mathcal{G}[\phi]$, but rather  take the form of integral inequalities, see for instance \cref{prop:inhom:rp}. These integrals will feature either $\phi$, $\psi$ or $F$, and we will accordingly interpret them as functionals of either $\phi$, $\psi$ or $F$.
}

On the other hand, for \textit{pointwise estimates}, we use the following notation capturing decay with respect to the vector fields $\V$: 
 Given $n\in\mathbb{N}$ and two functions $f$, $F$ on $\M$, with $F$ strictly positive, we then write
 \begin{equation}
 	f\lesVn F \text{ if $\V^n (f)\lesssim_n F$ holds both in $\{r< \rc\}$ and $\{r\geq \rc\}$,}
 \end{equation}
and we write $
 	f \lesV F$ (or $f=O_{\V}(F)$)
 if $f\lesVn F$ for all $n\in\mathbb{N}$. We extend this notation to general sets of vector fields as well.

 Finally, for $f$ a function with a $1/r$-expansion towards $\scrip$,  we denote by $f^{(i)}$ the $i$th coefficient in
 \begin{equation}
 	f-\sum_{i=0}^n \frac{f^{(i)}}{r^i}=O(r^{-n-}).
 \end{equation}
 For instance, we have
$
 	(fg)^{(1)}=f^{(0)}g^{(1)}+g^{(0)}f^{(1)}.
$
 \section{The class of polynomially decaying dynamical metrics on $\M$}
 \label{sec:classmetrics}
 \subsection{Metrics in a double null gauge}
In this section, we introduce a class of dynamical metrics $g$ on $\mathcal{M}$ with respect to which we will eventually consider the wave equation $\Box_g\phi=0$. In order to define this class, we first need to discuss metrics in a double null gauge.
A metric is said to be in double null form if it takes the form 
 \begin{equation}\label{eq:dyn:doublenullbdv}
 	g=-4\Omega^2 \dd u \dd v+\slashed{g}_{AB}(\dd \theta^A-b^{A}\dd v)(\dd \theta^B-b^{B}\dd v).
 \end{equation}
 Alternatively, the $b$-term may appear in front of the $\dd u$:
 \begin{equation}\label{eq:dyn:doublenulldu}
 	g=-4\Omega^2 \dd u \dd v+\slashed{g}_{AB}(\dd \theta^A-b^{A}\dd u)(\dd \theta^B-b^{B}\dd u).
 \end{equation}
 
Let us for now assume that $g$ is of the form \cref{eq:dyn:doublenulldu} (all formulae below have obvious analogues for \cref{eq:dyn:doublenullbdv}). Then its inverse and determinant are given by:
 	\begin{equation}
 	g^{-1}=-\frac{1}{2\Omega^2}(\pu\otimes\pv+\pv\otimes \pu)-\frac{b^A}{2\Omega^2}(\pu\otimes \partial_A+\partial_A\otimes \pu)+\slashed{g}^{-1}, \quad \det{g}=2\Omega^2\detgs.
 \end{equation}
 
 We may associate to \cref{eq:dyn:doublenulldu} a double null frame
$e_3=\Omega^{-1}(\pu+b^A\partial_{\theta^A}),\,e_4=\Omega^{-1}\pv,\, e_A=\partial_{\theta^A}$, and we recall the definitions
 \begin{equation}\label{eq:dyn:trx}
 	\otrx=\Omega g^{AB} g(\nabla^g_A e_4, e_B), \qquad \otrxb=\Omega g^{AB}g(\nabla^g_A e_3, e_B),
 \end{equation}
 where $\nabla^g$ denotes the Levi-Civita connection of $g$.
 From \cref{eq:dyn:trx}, one may then readily derive the following equations:\footnote{We use $\pu \log \sqrt{\det \gs}=\frac12 \gs^{AB} \pu g_{AB}$ and  write $\pu g_{AB}$ as the Lie-derivative in $e_3$ minus the Lie-derivative in the $b^A\partial_A$-direction.}
 \begin{align}\label{eq:dyn:DlogG}
 	\pu \log \sqrt{\det\slashed{g}}=\otrxb-\div^g b=\otrxb-g^{AB}\nabla^g_A b_B, && \pv  \log \sqrt{\det\slashed{g}}=\otrx.
 \end{align}
 \subsection{The class of polynomially decaying metrics on $\M$}
 We now define on $\M$ a class of dynamical metrics $g$ settling down to $g_M$ with rate $(\tau+r)^{-1}\tau^{-\delta}$, where $\tau$ and $r$ denote the functions on $\M$ defined in \cref{sec:Sch}.
These metrics should be thought of as arising from a stability proof of the Schwarzschild solution as in \cite{dafermos_non-linear_2021}.

 In particular, since it is not yet known how to prove stability with respect to a single choice of double null gauge (ignoring, of course, the failure of the spheres to be parallelisable), we allow for metrics covered by the Schwarzschild coordinates $u,v,\theta^A$ such that they are of double null form \cref{eq:dyn:doublenulldu} near $\scrip$, of double null form \cref{eq:dyn:doublenullbdv} near $\hplus$, but not necessarily of double null form in some intermediate matching region.
 
 \begin{ass}[Polynomial decay]\label{ass:dyn:pol}
 	Let $\delta>0$ and let $1\leq N\in\mathbb{N}$. We say that a metric $g$ settles down polynomially to $g_M$ (with rate $\delta$ and conformal regularity $N$) if $g$  takes the form \cref{eq:dyn:doublenullbdv} for $r\leq \rc-1$, \cref{eq:dyn:doublenulldu} for $r\geq \rc$, and if the following bounds hold: 
 	
 	For $r\leq \rc$, and for $\mu,\nu\in\{A, 3,4\}$ ($A=1,2$ always denoting spherical indices), we require
 	\begin{equation}\label{eq:dyn:improved1}
 		g_{\mu\nu} -(g_{M})_{\mu\nu}\lesV \tau^{-1-\delta}.
 	\end{equation}
 	
 	For $r\geq \rc$, we assume that there exists a collection of functions $(\Omega^2)^{(i)}, \otrx^{(i)}$, $S^2$-1-forms $b^{(i)}$ and symmetric $S^2$-$(0,2)$-tensor fields\footnote{$b^{(i)}$ and $\slashed{g}^{(i)}$ being tensors along $\scrip$, their indices are raised and lowered with respect to the metric on the unit sphere.} $\slashed{g}^{(i)}$ along $\scrip$ for any ${-1}\leq n\leq N$:\footnote{Whenever the lower index in a sum exceeds the upper one, we take the sum to vanish.}
 	\begin{align}
 		&	\Omega^2(u,v,\theta)-\Omega^2_M(u,v,\theta)-\sum_{i= 2}^n\frac{(\Omega^2)^{(i)}(u, \theta)}{r^i}\lesV_n r^{-1}u^{-\delta}  \left(\frac{u}{r}\right)^{n} &\forall -1\leq n\leq N+1,\label{eq:dyn:pol:Omega} \\
 		&	rb^A(u,v,\theta)-\sum_{i= 1}^n\frac{r(b^{(i)})^{A}(u,\theta)}{r^i}\lesV_n r^{-1}u^{-\delta} \left(\frac{u}{r}\right)^{n} &\forall -1\leq n\leq N,\\
 		&	r^{-2}\slashed{g}_{AB}(u,v,\theta)-r^{-2}\gs_{M,AB}(u,v,\theta)-\sum_{i=1}^n\frac{r^{-2}(\slashed{g}^{(i)})_{AB}(u,\theta)}{r^i}\lesV_n  r^{-1}u^{-\delta}  \left(\frac{u}{r}\right)^{n}&\forall -1\leq n\leq N, \label{eq:dyn:pol:g}\\
 		&	\otrx(u,v,\theta)-\otrx_M(u,v,\theta)-\sum_{i= 3}^{n+1} \frac{\otrx^{(i)}(u,\theta)}{r^i}\lesV_n r^{-2}u^{-\delta}  \left(\frac{u}{r}\right)^{n}&\forall -1\leq n\leq N+1,\label{eq:dyn:pol:trx}
 	\end{align}
 \end{ass}
 \begin{rem}
 	The reason for the $r$-weights in front of $b^A$ and $\gs_{AB}$ is to emphasise the $r$-weights coming from having the indices downstairs/upstairs! 
 	The reason for allowing for $n=-1$, where all the sums above are empty, is to capture improved $\tau$-decay away from $\scrip$.
 	Furthermore, notice, for instance, that  \cref{eq:dyn:pol:Omega}, applied with $n=1$ and $n=2$, implies:
 	\begin{equation}
 		\frac{(\Omega^2)^{(2)}}{r^2}\leq| \Omega^2-\Omega^2_M|+ \left| \Omega^2-\Omega^2_M-\frac{(\Omega^2)^{(2)}}{r^2}\right| \lesV r^{-1}u^{-\delta}\(\frac{u}{r}+\(\frac{u}{r}\)^2\) \implies (\Omega^2)^{(2)}\lesV u^{-\delta+1};
 	\end{equation}
 	more generally, \cref{eq:dyn:pol:Omega} implies that $(\Omega^2)^{(i+1)}\lesV u^{-\delta+i}$, similarly for the other quantities.
 \end{rem}

 \begin{rem}
 	\cref{ass:dyn:pol} implies that $\Omega^2$, $b^A$, $\gs_{AB}$ and $\otrx$ extend continuously to $\hplus$, in view of the regularity with respect to $\Xb$ encoded in the $\lesV$-notation.
  \end{rem}
\begin{rem}
The reader should think of the metrics of \cref{ass:dyn:pol} as being \emph{Bondi normalised}; i.e.~as satisfying
\begin{equation}
	0=\lim_{\scrip} b^A=\lim_{\scrip} r^{-2}(\gs_{AB}-(\gs_M)_{AB})=\lim_{\scrip} r(\Omega^2-\Omega^2_M)=\lim_{\scrip} r^2(\otrx-\otrx_M),
\end{equation}
 and being characterised by a News function $\xhb^{(1)}$ along $\scrip$ decaying like $\xhb^{(1)}\lesssim u^{-1-\delta}$. 
  Notice that the gauge assumption $\lim r^2(\otrx-\otrx_M)=0$ implies
 \begin{equation}
 	\pv\left( \log \sqrt{\det\slashed{g}}- \log \sqrt{\det\slashed{g}_M}\right)=\otrx-\otrx_M \implies \frac{\sqrt{\det\slashed{g}}}{\sqrt{\det\slashed{g}_M}}=O(\e^{r^{-2}})=1+\O(r^{-2}),
 \end{equation}
 i.e.~the area radius of $S^2_{u,v}$ with respect to $g$ approaches the area radius with respect to $g_M$ at a sufficiently fast rate as $v\to \infty$. This assumption is equivalent to saying that $\gs^{(1)}$ is traceless.\footnote{Without this assumption, already quantities such as the $\ell=0$-Newman--Penrose constant, given by $\lim_{v\to\infty}r^2\pv(r\phi)$ in Schwarzschild, would need to be modified for the dynamical metrics.}
 See also \cref{app:motivation} and \cite{dafermos_non-linear_2021}[Section 17].
 \end{rem} 
 
\begin{rem}
	The metrics constructed in \cite{dafermos_non-linear_2021} satisfy \cref{ass:dyn:pol} with some finite $N$ and with $\delta=1$, apart from: 1) the regularity with respect to $\tau T$, and 2) the improved $\tau$-decay away from $\scrip$. Note that 2) can be shown to follow from 1), combined with suitable elliptic estimates. 
	 Establishing 1) would require additional work within the framework of \cite{dafermos_non-linear_2021}. {On the other hand, an increase in $N$ (establishing additional conformal regularity) would follow directly from stricter conformal regularity assumptions at the level of the initial data}.
 \end{rem}
 \begin{rem}
 	One may alternatively work with metrics that are of the form \cref{eq:dyn:doublenullbdv} near $\scrip$. Such classes of metrics are constructed in \cite{dafermos_scattering_2024}, for instance. However, the position of the $b$-term turns out to not affect the results of the present paper in any way.
 \end{rem}
  \begin{rem}
 	The different upper bounds for $n$ in \cref{eq:dyn:pol:Omega}--\cref{eq:dyn:pol:trx}, measuring the conformal regularity of the different quantities, turn out to be natural in the context of the present paper. 
 	They are also somewhat natural in the context of the Einstein vacuum equations, though the situation there is slightly more complicated; see also \cref{app:motivation}.
 \end{rem}

\subsection{The dynamical wave operator vs the Schwarzschild wave operator}
We now study the consequences of \cref{ass:dyn:pol} when comparing the dynamical and Schwarzschild wave operator. For any function $\phi$, we have
\begin{equation}\label{eq:dyn:G}
	\sqrt{-\det g}\Box_g \phi=\sqrt{-\det g_M}\Box_{g_{M}}\phi+\partial_\alpha (G^{\alpha\beta}\partial_\beta\phi), \quad\text{where}\quad 	G^{\alpha\beta}=g^{\alpha\beta}\sqrt{-\det g}-g_M^{\alpha\beta}\sqrt{-\det g_M}.
\end{equation}

We will therefore typically write $\Box_g\phi=0$ as
\begin{equation}\label{eq:wave:dynamical:inhom}
	\Box_{g_M}\phi = \frac{\partial_{\alpha}(G^{\alpha\beta}\partial_\beta\phi)}{-\detgm}=F, \quad \text{or, equivalently,}
\end{equation}

\begin{equation}\label{eq:dyn:wave:X}
	\Xb(r^2D\Xb\phi)+r\Xb(rT\phi)+\Dl\phi=-\frac{\partial_\alpha\left(G^{\alpha\beta}\partial_\beta\phi\right)}{2D\sin \theta}=\frac{\sqrt{-\det g_M}F}{2D\sin \vartheta}=	\Xx   (r^2D\Xx   \phi)-r\Xx   (rT\phi)+\Dl\phi.
\end{equation}

We observe, in particular, that, when $g$ takes the double null form \cref{eq:dyn:doublenullbdv} or \cref{eq:dyn:doublenulldu}:
\begin{nalign}\label{eq:dyn:Gspelledout}
	G^{uu}&=G^{vv}=0,\qquad G^{uv}=-(\sqrt{\det \slashed{g}}-\sqrt{\det \slashed{g}_M}), 
	\qquad G^{uA}=g^{uA}\sqrt{-\det g}, \qquad G^{vA}=g^{vA}\sqrt{-\det g}\\
	 G^{AB}&= (\slashed{g}^{AB}-\slashed{g}_M^{AB})\sqrt{-\det g}+\slashed{g}_M^{AB}(\sqrt{-\det g}-\sqrt{-\det g_M}),
\end{nalign}
where we have used that $\sqrt{-\det g}=2\Omega^2 \sqrt{\det{\slashed{g}}}$. 

Recall also that $g^{uA}=-\frac{b^A}{2\Omega^2}$, $g^{vA}=0$ when $g$ is of the form \cref{eq:dyn:doublenullbdv}, and vice versa if $g$ is of the form \cref{eq:dyn:doublenulldu}. In the latter case, we then have that $\lim_{r\to\infty} G^{vA}=- (b^{(1)})^A \sin \vartheta$ and that $\lim_{r\to\infty} r G^{AB}= -2\sin \vartheta \gsh^{(1)}$. More generally, we have
\begin{lemma}\label{lem:dyn:G}
 \cref{ass:dyn:pol}  implies that there exists functions $(G^{\mu\nu})^{(i)}$ along $\scrip$ such that, componentwise: 
	\begin{align}\label{eq:dyn:Guv}
		G^{uv}-\sum_{i=0}^{n-2} \frac{(G^{uv})^{(i)}}{r^i}&\lesV r \tau^{-\delta}\left(\frac{\tau}{r}\right)^{n} &\forall -1\leq n\leq N+1,
		\\\label{eq:dyn:GvA}
		G^{vA}-\sum_{i=0}^{n-1} \frac{(G^{vA})^{(i)}}{r^i}&\lesV  \tau^{-\delta} \left(\frac{\tau}{r}\right)^{n} &\forall -1\leq n\leq N,\\
		G^{AB}-\sum_{i=1}^{n} \frac{(G^{AB})^{(i)}}{r^i}&\lesV  r^{-1}\tau^{-\delta} \left(\frac{\tau}{r}\right)^{n}&\forall -1\leq n\leq N.\label{eq:dyn:GAB}
	\end{align}
\end{lemma}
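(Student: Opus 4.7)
The proof plan is to reduce everything to three ingredients: the algebraic formulas \cref{eq:dyn:Gspelledout}, the expansions of the building blocks $\Omega^2, b^A, \slashed{g}_{AB}, \slashed{g}^{AB}$ given by \cref{ass:dyn:pol}, and an elementary ``closure under products'' lemma for functions admitting expansions of the given type. By expansions ``of the given type'' we mean functions $f$ obeying, for all $-1\leq n\leq N$ (or $N+1$),
\begin{equation*}
f-\sum_{i=0}^{n}\frac{f^{(i)}(u,\theta)}{r^{i}}\lesV r^{\alpha}\tau^{-\delta}\left(\frac{\tau}{r}\right)^{n}
\end{equation*}
for some ``$r$-weight'' $\alpha$ (with the usual convention that the sum is empty when $n<0$, so the case $n=-1$ is merely the improved $\tau$-decay away from $\scrip$).

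First, I would state and prove such a product lemma: if $f$ and $h$ admit expansions of the above form with weights $\alpha_{f},\alpha_{h}$ and leading orders $f^{(0)},h^{(0)}$, then $f\cdot h$ admits such an expansion with weight $\alpha_{f}+\alpha_{h}$ and coefficients $(fh)^{(i)}=\sum_{j+k=i}f^{(j)}h^{(k)}$. The same holds for a smooth function of $f$ applied pointwise, provided $f^{(0)}$ stays in the domain; this is used to handle $\sqrt{\cdot}$ and $(\cdot)^{-1}$. All of this is proved by writing $f=\sum f^{(i)}/r^{i}+R_{f,n}$, $h=\sum h^{(i)}/r^{i}+R_{h,n}$ and using that the telescoping error terms inherit weights of the form $\tau^{-\delta}(\tau/r)^{n}$ (keeping in mind that $(\tau/r)^{n}\cdot(\tau/r)^{m}=(\tau/r)^{n+m}\geq (\tau/r)^{\min(n,m)}$ when $\tau\leq r$, so the worst remainder controls everything). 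The $\lesV$-regularity propagates because both $\tau T$ and $r\partial_{r}|_{\Sigma}$ act on expansions in $1/r$ in a controlled way, and the Killing vector fields on $S^{2}$ only see the angular coefficients.

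Next, I would assemble the expansions of the key composite quantities. From \cref{eq:dyn:pol:g}, $r^{-2}\slashed{g}_{AB}$ has an expansion of the type above with weight $\alpha=0$ and zeroth coefficient $r^{-2}\slashed{g}_{M,AB}$. Taking a $2\times 2$ determinant and a square root is smooth, so $\sqrt{\det\slashed{g}}=r^{2}\sin\vartheta\cdot\sqrt{\det(r^{-2}\slashed{g}/\sin^{2}\vartheta\cdot\text{stuff})}$ inherits an expansion with weight $\alpha=2$ and leading coefficient $\sqrt{\det\slashed{g}_{M}}$. Combining this with \cref{eq:dyn:pol:Omega} gives a corresponding expansion for $\sqrt{-\det g}=2\Omega^{2}\sqrt{\det\slashed{g}}$, and inverting gives expansions for $\slashed{g}^{AB}$ with weight $-2$ and for $\Omega^{-2}$ with weight $0$. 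Tracking which coefficients vanish identifies the lowest-order nonzero $1/r$-power in each difference.

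Finally, I would plug these into the formulas of \cref{eq:dyn:Gspelledout} and read off the three bounds. For $G^{uv}=-(\sqrt{\det\slashed{g}}-\sqrt{\det\slashed{g}_{M}})$, the difference starts at order $r$ because $(\sqrt{\det\slashed{g}})^{(2)}=(\sqrt{\det\slashed{g}_M})^{(2)}$ cancels (this is where the traceless assumption on $\slashed{g}^{(1)}$ enters) but $(\sqrt{\det\slashed{g}})^{(1)}$ need not vanish; together with the product lemma this yields \cref{eq:dyn:Guv}. For $G^{vA}=-\frac{b^{A}}{2\Omega^{2}}\sqrt{-\det g}$, using that $rb^{A}$ has an expansion with $(rb^{A})^{(0)}=0$ and $\sqrt{-\det g}$ has weight $2$, the product lemma gives an expansion starting at order $r^{0}$, which is \cref{eq:dyn:GvA}. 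For $G^{AB}$, both summands in \cref{eq:dyn:Gspelledout} separately have expansions starting at order $r^{-1}$ (since $\slashed{g}^{AB}-\slashed{g}_{M}^{AB}$ has weight $-2$ with vanishing leading coefficient and $\sqrt{-\det g}$ has weight $2$, while $\slashed{g}_{M}^{AB}$ has weight $-2$ and $\sqrt{-\det g}-\sqrt{-\det g_{M}}$ has weight $1$), giving \cref{eq:dyn:GAB}.

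The main obstacle is not any single computation but the systematic bookkeeping in the product lemma: one must verify that the mixed terms $\sum_{j+k=i}f^{(j)}h^{(k)}/r^{i}$ pair up correctly, that the remainders combine with weight $\alpha_{f}+\alpha_{h}$ and the minimal $(\tau/r)$-power, and that all these operations commute with the commutator vector fields $\V$ in the $\lesV$-sense. Once this lemma is established, the rest of the argument is purely algebraic substitution; the slight subtlety of handling the different upper bounds ($N$ versus $N+1$) in \cref{ass:dyn:pol} is a consequence of tracking how many orders of $1/r$ each building block furnishes before its remainder becomes the limiting factor.
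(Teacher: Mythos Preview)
The paper states this lemma without proof, so your plan---a product/closure lemma for weighted $1/r$-expansions, then substitution into the explicit formulas \cref{eq:dyn:Gspelledout}---is exactly the natural argument and is correct in outline.

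There is one slip in your $G^{uv}$ paragraph. Writing $\sqrt{\det\slashed{g}} = r^2\sin\vartheta + \tfrac{r}{2}\sin\vartheta\,\tr_\gamma\slashed{g}^{(1)} + O(1)$, the $r^2$ coefficients of $\sqrt{\det\slashed{g}}$ and $\sqrt{\det\slashed{g}_M}$ agree \emph{trivially}; it is the $r^{1}$ coefficient that equals $\tfrac{1}{2}\sin\vartheta\,\tr_\gamma\slashed{g}^{(1)}$ and therefore vanishes by tracelessness---not the other way around. Consequently the difference starts at order $r^{0}$, not $r^{1}$. This matters: the cases $n\geq 1$ of \cref{eq:dyn:Guv} assert in particular $G^{uv}\lesV \tau^{1-\delta}$ (bounded in $r$), which would fail if the $r^{1}$ coefficient survived. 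Note also that tracelessness of $\slashed{g}^{(1)}$ is not an extra hypothesis but a consequence of the $\otrx$ bound \cref{eq:dyn:pol:trx} via the Bondi-normalisation remark, so your bookkeeping should route through that assumption rather than through \cref{eq:dyn:pol:g} alone. Your handling of $G^{vA}$ and $G^{AB}$ is fine (the non-sharp ``weight $1$'' you assign to $\sqrt{-\det g}-\sqrt{-\det g_M}$ in the $G^{AB}$ step is harmless because the other summand already dominates at $r^{-1}$).
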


In order to re-write a few limits further, we recall the definition of the Bondi mass along $\scrip$ as the limit of the quasilocal Hawking mass:
	\begin{equation}
	M_B(u)=\lim_{v\to\infty}\frac{r}{2}\left(1-\frac{1}{16\pi}\int_{S^2_r}\trx\trxb \right).
\end{equation}

\begin{lemma}\label{lem:dyn:Bondi}
Denote $\otrxb^{(2)}=4\mathfrak{m}(u,\theta)$, i.e.~$\otrxb=-2/r+\frac{4\mathfrak{m}(u)}{r^2}+O(r^{-3})$.  Then under \cref{ass:dyn:pol}, we have $\overline{\mathfrak{m}}(u)=M_B(u)$, with $\overline{\mathfrak{m}}$ denoting the spherical average of $\mathfrak{m}$.
\end{lemma}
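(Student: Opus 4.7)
The plan is a direct asymptotic computation. We simply expand the integrand $\trx \trxb \sqrt{\det \slashed{g}}$ in powers of $1/r$ as $v\to\infty$ at fixed $u$, read off the coefficient of $r^{-1}\sin\vartheta$, and identify it with $\overline{\mathfrak{m}}$ after spherical integration. The key inputs are \cref{ass:dyn:pol} together with the Bondi gauge conditions that $\lim_{\scrip} r(\Omega^2 - \Omega_M^2)=0$ and that $\slashed{g}^{(1)}$ is traceless. In particular, relative to $g_M$, the Bondi normalisation forces $(\Omega^2)^{(1)}=(\Omega_M^2)^{(1)}=-2M$, while the traceless condition on $\slashed{g}^{(1)}$ forces $\sqrt{\det\slashed{g}}/(r^2\sin\vartheta)=1+O(r^{-2})$ as $r\to\infty$ at fixed $u$.

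First, I would compute the $1/r$-expansions of $\otrx$ and $\otrxb$. From \cref{eq:dyn:pol:trx} and $\otrx_M = 2D/r$,
\begin{equation*}
\otrx = \frac{2}{r} - \frac{4M}{r^2} + O(r^{-3}),
\end{equation*}
while the assumption in the statement gives
\begin{equation*}
\otrxb = -\frac{2}{r} + \frac{4\mathfrak{m}(u,\theta)}{r^2} + O(r^{-3}).
\end{equation*}
Multiplying, the $r^{-2}$ coefficient is $-4$ and the $r^{-3}$ coefficient works out to $8\mathfrak{m}+8M$ (which one may sanity-check against Schwarzschild, where $\mathfrak{m}=M$ and $\otrx_M\otrxb_M = -\otrx_M^2 = -4/r^2 + 16M/r^3 + O(r^{-4})$).

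Second, I would pass from $\otrx\otrxb$ to $\trx\trxb$. Since $\trx=\otrx/\Omega$ and $\trxb=\otrxb/\Omega$, we have $\trx\trxb = \otrx\otrxb/\Omega^2$, and, by Bondi normalisation plus \cref{eq:dyn:pol:Omega}, $\Omega^{-2} = 1 + 2M/r + O(r^{-2})$. A short calculation gives
\begin{equation*}
\trx\trxb = -\frac{4}{r^2} + \frac{8\mathfrak{m}(u,\theta)}{r^3} + O(r^{-4}),
\end{equation*}
with the $8M$ contribution from $\otrx\otrxb$ cancelling exactly against the $8M$ contribution from $\Omega^{-2}$. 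Multiplying by $\sqrt{\det\slashed{g}} = r^2\sin\vartheta(1+O(r^{-2}))$, which uses the traceless-$\slashed{g}^{(1)}$ assumption, one obtains
\begin{equation*}
\trx\trxb\,\sqrt{\det\slashed{g}} = -4\sin\vartheta + \frac{8\mathfrak{m}(u,\theta)}{r}\sin\vartheta + O(r^{-2}).
\end{equation*}

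Third, integrating over $S^2$ and extracting the coefficient of $1/r$,
\begin{equation*}
\frac{1}{16\pi}\int_{S^2_r}\trx\trxb \;=\; -1 + \frac{2\,\overline{\mathfrak{m}}(u)}{r} + O(r^{-2}),
\end{equation*}
from which $M_B(u) = \lim_{v\to\infty}\tfrac{r}{2}(1+\tfrac{1}{16\pi}\int_{S^2_r}\trx\trxb) = \overline{\mathfrak{m}}(u)$ (with a sign convention in the definition of the Hawking mass chosen so that the Schwarzschild value recovers $M$). The only genuinely delicate point is bookkeeping of subleading terms: one must verify that the leading $-4/r^2$ of $\otrx\otrxb$ does \emph{not} pick up an $r^{-3}$ contribution from the $r^{-2}$ correction to $\sqrt{\det\slashed{g}}/(r^2\sin\vartheta)$ that could pollute $\overline{\mathfrak{m}}$; this is precisely where the traceless-$\slashed{g}^{(1)}$ Bondi condition is used. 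All higher-order corrections are controlled uniformly on $S^2$ by \cref{lem:dyn:G} and \cref{ass:dyn:pol}, so the limit commutes with the spherical integral.
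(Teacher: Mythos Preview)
Your proof is correct and follows essentially the same approach as the paper: a direct asymptotic expansion of $\trx\trxb$ using the Bondi gauge conditions. The paper's version is marginally more streamlined (it replaces $\Omega^{-2}\otrx$ by $\Omega_M^{-2}\otrx_M = 2/r$ in one step rather than expanding both factors and cancelling the $8M$), but the computation and the key use of the traceless-$\slashed{g}^{(1)}$ condition are the same; your observation about the sign in the Hawking mass definition is also apt.
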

\begin{proof}
\begin{nalign}
	\trx\trxb&=\frac{1}{\Omega^2}\otrx\otrxb=\frac{1}{\Omega^2_M}\otrx\otrxb+O(r^{-3-})=\frac{1}{\Omega^2_M}(\otrx)_M\otrxb+O(r^{-3-})\\
&	=	\frac{1}{\Omega^2_M}\(\frac2r-\frac{4M}{r^2}\)\(-\frac{2}{r}+\otrxb+\frac{2}{r}\)+O(r^{-3-})	
	=-\frac4{r^2}+\frac2r\(\otrxb+\frac2r\)+O(r^{-3-}).
\end{nalign}
Thus, using the definition of the Bondi mass, we have
$
	M_B(u)=\lim_{v\to\infty}\frac r4 (r\overline{\otrxb}+2).
$
\end{proof}
\begin{rem}
	In \cref{app:motivation}, we show that $\partial_u \mathfrak{m}(u)=-\frac14|\xhb^{(1)}|^2 $, with $\xhb^{(1)}$ denoting the News function along $\scrip$.
\end{rem}
For later reference, we summarise the main properties of $G^{\mu\nu}$ in the following 
\begin{lemma}\label{lem:dyn}
Let $g$ be as in \cref{ass:dyn:pol} and let $\mathring{\div}$ denote the divergence operator on the round unit sphere. Then the components of $G^{\alpha\beta}$ as defined in \cref{eq:dyn:G} satisfy:
\begin{align}\label{eq:dyn:TGuvlimit}
	\lim_{v\to\infty} G^{vA}=-(b^{(i)})^A\sin\vartheta, \quad \lim_{v\to\infty}rG^{AB}=-2\sin\vartheta (\gsh^{(1)})^{AB}, \quad \lim_{v\to\infty}-TG^{uv}=4m(u,\theta^A)-4M-\mathring{\div} b^{(1)}
\end{align}
	\end{lemma}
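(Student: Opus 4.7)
The plan is to verify each of the three claimed limits by combining the componentwise formulas \cref{eq:dyn:Gspelledout} with the asymptotic expansions from \cref{ass:dyn:pol} (sharpened in \cref{lem:dyn:G}), working throughout in the double null gauge \cref{eq:dyn:doublenulldu} that is valid near $\scrip$.

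For the first limit, I would substitute $g^{vA}=-b^A/(2\Omega^2)$ and $\sqrt{-\det g}=2\Omega^2\sqrt{\det\gs}$ into $G^{vA}=g^{vA}\sqrt{-\det g}$, so the claim reduces to computing $\lim_{v\to\infty}(-b^A\sqrt{\det\gs})$. Inserting the $b^A$-expansion from \cref{ass:dyn:pol} together with the Bondi-normalisation consequence $\sqrt{\det\gs}/\sqrt{\det\gs_M}\to 1$ at a sufficiently fast rate (which follows from the tracelessness of $\gs^{(1)}$ noted after \cref{ass:dyn:pol}) would then identify the claimed leading coefficient.

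For the second limit, I would split $G^{AB}$ into the two pieces in \cref{eq:dyn:Gspelledout}. The piece $\gs_M^{AB}(\sqrt{-\det g}-\sqrt{-\det g_M})$ is subleading at the relevant order: $\gs_M^{AB}\sim r^{-2}$ combined with $\sqrt{-\det g}-\sqrt{-\det g_M}=O(1)$ (the leading $O(1)$ contribution comes from $(\Omega^2)^{(2)}$) contributes only at order $r^{-2}$. The dominant piece $(\gs^{AB}-\gs_M^{AB})\sqrt{-\det g}$ would then be handled using the inverse-perturbation identity $\gs^{AB}-\gs_M^{AB}=-\gs_M^{AC}(\gs_{CD}-\gs_{M,CD})\gs_M^{BD}+O((\gs-\gs_M)^2)$ together with \cref{eq:dyn:pol:g} and $\sqrt{-\det g}\sim 2r^2\sin\vartheta$, producing the claimed $-2\sin\vartheta(\gsh^{(1)})^{AB}/r$.

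For the third limit, I would write $G^{uv}=-(\sqrt{\det\gs}-\sqrt{\det\gs_M})$, and observe that $T\sqrt{\det\gs_M}=0$ since $\sqrt{\det\gs_M}=r^2\sin\vartheta$ and $Tr=0$ in the $(u,v)$-parametrisation. Applying $T=\pu+\pv$ together with the identities \cref{eq:dyn:DlogG} appropriate to gauge \cref{eq:dyn:doublenulldu} then gives $-TG^{uv}=\sqrt{\det\gs}(\otrx+\otrxb-\div^g b)$. Next, I would use $\otrx_M=2D/r$, the expansion \cref{eq:dyn:pol:trx}, and the expression for $\otrxb$ furnished by \cref{lem:dyn:Bondi}, namely $\otrxb=-2/r+4\mathfrak{m}/r^2+O(r^{-3})$, so that $\otrx+\otrxb=4(\mathfrak{m}-M)/r^2+O(r^{-3})$. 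The remaining $\div^g b=(\sqrt{\det\gs})^{-1}\partial_A(\sqrt{\det\gs}b^A)$ would be expanded using the $b^A$-expansion from \cref{ass:dyn:pol}, so that after multiplication by $\sqrt{\det\gs}\sim r^2\sin\vartheta$ the $\mathring{\div}b^{(1)}$ contribution appears. The main obstacle I anticipate is the careful order-bookkeeping in this last step: the expansions of $\otrx$, $\otrxb$, $b^A$ and $\sqrt{\det\gs}$ must be tracked to precisely the right order in $r^{-1}$ so that the apparent leading-order contributions in $T\sqrt{\det\gs}$ coalesce into the claimed finite limit $4\mathfrak{m}-4M-\mathring{\div}b^{(1)}$, with the Bondi normalisation playing the crucial role in ensuring these cancellations.
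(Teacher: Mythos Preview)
Your proposal is correct and follows essentially the same route as the paper. The paper's own proof is extremely terse: it notes that the first two limits were already computed in the text preceding \cref{lem:dyn:G}, and for the third it records only the identity $TG^{uv}=-(\pu+\pv)\sqrt{\det\slashed g}=-\sqrt{\det\slashed g}(\otrxb+\otrx-\div^g b)$, leaving the expansion of the right-hand side implicit. Your argument is precisely a fleshed-out version of this: you recover the first two limits from \cref{eq:dyn:Gspelledout} plus the asymptotics in \cref{ass:dyn:pol}, and for the third you derive the same identity via \cref{eq:dyn:DlogG} and then insert the expansions of $\otrx$, $\otrxb$ (from \cref{lem:dyn:Bondi}) and $\div^g b$ to obtain the claimed finite limit. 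The ``order-bookkeeping obstacle'' you flag is not a genuine difficulty once the identity is in hand---the $2/r$ terms in $\otrx$ and $\otrxb$ cancel, and multiplication by $\sqrt{\det\slashed g}\sim r^2\sin\vartheta$ converts the remaining $r^{-2}$ coefficients directly into the stated limit.
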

	\begin{proof}
		We have already proved the first two limits. For the third, note that $	TG^{uv}=-(\pu+\pv)\sqrt{\det\slashed{g}}=-\sqrt{\det{\slashed{g}}}(\otrxb+\otrx-\div^g b)$.
	\end{proof}

\section{Precise statement of the main theorems}\label{sec:main}
In \cref{sec:mainmain}, we state the precise version of \cref{thm:mainrough}, along with a precise formulation of the corresponding assumptions and with further commentary on the results. In \cref{sec:main:additional}, we state additional results for nonlinear wave equations.
\subsection{The main result (\cref{thm:main}): Precise asymptotics for dynamical spacetimes}\label{sec:mainmain}
We consider solutions to the equation:
\begin{equation}\label{eq:main:wave}
	\Box_g \phi=\sqrt{\frac{\det g_M}{\det g}} \mathfrak{F} \qquad \(\iff \Box_{g_M} \phi= -\frac{\partial_\mu (G^{\mu\nu}\partial_\nu \phi)}{\detgm} +\mathfrak{F}\),
\end{equation}
for some inhomogeneity {$\mathfrak{F}:\mathcal{M}\to \R$}.
We {impose} the following assumptions on $g$, $\phi$ and $\mathfrak{F}$:

\begin{ass}\label{ass:main}
\begin{enumerate}[label=\emph{(\Roman*)}]
\item Let $g$ be as in \cref{ass:dyn:pol} with $\delta>0$ and $N=\infty$.

\item Let $2<\beta\in\R$. Assume that the inhomogeneity $\mathfrak{F}$ satisfies for all $\ell_0<\beta$:
\begin{equation}\label{eq:main:assonF}
	\mathfrak{F}_{\ell\geq \ell_0}-\sum_{i=4}^{n+3} \frac{(\mathfrak{F}_{\ell\geq\ell_0})^{(i)}}{r^i}\lesV \frac{1}{r^4\tau^{\beta}}\left(\frac{\tau}{r}\right)^n \qquad \forall -2-\ell_0\leq n\leq \max(\ell_0,1).
\end{equation}

\item Let $\phi$ solve \cref{eq:main:wave} and arise from smooth compactly supported data along $\tilde{\Sigma}_{\tilde{\tau_0}}$. Assume the following \textit{a priori} decay assumption on $\psi=r\phi$:
\begin{equation}\label{eq:main:ass:phi}
	\psi \lesV \min_{q\in[0,1]} \left(\frac{r}{\tau}\right)^q.
\end{equation}
\end{enumerate}
\end{ass}
\begin{rem}
	One can show that (I) and (II) imply (III), see \cref{app:derivapriori}. We chose to nevertheless state (III) as a global assumption in order to emphasise that a weak a priori decay estimate for $\phi$ suffices to apply our methods. Furthermore, it would, in fact, be sufficient to merely assume $\psi \lesV[\Vc] 1$; as one could readily derive \cref{eq:main:ass:phi} (i.e.~regularity w.r.t.~$\tau T$ and improved decay away from $\scrip$) from this using the inhomogeneous energy estimates of \cref{sec:inhom}.
	We also note that, while we pose initial data along $\tilde{\Sigma}_{\tilde{\tau}_0}$, we will later on perform our (Schwarzschildean) energy estimates starting at $\Sigma_{\tau_0}$.
\end{rem}

We will now introduce a second assumption on the metric, relevant for capturing the precise late-time asymptotics for higher $\ell$-modes away from $\scrip$. Roughly speaking, with \cref{ass:dyn:pol} already telling us that $g-g_M\lesssim \tau^{-1-\delta}$ away from $\scrip$, we want to demand that $(g-g_M)_{\ell\geq \ell_0}\lesssim r^{\ell_0}\tau^{-1-2\ell_0-\delta}$. So as to avoid introducing tensorial mode decompositions, we capture this in the following
\begin{ass}\label{ass:main:aux}
Fix $L\in N$. Let $g$ and $\phi$ be as in \cref{ass:main}. We assume the following \underline{\emph{implication}}: For any $\tilde{\alpha}\in \R$:
\begin{equation}\label{eq:main:ass:aux}
	\text{If} \quad r^{-\ell_1}\phi_{\ell\geq \ell_1}\lesV \tau^{-\tilde{\alpha}-2\ell_1} \text{ for all $1\leq \ell_1\leq L$ }, \quad \text{then} \quad r^{-\ell_1} \partial_\mu(G^{\mu\nu} \partial_\nu \phi)_{\ell \geq \ell_1}\lesV \tau^{-\tilde{\alpha}-2\ell_1-1-\delta} \text{ for all $1\leq \ell_1\leq L$}.
\end{equation}
\end{ass}
\begin{rem}Notice that \cref{ass:dyn:pol} already implies \cref{eq:main:ass:aux} with $\ell_1=0$ in view of  \cref{eq:dyn:Guv}--\cref{eq:dyn:GAB}.\end{rem}
\begin{rem}
The point of \cref{ass:main:aux} is the following: 
Say we wanted to show that the $\ell=3$-mode decayed four powers faster than $\ell=1$ in a finite-$r$-region. For this to be consistent, the term $\partial_\mu(G^{\mu\nu} \partial_\nu \phi)_{\ell=3}$ also needs to decay faster, but this term now includes coupled modes that schematically look like $G_{\ell=3}\phi_{\ell=1}+G_{\ell=1}\phi_{\ell=3}$. 
\cref{ass:main:aux} (with, say, $L=3$) ensures that $G_{\ell=3}$ decays four powers faster than $G_{\ell=1}$. 
\end{rem}

Finally, we introduce the following class of approximate solutions: We denote for $\ell\in\N$ and for $m=-\ell,\dots,\ell$:
\begin{equation}
	\apphi=\frac{\chi r^{\ell}+(1-\chi) w_{\ell}}{\tilde{\tau}^{\ell+1}(\tilde{\tau}+r)^{\ell+1}}, \qquad \appsi=r\cdot \apphi.
\end{equation}
Here, $w_\ell$ denotes the stationary solution introduced in \cref{sec:Sch:stationary}, and $\chi(u/r)$ denotes a smooth cut-off identically one for $u/r\geq 3/4$, and vanishing for $u/r\leq 1/4.$\footnote{Introducing this cut-off is natural for a better choice of approximate solution, however, \cref{thm:main} remains valid if $\chi\equiv0$.}

We are now ready to state the main theorem of the present paper. The proof can be found in \cref{sec:time}:
\begin{thm}\label{thm:main}
	Let $g$ and $\phi$ be as in \cref{ass:main}, and assume that \cref{ass:main:aux} holds for some $L\in\N$. 
	Then the following estimates hold in $\M$:
	For $\ell=0$, we have that:
	\begin{equation}\label{eq:main:mainthm:l=0}
		\psi_{\ell=0}-T \appsi[0,0]\cdot \integral_0[\phi]  \lesV \tau^{-5/2}\min_{q\in[0,1]} \left(\frac{r}{\tau}\right)^q,
	\end{equation}
	with
	\begin{equation}
	\label{eq:main:thm:I0}
	\integral_0[\phi]:=\frac{1}{4\pi}\int_{\scrip} \mathfrak{m} \psi(u,\theta^A)+\frac{\mathfrak{F}^{(4)}}{2} \dd u \dd \sigma.
	\end{equation}
	Furthermore, for $1\leq \ell_0< \min(2+\delta,\beta)$, we have that:
	\begin{equation}\label{eq:main:mainthm:higherell}
		\psi_{\ell\geq\ell_0}-\sum_{m=-\ell_0}^{\ell_0}\appsi[\ell=\ell_0,m] Y_{\ell_0,m}\cdot \integral_{\ell_0,m}[\phi] \lesV \left( \tau^{-\ell-3/2}+\tau^{-3-\delta+ }+\tau^{-1-\beta+}\right) \min_{q\in[0,\min(L+1, \ell_0+1)]} \left(\frac{r}{\tau}\right)^q,
	\end{equation}
	 where the integrals above are given by
	\begin{equation}\label{eq:main:thm:integralexpression}
		\integral_{\ell,m}[\phi]:=\frac{(-1)^{\ell}}{(4\ell+2)\binom{2\ell}{\ell}}\int_{\scrip}\underbrace{ \left[r^{-\ell}\left(w_{\ell}' (G^{v\beta}-G^{u\beta})\partial_\beta\phi Y_{\ell,m}+D^{-1}w_{\ell} G^{A\beta}\partial_\beta \phi\partial_A Y_{\ell,m}+2r^2 w_\ell\mathfrak{F} Y_{\ell,m}\sin\vartheta\right)\right]^{(\ell+1)}}_{:=-2H^{(\ell+1)}_{\ell,m}} \dd u \dd \tilde{\sigma}.
	\end{equation}
\end{thm}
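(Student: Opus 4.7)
The plan is to convert the equation \cref{eq:main:wave} into the inhomogeneous Schwarzschild problem
\[
\Box_{g_M}\phi = F, \qquad F := -\frac{1}{\sqrt{-\det g_M}}\partial_\mu(G^{\mu\nu}\partial_\nu\phi) + \mathfrak{F},
\]
and then to run an iteration scheme in which the a priori decay assumption \cref{eq:main:ass:phi} together with \cref{lem:dyn:G} and \cref{ass:main:aux} yields a preliminary decay bound on $F$, which is in turn fed back into the Schwarzschild-inhomogeneous framework outlined in \cref{sec:sketchproof} to upgrade the decay of $\phi$, and so on. The key point is that once $\phi$ is known to satisfy conditions \textbf{(A)}--\textbf{(B)} (and similarly for its $\Vc$-commuted versions), $F$ inherits these properties with one extra $\tau^{-\delta}$ gain; this is exactly what is needed to drive the iteration.

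The first round of iteration would proceed in three steps. \emph{Step 1:} apply the inhomogeneous Dafermos--Rodnianski hierarchy (to be derived in \cref{sec:inhom}), commuted with $\Vc$, to $\Box_{g_M}\phi = F$ so as to obtain a preliminary $r$-weighted time decay for $\phi$ --- refined, away from $\scrip$, via the elliptic-type estimates \cref{prop:inhom:elliptic1,prop:inhom:elliptic2}. \emph{Step 2:} construct time integrals $T^{-k}\phi$ for $k \leq \ell_0+1$ by integrating from $\tau=\infty$ (permissible thanks to the decay from Step~1), and derive almost-sharp decay for $\phi$ by applying the $r^p$-hierarchy to $T^{-k}\phi$ and writing $\phi = T^k T^{-k}\phi$. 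The crucial input here is an understanding of the initial-data asymptotics of $T^{-k}\phi_{\ell,m}$: projecting the twisted equation $\check{\mathcal{L}}_{\ell} \philc = w_\ell r^2 \mathfrak{F} - w_\ell r \Xx(rT\phi) + (\text{metric terms})$ and integrating along $\Sigma_{\tau_0}$ (using \cref{eq:intro:higherell}) identifies a logarithmic conformal irregularity at $\scrip$ whose coefficient is precisely
\[
-\int_{\scrip} H^{(\ell+1)}_{\ell,m}\, \dd u\, \dd\tilde\sigma = (\ell+1)\binom{2\ell}{\ell}(-1)^{\ell}\,\integral_{\ell,m}[\phi],
\]
once one carefully collects the $(\ell+1)$-th order coefficients in the $1/r$-expansion of $G^{\mu\nu}\partial_\nu\phi$ and of $\mathfrak{F}$.

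\emph{Step 3} is where the precise asymptotics are extracted via the global tail subtraction of \cref{sec:introstep3}. Since $\apphi$ was defined so that $\square_{g_M}(\apphi Y_{\ell_0,m})$ decays strictly faster in $\tau$ than $\apphi$ itself (using $\check{\mathcal{L}}_{\ell_0} w_{\ell_0}=0$ in the far region and the cutoff to $r^{\ell_0}$ in the near region, as in \cref{sec:approx}), the difference $\phi_{\ell_0,m} - \integral_{\ell_0,m}[\phi]\,\apphi[\ell_0,m] Y_{\ell_0,m}$ satisfies an inhomogeneous wave equation whose source has better decay, \emph{and} whose $(\ell_0+1)$-st time integral has conformally regular initial data (by construction of $\integral_{\ell_0,m}[\phi]$). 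Repeating Step~2 now with one additional power of $p$ in the $r^p$-hierarchy yields the improved decay rate $\tau^{-\ell_0-3/2}$ on the right-hand side of \cref{eq:main:mainthm:higherell}. The $\ell_0=0$ case in \cref{eq:main:mainthm:l=0} requires one further time integration (because, as explained in the discussion around \cref{eq:intro:l=0F4}, the first obstruction for the spherical mean appears only in $T^{-2}\phi_{\ell=0}$, not in $T^{-1}\phi_{\ell=0}$); the relevant coefficient is then computed by the divergence theorem and application of \cref{lem:dyn:Bondi} to rewrite the initial-data plus horizon plus $\scrip$ contributions in the compact form \cref{eq:main:thm:I0}.

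The principal obstacles I anticipate are twofold. First, the $r$-decay in \cref{eq:main:mainthm:higherell} of the form $(r/\tau)^q$ with $q\leq \min(L+1,\ell_0+1)$ requires propagating the improved decay of $\phi_{\ell\geq \ell_1}$ (for $1\leq \ell_1\leq L$) through the nonlinear-like coupling in $\partial_\mu(G^{\mu\nu}\partial_\nu \phi)_{\ell\geq\ell_1}$; this is exactly the purpose of \cref{ass:main:aux}, which must be invoked at each iteration step to avoid losing the gain in $\ell$. Second, the mode-coupling constraint $\ell_0<2+\delta$ in \cref{rem:intro:modecoupling} manifests as the obstruction that the integrand in the expression for $\integral_{\ell_0,m}[\phi]$ fails to be integrable in $u$ once $\ell_0\geq 2+\delta$, since $r\phi|_{\scrip}\sim \tau^{-2}$ and each factor $r^{-\ell}$ of $w_\ell (G\cdot \partial \phi)$ converts into a power $\tau^{-3-\delta+\ell_0}$; closing the scheme therefore requires a careful accounting of $r$- and $\tau$-weights simultaneously, and of how these survive commutation with $\tau T$, $r\partial_r|_\Sigma$ and $\sl$ in the global tail subtraction argument.
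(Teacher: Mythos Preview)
Your proposal is correct and follows essentially the same approach as the paper's proof in \cref{sec:time}: recast as an inhomogeneous Schwarzschild problem, iterate the $r^p$-hierarchy of \cref{sec:inhom} with time integrals, compute the conformally irregular coefficient in $T^{-1}\psi_{\ell,m}|_{\Sigma_{\tau_0}}$ via the twisted operator $\check{\mathcal L}_\ell$, subtract the global tail $\apphi$, and close via \cref{ass:main:aux} together with the elliptic estimates (\cref{prop:inhom:elliptic1,prop:inhom:elliptic2}) for the large-$\ell$ remainder. One minor slip: the normalising constant you wrote as $(\ell+1)\binom{2\ell}{\ell}(-1)^{\ell}$ should be $(2\ell+1)\binom{2\ell}{\ell}(-1)^{\ell}$, as follows from matching the $\log r/r^{\ell}$ coefficient of $T^{-1}\psi_{\ell,m}$ in \cref{eq:time:proplog} against that of $T^{-1}\appsi$ in \cref{eq:approx:time}.
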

\begin{rem}[Improved error estimates]
	The $\tau^{-5/2}$-decay in \cref{eq:main:mainthm:l=0} can be improved to $\tau^{-3+}$.
		Similarly, the $\tau^{-3/2-\ell_0}$ in \cref{eq:main:mainthm:higherell} to $\tau^{-2-\ell_0+}$; see \cref{obs:error}. 
\end{rem}
\begin{rem}[Higher-order asymptotics]\label{rem:main:higher}
		Our method can be extended to also prove the \textbf{next-to-leading-order global asymptotics} (e.g.~for $\ell=0$ at order $\tau^{-3}\log \tau$, see \cref{rem:time:secondorder}). This will be the content of future work.
\end{rem}
\begin{rem}[Weakening the assumptions on $g$]
	We could optimise the assumptions for \cref{thm:main}. For instance, the improved $1/\tau$-decay for $g$ away from $\scrip$ assumed on $g$ in \cref{ass:dyn:pol} is not actually necessary to infer global asymptotics. For instance, if we merely had $g\lesssim r^{-1}\tau^{-\delta}$,\footnote{For $g$ settling down to Schwarzschild, one still has integrated local energy decay in this case (see \cite{metcalfe_prices_2011}; this is not known for $g$ settling down to Kerr, however.)} we would still be able to prove \cref{eq:main:mainthm:l=0} with RHS replaced by $\tau^{-5/2}\min_{q\in[0,\min(1,1/2+\delta-)]}(\frac{r}{\tau})^q$; in particular, away from $\scrip$, the error term would decay like $\tau^{-3-\delta+}$. See \cref{obs:weaken:away}. 
\end{rem}
\begin{rem}[Alternative expressions for the integrals]
	The integral in \cref{eq:main:mainthm:l=0} can be recast as (see~\cref{prop:time:rewritingl=0})
	\begin{nalign}
4\pi\integral_0[\phi]=\int_{\scrip}  \mathfrak{m} \psi(u,\theta^A)+\frac12\mathfrak{F}^{(4)} \dd u \dd \sigma =\int_{\scrip} ( \mathfrak{m}-M)\psi +\frac12\mathfrak{F}^{(4)}\dd u \dd \sigma +\frac12\int_{\hplus} \otrx \phi \detgs \dd\tilde{\sigma}\dd v\\
+\frac{M}2\int_{\M} r^2\mathfrak{F} \dd \mu_D+\frac{M}2 \int_{\S_{\tau_0}^{2M}}  \phi \detgs \dd\tilde{\sigma} +\frac{M}2\int_{\Cbar_{\tau_0}} n_{\Cbar}^\mu \partial_\mu \phi \detgs \dd\tilde{\sigma} \dd u+\frac{M}2\int_{\C_{\tau_0}} n_{\C}^\mu \partial_\mu \phi \detgs \dd\tilde{\sigma} \dd v,
	\end{nalign}
Applying the divergence theorem in a spacetime region bounded by $\Sigma_{\tau_0}$ and $\tilde{\Sigma}_{\tilde{\tau}_0}$ produces the identity \cref{eq:intro:I0}.

We can also give simpler expressions for the higher-order integrals $\integral_{\ell,m}[\phi]$ in terms of merely (integrals over) $\psi^{(0)}$ and its angular derivatives. For instance, for $\ell=1$, we have (see~\cref{cor:time:l=1alternative})
\begin{align}\nonumber
	\integral_{\ell=1,m}[\phi]&=-\frac16 	\int_{\scrip} (2 \mathfrak{m}-2M) Y_{1,m} \psi -(\gsh^{(1)})^{AB} \sl_A Y_{1,m} \sl_B \psi+(b^{(1)})^A \psi \sl Y_{1,m} \sl_B Y_{1,m} +(r w_\ell\mathfrak{F})^{(4)} Y_{\ell,m}\Big) \dd u \dd {\sigma}
	\\&=-\frac16
	\int_{\scrip}\psi\Big( (2 \mathfrak{m}-2M) Y_{\ell=1,m}  +(\gsh^{(1)})^{AB} \sl_A \sl_B Y_{\ell=1,m} \Big)+(r w_\ell\mathfrak{F})^{(4)} Y_{\ell,m} \dd u \dd {\sigma},\label{eq:main:thm:integral:l=1}
\end{align}
where the second relation holds if $\div \gsh^{(1)}=b^{(1)}$, which is the case if $g$ solves the Einstein vacuum equations.
\end{rem}
\begin{rem}[Generic nonvanishing of the coefficients]
	We show in \cref{sec:gen} (see \cref{prop:gen}) that the coefficients $\integral_{\ell,m}[\phi]$ each only vanish for a subset in the space of compactly supported initial data of codimension at least 1.
\end{rem}

\begin{rem}[Mode coupling]
	The restriction $\ell_0<2+\delta$ has the following origin: The integrand in \cref{eq:main:thm:integralexpression} in particular will always feature the $\ell=0$ modes of $\psi$, which decays like $\psi_{\ell=0}\lesssim u^{-2}$. Thus, the integrals in \cref{eq:main:thm:integralexpression} turn out to be well-defined only for $\ell_0<2+\delta$, since the metric decay along $\scrip$ is always just governed by $\delta$. 
	For nonlinear problems where $g$ depends on $\phi$, we would, however, also find that higher $\ell$-modes of $g$ decay with faster rates; mode coupling would then be further suppressed. 
	For a simple nonlinear example, see \cref{thm:main:phi3} and \cref{rem:main:mode2} below.
\end{rem}

\subsection{{Additional} results (\cref{thm:main:F,thm:main:phi3,thm:main:phi4,thm:main:puphipvphi}): Power nonlinearities and null forms}\label{sec:main:additional}
In \cref{eq:main:assonF}, the inhomogeneity is assumed to decay like $r^{-4}$ towards $\scrip$. Of course, it may just as well be assumed to decay like $r^{-3}$. That is, we may assume that, for some $0<\beta\in\R$, and for all $\ell_0<\beta$:
\begin{equation}\label{eq:main:assonF2}
	\mathfrak{F}_{\ell\geq \ell_0}-\sum_{i=3}^{n+3} \frac{(\mathfrak{F}_{\ell\geq\ell_0})^{(i)}}{r^i}\lesV \frac{1}{r^4\tau^{\beta}}\left(\frac{\tau}{r}\right)^n \qquad \forall -2-\ell_0\leq n\leq \ell_0.
\end{equation}
In this case, we obtain
\begin{thm}\label{thm:main:F}
	Under the assumptions of \cref{thm:main}, if \cref{eq:main:assonF} is replaced with \cref{eq:main:assonF2}, then \cref{eq:main:mainthm:higherell} holds for all $0\leq \ell_0< \min(2+\delta,\beta)$. In particular, if $g=g_M$, then we have
	\begin{equation}\label{eq:main:Fthm}
			\psi_{\ell\geq\ell_0}-\sum_{m=-\ell_0}^{\ell_0}\appsi[\ell=\ell_0,m] Y_{\ell_0,m}\cdot \frac{(-1)^{\ell}}{(2\ell+1)\binom{2\ell}{\ell}} \int_{\scrip} \left[r^{-\ell+2}w_\ell \mathfrak{F}_{\ell,m}\right]^{(\ell+1)}\dd \mu \lesV \left( \tau^{-\ell-3/2+}+\tau^{-1-\beta+}\right) \min_{q\in[0,\ell_0+1]} \left(\frac{r}{\tau}\right)^q.
	\end{equation}
\end{thm}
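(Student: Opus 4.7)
The theorem is a direct variant of \cref{thm:main}: the only structural change is that $\mathfrak{F}$ now admits an $r^{-3}$ leading coefficient $\mathfrak{F}^{(3)}$ in its $1/r$-expansion at $\scrip$, rather than starting at $r^{-4}$. The plan is therefore to run the same scheme---time integrals, twisted stationary operator $\check{\mathcal{L}}_\ell$, subtraction of the global tail function $\apphi[\ell,m]$, followed by one more round of $r^p$-weighted energy estimates---while tracking how this extra $r^{-3}$-coefficient percolates through the analysis of the initial data of $T^{-1}\phi_\ell$.

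\textbf{Step 1 (Time-integral data for all $\ell\geq 0$).} Following the template of Step 2 in \cref{sec:sketchproof}, integrate $\check{\mathcal{L}}_\ell (T^{-1}\philc)=w_\ell r^2 T^{-1}\mathfrak{F}_{\ell,m}+\text{(metric error)}$ outward along $\C_0$ starting from $\S^{\rc}_0$. With $\mathfrak{F}_{\ell,m}$ now decaying only like $r^{-3}$, the corresponding analogue of \cref{eq:intro:owe} already produces at the level of $T^{-1}\phi_{\ell,m}$ (instead of $T^{-\ell-1}$) the conformally irregular logarithmic term with coefficient proportional to
\begin{equation*}
\int_{\scrip}\bigl[r^{-\ell+2}w_\ell\mathfrak{F}_{\ell,m}\bigr]^{(\ell+1)}\dd u\dd\tilde{\sigma}+\text{(metric contributions as in \cref{eq:main:thm:integralexpression})}.
\end{equation*}
Assumption \cref{eq:main:assonF2} together with $\beta>\ell_0$ guarantees that the $u$-integral converges (and an improved $\tau$-decay statement away from $\scrip$ is obtained from the $n=-2-\ell_0$ version of \cref{eq:main:assonF2}, matching the $\min_{q\in[0,\ell_0+1]}(r/\tau)^q$ weight). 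Exactly as in \cref{thm:main}, this yields an a priori decay $\psi_{\ell\geq\ell_0}\lesV \tau^{-\ell_0-1+\epsilon}$ for each $0\le\ell_0<\min(2+\delta,\beta)$, including $\ell_0=0$.

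\textbf{Step 2 (Subtraction and improved decay).} Set $\integral_{\ell_0,m}[\phi]$ as in \cref{eq:main:thm:integralexpression}, and consider $\tilde\phi_{\ell_0,m}:=\phi_{\ell_0,m}-\apphi[\ell_0,m]\integral_{\ell_0,m}[\phi]$. By \cref{sec:approx}, $\apphi[\ell_0,m]Y_{\ell_0,m}$ is an approximate solution to $\Box_{g_M}$ whose error decays one power faster in $\tau$ than $\apphi$ itself, so $\tilde\phi_{\ell_0,m}$ satisfies an inhomogeneous wave equation whose source terms all satisfy the hypotheses of \cref{thm:main} (in particular with the faster $r^{-4}$ decay, since the $r^{-3}$-obstruction has been absorbed into $\apphi\cdot\integral$) together with the decay bounds on $(g-g_M)\cdot\partial\apphi$. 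Applying the time integral/energy-hierarchy machinery to $\tilde\phi_{\ell_0,m}$, the coefficient of the $\log r$ term in the initial data of $T^{-1}\tilde\phi_{\ell_0,m}$ vanishes by construction, allowing one additional power of $p$ in the $r^p$-hierarchy and yielding the announced rate $\tau^{-\ell_0-3/2}+\tau^{-3-\delta+}+\tau^{-1-\beta+}$ with the desired $(r/\tau)^q$-weight. Finally, mode summation and the high-$\ell$ elliptic estimate of \cref{prop:inhom:elliptic2} provide the $\psi_{\ell\geq\ell_0}$ statement.

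\textbf{Step 3 (The stationary simplification).} If $g=g_M$, then $G^{\mu\nu}\equiv 0$, so in \cref{eq:main:thm:integralexpression} only the $\mathfrak{F}$-term survives and $\integral_{\ell,m}[\phi]=\tfrac{(-1)^\ell}{(2\ell+1)\binom{2\ell}{\ell}}\int_{\scrip}[r^{-\ell+2}w_\ell\mathfrak{F}_{\ell,m}]^{(\ell+1)}\dd\mu$ (the factor $2$ in the denominator disappears together with the $Y_{\ell,m}\sin\vartheta$ rewriting), producing \cref{eq:main:Fthm}.

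\textbf{Main obstacle.} The genuinely new case is $\ell_0=0$: for $\ell_0\geq 1$ the argument is a bookkeeping adaptation of \cref{thm:main} since the relevant time integral and the cut-off function $\chi$ in $\apphi$ are the same. For $\ell_0=0$ one must verify that the $\log r$ term in $\Xx T^{-1}\phi_{\ell=0}$ coming from $\mathfrak{F}^{(3)}$ is exactly cancelled by $T^{-1}(\apphi[0,0]\cdot \integral_{0,0}[\phi])$ and that the resulting difference has $E_p[T^{-1}\tilde\phi_{\ell=0}](\tau_0)<\infty$ up to $p=2$, which is what delivers the extra power of $\tau$-decay and the $(r/\tau)^1$ weight. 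This requires checking the compatibility of the cut-off construction of $\apphi$ with the explicit integration in \cref{eq:intro:T-1l0} near the matching region $r\sim \rc$, which is the only place where the specific form of $\apphi$ (as opposed to the general pointwise bound on its wave-operator-image) enters non-trivially.
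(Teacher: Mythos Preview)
Your proposal is correct and takes the same approach as the paper; your ``Main obstacle'' paragraph identifies exactly the right point. One correction to Step~1: the parenthetical ``(instead of $T^{-\ell-1}$)'' is misleading for $\ell\geq 1$. Since $w_\ell = r^\ell + O(r^{-\ell-1})$, the new $\mathfrak{F}^{(3)}$ contribution to $\int r^2 w_\ell \mathfrak{F}_{\ell,m}\,dr$ behaves like $\int r^{\ell-1}\,dr$, which is polynomial (no log) for $\ell\geq 1$; the logarithm in $T^{-1}\psi_{\ell,m}$ therefore still sits at order $r^{-\ell}\log r$ with the same coefficient structure as in \cref{thm:main}, and one still needs $\ell+1$ time integrals to bring it to leading order. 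Only $\ell=0$ is genuinely new: there $\mathfrak{F}^{(3)}\neq 0$ produces the leading $\log r$ already in $T^{-1}\psi_{\ell=0}$ (rather than $T^{-2}$), so one subtracts $\apphi[0,0]$ rather than $T\apphi[0,0]$. This is exactly what the paper does, remarking that for higher $\ell$ the proof of \cref{prop:time:l>0:1} is ``unaffected by the slower decay-rate in $r$ of $\mathfrak{F}$'' and one proceeds as in \cref{sec:time}.
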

We could, in principle, also introduce a class of admissible smooth and bounded nonlinearities (as in \cite{kadar_scattering_2025}) and allow $\mathfrak{F}$ to depend of $\phi$ and prove that \cref{thm:main:F} still holds with $\mathfrak{F}$ replaced by $\mathfrak{F}(x,\phi, \partial\phi,\partial^2\phi)$ so long as we have the following two ingredients:
\begin{enumerate}
	\item A weak quantitative \textbf{stability} (and global existence) result for $\Box_{g}\phi=\mathfrak{F}(x,\phi,\partial\phi,\partial^2\phi)$ along with a suitable algebraic condition on $\mathfrak{F}$.
	\item \textbf{\textbf{Mode coupling:}} Assumption a.~is enough to infer the asymptotics for $\ell=0$. Asymptotics for higher $\ell$ are determined by checking if the integrals $\integral_{\ell,m}[\phi]$ (involving in particular $\mathfrak{F}(x,\phi,\partial\phi,\partial^2\phi)$) can eventually be inferred to be finite. See also \cref{rem:main:mode2}.
\end{enumerate}

Rather than treating the allowed class of nonlinear equations in full generality, we will simply give a few examples. For all these examples, a version of a quantitative stability estimate sufficient for our purposes would be the global estimate
\begin{equation}\label{eq:main:additional:ass}
	\psi\lesV[\Vc] \tau^{-\epsilon} \min_{q\in[0,1]}\left(\frac{r}{\tau}\right)^q,
\end{equation}
for some arbitrarily small $\epsilon>0$; in particular, with the methods of the present paper, we could derive the $\tau T$-regularity \emph{a posteriori} once this estimate is obtained. In fact, we could also weaken \eqref{eq:main:additional:ass} by restricting, for example, to $q\in[0,1/2]$.

We first discuss power nonlinearities (these results generalise results of~\cite{looi_asymptotic_2024}):
\begin{thm}\label{thm:main:phi3}
	For solutions to $\Box_{g_M}\phi=\phi^3$ arising from smooth, compactly supported and sufficiently small initial data, such that \cref{eq:main:additional:ass} holds for some $\epsilon>0$, we have the following global asymptotics for any $\ell_0\in\N$:
	\begin{equation}\label{eq:main:phi3thm}
	\psi_{\ell\geq\ell_0}-\sum_{m=-\ell_0}^{\ell_0}\appsi[\ell=\ell_0,m] Y_{\ell_0,m}\cdot \frac{(-1)^{\ell}}{(4\ell+2)\binom{2\ell}{\ell}} \int_{\scrip} \left[r^{-\ell-1}w_\ell \psi^3 Y_{\ell,m}\right]^{(\ell+1)}\dd \mu \lesV \tau^{-\ell-3/2+} \min_{q\in[0,\ell_0+1]} \left(\frac{r}{\tau}\right)^q.
	\end{equation}
\end{thm}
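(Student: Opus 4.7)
The plan is to view $\Box_{g_M}\phi = \phi^3$ as the inhomogeneous Schwarzschild problem with source $\mathfrak{F} = \phi^3$, and then to apply \cref{thm:main:F} within a bootstrap that exploits the self-improving nature of the cubic nonlinearity. The crucial observation is this: if $\psi = r\phi$ satisfies a bound $\psi \lesV[\Vc] \tau^{-\alpha}\min_{q\in[0,1]}(r/\tau)^q$, then $\phi^3 = r^{-3}\psi^3$ admits a $1/r$-expansion at $\scrip$ starting at order $r^{-3}$, with successive coefficients obeying $(\phi^3)^{(3+j)}\lesV \tau^{-3\alpha+j}$ for $j\geq 0$. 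Hence $\mathfrak{F}$ verifies the hypothesis \cref{eq:main:assonF2} of \cref{thm:main:F} with effective rate $\beta \approx 3\alpha$, any loss from $\tau T$-commutations being made arbitrarily small.

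The bootstrap then proceeds by alternately cubing and applying \cref{thm:main:F}. Starting from the stability bound \cref{eq:main:additional:ass} with rate $\epsilon$, a first iteration yields almost-sharp decay $\psi_{\ell=0}\sim \appsi[0,0]\cdot\integral_{0,0}[\phi]$ of order $\tau^{-1}$ along $\scrip$ (notably slower than the linear Price rate $\tau^{-2}$: for a cubic source, the profile $\appsi[0,0]$ itself --- not its time derivative, as in \cref{eq:main:mainthm:l=0} --- drives the leading asymptotic), together with improved decay on higher modes. A mode-by-mode analysis of $\phi^3 = \phi_{\ell=0}^3 + 3\phi_{\ell=0}^2\phi_{\geq 1} + \cdots$ then shows that the slowest contribution to $(\phi^3)_{\ell\geq\ell_0}$ at $\scrip$ comes from $\phi_{\ell=0}^2\cdot\phi_{\geq\ell_0}\sim \tau^{-\ell_0 - 3}r^{-3}$, yielding an effective $\beta \geq \ell_0 + 3 > \ell_0 + 1/2$ in \cref{thm:main:F}; the $\tau^{-1-\beta+}$ error there is thus dominated by $\tau^{-\ell_0-3/2+}$, and the asymptotic \cref{eq:main:phi3thm} follows for any fixed $\ell_0$ after finitely many iterations, each successive round unlocking the next $\ell_0$. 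Unlike in \cref{thm:main}, where the restriction $\ell_0 < 2+\delta$ reflected the fixed decay rate $(g-g_M)\sim\tau^{-1-\delta}$ of the metric perturbation, here the cubic source improves uniformly across $\ell_0$ and no $\ell_0$-restriction appears.

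The main obstacles are the propagation of conformal regularity at $\scrip$ for $\phi$ (hence for $\phi^3$) through the bootstrap, and the careful bookkeeping of the mode-coupling induced by cubing. For the former, since $\phi$ arises from smooth compactly supported data, one shows inductively --- using the time-integral construction from \cref{sec:sketchproof} applied to $\Box_{g_M}\phi = \phi^3$ --- that $\phi$ admits $1/r$-expansions of arbitrarily large but finite order along $\scrip$, and each such expansion is preserved (at most with the loss of finitely many orders) under cubing. The mode-coupling, while tractable at any given stage, must be tracked through every iteration; it is also responsible for the precise form of the coefficient $\integral_{\ell,m}[\phi]$ in \cref{eq:main:phi3thm}, and in particular for the combinatorial factor $1/(4\ell+2)$ there --- in place of the $1/(2\ell+1)$ of \cref{eq:main:Fthm} --- which encodes the self-consistent substitution of the asymptotic profile of $\psi$ back into the source $\mathfrak{F}=\phi^3$.
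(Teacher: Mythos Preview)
Your approach is the paper's: treat $\mathfrak{F}=\phi^3$ as an inhomogeneity and iterate \cref{thm:main:F}, using the angular structure of $\phi^3$ to unlock successively higher $\ell_0$. One detail you gloss over: starting from \cref{eq:main:additional:ass} with small $\epsilon$, the coefficient integral $\int_{\scrip}(\psi^{(0)})^3\,\dd u$ need not converge, so \cref{thm:main:F} cannot be applied immediately; the paper first upgrades to \cref{eq:time:apriori} with $\alpha=0-$ via the estimates of \cref{sec:inhom}, and only then invokes \cref{thm:main:F}.

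Two corrections. First, your mode-coupling claim is imprecise: the relevant fact (as the paper states) is the selection rule that $(\phi^3)_{\ell\geq\ell_0}$ contains only triples $\phi_{\ell_1}\phi_{\ell_2}\phi_{\ell_3}$ with $\ell_1+\ell_2+\ell_3\geq\ell_0$. The term $\phi_{\ell=0}^2\phi_{\geq\ell_0}$ is merely one of several contributions at the slowest rate --- for $\ell_0=3$, the triple $\phi_{\ell=1}^3$ decays equally slowly --- and it is the selection rule, not the single term you name, that drives the induction. Second, your explanation of the prefactor $1/(4\ell+2)$ versus the $1/(2\ell+1)$ of \cref{eq:main:Fthm} is wrong. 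There is no ``self-consistent substitution'' mechanism: the nonlinear problem is solved by applying \cref{thm:main:F} verbatim with $\mathfrak{F}=\phi^3$, and the coefficient is inherited unchanged. The apparent discrepancy is purely a matter of notation --- whether the angular projection is written as the scalar $\mathfrak{F}_{\ell,m}$ (integrated against $\dd\mu$) or as $\mathfrak{F}\,Y_{\ell,m}$, together with the associated volume-form conventions; compare the factor of $2$ and the $\sin\vartheta$ in \cref{eq:main:thm:integralexpression}.
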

Note that small data global existence and decay for $\Box_{g_M}\phi=\phi^P$ with $3\leq P\in\N$ is known e.g.~by \cite{tohaneanu_pointwise_2022}. See  \cref{sec:additionalproof}.

 We may also consider a higher-order power-nonlinearity such as $\Box_{g_M}\phi=\phi^4$, in which case the $\ell=0$-mode will decay one power faster:
\begin{thm}\label{thm:main:phi4}
	For solutions to $\Box_{g_M}\phi=\phi^{4}$ arising from smooth, compactly supported and sufficiently small initial data such that \cref{eq:main:additional:ass} holds for some $\epsilon>0$, then we have the following global asymptotics for any $1\leq\ell_0\in\N$:
	\begin{equation}
		\psi_{\ell\geq\ell_0}-\sum_{m=-\ell_0}^{\ell_0}\appsi[\ell=\ell_0,m] Y_{\ell_0,m}\cdot \frac{(-1)^{\ell}}{(4\ell+2)\binom{2\ell}{\ell}} \int_{\scrip} \left[r^{-\ell-2}w_\ell \psi^4 Y_{\ell,m}\right]^{(\ell+1)}\dd \mu \lesV \tau^{-\ell-3/2} \min_{q\in[0,\ell_0+1]} \left(\frac{r}{\tau}\right)^q,
	\end{equation}
	while, for $\ell=0$, we have both a linear and a nonlinear contribution (cf.~\cref{eq:main:mainthm:l=0})
	\begin{equation}
		\psi_{\ell=0}- T\appsi[0,0]\frac{1}{4\pi}\int_{\scrip}M\psi+ \frac12[r^{-4}\psi^4]^{4}\dd \mu \lesV \tau^{-5/2}\min_{q\in[0,1]}\left(\frac{r}{\tau}\right)^q.
	\end{equation}
\end{thm}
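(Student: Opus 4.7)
The plan is to apply \cref{thm:main} to the equation $\Box_{g_M}\phi=\phi^4$ by treating the nonlinearity as a forcing term $\mathfrak{F}=\phi^4$. The essential structural observation is that, writing $\phi=\psi/r$, the nonlinearity takes the form $\mathfrak{F}=\psi^4/r^4$, and---thanks to its \emph{fourth} power---it inherits $r$-decay of order $r^{-4}$ with a convergent $1/r$-expansion: $\mathfrak{F}^{(4)}=(\psi^{(0)})^4$, $\mathfrak{F}^{(5)}=4(\psi^{(0)})^3\psi^{(1)}$, and so on. In particular, $\mathfrak{F}$ falls under the (stronger) assumption \cref{eq:main:assonF} of \cref{thm:main} rather than the weaker \cref{eq:main:assonF2} relevant for \cref{thm:main:F}. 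This is precisely the feature that makes the $\ell=0$ tail of $\phi^4$ behave at fixed $r$ like the linear Schwarzschild tail $T\appsi[0,0]\sim\tau^{-3}$, in contrast to the $\phi^3$ tail $\appsi[0,0]\sim\tau^{-2}$ from \cref{thm:main:phi3}.

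To apply \cref{thm:main} rigorously with $\mathfrak{F}=\phi^4$, I would set up a bootstrap argument. Starting from the a priori bound \cref{eq:main:additional:ass}, the first step is to upgrade $\Vc$-regularity to the full $\V$-regularity (i.e.~to gain commutation with $\tau T$); this follows from the inhomogeneous $r^p$-hierarchy together with the time-derivative improvement results of \cref{sec:inhom}, applied to $\phi$ viewed as a solution of the Schwarzschild wave equation forced by $\phi^4$. The $1/r$-expansion of $\psi$ can likewise be extracted via $r^p$-weighted estimates for large $p$ combined with the elliptic estimates along $\Sigma_\tau$. From these, one deduces the required decay and expansion of $\mathfrak{F}=\phi^4$: if $\psi\lesV\tau^{-\alpha}$ with $\alpha>0$ produced by the first pass, then $\mathfrak{F}\lesV r^{-4}\tau^{-4\alpha}$, so \cref{eq:main:assonF} holds with $\beta=4\alpha$, while $\tau T$-regularity of $\mathfrak{F}$ is immediate from $\tau T(\phi^4)=4\phi^3\,\tau T\phi$. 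A single application of \cref{thm:main} then gives improved pointwise decay of $\psi$, which feeds back into an improved $\beta$; the bootstrap closes after finitely many iterations, yielding $\beta$ as large as desired.

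With $\beta$ arbitrarily large and no constraint from $\delta$ (since $g=g_M$ allows one to take $\delta=\infty$), every mode condition $\ell_0<\min(2+\delta,\beta)$ is satisfied for all $\ell_0\in\N$; \cref{ass:main:aux} holds trivially because $G^{\mu\nu}\equiv 0$; and $\mathfrak{m}\equiv M$. Plugging $\mathfrak{F}=\psi^4/r^4$ into \cref{eq:main:thm:integralexpression} and discarding the vanishing $G$-contributions collapses the integrand to (a constant multiple of) $[r^{-\ell-2}w_\ell\psi^4 Y_{\ell,m}]^{(\ell+1)}$, yielding the higher-mode formula of \cref{thm:main:phi4}; \cref{eq:main:thm:I0} likewise reduces to $4\pi\integral_0[\phi]=\int_{\scrip}(M\psi+\tfrac12\mathfrak{F}^{(4)})\dd u\dd\sigma$ with $\mathfrak{F}^{(4)}=(\psi^{(0)})^4$, recovering the $\ell=0$ statement (with both the linear contribution from the initial data and the nonlinear bulk correction $(\psi^{(0)})^4$). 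The main obstacle is controlling the bootstrap so that compounded error terms from successive iterations do not spoil the sharp rates $\tau^{-\ell-3/2}$ and $\tau^{-5/2}$; this is made tractable by the benign algebraic structure of $\phi^4$---a polynomial nonlinearity involving no derivatives of $\phi$---so that higher angular regularity of $\phi$ transfers directly into improved decay of the higher-$\ell$ projections of $\phi^4$, thus avoiding any mode-coupling pathology.
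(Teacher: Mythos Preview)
Your proposal is correct and follows essentially the same approach as the paper: treat $\mathfrak{F}=\phi^4=\psi^4/r^4$ as an inhomogeneity with leading $r^{-4}$ decay, iteratively apply \cref{thm:main} (rather than \cref{thm:main:F}, which is the $\phi^3$ case) to push $\beta$ up, and use the polynomial structure of the nonlinearity to handle mode coupling for arbitrary $\ell_0$. The paper's own proof is in fact a single sentence referring back to the $\phi^3$ argument with \cref{thm:main} in place of \cref{thm:main:F}; your write-up simply makes explicit the ingredients (upgrading to $\V$-regularity via \cref{sec:inhom}, the collapse of the $G$-terms when $g=g_M$, and the identification $\mathfrak{m}\equiv M$, $\mathfrak{F}^{(4)}=(\psi^{(0)})^4$) that the paper leaves implicit.
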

\begin{rem}[Higher power-nonlinearities]
More generally, when considering $\Box_{g_M}\phi=\phi^{P}$ for $5\leq P\in\N$, the leading-order global asymptotics for $\ell\leq P-5$ will only have linear contributions (when $M\neq 0$). For $\ell= P-4$, there will be a linear and a nonlinear contribution, while, for $\ell>P-4$, the nonlinear contribution will dominate.
\end{rem}
\begin{rem}[Mode coupling for power-nonlinearities]\label{rem:main:mode2}
	In contrast to \cref{thm:main:F}, we can take $\ell_0$ arbitrarily large in \cref{thm:main:phi3} and \cref{thm:main:phi4}; essentially, this is because the nonlinearity $\psi^3$ projected onto higher $\ell$ will also only see the higher $\ell$-behaviour of $\psi$. Notice that for the equation $\Box_{g_M}\phi=a(\theta^A)\phi^3$, with $a(\theta^A)$ a function on the sphere supported on all $\ell$, \cref{eq:main:phi3thm} would only hold for $\ell_0\leq2$.
\end{rem}

While the leading-order asymptotics are thus typically determined by the decay of the nonlinearity towards $\scrip$, there may also be cancellations. 
For instance, consider the equation $\Box_{g_M}\phi=\mathfrak{F}[\phi]=\pu\phi\pv\phi$ (for which global existence is known by \cite{luk_null_2010,dafermos_quasilinear_2022}).
Clearly, $r^3\mathfrak{F}$ attains a limit; however, this limit is a total derivative:
\begin{multline}\label{eq:main:tralala}
\pu(\psi/r)\pv(\psi/r)=-\frac{\pu\psi \psi}{r^3}-\frac{\psi^2}{r^4}+\frac{\pu \psi r^2\pv\psi}{r^4}+\frac{\psi r^2\pv \psi}{r^5} \\
\implies [\pu\phi\pv\phi]^{(3)}= -\frac12 \pu(\psi^{(0)})^2, \quad  [\pu\phi\pv\phi]^{(4)}= -(\psi^{(0)})^2+\pu\psi^{(0)}\int_{u_0}^u \Dl\psi^{(0)}-\frac12\pu(\psi^{(0)})^2\dd u'
\end{multline}
so the total integral along $\scrip$ of the $r^{-3}$-term will vanish, and the asymptotics then turn out to be given by:
\begin{thm}\label{thm:main:puphipvphi}
	For solutions to $\Box_{g_M}\phi=\pu\phi\pv\phi$ arising from smooth, compactly supported and sufficiently small data such that \cref{eq:main:additional:ass} holds for some $\epsilon>0$, the following global asymptotics hold for any $\ell_0\geq 1$:
		\begin{equation}
		\psi_{\ell\geq\ell_0}-\sum_{m=-\ell_0}^{\ell_0}\appsi[\ell=\ell_0,m] Y_{\ell_0,m}\cdot \frac{(-1)^{\ell+1}}{(4\ell+2)\binom{2\ell}{\ell}} \int_{\scrip} \left[r^{-\ell-2}w_\ell \pu\phi\pv\phi Y_{\ell,m}\right]^{(\ell+1)}\dd \mu \lesV \left( \tau^{-\ell-3/2+}\right) \min_{q\in[0,\ell_0+1]} \left(\frac{r}{\tau}\right)^q,
	\end{equation}
	while, for $\ell=0$, there is both a linear and a nonlinear contribution (cf.~\cref{eq:main:mainthm:l=0}):
	\begin{equation}\label{eq:main:pupvl=0}
		\psi_{\ell=0}- T\appsi[0,0]\frac{1}{4\pi}\int_{\scrip}M\psi+ \frac12|\sl\psi^{(0)}|^2\dd \mu \lesV \tau^{-5/2+}\min_{q\in[0,1]}\left(\frac{r}{\tau}\right)^q.
	\end{equation}
\end{thm}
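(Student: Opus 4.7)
The plan is to treat the nonlinearity $\mathfrak{F}=\pu\phi\pv\phi$ as an inhomogeneity on the Schwarzschild background and reduce the problem to a direct application of \cref{thm:main:F} with $g=g_M$, supplemented by an explicit calculation of the $\ell=0$ coefficient. The first ingredient is a nonlinear bootstrap: starting from the a priori estimate \cref{eq:main:additional:ass} and commuting with the vector fields in $\Vc$, and using the decompositions $\pu|_v=T-D\Xx$ and $\pv|_u=D\Xx$, one derives pointwise bounds on $\pu\phi$, $\pv\phi$ and the coefficients in their $1/r$-expansions that place $\mathfrak{F}$ into a setting consistent with the hypothesis \cref{eq:main:assonF2} with some small $\beta>0$. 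The energy, elliptic and time-integral machinery developed in Sections~\ref{sec:inhom}--\ref{sec:time} can then be iterated: each round of estimates improves the decay of $\psi=r\phi$ together with its $\Vc$-derivatives and its $\tau T$-regularity, which feeds back into improved decay of $\mathfrak{F}=\pu\phi\pv\phi$, until the full hypotheses required to invoke \cref{thm:main:F} are met.

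The second, crucial, ingredient is the structural cancellation at order $r^{-3}$ already isolated in \cref{eq:main:tralala}: the leading coefficient $\mathfrak{F}^{(3)}=-\tfrac{1}{2}\pu(\psi^{(0)})^2$ is a pure $u$-derivative, so that
\begin{equation*}
\int_{u_0}^{\infty}\mathfrak{F}^{(3)}(u,\theta)\,\dd u=0\qquad\text{pointwise on }S^2,
\end{equation*}
using that $\psi^{(0)}(u_0,\cdot)=0$ (compact support of data) and $\psi^{(0)}(u,\cdot)\to 0$ as $u\to\infty$ (output of the bootstrap). Consequently, for the purposes of the $\scrip$-integrals that determine the coefficients $\integral_{\ell,m}[\phi]$, the nonlinearity $\mathfrak{F}$ effectively contributes only at order $r^{-4}$, placing the problem squarely in the framework of \cref{thm:main:F}. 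For $\ell_0\geq 1$ this yields the first display of the theorem after re-expressing the resulting coefficient as an integral of a $[w_\ell\pu\phi\pv\phi Y_{\ell,m}]$-type quantity along $\scrip$.

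For the $\ell=0$ mode, the plan is to exploit the leading-order wave-equation identity
\begin{equation*}
\pu\psi^{(1)}=-\Dl\psi^{(0)}+\tfrac{1}{2}\pu(\psi^{(0)})^2,
\end{equation*}
obtained by matching the $r^{-2}$-coefficient of $P_{g_M}\psi=r\mathfrak{F}$, to express $\mathfrak{F}^{(4)}$ as a polynomial in $\psi^{(0)}$ and $\psi^{(1)}$. Substituting this into the $\ell=0$ coefficient formula $4\pi\integral_0[\phi]=\int_{\scrip}M\psi+\tfrac{1}{2}\mathfrak{F}^{(4)}\,\dd u\,\dd\sigma$ (with $\mathfrak{m}\equiv M$ since $g=g_M$), integrating by parts in $u$ (all boundary terms vanishing by compact support at $u_0$ and late-time decay at $u=\infty$), and then integrating by parts on $S^2$ to convert $-\int\psi^{(0)}\Dl\psi^{(0)}\,\dd\sigma$ into $\int|\sl\psi^{(0)}|^2\,\dd\sigma$, one recovers the coefficient appearing in \cref{eq:main:pupvl=0}. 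The main technical obstacle is the bootstrap of Step~1: propagating enough conformal regularity of $\pu\phi\pv\phi$ along $\scrip$ through the iteration and controlling the couplings between different $1/r$-expansion coefficients, in order to reach the hypotheses of \cref{thm:main:F}. Once this closes, the remaining manipulations are essentially algebraic and rely on the asymptotic expansion of the wave equation.
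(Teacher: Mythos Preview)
Your approach for $\ell\geq 1$ is essentially the paper's: iterate the inhomogeneous machinery of \cref{thm:main:F} on the Schwarzschild background, using the bootstrap and mode-coupling structure exactly as in the proofs of \cref{thm:main:phi3,thm:main:phi4}.

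For $\ell=0$, however, there is a genuine gap. You invoke the coefficient formula $4\pi\integral_0[\phi]=\int_{\scrip}M\psi+\tfrac12\mathfrak{F}^{(4)}\,\dd\mu$ from \cref{thm:main}. But that formula is derived under hypothesis \cref{eq:main:assonF}, which requires the expansion of $\mathfrak{F}$ to start at $r^{-4}$, i.e.\ $\mathfrak{F}^{(3)}\equiv 0$. Here $\mathfrak{F}^{(3)}=-\tfrac12\pu(\psi^{(0)})^2$ is \emph{not} identically zero; only its $u$-integral vanishes. That weaker vanishing suffices to make $r^2\Xx(T^{-1}\psi_{\ell=0})$ finite (no logarithm at the first step), but it does \emph{not} justify importing the $\integral_0$-formula. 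Concretely, when one computes the second time integral as in \cref{sec:time:second}, the term
\[
-\int_{\rc}^{r}\frac{\dd r'}{r'}\int_{\scrip}(\tau_0-\tau')\,\mathfrak{F}^{(3)}\,\dd\mu
\;=\;
\frac12\int_{\rc}^{r}\frac{\dd r'}{r'}\int_{\scrip}(\tau_0-\tau')\,\pu(\psi^{(0)})^2\,\dd\mu
\]
produces a \emph{second} logarithmic contribution, because the weight $(\tau_0-\tau')$ destroys the total-derivative structure. After integrating by parts in $u$, this contributes an extra $\tfrac12\int_{\scrip}(\psi^{(0)})^2\,\dd\mu$ to the coefficient. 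The paper's remark immediately following \cref{thm:main:puphipvphi} states exactly this: the coefficient in \cref{eq:main:pupvl=0} equals $\int_{\scrip}M\psi+\tfrac12\mathfrak{F}^{(4)}+\tfrac12(\psi^{(0)})^2\,\dd\mu$, \emph{not} $\int_{\scrip}M\psi+\tfrac12\mathfrak{F}^{(4)}\,\dd\mu$. Your final algebraic claim---that the latter already equals $\int M\psi+\tfrac12|\sl\psi^{(0)}|^2$ after IBP---is therefore off by precisely this $\tfrac12(\psi^{(0)})^2$ term. The fix is to revisit the $T^{-2}$ computation directly rather than quoting the $\integral_0$-formula.
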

\begin{rem}[Comments on the coefficient]
Note that the coefficient in \cref{eq:main:pupvl=0} is not equal to $\int_{\scrip}M\psi+\frac12 [\pu\phi\pv\phi]^{(4)}\dd\mu$, but can be shown to be equal to $\int_{\scrip}M\psi+\frac12 [\pu\phi\pv\phi]^{(4)}+\frac12(\psi^{(0)})^2\dd\mu$ (after integrating by parts in $u$ and on the sphere).
Furthermore, if we considered instead the nonlinear wave equation  $\Box_{g_M}\phi=(g_M^{-1})^{\mu \nu}\partial_\mu\phi  \partial_\nu\phi={D^{-1}}\pu\phi\pv\phi-r^{-2}|\sl\phi|^2$, this coefficient would also vanish; see also \cref{rm:gennullform}.
\end{rem}
\begin{rem}[General null forms]
\label{rm:gennullform}
In general, for constant coefficient null form nonlinearities, the $\ell=0$-tails will appear at the same order as in Price's law, while, for higher $\ell$, the nonlinear contribution leads to a tail decaying one power slower. For non-constant coefficients, the nonlinear contribution will also lead to a slower $\ell=0$-tail. 
For particular null forms such as $\Box_{g_M}\phi=(g_M^{-1})^{\mu \nu}\partial_\mu\phi  \partial_\nu\phi$, the contributions will satisfy Price's law, which also follows from the fact that in that case, the equation can be recast the linear equation: $\Box_{g_M}(e^{-\phi}-1)=0$.
\end{rem}
\begin{rem}[Nonstationary backgrounds]
 \cref{thm:main:phi3,thm:main:phi4,thm:main:puphipvphi} all hold true for $\Box_{g_M}$ replaced by $\Box_g$, with the obvious modifications. 
\end{rem}
\begin{rem}[Generic nonvanishing of the integrals]
	All the integrals in \cref{thm:main:phi3,thm:main:phi4,thm:main:puphipvphi} can be shown to be generically nonvanishing, see \cref{sec:gen}.
\end{rem}

\section{Estimates for the inhomogeneous wave equation on Schwarzschild}\label{sec:inhom}
In this section, we will prove estimates for solutions to the inhomogeneous wave equation on Schwarzschild,
\begin{equation}\label{eq:inhom:wave}
	\Box_{g_M}\phi=F,\qquad  (\iff P_{g_M}\psi=rF)
\end{equation} 
under suitable assumptions on the inhomogeneity $F$ and on the initial data. Throughout this section, we will tacitly make the minimal assumption 
\begin{equation}\label{eq:inhom:tacit}
	F\lesV r^{-1-\epsilon}.
\end{equation}

With the nondegenerate energy $E_N[\phi](\tau_0)$ defined in \cref{eq:inhom:EN} below, we already state the main result of this section (see~\cref{prop:inhom:energydecay} and~\cref{cor:inhom:decay} for the precise versions):
\begin{prop}\label{prop:inhom:main}
Let $\tilde{p}\in(0,2]$, and let $\phi$ solve $\Box_{g_{M}}\phi=F$. Assume that $F$ and the initial data satisfy
	\begin{align}\label{eq:inhom:prop:main:ass}
		r^3F\lesV \tau^{-\eta}\min_{q\in[-1,1]}\left(\frac{r}{\tau}\right)^q,&&	E_N[\phi](\tau_0)+\int_{\C_{\tau_0}} r^{\tilde{p}} (\pv\psi)^2+r^2(\pv T\psi)^2 \dd \mu \lesV[\Vc] 1.
	\end{align}
Then 
	\begin{equation}\label{eq:inhom:prop:main:con}
	\tau T	\psi\lesV  (\tau^{-\frac{\tilde{p}-1}{2}}+\tau^{-\eta+})\min_{q\in[0,1]}\left(\frac{r}{\tau}\right)^{q}, \qquad \text{and} \qquad 	\psi\lesV (\tau^{-\frac{\tilde{p}-1}{2}}+\tau^{-\eta+})\min_{q\in[0,1]}\left(\frac{r}{\tau}\right)^{q} \quad \text{if $\tilde{p}>1$}.
	\end{equation}
	Under suitable additional assumptions on $F$, we can also take $q$ up to $\ell_0+1$ when replacing $\psi$ with $\psi_{\ell \geq \ell_0}$.
\end{prop}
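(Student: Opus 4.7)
My plan is to derive the stated pointwise estimates by applying the Dafermos--Rodnianski $r^p$-hierarchy (\cref{prop:inhom:rp}) in combination with commutations from $\Vc$, upgrading to improved decay for time derivatives via the convolution-type inequality \cref{eq:introconvtrder}, and finally converting the resulting energy bounds into pointwise decay with the weight $\min_{q\in[0,1]}(r/\tau)^q$ via Sobolev embedding on the spheres, Hardy-type inequalities, and interpolation.

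First, I would run the $r^p$-hierarchy with $p=\tilde{p}$ on dyadic time intervals $[\tau,2\tau]$. The spacetime bulk term of the form $\int r^{p+3}|F|^2$ appearing in \cref{prop:inhom:rp} is controlled through the assumption $r^3 F \lesV \tau^{-\eta}$: Cauchy--Schwarz produces a $\tau^{-2\eta}$-type integrand in time, which gives an integrable contribution when $\eta>1/2$ and otherwise yields the $\tau^{-2\eta+}$ loss appearing in the conclusion. Applying the mean-value theorem in $\tau$ and iterating down the hierarchy yields $E_0[\phi](\tau) \lesV \tau^{-\tilde{p}} + \tau^{-2\eta+}$, and the analogous bound for $Z^k\phi$ with $Z \in \Vc$ in view of the initial data assumption $E_N[\phi](\tau_0) \lesV[\Vc] 1$. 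Here, commutation with $r\partial_r|_\Sigma$ must be performed early to generate the higher initial $r^p$-energies needed, while $T$ and $\O_i$ cost only a bounded number of derivatives modulo the trapping terms at the photon sphere absorbed in $\Vc^N$.

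Next, I would upgrade to improved decay for $T\psi$ via the auxiliary inequality \cref{eq:introconvtrder}: combining the Step 1 decay of the $\Vc$-commuted energies with the initial data bound $\int_{\C_{\tau_0}} r^2(\pv T\psi)^2 \lesV 1$ from the hypothesis, the $r^p$-hierarchy applied to $T\phi$ (which solves $\Box_{g_M}(T\phi)=TF$, with $TF$ controlled by the $\tau T$-regularity built into the assumption on $F$) supplies the additional $\tau^{-2}$ of decay required. Sobolev embedding on the spheres (after further commutation with $\sl$ and $\O_i$) then produces the pointwise bound $|T\psi| \lesV \tau^{-(\tilde{p}+1)/2} + \tau^{-\eta+}$, which is precisely the $q=0$ endpoint of the $\tau T\psi$ estimate. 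When $\tilde{p}>1$, integrating this in $\tau$ backwards from infinity gives the same decay for $\psi$ itself.

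The weight $\min_{q\in[0,1]}(r/\tau)^q$ is finally produced by interpolating between two endpoint pointwise estimates. The $q=0$ endpoint is the bound from Step 2. For the $q=1$ endpoint, I would exploit that $r\partial_r|_\Sigma \in \Vc$, so the commuted quantity satisfies the same energy decay; writing $\psi(\tau,r,\theta) - \psi(\tau,\rc,\theta) = \int_\rc^r D^{-1}\pv\psi\, dr'$ in the far region and applying Cauchy--Schwarz against the $r^p$-energy with suitable $p<1$ gains one power of $r/\tau$, while in the interior $r\leq\rc$ the elliptic estimate \cref{prop:inhom:elliptic1} achieves the same. Log-convex interpolation then fills the full range $q\in[0,1]$. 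For the refined statement involving $\psi_{\ell\geq\ell_0}$, I would additionally invoke Poincar\'e's inequality to gain factors of $r^{-1}$ per $\ell$-level, combined with the high-$\ell$ elliptic estimate \cref{prop:inhom:elliptic2}. The main obstacle I anticipate is the bookkeeping: coordinating the loss of derivatives at the photon sphere with the sharp tracking of $r$- and $\tau$-weights through the inhomogeneous bulk, and ensuring both endpoints of the interpolation close at the same $\tau$-rate, especially when $\tilde{p}$ is close to the borderline $\tilde{p}=1$ where the Hardy-based $q=1$ estimate only just converges.
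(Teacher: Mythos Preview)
Your overall strategy matches the paper's: the proof is indeed assembled from the $r^p$-hierarchy (\cref{prop:inhom:rp}) on dyadic intervals, the conversion of $T$-derivatives into $r\partial_v$-commuted fluxes (\cref{lem:inhom:Tconversion}), and then a pointwise conversion step. The paper packages these as \cref{prop:inhom:energydecay} and \cref{cor:inhom:decay}.

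There is, however, a genuine gap in your $q=1$ endpoint. Your radial integration $\psi(\tau,r)-\psi(\tau,\rc)=\int_{\rc}^r D^{-1}\pv\psi\,dr'$ followed by Cauchy--Schwarz against the $r^p$-flux with $p<1$ only yields
\[
|\psi(\tau,r)|\lesssim r^{(1-p)/2}\tau^{(p-\tilde p)/2}=\Big(\frac{r}{\tau}\Big)^{1/2}\tau^{-(\tilde p-1)/2}\qquad (\text{taking }p\to 0),
\]
i.e.\ only a half-power gain $(r/\tau)^{1/2}$, not the full $(r/\tau)$ needed for $q=1$. No choice of $p\in[0,\tilde p]$ fixes this, because the mechanism that produces the extra $\tau^{-1}$ is the conversion of one spatial derivative into one \emph{time} derivative via the equation---which your radial integration does not exploit. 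The paper instead applies the elliptic estimate \cref{prop:inhom:elliptic1} \emph{globally} on $\Sigma_\tau$ (it is not restricted to $r\leq\rc$): one application trades $\mathcal L\phi$ for $r\Xx(rT\phi)+r^2F$, so $|\phi|$ is controlled by energies of $T\phi$, which decay with the extra $\tau^{-2}$. That is what delivers $|\phi|\lesV\tau^{-(\tilde p+1)/2}$, i.e.\ the $q=1$ endpoint. Your proof closes if you simply drop the radial-integration argument in the far region and use \cref{prop:inhom:elliptic1} there as well.

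Two minor points: your displayed bound for $|T\psi|$ should read $\tau^{-(\tilde p+1)/2}+\tau^{-\eta-1+}$ (so that $\tau T\psi\lesV\tau^{-(\tilde p-1)/2}+\tau^{-\eta+}$); and for the refined $\psi_{\ell\geq\ell_0}$ statement the relevant estimate is again \cref{prop:inhom:elliptic1} applied $\ell_0+1$ times, not \cref{prop:inhom:elliptic2} (the latter is an $r^2\Xx$-commuted estimate used elsewhere to control time integrals, not for the pointwise decay here).
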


\begin{rem}
The assumption on $F$ in \cref{eq:inhom:prop:main:ass} is directly motivated by the inhomogeneity appearing in \cref{thm:main}, cf.~\cref{eq:time:aprioriF}. It can easily be weakened. In particular, if we can merely take $q\in[-1,0]$ in \cref{eq:inhom:prop:main:ass} (this is the case of \cref{thm:main:F}), then the result \cref{eq:inhom:prop:main:con} still holds with arbitrarily small loss in decay. See also \cref{rem:inhom:rp:u,rem:inhom:rp:generalised}.
	
\end{rem}
We will apply \cref{prop:inhom:main} to the time integral $T^{-1}\psi$. 
More precisely, we will start from a decay estimate for $\psi$ and $F$, then establish that \cref{eq:inhom:prop:main:ass} holds for the time integrals $T^{-1}\psi$ and $T^{-1}F$, and then deduce faster decay for $\psi$ via \cref{eq:inhom:prop:main:con}.

We emphasise that \cref{prop:inhom:main} follows from well-known techniques in the literature. Apart from minor improvements, the only genuinely novel part of this section is the elliptic estimate of \cref{prop:inhom:elliptic2}. For the sake of completeness, we will nevertheless include brief proofs.

\subsection{Boundedness of the $T$-energy and of the $N$-energy}

We first introduce notation for our energies:
\begin{align}
	E_T[\phi](\tau)&=\int_{\Cbar_{\tau}} r^2(\pu\phi)^2 +D|\sl \phi|^2 +D\phi^2\dd \mu+\int_{\C_\tau} r^2(\pv\phi)^2 +D  |\sl\phi|^2+D\phi^2\dd \mu 
	\\
	E_N[\phi](\tau)&=\int_{\Cbar_{\tau}} D r^2(\Xb\phi)^2 +D|\sl \phi|^2+D\phi^2 \dd \mu+\int_{\C_\tau} r^2(\pv\phi)^2 +D  |\sl\phi|^2+D\phi^2\dd \mu\label{eq:inhom:EN}
\end{align}
\newcommand{\fbulk}[2]{\mathfrak{X}_{#1,#2}^{\epsilon}}
\newcommand{\fbulkp}[3]{\mathfrak{X}_{#1,#2}^{\epsilon,#3}}
We also introduce notation for {spacetime} integrals coming from the inhomogeneity:
\begin{align}
\fbulk{\tau_1}{\tau_2}[F]&=\int_{\tau_1}^{\tau_2} \int_{\Sigma_\tau}\tau ^{1+\epsilon} Dr^2(TrF)^2 \dd \mu +\sup_{\tau\in[\tau_1,\tau_2]} \int_{\Sigma_\tau} D r^2(rF)^2\dd \mu\\
	\fbulkp{\tau_1}{\tau_2}{p}[F]&=\fbulk{\tau_1}{\tau_2}[F]+\int_{\tau_1}^{\tau_2}\int_{\Sigma_\tau} D r^{\min(p,\epsilon)+1}(rF)^2 \dd \mu .\label{eq:inhom:fbulkp}
\end{align}
Notice that, by the fundamental theorem of calculus:
\begin{equation}
	\fbulk{\tau_1}{\tau_2}[F]
\lesssim  \int_{\tau_1}^{\tau_2} \int_{\Sigma_\tau}\tau ^{1+\epsilon} Dr^2(TrF)^2 \dd \mu + \int_{\Sigma_{\tau_1}} D r^2(rF)^2\dd \mu
\end{equation}

\begin{prop}[Degenerate and non-degenerate energy boundedness]\label{prop:inhom:energy}
Let $\epsilon>0$ arbitrarily small, let $\tau_2>\tau_1$ and let $\phi$ solve \cref{eq:inhom:wave}. Then there exists $C(\epsilon)$ such that
	\begin{align}\label{eq:inhom:energyT}
		E_T[\phi](\tau_2)\leq C(\epsilon) \left(E_T[\phi](\tau_1)+\fbulk{\tau_1}{\tau_2}[F]\right),&&	E_N[\phi](\tau_2)\leq C(\epsilon) \left(E_N[\phi](\tau_1)+\fbulk{\tau_1}{\tau_2}[F]\right)
	\end{align}

\end{prop}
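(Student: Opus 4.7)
The plan for both inequalities is a standard multiplier argument. I would contract the scalar-field stress-energy tensor of $\phi$ with a vector field $X$ to form the current $J^X_\mu$, compute $\nabla^\mu J^X_\mu$, and apply the divergence theorem on $\D_{\tau_1,\tau_2}$. This produces boundary fluxes across $\Sigma_{\tau_1}$ and $\Sigma_{\tau_2}$, which in $(u,v,\theta^A)$ coordinates are directly equivalent to $E_T[\phi](\tau_i)$ (resp.\ $E_N[\phi](\tau_i)$), plus a bulk term $\int_\D F\cdot X\phi\,dV_{g_M}$. For the $T$-energy I take $X=T$; the Killing property of $T$ guarantees that no bulk contribution arises beyond the inhomogeneity. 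For the $N$-energy I take $X$ equal to the Dafermos--Rodnianski red-shift vector field of \cite{dafermos_new_2010}, which coincides with $T$ for $r\geq r_1$ (some $r_1>2M$) and is strictly timelike on $\hplus$. Its deformation tensor ${}^{(N)}\pi$ is compactly supported in $r$; the corresponding contribution to $\nabla^\mu J^N_\mu$ is non-negative in a neighbourhood of $\hplus$ (and can be discarded) while the remainder, supported on a bounded $r$-annulus where $N\neq T$, is controlled by the already-established $T$-energy bound.

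The main task is to bound $\int_\D F\cdot T\phi\,dV_{g_M}$ by $\fbulk{\tau_1}{\tau_2}[F]$ plus a small multiple of $\sup_\tau E_T[\phi](\tau)$. Using that $Tr=0$, $TD=0$, and $dV_{g_M}=2Dr^2\,du\,dv\,d\sigma$, I would exploit the algebraic identity $2Dr^2\,F\cdot T\phi = T(2Dr^2 F\phi) - 2D(TrF)\psi$, where $\psi=r\phi$ and $TrF$ denotes $T(rF)$. The divergence theorem applied to the $T$-exact first piece produces boundary contributions $\int_{\Sigma_\tau}2Dr^2 F\phi\,d\mu$ at $\tau=\tau_1,\tau_2$, which I bound by the Cauchy--Schwarz split $|2Dr^2 F\phi|\leq Dr^2(rF)^2 + D\phi^2$, yielding $\sup_\tau\int_{\Sigma_\tau}Dr^2(rF)^2\,d\mu + E_T[\phi](\tau_i)$. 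For the remaining bulk piece, weighted Cauchy--Schwarz gives
\begin{equation*}
|2D(TrF)\psi|\;\leq\; D\tau^{1+\epsilon}r^2(TrF)^2 \;+\; D\tau^{-1-\epsilon}\phi^2,
\end{equation*}
whose first summand integrates to $\fbulk{\tau_1}{\tau_2}[F]$ and whose second yields $\int_{\tau_1}^{\tau_2}\tau^{-1-\epsilon}E_T[\phi](\tau)\,d\tau\leq \epsilon^{-1}\sup_{\tau\in[\tau_1,\tau_2]}E_T[\phi](\tau)$. For the $N$-energy I would decompose $N\phi = T\phi + (N-T)\phi$; the first summand is handled exactly as above, and the second, supported on a bounded $r$-range where $(N-T)\phi$ is essentially a $\partial_r|_\Sigma$-type derivative, is bounded by direct Cauchy--Schwarz against $E_N[\phi]$ (no integration by parts in $T$ needed).

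The main technical obstacle I foresee is orchestrating the absorption step: the boundary and bulk Cauchy--Schwarz both leave a $C(\epsilon)\sup_{\tau\in[\tau_1,\tau_2]}E_T[\phi](\tau)$ term on the right, which cannot be absorbed directly into the single value $E_T[\phi](\tau_2)$ on the left. I would resolve this by running the estimate with a sliding upper endpoint $\tau'\in[\tau_1,\tau_2]$ in place of $\tau_2$, taking the supremum in $\tau'$ of both sides, and only then absorbing $\tfrac12\sup_{\tau'}E_T[\phi](\tau')$. A secondary subtlety is that the weight structure of $\fbulk{\tau_1}{\tau_2}[F]$ demands $(TrF)^2$ rather than $(rF)^2$ in the bulk integrand; this forces the integration by parts in $T$ to be performed \emph{before} Cauchy--Schwarz, since otherwise one would only obtain a bound with $(rF)^2$ in the bulk, too weak for the applications to time integrals $T^{-1}\psi$ in \cref{prop:inhom:main}.
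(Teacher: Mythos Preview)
Your approach is essentially the paper's. For the $T$-energy both you and the paper multiply by $T\phi$, integrate by parts in $T$ on the inhomogeneity before applying Young's inequality (you correctly flag that this order is forced by the $(TrF)^2$ structure of $\fbulk{\tau_1}{\tau_2}$), and then close by taking the supremum over the upper endpoint. One small comment: in your weighted split $|2D(TrF)\psi|\leq D\tau^{1+\epsilon}r^2(TrF)^2+D\tau^{-1-\epsilon}\phi^2$ you should insert an extra smallness parameter $\delta$ so that the $\phi^2$ contribution becomes $\delta\epsilon^{-1}\sup E_T$ rather than $\epsilon^{-1}\sup E_T$; otherwise the absorption of ``$\tfrac12\sup E_T$'' you describe does not literally go through.

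For the $N$-energy you use the Dafermos--Rodnianski red-shift vector field, whereas the paper uses the equivalent explicit multiplier $\chi\,\partial_u\psi$. The one genuine gap is your sentence ``the remainder, supported on a bounded $r$-annulus where $N\neq T$, is controlled by the already-established $T$-energy bound.'' That remainder is a \emph{spacetime} integral, and the $T$-energy bound is only on the boundary flux $E_T[\phi](\tau)$, so you cannot control it directly. The paper resolves this by deriving the differential inequality
\[
E_N[\phi](\tau_2)+c\int_{\tau_1}^{\tau_2}E_N[\phi]\,d\tau\;\leq\;E_N[\phi](\tau_1)+C\int_{\tau_1}^{\tau_2}E_T[\phi]\,d\tau+C\int Dr^2F^2,
\]
dividing by $\tau_2-\tau_1$, letting $\tau_2\to\tau_1$, and introducing the integrating factor $e^{c(\tau-\tau_1)}$; you should include an analogous Gr\"onwall-type step. (Minor: the red-shift construction is from \cite{dafermos_lectures_2008}, not \cite{dafermos_new_2010}.)
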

See \cite{dafermos_lectures_2008} for a more detailed discussion of these estimates. We include a proof for completeness:
\begin{proof}
\textit{The degenerate $E_T$-estimate:}	We consider the identities
	\begin{equation}
	2	r^2DT\phi\cdot \Box_{g_M}\phi=2r^2 D T\phi \cdot F, \quad 2DT\psi \cdot P_{g_M}\psi=2DT\psi\cdot rF,
	\end{equation}
	integrate over spacetime in $\dd u \dd v \dd \sigma$; this gives (the terms along $\hplus$ and $\scrip$ are positive definite)
	\begin{nalign}
&	\int_{\Sigma_{\tau_2}} Dr^2(\partial_r|_{\Sigma}\phi)^2+D(\partial_r|_{\Sigma}(r\phi))^2+ 2|\sl\phi|^2\dd \mu_D\\
		\leq&  \int_{\Sigma_{\tau_1}} Dr^2(\partial_r|_{\Sigma}\phi)^2+D(\partial_r|_{\Sigma}(r\phi))^2+  2|\sl\phi|^2\dd \mu_D+2\int_{\tau_1}^{\tau_2} \dd \tau \int_{\Sigma_\tau} D T(r\phi) \cdot (rF) \dd \mu.
	\end{nalign}
	Notice that the control over $r\partial\phi$ and $\partial\psi$ already gives control over zeroth order terms. 
We treat the inhomogeneous term via integration by parts in $T$: By \cref{eq:inhom:tacit}, there are no boundary terms along $\scrip$ and $\hplus$. The boundary terms along $\Sigma_{\tau_1}$ and $\Sigma_{\tau_2}$ can be absorbed via Young's inequality into the boundary terms on the LHS, and we estimate the bulk term with a weighted Young's inequality using	\begin{equation}
		\int_{\tau_1}^{\tau_2}\int_{\Sigma_\tau} \tau^{-1-\epsilon}r^{-2} \psi^2 \leq C(\epsilon) \sup_{\tau} E_T[\phi](\tau).
	\end{equation}
	This proves the first of \cref{eq:inhom:energyT}.
	
	\textit{The non-degenerate $E_N$-estimate:} To remove the degeneracy at $r=2M$ of the $\pu\phi$-term, we let $\chi$ be a cutoff localising to $r=2M$, identically 1 in $[2M,\rc/2]$ and identically zero for $r>\rc$, and we integrate the identity
	\begin{equation}
	\chi	 \pu \psi \cdot P_{g_M}\psi=\frac1D \pu\psi \left(\pu\pv\psi-\frac{D\Dl\psi}{r^2}+\frac{2MD\psi}{r^3}\right)=\chi \pu\psi \cdot rF.
	\end{equation}
in $\dd \mu$. This gives
	\begin{nalign}\label{eq:inhom:redshift-estimate}
	&	\int_{\Cbar_{\tau_2}^{2M,\rc/2}} D^{-1} (\pu\psi)^2 \dd \mu +\int_{\D_{\tau_1,\tau_2}^{2M,\rc/2}} D^{-1}(\pu\psi)^2 \dd \mu \\
	&	\lesssim 	\int_{\Cbar_{\tau_1}^{2M,\rc/2}} D^{-1} (\pu\psi)^2 \dd \mu +\int_{\tau_1}^{\tau_2}\dd \tau \int_{\Cbar_{\tau}} (\pu\phi)^2+D |\sl\phi|^2+D\phi^2 \dd \mu +\int_{\D_{\tau_1,\tau_2}^{2M,\rc}} D (rF)^2 \dd \mu.
	\end{nalign}
After adding the term $\int_{\tau_1}^{\tau_2} E_T[\phi](\tau)$ this produces an inequality of the form
	\begin{equation}
		E_N[\phi](\tau_2)+{c}\int_{\tau_1}^{\tau_2} E_N[\phi](\tau)\dd\tau{\leq}	E_N[\phi](\tau_1)+{C}\int_{\tau_1}^{\tau_2} E_T[\phi](\tau)\dd \tau+{C}\int_{\tau_1}^{\tau_2}\int_{\Sigma_\tau} Dr^2 F^2\dd \mu ,
	\end{equation}
	with $c,C>0$ uniform constants. Subtract $E_N[\phi](\tau_1)$, divide by $\tau_2-\tau_1$, take the limit $\tau_2-\tau_1\to0$ and introduce an integrating factor of $\e^{c(\tau-\tau_1)}$ to obtain the $N$-energy estimate in \cref{eq:inhom:energyT}.
	\end{proof}
\subsection{Integrated local energy decay estimates}

\begin{prop}[Degenerate and nondegenerate integrated local energy estimates]\label{prop:inhom:iled}
We have, for any $\epsilon>0$, and for any $r'>2M$,
\begin{equation}\label{eq:inhom:iledbasic}
	\int_{\tau_1}^{\tau_2}\dd \tau \int_{\Sigma_\tau^{2M,r'}} \(1-\frac{3M}{r}\)^2\((Dr\partial_r|_{\Sigma}\phi)^2+D|\sl\phi|^2\)+D\phi^2 \dd \mu \leq C(\epsilon, r') \(E_T[\phi](\tau_1)+\fbulkp{\tau_1}{\tau_2}{0}[F] \) .
\end{equation}	
Furthermore, we can remove the degeneracies at $r=2M$ and $r=3M$ via:
\begin{equation}\label{eq:inhom:iledr=3m}
	\int_{\tau_1}^{\tau_2}\dd \tau \int_{\Sigma_\tau^{2M,r'}} r^2(\Xb\phi)^2+|\sl\phi|^2+(T\phi)^2+\phi^2 \dd \mu_D
	\leq C(\epsilon, r') \(E_N[\phi](\tau_1)+E_{T}[\Vk^1\phi](\tau_1) +\fbulkp{\tau_1}{\tau_2}{0}[\Vk^1 F]    \).
\end{equation}
Finally, we have for any $2M<r_1<r_2<\infty$, and for any $N\in\mathbb{N}$, the commuted estimates
\begin{equation}\label{eq:inhom:iledcommute}
	\int_{\tau_1}^{\tau_2}\int_{\Sigma_\tau^{r_1,r_2}}	|\Vc^{N}\phi|^2+(1-\tfrac{3M}{r})^2|\Vc^{N+1}\phi|^2 \dd \mu
	\leq C(\epsilon, r_1,r_2,N) \(E_{T}[\Vk^{N}\phi](\tau_1) +\fbulkp{\tau_1}{\tau_2}{0}[\Vk^{N} F]    \).
\end{equation}
\end{prop}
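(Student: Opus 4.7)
The plan is to establish \cref{eq:inhom:iledbasic} via a classical Morawetz multiplier argument. Following the Dafermos--Rodnianski scheme, I would multiply $P_{g_M}\psi=rF$ by an expression of the form $Y\psi = f(r^{\ast})(\pv-\pu)\psi + h(r)\psi$, with $f(r^{\ast})$ monotone and vanishing linearly at $r=3M$, and with $h(r)$ tuned so that the bulk integrand, after integration over $\D_{\tau_1,\tau_2}^{2M,\infty}$, is bounded below by
\[
(1-\tfrac{3M}{r})^2\bigl((Dr\partial_r|_{\Sigma}\psi)^2 + D|\sl\psi|^2\bigr) + D\,\psi^2/r^{1+\epsilon}
\]
in the relevant bounded region (the photon-sphere degeneracy is unavoidable and forces the $(1-3M/r)^2$ weight). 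Boundary terms on $\Sigma_{\tau_1},\Sigma_{\tau_2},\hplus,\scrip$ are bounded by $E_T[\phi]$ using Cauchy--Schwarz, while the inhomogeneous spacetime term $\int_{\D} Y\psi\cdot rF\,\dd\mu$ is absorbed by pairing a Hardy bound for $\psi^2/r^{1+\epsilon}$ against the $r^{\min(0,\epsilon)+1}$-weighted contribution in the definition of $\fbulkp{\tau_1}{\tau_2}{0}[F]$; the zeroth-order term $D\phi^2$ in \cref{eq:inhom:iledbasic} is then recovered by a standard Hardy estimate in $r$ on $\Sigma_\tau^{2M,r'}$.

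\textbf{Removing the two degeneracies.} For \cref{eq:inhom:iledr=3m}, the $r=2M$ degeneracy is removed by adding a fixed multiple of the red-shift bulk estimate \cref{eq:inhom:redshift-estimate}; this replaces $E_T[\phi]$ by $E_N[\phi]$ in the data norms but introduces no further commutation. The $r=3M$ degeneracy is handled by commuting with the Killing field $T$: since $\Box_{g_M}(T\phi)=TF$, applying \cref{eq:inhom:iledbasic} to $T\phi$ furnishes a non-degenerate $L^2$-control of $T\phi$, $T\Xb\phi$, $T\sl\phi$ on bounded regions in terms of $E_T[T\phi](\tau_1)+\fbulkp{\tau_1}{\tau_2}{0}[TF]$. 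I would then use \cref{eq:Sch:wave}, solved algebraically for $\Xb(r^2D\Xb\phi)+\Dl\phi$ in terms of $T\phi$ and $F$, together with elliptic-type estimates on the bounded annulus $\Sigma_\tau^{2M,r'}$, to trade the $r=3M$-degenerate spatial derivatives for the $T$-derivatives just controlled, producing the $E_T[\Vk^{1}\phi](\tau_1)$ and $\fbulkp{\tau_1}{\tau_2}{0}[\Vk^{1}F]$ contributions on the right-hand side.

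\textbf{Commutation and the main obstacle.} The higher-order estimate \cref{eq:inhom:iledcommute} follows by iterating the previous two steps after commutation. On a fixed bounded region $\Sigma_\tau^{r_1,r_2}$, commuting with the Killing vector fields in $\Vk$ is exact, so that $\Box_{g_M}(\Vk^N\phi)=\Vk^N F$, while additional $r\partial_r|_\Sigma$-commutations produce only lower-order terms in the derivative hierarchy, absorbable by induction on $N$; crucially, within the bounded region one can again use the wave equation to trade any $r\partial_r|_\Sigma$-derivative for a combination of $T$, $\sl$-derivatives, which is exactly why only $\Vk^N$ and not $\Vc^N$ appears on the right-hand side of \cref{eq:inhom:iledcommute}. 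The main technical obstacle throughout the argument is the photon-sphere trapping at $r=3M$, responsible both for the $(1-3M/r)^2$ weight in \cref{eq:inhom:iledbasic} and for the loss of one derivative in \cref{eq:inhom:iledr=3m,eq:inhom:iledcommute}; this loss is consistent with the Sbierski-type obstruction and cannot be avoided. All remaining bookkeeping is routine, the delicate point being simply to match the $r$-weights between the Morawetz bulk and the inhomogeneity norm $\fbulkp{\tau_1}{\tau_2}{0}[\cdot]$.
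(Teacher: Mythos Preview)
Your outline tracks the paper's proof closely: Morawetz multiplier $f(r)(\pv-\pu)\psi + y(r)\psi$ for \cref{eq:inhom:iledbasic}, redshift \cref{eq:inhom:redshift-estimate} for the horizon degeneracy, and commutation for the trapping degeneracy. Two points deserve comment.

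First, a small but genuine gap in your treatment of the $r=3M$ degeneracy. You write that applying \cref{eq:inhom:iledbasic} to $T\phi$ ``furnishes a non-degenerate $L^2$-control of $T\phi$, $T\Xb\phi$, $T\sl\phi$''; but \cref{eq:inhom:iledbasic} applied to $T\phi$ still carries the factor $(1-3M/r)^2$ on the derivative terms --- only the zeroth-order piece $D(T\phi)^2$ is non-degenerate at the photon sphere. Your subsequent elliptic step then needs $r\Xb(rT\phi)$ in spacetime $L^2$, i.e.\ $\Xb T\phi$ non-degenerately, which you have not yet produced. The paper's route is more direct and is already suggested by the $E_T[\Vk^1\phi]$ on the right of \cref{eq:inhom:iledr=3m}: commute with \emph{all} of $\Vk=\{T,\mathcal{O}_i\}$, not just $T$, and exploit precisely that the zeroth-order term $D\phi^2$ in \cref{eq:inhom:iledbasic} does not degenerate at $r=3M$. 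This gives $(T\phi)^2$ and $|\sl\phi|^2$ non-degenerately with no elliptic argument at all. The paper also records that the radial piece $r^2((\pv-\pu)\phi)^2$ is already non-degenerate from the Morawetz multiplier itself, and mentions an alternative that uses only $T$-commutation together with a second multiplier $f(r)(\pv-\pu)\psi$ supported near $r=3M$ to recover the angular part.

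Second, a minor omission: the paper notes that the functions $f$ and $y$ in the Morawetz multiplier are chosen differently in different regimes of angular frequency~$\ell$. Your proposal describes a single pair $(f,h)$; this is a standard technicality but worth flagging since a uniform choice does not give coercivity for all modes simultaneously.
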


\begin{proof}
Estimates as above  in were first proved in \cite{dafermos_note_2007,marzuola2010strichartz}, see also \cite{holzegel2024noteintegratedlocalenergy} and \cite{gajic_late-time_2023} for more recent proofs, as well as \cite{metcalfe_prices_2011} and \cref{app:derivapriori} for a proof for perturbations of Schwarzschild.
Here, we only briefly recall the relevant ingredients:
	\cref{eq:inhom:iledbasic} follows from a multiplier estimate with multipliers $f(r)(\pv-\pu)\psi$ and $y(r)\psi$, with suitable choices for $f$ and $y$ in different regimes regimes of angular frequency $\ell$. (In fact, this also gives control over $r^2(
	(\pv-\pu)\phi)^2$ that does not degenerate at $r=3M$.)
The degeneracy at $r=2M$ can be removed using  \cref{eq:inhom:redshift-estimate} and by also including the non-degenerate energy $E_N$ on the RHS.
{Furthermore, by commuting with the Killing vector field $T$ and the angular momentum operators, using that the zeroth-order term $\phi^2$ in \cref{eq:inhom:iledbasic} does not degenerate, we can control all derivatives of $\phi$ without degeneracy at $r=3M$. Alternatively, it suffices to commute only with $T$ and then apply $f(r)(\pv-\pu)\psi$ with a suitable choice of $f$ supported near $r=3M$ to control the angular derivatives without degeneracy.}
\end{proof}

\subsection{The hierarchy of $r^p$-weighted energy estimates and commutations}
Next, we consider the hierarchy of $r^p$-weighted energy estimates introduced in \cite{dafermos_new_2010}:
\begin{prop}[$r^p$-weighted estimates]\label{prop:inhom:rp}
	Let $p\in[0,2]$ and let $\epsilon>0$ arbitrarily small. Then we have
		\begin{align}\label{eq:inhom:rp}
		&E_N[\phi](\tau_2)+	\int_{\C_{\tau_2}} r^p (\pv \psi)^2\dd \mu +\int_{\tau_1}^{\tau_2}\dd \tau \int_{\C_{\tau}} pr^{p-1}(\pv\psi)^2+(2-p)r^{p-3}|\sl\psi|^2\dd \mu\\
		\lesV[\Vc^{N}] &C(\epsilon,\rc)    \(	\int_{\C_{\tau_1}} r^p (\pv \psi)^2  \dd \mu+E_N[\phi](\tau_1)+E_T[\Vk^1\phi](\tau_1) +\fbulkp{\tau_1}{\tau_2}{p}[\Vk^1 F]   \)\nonumber
	\end{align}
\end{prop}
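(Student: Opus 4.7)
I apply the Dafermos--Rodnianski $r^p$-weighted multiplier method of \cite{dafermos_new_2010} to the twisted wave equation $P_{g_M}\psi=rF$ from \cref{eq:inhom:wave:twisted} in the far region $\{r\geq\rc\}$, combine it with the non-degenerate energy estimate \cref{prop:inhom:energy} and the integrated local energy decay \cref{prop:inhom:iled} in the bounded-$r$ region, and then iterate the argument under commutations with $Z\in\Vc$. The main technical difficulty will be controlling the commutator $[P_{g_M},r\partial_r|_\Sigma]$ together with the piecewise definition of $r\partial_r|_\Sigma$ at $r=\rc$.

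\textbf{Step 1 (Uncommuted estimate).} Pick a smooth cutoff $\xi(r)$ with $\xi\equiv 1$ for $r\geq\rc$ and supported in $\{r\geq\rc-1\}$; multiply $P_{g_M}\psi=rF$ by $2\xi(r)\,r^p\pv\psi$ and integrate in $\dd\mu$ over $\D_{\tau_1,\tau_2}$. Using $\pu r|_v=-D$, $\pv r|_u=D$ and $\pv\psi\,\pu\pv\psi=\tfrac12\pu((\pv\psi)^2)$, integration by parts in $u$ produces the flux $\tfrac12 r^pD^{-1}(\pv\psi)^2$ on $\C_{\tau_2}$ together with the positive bulk $\tfrac{p}{2}r^{p-1}(\pv\psi)^2$, plus errors of order $r^{p-2}(\pv\psi)^2$ coming from $\pu D$. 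Treating $-\Dl\psi/r^2\cdot r^p\pv\psi$ by integration by parts on $S^2$ and then in $v$ produces the angular bulk $\tfrac12(2-p)Dr^{p-3}|\sl\psi|^2$, non-negative for $0\leq p\leq 2$. The potential term $2M\psi/r^3\cdot r^p\pv\psi$ contributes an $r^{p-3}\psi^2$ expression that is absorbed by Hardy's inequality. Cutoff errors from $\xi'$ are compactly supported in $r\in[\rc-1,\rc]$ and controlled by \cref{eq:inhom:iledr=3m}, while the inhomogeneity $\int 2\xi r^{p+1}F\pv\psi\,\dd\mu$ is handled by a weighted Cauchy--Schwarz that exploits both the sup-in-$\tau$ and the $\tau$-integrated $TF$ contributions of $\fbulkp{\tau_1}{\tau_2}{p}[F]$ defined in \cref{eq:inhom:fbulkp}, so that the $(\pv\psi)^2$-factor is absorbed into $\tfrac{p}{2}r^{p-1}(\pv\psi)^2$ (with the extra $\epsilon$-loss at $p=0$) while the remainder matches exactly $\fbulkp{\tau_1}{\tau_2}{p}[F]$. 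Adding \cref{eq:inhom:energyT} and the bounded-$r$ contribution closes the uncommuted case of \cref{eq:inhom:rp}.

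\textbf{Step 2 (Commutation with $\Vc$).} Since $T$ and $\O_i\in\Vk$ are Killing for $g_M$, $[Z,P_{g_M}]=0$ for $Z\in\Vk$, and commuting the above with $\Vk^\alpha$ merely replaces $F$ by $\Vk^\alpha F$. The non-trivial case is $Z=r\partial_r|_\Sigma$: a direct calculation from \cref{eq:inhom:wave:twisted} shows that $[P_{g_M},r\partial_r|_\Sigma]$ is a first-order operator whose coefficients remain bounded as $r\to\infty$ and whose top-order angular contribution is a multiple of $\Dl/r^2$. The resulting commutator errors are absorbable by the $r^p$-estimates at weights $p'\leq p$ together with the commuted integrated local energy decay \cref{eq:inhom:iledcommute}, the latter being the source of the inevitable loss of one $\Vk$-derivative on the right-hand side of \cref{eq:inhom:rp} through trapping at $r=3M$. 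The remaining subtlety is the non-smoothness of $r\partial_r|_\Sigma$ at $r=\rc$: I would handle this by first commuting with a smoothed variant $\widetilde{\partial}_r$ that interpolates between $\Xx$ and $\Xb$ on $r\in(\rc-\tfrac12,\rc+\tfrac12)$, and then absorbing the surface errors produced by $r\partial_r|_\Sigma-\widetilde{\partial}_r$ via \cref{eq:inhom:iledr=3m}; these errors are all supported in the trapping region and cost at most one additional $\Vk$-commutation. Iterating over $|\alpha|\leq N$ then yields the full estimate \cref{eq:inhom:rp}.
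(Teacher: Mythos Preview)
Your Step 1 is essentially the paper's argument and is fine: multiply $DP_{g_M}\psi=DrF$ by a cut-off times $r^p\pv\psi$, integrate, control the $\chi'$-errors by \cref{eq:inhom:iledcommute}, and use Young's inequality on the inhomogeneity (this is precisely where the $\min(p,\epsilon)$ in $\fbulkp{\tau_1}{\tau_2}{p}$ enters).

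Your Step 2, however, is too optimistic about the $r\partial_r|_\Sigma$-commutation and misses the two structural ingredients that the paper isolates in \cref{prop:inhom:rX} and \cref{prop:inhom:redshift}. Near $\scrip$, the commutator $[P_{g_M},r\Xx]$ is \emph{not} lower order: from \cref{eq:inhom:rXcommutation} one sees that the $n$-fold commutation produces the angular term $-\tfrac{nD\Dl(r\Xx)^{n-1}\psi}{r^2}$, which carries the \emph{same} $r$-weight as the principal angular term $\tfrac{D\Dl(r\Xx)^n\psi}{r^2}$ and therefore cannot be ``absorbed by the $r^p$-estimates at weights $p'\leq p$'' as you claim. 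The paper does not treat this perturbatively: \cref{prop:inhom:rX} instead rewrites the commuted equation via the twisted identity $\tfrac{r^n}{D^n}\pu\bigl(\tfrac{D^n}{r^n}\pv(r\Xx)^n\psi\bigr)=\ldots$ and exploits the \emph{sign} of this angular cross-term after the multiplier computation (``we crucially use the sign of the Laplacian term multiplied by $n$''). Near $\hplus$, your smoothed $\widetilde{\partial}_r$ still equals $\Xb$, and $[P_{g_M},\Xb]$ produces the term $-\tfrac{2M}{r^2}\pu\Xb\psi$, which is $D^{-1}$-singular at the horizon; this is again not absorbable by ILED but requires the enhanced redshift mechanism of \cref{prop:inhom:redshift} (the good sign of $-\tfrac{2Mn}{r^2}$). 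Your outline should therefore defer the commutation statement to these two propositions rather than asserting absorbability.
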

\begin{proof}
	The uncommuted version of this estimate was first proved in \cite{dafermos_new_2010}: It follows from simply integrating the identity $\chi\pv(r\phi)\cdot DP_g(r\phi)=\chi \pv(r\phi)\cdot D rF$ over the spacetime region $\D_{\tau_1,\tau_2}^{\rc,\infty}$, for $\chi(r)$ a cutoff vanishing for $r\leq \rc-1/2$ and identically one for $r\geq \rc$. The term arising from $\chi'$ is controlled by \cref{eq:inhom:iledcommute}.  We treat the inhomogeneous term by using Young's inequality, where we note that we can directly absorb the spacetime integral of $r^{p-1}(\pv\psi)^2$ into the LHS only for $p>0$---this is the reason for the $\min(p,\epsilon)$ in \cref{eq:inhom:fbulkp}.
	
	The statement that we can commute with $\Vc$ follows after applying \cref{prop:inhom:rX} and \cref{prop:inhom:redshift} below. Notice already that commutations with $\Vk$ are trivial, and commutations in the bounded-$r$ region away from $r=2M$ are already covered by \cref{eq:inhom:iledcommute}.
	Thus, it remains to show that we can commute with $r\Xx$ near $\scrip$ (first done in \cite{angelopoulos_late-time_2021}, with $\Xx$-commutations having appeared in this context already in \cite{moschidis_rp_2016}), and that we can commute with $\Xb$ near $\hplus$ \cite{dafermos_lectures_2008}.
	\end{proof}
	\begin{rem}\label{rem:inhom:rp:u}
		In the above, we have estimated the inhomogeneous bulk term simply by using Young's inequality and losing an $r$-weight. In certain applications, given that $F$ has sufficient $\tau$-decay and less $r$-decay, one may instead wish to estimate it against the fluxes at the expense of a $\tau$-weight!
	\end{rem}
	\subsubsection{Commutations with $r\Xx$ near $\scrip$}
	In this section, we will derive the estimates that are necessary to be able to commute the $r^p$-weighted energy estimates of \cite{dafermos_new_2010} with vector fields in $\Vc$.
\begin{prop}[The $r\pv$-commuted $r^p$-weighted energy estimates]\label{prop:inhom:rX}
	Let $N\in\mathbb{N}$, let $\epsilon>0$, and let $p\in[-2N,2]$. Then we have
	\begin{align}\nonumber
	&	\sum_{0\leq n_1+n_2\leq N}	\int_{\C_{\tau_2}}  r^p(\pv(r\Xx)^{n_1} T^{n_2}\psi )^2\dd \mu \\\nonumber
		&+\int_{\tau_1}^{\tau_2}\dd \tau \int_{\C_\tau} (p+2N)r^{p-1}(\pv(r\Xx)^{n_1}T^{n_2}\psi)^2+(2-p)r^{p-3} (\sl(r\Xx)^{n_1}T^{n_2}\psi)^2+(2-p)r^{p-3}((r\Xx)^{n_1}T^{n_2}\psi)^2\dd \mu\\
		&\lesssim_{N,\epsilon, \rc} \sum_{0\leq n_1+n_2\leq N}\int_{\C_{\tau_1}} r^p(\pv(r\Xx)^{n_1} T^{n_2}\psi )^2\dd \mu \\
		& +E_T[\Vk^N\phi](\tau_1)+\fbulk{\tau_1}{\tau_2}[\Vk^N F]+\sum_{0\leq n_1+n_2\leq N}\int_{\tau_1}^{\tau_2}\dd \tau	\int_{\C_{\tau}} D r^{\min(p,\epsilon)+1}((r\Xx)^{n_1}T^{n_2}(rF))^2 \dd \mu\nonumber
		\end{align}
\end{prop}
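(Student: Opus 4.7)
The plan is to commute the equation $P_{g_M}\psi = rF$ iteratively with $r\Xx$ — the $T$-commutations being trivial since $T$ is a Killing field for $g_M$, so $P_{g_M}(T^{n_2}\psi) = r T^{n_2} F$ — deriving a commuted equation $P_{g_M}\Psi_k = r \tilde F_k$ for $\Psi_k := (r\Xx)^k T^{n_2}\psi$, and then applying the $r^p$-multiplier exactly as in the proof of \cref{prop:inhom:rp} at each level of commutation. The argument proceeds by induction on $k \leq N - n_2$, the base case $k = 0$ being \cref{prop:inhom:rp} itself after $T$-commutation.

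The key computational step is the derivation of the commuted equation. Using $\pu r = -D$, $[\pu, \Xx] = (2M/r^2)\Xx$, and iterating, one obtains by induction an identity of the schematic form
\[
P_{g_M}\Psi_k = \frac{1}{r}(r\Xx)^k(rF) + \sum_{j<k}\left(\frac{a_{kj}(r)}{r^2}\Psi_j + \frac{b_{kj}(r)}{r^2}\Dl\Psi_j + \frac{c_{kj}(r)}{r}\Xx\Psi_j + \frac{d_{kj}(r)}{r^2}T\Psi_j\right),
\]
where $a_{kj}, b_{kj}, c_{kj}, d_{kj}$ are smooth and bounded on $\{r \geq \rc\}$. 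The key observation, familiar from \cite{moschidis_rp_2016, angelopoulos_late-time_2018}, is that each commutation of $r\Xx$ past $D^{-1}\pu\pv$ produces precisely one additional $2\pv\Psi/r$-term from the interchange; the accumulated factor $2k\, \pv\Psi_k / r$ is what will shift the coercive bulk coefficient from $p$ to $p+2k$ in the multiplier step.

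The multiplier step pairs $P_{g_M}\Psi_k = r\tilde F_k$ with $\chi r^{p}\pv\Psi_k$ — for $\chi$ a smooth cutoff identically one for $r\geq \rc$ and vanishing for $r\leq \rc-1/2$ — and integrates over $\D_{\tau_1,\tau_2}^{\rc-1/2,\infty}$. Following the proof of \cref{prop:inhom:rp} verbatim, this produces the flux terms $\int_{\C_{\tau_i}} r^p(\pv\Psi_k)^2 \dd \mu$ at the endpoints, the coercive bulk terms $(p+2k)r^{p-1}(\pv\Psi_k)^2$, $(2-p)r^{p-3}|\sl\Psi_k|^2$, and $(2-p)r^{p-3}\Psi_k^2$ (the last through the potential $2M\Psi_k/r^3$), along with the inhomogeneous bulk $r^{\min(p,\epsilon)+1}((r\Xx)^k(rF))^2$ handled by Young's inequality at the cost of an $r$-weight. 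The $\chi'$-supported contributions near $r=\rc$ are absorbed by \cref{eq:inhom:iledcommute} after $\Vk^N$-commutation, and feed the $E_T[\Vk^N\phi](\tau_1) + \fbulk{\tau_1}{\tau_2}[\Vk^N F]$ terms on the right-hand side. The lower-order $\Psi_j$-terms in $\tilde F_k$ are absorbed into the bulks and fluxes already supplied by the inductive hypothesis at level $j$: the angular $\Dl\Psi_j/r^2$-contributions pair against $\pv\Psi_k$ to yield terms controlled by $r^{p-3}|\sl\Psi_j|^2$ via Cauchy--Schwarz combined with a Hardy inequality, and analogously for the remaining lower-order terms.

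I expect the main obstacle, which pins down the admissible range $p \in [-2N,2]$, to be verifying that the coefficient in front of the bulk $r^{p-1}(\pv\Psi_k)^2$ after integration by parts is precisely $p+2k$, so that summing over $0\leq n_1 + n_2\leq N$ keeps the hierarchy coercive all the way down to $p = -2N$. This demands careful tracking of the accumulated $2k/r$-factor from the repeated commutator, together with the corrections coming from $D$-factors and the Schwarzschild potential, since a miscount by even $1$ would shrink the admissible $p$-range. A secondary (purely organisational) difficulty is closing the induction: every term in $\tilde F_k$ with $j<k$ must carry exactly the right $r$-weight and derivative structure to fit into the inductive hypothesis applied with an appropriately shifted $p$-exponent, which requires a careful accounting of the $r^{j}$ versus $r^{-j}$ powers that emerge from the repeated commutation of $r\Xx$ with $\Dl/r^2$ and with the potential $2M/r^3$. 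The inhomogeneous forcing $rF$ introduces no new difficulty beyond the absorption already performed in \cref{prop:inhom:rp}.
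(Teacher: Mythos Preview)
Your overall strategy --- commute with $(r\Xx)^k$, apply the $r^p$-multiplier at each level, close by induction, with the shift $p\mapsto p+2k$ coming from the accumulated first-order term --- is exactly the paper's. The paper organises the computation via the weighted identity (writing $y=D/r$, $\Psi_n=(r\Xx)^n\psi$)
\[
y^{-n}\pu\bigl(y^{n}\pv\Psi_n\bigr)=\frac{D}{r^2}\Dl\Psi_n-\frac{nD}{r^2}\Dl\Psi_{n-1}+D\sum_{k\leq n}A_k^n(r\Xx)^k(rF)+O(r^{-2})\text{-terms},
\]
so that the shift appears on the left from $\pu\log y^n$ rather than as a right-hand side term; this is equivalent to what you describe.

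There is one point where your description would not close as written: the treatment of the top-order angular cross-term $\frac{b_{k,k-1}(r)}{r^2}\Dl\Psi_{k-1}$. This term is \emph{not} lower order. After integrating by parts on the sphere you get $\frac{nD}{r^{2-p}}\,\sl\Psi_{n-1}\cdot\pv\sl\Psi_n$; a direct Cauchy--Schwarz here would leave you with a bulk $r^{p-1}|\pv\sl\Psi_n|^2$, which is one angular derivative beyond what either the inductive hypothesis or the current-level estimate controls, and no Hardy inequality repairs this. The paper instead integrates by parts once more in $v$ and uses the algebraic identity $\pv\sl\Psi_{n-1}=(D/r)\sl\Psi_n$ (immediate from $\Psi_n=(r/D)\pv\Psi_{n-1}$) to convert the cross-term into $-\frac{nD^2}{r^{3-p}}|\sl\Psi_n|^2$, which has a \emph{good sign}, plus a genuinely lower-order remainder $\sl\Psi_{n-1}\cdot\sl\Psi_n$ absorbable by Cauchy--Schwarz. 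The paper flags this explicitly: ``we crucially use the sign of the Laplacian term multiplied by $n$''. Your phrase ``via Cauchy--Schwarz combined with a Hardy inequality'' does not capture this mechanism, and without it the induction at level $k$ does not close against level $k-1$.
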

\begin{proof}
	The starting point of the proof is the following inductive relation for $n\in\N$, where we write $y=D/r$: 	\begin{nalign}
		\pu(y^{n+1}\pv(r\Xx)^{n+1} \psi)&=\pv\pu(y^n \pv (r\Xx)^n \psi)-y^{n+1}\pu\pv \log (y^{n+1}) (r\Xx)^{n+1} \psi -\pv \log (y^{n+1}) \pu(y^n\pv (r\Xx)^n\psi)\\
		&	=y^{n+1}\pv (y^{-n-1} \pu(y^n \pv(r\Xx)^n\psi))-y^{n+1}\pu\pv \log (y^{n+1}) (r\Xx)^{n+1} \psi 
	\end{nalign}
	This relation then directly implies, for some $A_k^n\in\N$ and for some smooth bounded functions $f_k^n(r), g_k^n(r)$:
	\begin{nalign}\label{eq:inhom:rXcommutation}
		\frac{r^n}{D^n}\pu\(\frac{D^n}{r^n}\pv (r\Xx)^n\psi\)&=\frac{D\Dl(r\Xx)^n\psi}{r^2}-\frac{nD\Dl (r\Xx)^{n-1}\psi}{r^2}+D\sum_{k=0}^n A_k^n (r\Xx)^k (rF)\\
		&+n\sum_{k=0}^n \frac{f_k^n(r)}{r^2} (r\Xx)^k \psi+n(n-1)\sum_{k=0}^{n-2} \frac{g_k^n(r)}{r^2}\Dl(r\Xx)^n\psi.
	\end{nalign}

We then multiply \cref{eq:inhom:rXcommutation} by $\chi r^p\pv(rX)^n\psi$ and integrate over $\D_{\tau_1,\tau_2}^{r',\infty}$, where $\chi$ is a cutoff vanishing for $r\leq \rc-1$. (The arising bulk terms are controlled by \cref{eq:inhom:iledcommute}.) All terms in the second line of \cref{eq:inhom:rXcommutation} can be treated either by largeness of $r$ or by induction assumption, and we crucially use the sign of the Laplacian term multiplied by $n$ in the first line of \cref{eq:inhom:rXcommutation}. See also \cite{gajic_late-time_2023}[Proposition 7.2] for a more detailed proof (with $F=0$).
\end{proof}
\cref{prop:inhom:rX}, in addition to establishing $rX$-regularity, also allows us to later on prove faster decay for time derivatives (an insight going back to \cite{angelopoulos_vector_2018}), using that we can convert time derivatives into $r\pv$ derivatives while gaining $r$ weights:
\begin{lemma}[Converting $T$ into $L$ derivatives]\label{lem:inhom:Tconversion}
	Let $p\in \R$ and $N\in\mathbb N$. Then
	\begin{multline}
		\sum_{n_1+n_2\leq N} \int_{\C_\tau} r^{p+1}|\sl^{n_1} \pv T (r\Xx)^{n_2}\psi|^2 \lesssim_N \sum_{n_1+n_2\leq N+1} \int_{\C_\tau} r^{p-1}|\sl^{n_1}\pv(r\Xx)^{n_2}\psi|^2+r^{p-3}|\sl^{n_1+1}(r\Xx)^{n_2}\psi|^2\\
		+	\sum_{n_1+n_2\leq N} \int_{\C_\tau} r^{p+1}|\sl^{n_1}(r\Xx)^{n_2}(rF)|^2
	\end{multline}
\end{lemma}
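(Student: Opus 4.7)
The starting point will be the elementary identity $T = \partial_u + \partial_v$ in $(u,v)$-coordinates, giving
\[ \partial_v T (r\Xx)^{n_2}\psi = \partial_u\partial_v (r\Xx)^{n_2}\psi + \partial_v^2 (r\Xx)^{n_2}\psi. \]
My plan is to handle each of these pieces by converting $\partial_u$ derivatives into angular and zeroth-order contributions via the wave equation, and converting $\partial_v^2$ derivatives into $\partial_v (r\Xx)^{n_2+1}$ via a geometric identity. For the first piece, I would invoke the commuted wave equation identity \cref{eq:inhom:rXcommutation} from the proof of \cref{prop:inhom:rX}. Using $\partial_u r = -D$ and $\partial_u D = -2MD/r^2$, one finds $(r^{n_2}/D^{n_2})\partial_u(D^{n_2}/r^{n_2}) = (n_2/r)(1-4M/r)$, and rearranging \cref{eq:inhom:rXcommutation} yields an expression for $\partial_u\partial_v(r\Xx)^{n_2}\psi$ as a finite sum of $Dr^{-2}\Dl(r\Xx)^{\leq n_2}\psi$, $D(r\Xx)^{\leq n_2}(rF)$, $r^{-1}\partial_v(r\Xx)^{n_2}\psi$, and zeroth-order terms $r^{-2}(r\Xx)^{\leq n_2}\psi$. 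For the second piece, differentiating the identity $(r\Xx)^{n_2+1}\psi = rD^{-1}\partial_v(r\Xx)^{n_2}\psi$ in $v$ and solving for $\partial_v^2$ yields
\[ \partial_v^2 (r\Xx)^{n_2}\psi = \frac{D}{r}\partial_v (r\Xx)^{n_2+1}\psi - \frac{D}{r}\partial_v(rD^{-1})\cdot \partial_v (r\Xx)^{n_2}\psi, \]
with coefficient $(D/r)\partial_v(rD^{-1}) = (1 - 2M/(rD))/r = O(r^{-1})$ uniformly on $\{r\geq\rc\}$.

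Having thus expressed $\partial_v T(r\Xx)^{n_2}\psi$ as a finite sum of terms, I will commute with $\sl^{n_1}$ (using that $\sl$ commutes with $\partial_u$, $\partial_v$, and $r\Xx$), square, weight by $r^{p+1}$, integrate over $\C_\tau$, and apply the triangle inequality. A term-by-term count shows that each contribution carries the correct weight: the $\Dl$-terms become $r^{p-3}|\sl^{n_1+2}(r\Xx)^{\leq n_2}\psi|^2$, absorbed by the RHS summand with shifted indices $n_1' = n_1+1$, $n_2' \leq n_2$ satisfying $n_1'+n_2'\leq N+1$; the $\partial_v(r\Xx)^{n_2+1}$-piece gives $r^{p-1}|\sl^{n_1}\partial_v(r\Xx)^{n_2+1}\psi|^2$, with $n_1 + (n_2+1)\leq N+1$; the $r^{-1}\partial_v(r\Xx)^{n_2}$-contributions give $r^{p-1}|\sl^{n_1}\partial_v(r\Xx)^{n_2}\psi|^2$; and the inhomogeneity contributes $r^{p+1}|\sl^{n_1}(r\Xx)^{\leq n_2}(rF)|^2$. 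All of these appear directly on the right-hand side.

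The principal obstacle I anticipate is absorbing the zeroth-order remainder terms $r^{p-3}|\sl^{n_1}(r\Xx)^k\psi|^2$ (with $n_1+k\leq N$), which, in contrast with the $\Dl$-terms, lack one angular derivative compared against the RHS. My plan here is a spherical-harmonic decomposition of $(r\Xx)^k\psi$: for the nonzero modes, Poincar\'e's inequality on $S^2$ immediately yields $\int_{S^2}|\sl^{n_1}(r\Xx)^k\psi|^2 \lesssim \int_{S^2}|\sl^{n_1+1}(r\Xx)^k\psi|^2$ (directly for $n_1\geq 1$, and modulo the $\ell=0$ contribution for $n_1=0$), producing the RHS term $r^{p-3}|\sl^{n_1+1}(r\Xx)^k\psi|^2$. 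The remaining $\ell=0$ contribution along $\C_\tau$ is dispatched by Hardy's inequality in $r$ with exponent $q = p-3$ (valid whenever $p\neq 2$), trading $r^{p-3}(\overline{(r\Xx)^k\psi})^2$ for $r^{p-1}(\partial_v\overline{(r\Xx)^k\psi})^2$, which is controlled by the RHS term $r^{p-1}|\partial_v(r\Xx)^k\psi|^2$; the boundary contribution at $r=\rc$ is a fixed finite-radius zeroth-order quantity, comparable to lower-order ingredients already under control at $r=\rc$ via \cref{prop:inhom:iled}. The borderline case $p=2$, if needed, would require a standard logarithmic modification of Hardy's inequality.
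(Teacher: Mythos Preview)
Your approach is essentially the same as the paper's, which offers only the one-line schematic ``$r\partial_v T = \partial_v(r\partial_v) - D\partial_v + r^{-1}\Dl + r\cdot rF$'' and an appeal to \cref{eq:inhom:rXcommutation}; you have spelled out the commuted identity and the lower-order remainders with considerably more care than the paper does.

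One simplification of your treatment of the zeroth-order terms: for $k\geq 1$ you need no Poincar\'e or Hardy argument at all, since the identity $(r\Xx)^k\psi = rD^{-1}\partial_v(r\Xx)^{k-1}\psi$ gives directly
\[
r^{p-3}\bigl|\sl^{n_1}(r\Xx)^k\psi\bigr|^2 \lesssim r^{p-1}\bigl|\sl^{n_1}\partial_v(r\Xx)^{k-1}\psi\bigr|^2,
\]
which is already on the right-hand side with indices $(n_1,k-1)$. Likewise, for $k=0$ and $n_1\geq 1$, the term $r^{p-3}|\sl^{n_1}\psi|^2$ appears on the right-hand side with indices $(n_1-1,0)$. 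This isolates the sole genuine residual as $\int_{\C_\tau} r^{p-3}|\psi_{\ell=0}|^2$ (arising from $k=0$, $n_1=0$, $n_2\geq 1$). Your Hardy step there is correct, but it does produce a fixed-$\rc$ boundary contribution not present on the stated right-hand side---a point the paper's sketch likewise does not address. In the lemma's sole application (the proof of \cref{prop:inhom:energydecay}) such a term is harmless, being absorbed by the already-controlled energy flux; so this is a minor imprecision in the lemma as stated rather than a gap in your argument.
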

\begin{proof}
	This follows directly from \cref{eq:inhom:rXcommutation}, which allows to rewrite schematically $r\pv T=\pv(r\pv)-D\pv+r\pu\pv=\pv(r\pv)-D\pv+r^{-1}\Dl+r\cdot rF$.
\end{proof}
\subsubsection{Red-shift estimates commuted with $\Xb$ near $\hplus$}
The red-shift estimates commute with vector fields in $\Vc$. This follows from the general construction in \cite{dafermos_lectures_2008}[Section 7]. We include a brief proof for the sake of completeness.
\begin{prop}[Redshift-commutations]\label{prop:inhom:redshift}
	For $N\in\mathbb{N}$, $\epsilon>0$, and for any $r'>2M$, we have
	\begin{nalign}\label{eq:inhom:full}
	&E_N[\Vc^N\phi](\tau_2)+E_T[\Vc^{N+1}\phi](\tau_2)+	\int_{\tau_1}^{\tau_2}\dd \tau \int_{\Sigma_\tau^{2M,r'}} r^2(\partial_r|_{\Sigma}\Vc^N\phi)^2+|\sl\Vc^N\phi|^2+(\Vc^N\phi)^2 \dd \mu_D\\
	\leq &C(\epsilon,r',N)   \(	E_N[\Vc^N\phi](\tau_1)+E_T[\Vc^{N+1}\phi](\tau_1) +\fbulk{\tau_1}{\tau_2}[\Vc^{N+1} F]\)
\end{nalign}
\end{prop}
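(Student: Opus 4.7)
I would proceed by induction on $N$. The case $N=0$ is precisely the non-degenerate $E_N$-estimate from \cref{prop:inhom:energy}, so the entire content of the proposition lies in the inductive step. The inductive step splits cleanly according to the type of commutator vector field in $\Vc = \{T, \O_1, \O_2, \O_3, r\partial_r|_\Sigma\}$.

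For $Z \in \Vk = \{T, \O_1, \O_2, \O_3\}$, we have $[Z, \Box_{g_M}] = 0$, so $\Box_{g_M}(Z\phi) = ZF$. The inductive hypothesis applied to $Z\phi$ then immediately yields the desired bound, as $\fbulk{\tau_1}{\tau_2}[\Vc^{N+1}F]$ dominates $\fbulk{\tau_1}{\tau_2}[\Vc^{N}(ZF)]$. Commutation with $r\partial_r|_\Sigma$ in the large-$r$ region $\{r \geq \rc\}$ agrees with $r\Xx$ and is handled by the $r\Xx$-commuted $r^p$-weighted estimate of \cref{prop:inhom:rX} (applied, e.g., with $p = 0$), while in any bounded intermediate region $\{2M + \epsilon_0 \leq r \leq \rc\}$ the ILED estimate \cref{eq:inhom:iledcommute} yields control. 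Thus, the only genuinely new estimate to prove is the non-degenerate $E_N$-control of $\Xb$-commuted quantities in a neighborhood of $\hplus$.

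To obtain this, I would apply the Dafermos--Rodnianski redshift vector field $Y \sim \Xb$ (a suitable timelike modification $\Xb + \lambda T$ near the horizon) simultaneously as a \emph{commutator} and a \emph{multiplier}. A direct computation using the form \cref{eq:Sch:wave} of the wave operator shows that, for $\Phi_n := \Xb^n\phi$ and $\Box_{g_M}\phi = F$,
\begin{equation*}
	r^2 \Box_{g_M} \Phi_n + n (r^2 D)' \, \Xb \Phi_n = \Xb^n(r^2 F) + \mathcal{E}_n,
\end{equation*}
where $\mathcal{E}_n$ collects lower-order error terms involving $T\Xb^{\leq n}\phi$, $\Dl \Xb^{\leq n-1}\phi$, and derivatives of these. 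The key observation is that
\begin{equation*}
	(r^2 D)'\big|_{r=2M} = 2M > 0,
\end{equation*}
so the leading commutator term has the \emph{good} sign of the surface gravity, and in fact strengthens linearly in $n$. Pairing the commuted equation against the redshift multiplier and integrating over $\D_{\tau_1,\tau_2}^{2M, r'}$ then produces a non-degenerate flux $\int_{\Sigma_{\tau_2}^{2M,r'}} \bigl(Dr^2 (\Xb\Phi_n)^2 + |\sl\Phi_n|^2 + \Phi_n^2\bigr) \dd \mu_D$ plus a positive bulk of the same form, at the cost of the corresponding initial-data flux, the inhomogeneity, and the lower-order errors $\mathcal{E}_n$. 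The $T$-derivatives appearing in $\mathcal{E}_n$ (which come from the $r\Xb(rT\phi)$ term in \cref{eq:Sch:wave}) are precisely what forces the appearance of $E_T[\Vc^{N+1}\phi]$ on the right-hand side of \cref{eq:inhom:full}. Absorbing $\mathcal{E}_n$ either into the good bulk or into quantities controlled by the inductive hypothesis and by \cref{eq:inhom:iledcommute}, and iterating $n = 0, 1, \ldots, N$, closes the estimate.

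\textbf{Main obstacle.} The principal technical difficulty is the bookkeeping of the error terms $\mathcal{E}_n$ at higher commutation orders: one must verify that every non-leading commutator term is either dominated by the positive redshift bulk or controlled inductively, and that the $T$-derivatives accumulating in $\mathcal{E}_n$ remain at order $N+1$ so as to match the right-hand side of \cref{eq:inhom:full}. Conceptually, however, this is the classical redshift commutation argument of \cite{dafermos_lectures_2008}[Section 7], and the crucial positivity $(r^2 D)'|_{r=2M} > 0$ guarantees that the multiplier estimate closes at every level.
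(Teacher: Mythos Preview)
Your proposal is correct and follows essentially the same route as the paper: reduce to the near-horizon region via \cref{prop:inhom:rX} (for $r\geq\rc$) and \cref{eq:inhom:iledcommute} (intermediate region), then commute with $\Xb^n$ and exploit the enhanced redshift. The only cosmetic difference is that the paper carries out the commutation at the level of $\psi=r\phi$ in the $\pu\pv$-form of the equation, obtaining $\pu\pv(\Xb^n\psi)=\Xb^n\pu\pv\psi-\tfrac{2Mn}{r^2}\pu\Xb^n\psi+\text{l.o.t.}$, whereas you work with $\phi$ and the $\Box_{g_M}$-form; your positivity $(r^2D)'|_{r=2M}=2M$ is exactly the paper's $\tfrac{2Mn}{r^2}\big|_{r=2M}=2n\kappa_+$.
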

\begin{proof}
	Restricted to the region $r\geq \rc$, this estimate follows from \cref{prop:inhom:rX} with $p=0$. Control over the commuted energies in the intermediate region, i.e.~away from $\hplus$, follows from an elliptic estimate as for \cref{eq:inhom:iledcommute}.
	It remains to extend the result to the horizon. For this, we commute with $\Xb$:
	We observe the following commutation relations:
	\begin{nalign}
		&\pu\pv(\Xb^n\psi)=\Xb^n \pu\pv\psi-\frac{2Mn}{r^2}\pu \Xb^n \psi+\sum_{k=0}^{n-1} \underline{f}_n^k(r) \pu(\Xb)^k \psi\\
		=&\frac{D\Dl\Xb^n\psi}{r^2}-\frac{2MD\Xb^n\psi}{r^3}+\sum_{k=0}^{n-1} \underline{g}_k^n(r) D (\Dl \Xb^k\psi)+\underline{h}_k^n(r) D \Xb^k\psi -\frac{2Mn}{r^2}\pu \Xb^n \psi+\sum_{k=0}^{n-1} \underline{f}_n^k(r) \pu(\Xb)^k \psi,
			\end{nalign}
			where $\underline{f}_k^n, \underline{g}_k^n, \underline{h}_k^n$ are some smooth functions in $r$.
We integrate this identity over a spacetime region close to the horizon, where we crucially exploit the good sign $(-2Mn)$ of the $\pu\Xb^n\psi$-term.\footnote{{We can express: $\frac{2Mn}{r^2}|_{r=r_+}=2n \kappa_+$, with $\kappa_+=\frac{1}{4M}$ the surface gravity of the Schwarzschild event horizon. The increase of the factor $2n \kappa_+$  with $n$ is called the enhanced redshift effect. See also \cite{dafermos_lectures_2008}[Section 7].}}
\end{proof}

\subsection{Elliptic estimates}\label{sec:inhom:elliptic}
In this subsection, we prove estimates for the part of the wave operator without time-derivatives, viewing it as a degenerate (at $r=2M$) elliptic operator along $\Sigma_{\tau}$.
Along $\C_\tau$ and $\Cbar_\tau$, respectively, we have
\begin{equation}\label{eq:inhom:ELL}
	\Ell \phi :=\Xx(r^2D\Xx\phi)+\Dl\phi= r\Xx (rT\phi)+r^2F, \qquad \Ellbar\phi:= \Xb(r^2D\Xb\phi)+\Dl\phi=-r\Xx(rT\phi)+r^2F.
\end{equation}
We shall prove two different kinds of elliptic estimates. 
First, we prove an elliptic estimate that controls $r^{-q}$-weighted $L^2$ norms along $\Sigma_\tau$ in terms of $\Ell\phi$, where $q$ can be taken to be larger for higher $\ell$. The purpose of this estimate is to prove faster time decay for higher angular modes, such estimates first appeared in \cite{angelopoulos_late-time_2021}[Section~7].

Second, we shall prove a new (but essentially entirely Minkowskian) $(r^2\Xx)^N$-commuted elliptic estimate along $\C_{\tau}$ (i.e.~for large $r$); the purpose of this estimate is to later on control the $p=2$ flux of the $N$-th time integral of $\psi$ against the $p=2$ flux of $(r^2\pv)^N\psi$. 
\subsubsection{$r^{-q}$-weighted elliptic estimates}
\begin{prop}[$r^{-q}$-weighted estimates for $\mathcal{L}$]\label{prop:inhom:elliptic1}
	Let $\phi$ solve $\Box_{g_M}\phi=F$ and be supported on $\ell\geq \ell_0\geq 0$. Let $q\in(-2\ell_0-1, 2\ell_0+1)$. 
	Then we have the following estimate:
	\begin{align}\label{eq:inhom:ellipticmain1}\nonumber
&	\int_{\Cbar_{\tau}}r^{-q}\((D r^2\Xb^2\phi )^2+r^2(\Xb \phi)^2+|\sl\phi|^2+\phi^2\)\dd \mu_D+\int_{\C_{\tau}} r^{-q}\((D r^2\Xx^2\phi )^2+r^2(\Xx \phi)^2+|\sl\phi|^2+\phi^2\)\dd \mu_D\\
	&	\lesV[\Vk]_{q,\ell_0}  \int_{\Cbar_{\tau}}r^{-q}(\Ellbar\phi)^2\dd \mu_D+\int_{\C_{\tau}}r^{-q} (\Ell\phi)^2\dd \mu_D+\sum_{k=0}^1 \int_{\S_{\tau}}(\sl^k  T \phi)^2 \dd \sigma.\\
		&	\lesssim \sum_{k=0}^1\int_{\Sigma_{\tau}}r^{-q+2}\left((\partial_r|_{\Sigma}\sl^kT\psi)^2+r^{-2}(\sl^k r^2 F)^2\right)\dd \mu,\nonumber \text{	where only the last {inequality}  uses $\Box_{g_M} \phi=F$.}
	\end{align}
	Furthermore, for $N\in\N$, we  have, for $\underline{\phi}^{[n]}=(r\Xb)^n\phi$ and $\phi^{[n]}=(r\Xx)^n\phi$:
	\begin{align}\label{eq:inhom:ellipticmain1:commuted}\nonumber
	&\sum_{n=0}^{N}	\int_{\Cbar_{\tau}}r^{-q}\((D r^2\Xb^2\underline{\phi}^{[n]} )^2+r^2(\Xb \underline{\phi}^{[n]})^2+|\sl\underline{\phi}^{[n]}|^2\)\dd \mu_D+\int_{\C_{\tau}} r^{-q}\((D r^2\Xx^2\phi^{[n]} )^2+r^2(\Xx \phi^{[n]})^2+|\sl\phi^{[n]}|^2\)\dd \mu_D\\
	&	\lesV[\Vk]_{q,\ell_0,N}\sum_{n_1+n_2+n_3\leq N}\int_{\Sigma_{\tau}}r^{-q+2}\left((\partial_r|_{\Sigma}\sl^{n_2}T^{n_1+1}(r\partial_r|_{\Sigma})^{n_3}\psi)^2+r^{-2}(\sl^{n_2}T^{n_1}(r\partial_r|_{\Sigma})^{n_3} (r^2 F))^2\right)\dd \mu.
\end{align}
\end{prop}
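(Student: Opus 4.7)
The approach is to decompose $\phi = \sum_{\ell\geq \ell_0, m}\phi_{\ell,m} Y_{\ell,m}$ into spherical harmonics and prove the estimate mode-by-mode, then sum. The crucial ingredient is the twisted operator identity from \cref{eq:Sch:EllEllbar}: for a fixed mode,
\begin{equation*}
w_\ell\mathcal{L}\phi_{\ell,m} = \Xx(r^2 D w_\ell^2 \Xx\check\phi_{\ell,m}), \qquad \check\phi_{\ell,m}:=w_\ell^{-1}\phi_{\ell,m},
\end{equation*}
together with the analogous identity on $\Cbar_\tau$ with $\Xb$ replacing $\Xx$. Because $w_\ell$ is strictly positive on $[2M,\infty)$ with $w_\ell \sim r^\ell$ at infinity, the twist absorbs the $-\ell(\ell+1)\phi_{\ell,m}$ zeroth-order term into the principal part, effectively promoting the asymptotic radial weight from $r^2$ to $r^{2\ell+2}$. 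This is precisely the mechanism giving access to the sharp $q$-range $(-2\ell_0-1, 2\ell_0+1)$; a direct IBP without this twist would only admit $|q|<1$.

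For each mode I would multiply the twisted identity by $r^{-q}\check\phi_{\ell,m}$ (with the appropriate volume factor producing $\dd\mu_D$) and integrate by parts once in $r$, yielding the schematic identity
\begin{equation*}
\int r^{-q+2}D w_\ell^2(\Xx\check\phi_{\ell,m})^2\dd\mu + \frac{q(2\ell+1-q)}{2}\int r^{-q}\phi_{\ell,m}^2\dd\mu + (\text{lower order}) = -\int r^{-q}(w_\ell\mathcal{L}\phi_{\ell,m})\check\phi_{\ell,m}\dd\mu + (\text{boundary at }\S_\tau^\rc),
\end{equation*}
where the bulk cross-term coefficient comes from differentiating $r^{-q+1}Dw_\ell^2$. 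For $q\in(0,2\ell+1)$ both LHS bulk terms are positive; in the remaining portion of the admissible range the negative cross term is absorbed by a weighted Hardy inequality (whose constants remain non-degenerate precisely because of the $w_\ell^2\sim r^{2\ell}$ growth). The angular bound $\int r^{-q}|\sl\phi_{\ell,m}|^2$ is a direct byproduct of including the $\Dl\phi_{\ell,m}$-piece in the IBP. Second-order control is then algebraic: rearranging $Dr^2\Xx^2\phi = \mathcal{L}\phi - \Dl\phi - \Xx(r^2 D)\Xx\phi$, squaring, and integrating against $r^{-q}$ reduces $(Dr^2\Xx^2\phi)^2$ to $(\mathcal{L}\phi)^2$ plus the $(\Dl\phi)^2$ and $(\Xx\phi)^2$ pieces already controlled (the $\Dl\phi$ term is absorbed by Poincar\'e within the mode range). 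The boundary terms at $r=\rc$ combine across $\Cbar_\tau$ and $\C_\tau$: since $(\Xx - \Xb)|_{r=\rc} = D^{-1}T$, they reduce to a $\phi\cdot T\phi|_{\S_\tau^\rc}$ term, and Young's inequality (together with a standard trace estimate for the resulting $\phi^2$-piece) yields the $(T\phi)^2$ boundary integral in the statement. The $k=1$ case arises identically after commutation with the angular momentum operators $\O_i$. The second inequality of \cref{eq:inhom:ellipticmain1} then follows immediately by substituting $\mathcal{L}\phi = r\Xx(rT\phi) + r^2 F$ (and its $\Cbar_\tau$-analogue) in the RHS and expanding using $\partial_r|_\Sigma\psi = \phi + r\partial_r|_\Sigma\phi$.

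For the commuted estimate \cref{eq:inhom:ellipticmain1:commuted} I would induct on $n$: the commutators $[\mathcal{L}, r\Xx]$ near $\scrip$ and $[\mathcal{L}, r\Xb]$ near $\hplus$ produce strictly lower-order terms with $r$-weights compatible with the $r^{-q}$ measure, so that applying the base estimate to $\phi^{[n]} = (r\Xx)^n\phi$ (respectively $\underline\phi^{[n]} = (r\Xb)^n\phi$) with $\mathcal{L}\phi^{[n]} = (r\Xx)^n\mathcal{L}\phi + (\text{lower-order})$ closes the induction. The main technical obstacle I anticipate is realising the \emph{sharp} $q$-range $(-2\ell_0-1, 2\ell_0+1)$: this requires careful bookkeeping of the Hardy and Poincar\'e constants as functions of $(q,\ell)$, together with a non-trivial exploitation of the $w_\ell$-twist at both endpoints, with the near-horizon endpoint additionally demanding use of the $\Xb$-IBP on $\Cbar_\tau$ (where $D$ degenerates but $w_\ell$ does not).
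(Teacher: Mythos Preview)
Your use of the twisted identity $w_\ell\mathcal{L}\phi_\ell = \Xx(r^2Dw_\ell^2\Xx\check\phi_\ell)$ is the right starting point for the low-frequency part, but the multiplier you choose differs from the paper's: the paper multiplies by the \emph{first-order} quantity $r^{-q+1}\Xx\check\phi_\ell$ (not the zeroth-order $r^{-q}\check\phi_\ell$). This produces a single bulk term $(\Xx\check\phi_\ell)^2 w_\ell^2 r^{-q+1}\big((r-M)+\tfrac{q-1}{2}(r-2M)+r(r-2M)(\log w_\ell)'\big)$, whose large-$r$ coefficient $\sim\tfrac12(q+1+2\ell)$ makes the lower endpoint $q>-2\ell-1$ transparent; zeroth-order control, and with it the upper endpoint $q<2\ell+1$, is obtained \emph{afterwards} via Hardy. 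The resulting boundary term is $(\Xx\check\phi_\ell)^2|_{\S_\tau}$, which is then traded against the $(\Xb\check\phi_\ell)^2|_{\S_\tau}$ coming (with a good sign) from the analogous $\Cbar_\tau$ estimate, leaving only $(T\check\phi_\ell)^2$ --- a cleaner mechanism than the $\phi\, T\phi$ term plus trace estimate your multiplier would generate.

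The more substantial divergence is that the paper does \emph{not} carry the twist through all modes and then sum. It splits $\phi=\phi_{\ell<\ell_1}+\phi_{\ell\geq\ell_1}$ with $\ell_1$ large: the twist is used only for the finitely many low modes (where $\ell$-dependent constants are harmless), while for $\phi_{\geq\ell_1}$ the paper squares $(\mathcal{L}\phi)^2=(\Xx(r^2D\Xx\phi))^2+(\Dl\phi)^2+2\Xx(r^2D\Xx\phi)\cdot\Dl\phi$ directly, integrates the cross-term by parts, and uses Poincar\'e (with $\ell_1$ large depending on $q$) to absorb the $q$-dependent errors. This split is not cosmetic. Your ``prove mode-by-mode, then sum'' requires uniform-in-$\ell$ constants, but your algebraic recovery of the second-order term via $Dr^2\Xx^2\phi=\mathcal{L}\phi-\Dl\phi-(r^2D)'\Xx\phi$ produces $(\Dl\phi)^2=\ell^2(\ell+1)^2\phi^2$, and the $\phi^2$-control your identity yields (coefficient $\sim q\ell$) does not obviously compensate $\ell^4$ after summation; even if one optimises Young and Hardy constants at leading order, the subleading-in-$r$ corrections of $w_\ell$ carry $\ell$-dependent coefficients that you would have to track. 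The paper's direct-square route sidesteps all of this by never introducing $w_\ell$ for high $\ell$.

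Finally, the high-$\ell$ square argument is degenerate by an extra factor of $D$ near $r=2M$; the paper removes this with a separate redshift-type multiplier $\chi r\Xb\phi$ localised to the horizon (and notes that this multiplier in fact works globally and for all $\ell$). Your closing remark about the ``near-horizon endpoint'' does not substitute for this step. For the commuted estimate your inductive sketch is close in spirit to the paper's, which records the precise commutator $\mathcal{L}(r\Xx)^N\phi+2MN\Xx(r\Xx)^N\phi+\sum O(r^{-1})(r\Xx)^n\phi=(r\Xx)^N\mathcal{L}\phi$ and exploits the sign of the $2MN$ term in specific integrated-by-parts identities.
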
	
\begin{proof}
	Notice that the second {inequality} of \cref{eq:inhom:ellipticmain1} follows from \cref{eq:inhom:ELL} and the fundamental theorem of calculus to estimate the $\S_\tau$-terms. 	For the proof of the first estimate of \cref{eq:inhom:ellipticmain1}, we write $\phi=\phi_{\ell<\ell_1}+\phi_{\ell\geq \ell_1}$ for some $\ell_1$ suitably large. 
		For the remainder of the proof, we assume without loss of generality that $\phi\in C^{\infty}_c(\Sigma_{\tau})$. The more general case then follows from a standard density argument. We closely follow the proof of \cite{gajic_late-time_2023}[Proposition~9.1].

	\textit{The lower $\ell$-modes $\ell_0\leq \ell<\ell_1$:}
	We use  $w_{\ell}\Ell u_{\ell}=\Xx(r^2Dw_{\ell}^2\Xx\philc)$ (cf.~\cref{eq:Sch:EllEllbar}) and  integrate the following product over $\C_{\tau}$:
	\begin{nalign}
		&r^{-q+1}\Xx \philc\cdot w_\ell \Ell \phi_\ell=r^{-q+1} \Xx \philc \cdot \Xx(r^2D w_{\ell}^2\Xx \philc)\\
		=&\Xx\(\frac12 (\Xx\philc)^2r^{3-q}Dw_{\ell}^2\)+(\Xx\philc)^2w_{\ell}^2r^{-q+1}\((r-M)+\frac{q-1}{2}(r-2M)+r(r-2M)(\log w_{\ell})'\).
	\end{nalign}
	For $r$ sufficiently large, since $\log w_\ell'= r^{-1}\ell+O(r^{-2})$, the bulk term is coercive provided that $q>-1-2\ell$; for the intermediate $r$-region, we simply choose $q$ suitably large.
	Applying Young's inequality thus gives
	\begin{equation}\label{eq:inhom:elliptic:suitablyadd1}
		\int_{\C_{\tau}}r^{-q+2}w_{\ell}^2(\Xx\philc)^2 \dd \mu_D\lesssim \int_{\S_{\tau}}(\Xx\philc)^2\dd \sigma+\int_{\C_{\tau}} r^{-q}(\Ell\phi)^2 \dd \mu_D.
	\end{equation}

	We perform exactly the same computation along $\Cbar_{\tau}$, 
		\begin{nalign}
		&r^{-\tilde{q}+1} \Xb \philc \cdot \Xb(r^2D w_{\ell}^2\Xb \philc)\\
		=&\Xb\(\frac12 (\Xb\philc)^2r^{3-\tilde{q}}Dw_{\ell}^2\)+(\Xb\philc)^2w_{\ell}^2r^{-\tilde{q}+1}\((r-M)+\frac{\tilde{q}-1}{2}(r-2M)+r(r-2M)\log w_{\ell}'\).
	\end{nalign}
	Very close to $r=2M$, the bulk term is manifestly positive. We may thus choose $\tilde{q}$ sufficiently large so that the $\log w_\ell'$ term is dominated. 
	Since $r$ is bounded along $\Cbar_{\tau}$, the following estimate then holds for any $q$:
		\begin{equation}\label{eq:inhom:elliptic:suitablyadd2}
		\int_{\Cbar_{\tau}}r^{-q+2} w_{\ell}^2(\Xb\philc)^2 \dd \mu_D+\int_{\S_{\tau}}(\Xb\philc)^2\dd \sigma\lesssim \int_{\Cbar_{\tau}} r^{-q}(\Ellbar\phi)^2 \dd \mu_D.
	\end{equation}
	We may add \cref{eq:inhom:elliptic:suitablyadd1,eq:inhom:elliptic:suitablyadd2} to obtain the desired result \cref{eq:inhom:ellipticmain1} using also that
	\begin{equation}
		\int_{\S_{\tau}}(\Xx\philc)^2\dd \sigma\lesssim \int_{\S_{\tau}}(\Xb\philc)^2+(T\philc)^2\dd \sigma,
	\end{equation}
	as well as an application of Hardy's inequality (for $q<2\ell+1$) to gain control over the zeroth order terms.\footnote{Later on, we will also make use of the estimates above for $(\Xx \philc)^2$ directly without applying Hardy's inequality, in which case there is no upper bound on $q$.\label{footnote:hardy}}
	
	\textit{The remaining high $\ell$-modes $\phi_{\geq \ell_1}$, degenerate at $\hplus$:}
	We consider directly the square:
	\begin{align}
		(\Ell\phi)^2=(\Xx(r^2D\Xx\phi))^2+\(\Dl\phi\)^2+2\Xx(r^2D\Xx\phi)\cdot \Dl\phi.
	\end{align}
	Now, we integrate the mixed term by parts over $\C_{\tau}$:
	\begin{equation}
		\int_{\C_{\tau}} r^{-q}2\Xx(r^2D\Xx\phi)\cdot \Dl\phi\dd\mu_D\
		=\int_{\C_{\tau}} -2\Xx(r^{2-q}D(\sl\Xx\phi)\cdot\sl\phi)+2Dr^{2-q}|\sl\Xx\phi|^2-2q r^{-q+1}r^2D(\sl\Xx\phi)\cdot \sl\phi \dd\mu_D.
	\end{equation}
We can control the last term by applying Young's inequality and Poincar\'e's inequality on $S^2$ (for $\ell_1$ sufficiently large) to establish the following bound:
	\begin{equation}\label{eq:inhom:proof:absorb}
		\int_{\C_\tau} r^{-q}\((\Xx(r^2D\Xx\phi))^2+(\Dl\phi)^2+Dr^2|\sl\Xx\phi|^2\) \dd \mu_D-\int_{\S_{\tau}}\sl\Xx\phi\cdot \sl \phi \dd \mu_D\lesssim \int_{\C_{\tau}}r^{-q}(\Ell\phi)^2\,d\mu_D.
	\end{equation}
	We perform an analogous argument along $\Cbar_{\tau}$ and add the corresponding estimate to \cref{eq:inhom:proof:absorb}. The resulting integrals on $\S_{\tau}$ satisfy:
	\begin{equation}
		\int_{\S_{\tau}} \sl(\Xx-\Xb)\phi\cdot \sl \phi\leq \int_{\S_{\tau}}\epsilon^{-1}|T\sl\phi|^2 +\epsilon |\sl \phi|^2.
	\end{equation}
	We may absorb the $\epsilon$-term into the LHS of \cref{eq:inhom:proof:absorb} via Hardy inequality; this gives the result up to the degeneracy in $D$ near the event horizon.
	
	\textit{Removing the degeneracy at $\hplus$ for $\phi_{\geq \ell_1}$:}
	We apply the redshift multiplier $\chi r\Xb \phi$, for $\chi$ a cutoff localising to $r\leq 4M$, say, and we observe that
	\begin{nalign}
		&\chi r\Xb\phi \Ellbar \phi=\chi r\Xb \phi\cdot (r^2D \Xb^2\phi +(r^2D)' \Xb \phi+\Dl\phi)\\
		=&\chi \Xb(\frac12 (\Xb\phi)^2 r^3 D)+\chi \frac{r}{2} ((r^2D)'-(rD)) (\Xb\phi)^2 -\chi \frac 12 \Xb(r|\sl\phi|^2)+\chi \frac12 |\sl\phi|^2.
	\end{nalign}
We integrate over $\Cbar_\tau$; the boundary term at the horizon comes with a good sign, we have $(r^2D)'-rD=r>0$, and we can control the bulk terms arising from $\Xb\chi$ from our already established degenerate bound.

\textit{Notice that this argument works directly for all angular frequencies. Furthermore, the cut-off is not actually necessary if we also consider the multiplier $r\Xx\phi$ along $\C_{\tau}$; that is, the red-shift multiplier works globally. It is this estimate that carries over also to more general spacetimes such as that of Kerr; see \cite{angelopoulos_late-time_2021}[Proposition 9.1].}

\textit{Commutations with $r\partial_r|_\Sigma$ for $\phi_{\ell\geq\ell_1}$.}
We can directly commute with $\sl$ and $T$, since both differential operators commute with $\mathcal{L}$. 
For commutations with $r\partial_r|_{\Sigma}$, we first observe the following commutation relations (entirely analogous ones hold for $\Ellbar$):
\begin{equation}
	\Ell(r\Xx)^N \phi+2MN \Xx(r\Xx)^N\phi +\sum_{n=0}^N O_{\infty}(r^{-1}) (r\Xx)^n \phi=(r\Xx)^N \Ell \phi.
\end{equation}
The proof now proceeds inductively in $N$: Denoting $(r\Xx)^N\phi=\phi^{[N]}$, we consider the square $r^{-q}((r\Xx)^N\Ell \phi)^2=r^{-q}(\Xx(r^2D\Xx \phi^{[N]}+\Dl\phi^{[N]}+2MN\Xx\phi^{[N]})+\dots)^2$. The lower order $\dots$-terms are controlled inductively using Young's inequality. We deal with the remaining three cross terms as follows:
Firstly, 
\begin{multline}\label{eq:inhom:ell:commute1}
	\int r^{-q}2 \Dl\phi^{[N]} \Xx(r^2D\Xx \phi^{[N]})=\int \Xx (2r^{-q}\Dl \phi^{[N]} r^2 D \Xx \phi^{[N]})+\int q r^{-q-1}2 \Dl \phi^{[N]} r^2 D\Xb \phi^{[N]}+\int r^{-q+2} 2D ( \sl\Xx \phi^{[N]})^2,
\end{multline}
where the second term on the RHS can be absorbed exactly as in the $N=0$-case; 
secondly,
\begin{equation}\label{eq:inhom:ell:commute2}
	\int r^{-q} 4MN \Xx(r^2D\Xx \phi^{[N]})\Xx \phi^{[N]}=\int 2MN\Xx(r^{2-q} D (\Xx \phi^{[N]})^2) +\int 2MN\left(q rD+(r^2D)'\right) r^{-q} (\Xb \phi^{[N]})^2,
\end{equation}
where the last term is lower-order in $r$ and nonnegative for $q\geq-1$;
 thirdly, 
\begin{equation}\label{eq:inhom:ell:commute3}
	\int r^{-q} 4MN \Xx \phi^{[N]} \cdot \Dl \phi^{[N]}=\int -\Xx(r^{-q}2MN |\sl\phi^{[N]}|^2)+\int -q r^{-q-1} 2MN |\sl\phi^{[N]}|^2,
\end{equation}
where the last-term is again lower-order in $r$ and nonnegative for $q\leq0$.
Thus, all $q$- and $N$-dependent terms can be absorbed by using the largeness of $\ell_1$ and of $r$. 
In order to deal with the boundary terms at $\S_\tau$, we note that in \cref{eq:inhom:ellipticmain1}, we also control $\int_{\S_{\tau}} (\Xx\phi)^2+(\Xb\phi)^2$  via Hardy's inequality, and analogous $T$- and $\sl$-commuted quantities. We can thus control all boundary terms coming from \cref{eq:inhom:ell:commute1}--\cref{eq:inhom:ell:commute3} by simply writing e.g.~$X^2\phi\sim\Xx(\Xb+T)\phi$, applying the wave equation, and then proceeding inductively.

Notice that \cref{eq:inhom:ell:commute1}--\cref{eq:inhom:ell:commute3} also hold with $\Xx$ replaced by $\Xb$.
Thus, if we replace $\Xx$ by $\Xb$ and  integrate along $\Cbar_\tau$ and choose, say, $q=-1$, then this already gives \textit{nondegenerate} control for $N>0$ (in view of the last term in \cref{eq:inhom:ell:commute2}). Note that this indeed works \textit{for all} $\ell$, as the only problematic term, namely the penultimate term in~\cref{eq:inhom:ell:commute1}, vanishes for $\ell=0$ and can be controlled for $\ell\geq 1$ as before. Thus, we already get coercive control along $\Cbar_\tau$ for all $\ell$.

\textit{Commutations with $r\partial_r|_\Sigma$ for $\phi_{\ell<\ell_1}$:}
Finally, we restrict to fixed $\ell$-modes:
Observe first that 
\begin{align}
	w_\ell \left(\Ell(r\Xx)^N \phi_\ell+2MN \Xx(r\Xx)^N \phi_\ell\right)=\Xx (r^2D w_\ell^2 \Xx(w_\ell)^{-1}(r\Xx)^N\phi_\ell)+2MN w_\ell \Xx(r\Xx)^N \phi_\ell.
\end{align}
We can then use largeness in $r$ (the bounded $r$-region, where we may choose $q$ freely, is already covered by the above) and argue as in the uncommuted case.
More precisely, we consider the product
\begin{align}
	&r^{-q}\Xx (w_\ell^{-1}(r\Xx)^N\phi_\ell) \cdot w_\ell ((r\Xx)^N\Ell\phi_\ell )\\
	=&r^{-q}\Xx (w_\ell^{-1}(r\Xx)^N\phi_\ell) \cdot \left(\Xx(r^2Dw_\ell^2 \Xx (w_\ell^{-1}(r\Xx)^N\phi_\ell))+2MN w_\ell^2 \Xx(w_\ell^{-1}(r\Xx)^N\phi_\ell)+w_\ell\sum_{n=0}^N O_{\infty}(r^{-1})(r\Xx)^n \phi_\ell\right).\nonumber
\end{align}
All the terms in the sum $\sum_{n=0}^N$ can be controlled inductively by Young's inequality.
The first product is exactly as in the uncommuted case; the second product is manifestly positive (and is lower order in $r$ as well). This completes the proof of \cref{eq:inhom:ellipticmain1:commuted}.
\end{proof}
\subsubsection{$r^2\Xx$-commuted elliptic estimates near $\scrip$ for high angular frequencies}
In addition to the $r^{-q}$-weighted elliptic estimates of \cref{prop:inhom:elliptic1}, we also need estimates featuring $r^2\Xx$ commutations in a region where $r\gg M$, which only work for solutions supported on sufficiently large angular frequencies. 
The purpose of these estimates is to later on be able to estimate the $r^p$.energy flux of the $n$th time integral against the $r^p$-energy flux of $(r^2X)^n \psi$. In practice, we will only ever apply this estimate at $\Sigma_{\tau_0}$. Being an estimate concerning the large $r$-region, it is Minkowskian in nature.

For just this section, we define, for $N\in\mathbb{N}$, $\tilde{\psi}^{[N]}:=(r^2\Xx)^N\psi$. 
\newcommand{\psiNN}[1][N]{\tilde{\psi}^{[#1]}}
If $\Box_{g_M}\phi=F$, we have the following commutation relations, which are readily confirmed via an induction argument (see e.g.~\cite{kehrberger_case_2024}[Proposition~9.2 with $s=0$]):
\begin{nalign}\label{eq:inhom:r2Xcommuted}
&	\pu(r^{-2N}D^N\pv \psiNN)=(T-D\Xx)(r^{-2N}D^{N-1}\Xx \psiNN)\\&=\frac{D^{N+1}}{r^{2N+2}}\left((N(N+1)+\Dl)\psiNN+\frac{M}{r}a_N \psiNN+M b_N \psiNN[N-1]\right)+\frac{D^{N+1}}{r^{2N+2}}(r^2\Xx)^N(r^3F).
\end{nalign}
 for some $a_N,b_N\in\mathbb{N}$ (more precisely, $a_N=-2N^3$ and $b_N=-6N(N+1)-2$). We denote
 \begin{equation}
 	r^{-2N}\tilde{\Ell}_N \psiNN:=D\Xx(r^{-2N}D^{N-1}\Xx \psiNN)+\frac{D^{N+1}}{r^{2N+2}}\left((N(N+1)+\Dl)\psiNN+\frac{M}{r}a_N \psiNN+M b_N \psiNN[N-1]\right).
 \end{equation}

\begin{prop}\label{prop:inhom:elliptic2}
Let $N\in\N$,  let $q\in\R$. Then there exists a sufficiently large $\ell_0(N,q)\in\N$, such that for solutions $\phi$ to $\Box_{g_M}\phi=F$ and $\phi=\phi_{\ell\geq\ell_0}$: 
	\begin{nalign}
	&\int_{\C_{\tau}} r^{q-2}\left(\frac{|\sl\Xx \psiNN[N]|^2}{r^{4N}}+\frac{|\Dl\psiNN[N]|^2}{r^{4N+2}} \right)\dd \mu 
	\lesV[\Vc]_{\ell_0,N,q} \sum_{n=0}^N\int_{\C_{\tau}} r^q\frac{|T\Xx \psiNN[n]|^2}{r^{4N}}+r^q \frac{((r^2\Xx)^n(r^3F))^2}{r^{4N+4}}\dd \mu
	\\
	&\qquad\qquad+\sum_{n_1+n_2+n_3\leq N}\int_{\Sigma_{\tau}}r^{-2}\left((\sl^{n_1}\partial_r|_{\Sigma}T^{n_2+1}(r\partial_r|_{\Sigma})^{n_3}\psi)^2+r^{-2}(\sl^{n_1}T^{n_2}(r\partial_r|_{\Sigma})^{n_3} (r^2 F))^2\right)\dd \mu.
	\end{nalign}

\end{prop}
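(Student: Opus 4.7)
The plan is to rewrite the commutation identity \cref{eq:inhom:r2Xcommuted} as
\begin{equation*}
r^{-2N}\tilde{\Ell}_N\psiNN = T(r^{-2N}D^{N-1}\Xx\psiNN) - \frac{D^{N+1}}{r^{2N+2}}(r^2\Xx)^N(r^3F),
\end{equation*}
and then estimate $\int_{\C_\tau} r^q(r^{-2N}\tilde{\Ell}_N\psiNN)^2\dd\mu$ in two ways. An \emph{upper bound}, via the RHS of this identity together with Young's inequality, immediately yields the two main terms on the RHS of the claimed estimate (up to $D^{O(N)}$-factors, harmless on $\C_\tau$ with $r\geq\rc$). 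A \emph{lower bound}, via the definition of $\tilde{\Ell}_N$ combined with integration by parts, yields the LHS. This parallels the high-$\ell$ argument in the proof of \cref{prop:inhom:elliptic1}, now applied to the $r^2\Xx$-commuted operator $\tilde{\Ell}_N$.

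For the lower bound, first isolate the ``principal part'' of $r^{-2N}\tilde{\Ell}_N\psiNN$, namely $D\Xx(r^{-2N}D^{N-1}\Xx\psiNN)+\frac{D^{N+1}}{r^{2N+2}}(N(N+1)+\Dl)\psiNN$, and treat the remaining $\frac{D^{N+1}}{r^{2N+2}}\bigl(\frac{M}{r}a_N\psiNN + Mb_N\psiNN[N-1]\bigr)$-terms as errors: the $\frac{M}{r}a_N\psiNN$-term is absorbed by Poincar\'e on $\psiNN=\psiNN_{\ell\geq\ell_0}$ combined with the extra $r^{-1}$-decay and largeness of $\rc$, while the $\psiNN[N-1]$-term is handled by induction on $N$. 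Expanding the square of the principal part yields a nonnegative radial-squared term (discarded), an angular-squared term which---via Poincar\'e with $\ell_0>N$---dominates a positive multiple of $\frac{D^{2N+2}}{r^{4N+4}}|\Dl\psiNN|^2$ and gives the target $r^{q-2}|\Dl\psiNN|^2/r^{4N+2}$-control, and a cross term. The key calculation concerns the $\Dl\psiNN$-part of the cross term: one integration by parts in $\Xx$, followed by one on the sphere (using $\Xx\Dl=\Dl\Xx$), produces the \emph{positive} contribution $\int 2r^{q-4N-2}D^{2N+1}|\sl\Xx\psiNN|^2\dd\mu$---precisely the target $r^{q-2}|\sl\Xx\psiNN|^2/r^{4N}$-term---plus a remainder of the form $\int O(r^{q-4N-3})\Xx\psiNN\cdot\Dl\psiNN\dd\mu$. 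A weighted Young's inequality, $|\Xx\psiNN\cdot\Dl\psiNN|\leq\frac{r}{2\epsilon}|\Xx\psiNN|^2+\frac{\epsilon}{2r}|\Dl\psiNN|^2$, combined with Poincar\'e on $\Xx\psiNN$, absorbs this remainder into the main $|\sl\Xx\psiNN|^2$- and $|\Dl\psiNN|^2$-terms provided $\ell_0$ is chosen sufficiently large. The $N(N+1)\psiNN$-part of the cross term is dispatched analogously.

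Boundary terms at the inner sphere $\S_\tau^\rc$ are controlled via trace estimates and \cref{prop:inhom:elliptic1} by the bulk $\int_{\Sigma_\tau}r^{-2}(\cdots)$-integrals on the RHS; boundary terms at $\scrip$ vanish under the decay assumed on $\phi$ (in particular, for the compactly supported initial data setting in which this estimate will be applied). Commutations with $T$ and $\sl$ commute directly through $\tilde{\Ell}_N$, while $r\partial_r|_\Sigma$-commutations generate additional lower-order terms in $\psiNN$ and $F$ that are absorbable by induction, exactly as in \cref{prop:inhom:elliptic1}. The main obstacle will be the careful bookkeeping needed to ensure that a single choice $\ell_0=\ell_0(N,q)$ simultaneously closes \emph{all} the absorption steps: the Poincar\'e absorption of the $\Xx\psiNN$-remainder from the cross-term IBP, the $\frac{M}{r}a_N\psiNN$-absorption, and the various commutator remainders from $r\partial_r|_\Sigma$.
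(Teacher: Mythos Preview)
Your proposal is correct and follows essentially the same approach as the paper: both square $r^{-2N}\tilde{\Ell}_N\psiNN$ with weight $r^q$, extract the positive $|\sl\Xx\psiNN|^2$-term from the cross term via integration by parts in $\Xx$ (then on the sphere), absorb the weight-derivative remainder and the $-N(N+1)(\Xx\psiNN)^2$-term via Young's inequality and Poincar\'e for $\ell_0$ large, treat the $a_N,b_N$-terms by Young's inequality and induction, and control the $\S_\tau^{\rc}$-boundary terms via \cref{prop:inhom:elliptic1}. Your write-up is in fact somewhat more explicit than the paper's about the absorption mechanism; the only cosmetic difference is that the paper disposes of boundary terms at infinity by reducing to $\phi\in C^\infty_c(\Sigma_\tau)$ via density rather than invoking decay directly.
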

\begin{proof}
	As in the proof of \cref{prop:inhom:elliptic1}, we may assume without loss of generality that $\phi\in C^{\infty}_c(\Sigma_{\tau_0})$.
We consider the square $r^{-4N+q}(D^{-1}\tilde{\Ell}_N\psiNN)^2$, and consider the cross-term
\begin{nalign}
2&	r^{q-2N-2}D^{N}\Xx(r^{-2N}D^{N-1} \Xx \psiNN) (N(N+1)+\Dl)\psiNN=2\Xx(r^{q-4N-2} D^{2N-1}\Xx\psiNN(N(N+1)+\Dl)\psiNN )\\
&	-2\Xx\(\frac{D^N}{r^{2N+2-q}}\) r^{-2N} D^{N-1}\Xx\psiNN (N(N+1)+\Dl)\psiNN + \frac{2D^{2N-1}}{r^{4N+2-q}}\left( |\sl\Xx\psiNN|^2-N(N+1)(\Xx\psiNN)^2 \right)+\mathring{\div}(\dots)
\end{nalign}
The first term on the second line can be absorbed, via Young's inequality, into the second term on the second line as well as the $((N(N+1)+\Dl)\psiNN)^2$ in the square $(D^{-1}\tilde{\Ell}_N\psiNN)^2$, provided that $\ell$ is sufficiently large depending on $N$ and $q$.

On the other hand, the boundary term at $\S_{\tau}$ can be controlled via the control established already in the previous proposition.

Finally, the cross terms featuring $a_N$ and $b_N$ can be inductively absorbed via Young's inequality.

 For additional $r\Xx$-commutations, the argument is analogous to that of \cref{prop:inhom:elliptic1}. 
\end{proof}
\subsection{Energy  decay in time and improved decay for time derivatives}
We now assume that we are given initial data with finite $r^p$-weighted energies for $\phi$ and its suitable higher-order derivatives. We additionally assume that we are given some decay for the inhomogeneity. The aim is to deduce energy decay in time, as well as improved decay for time derivatives. We note that the method of deducing energy decay from \cref{eq:inhom:rp} was introduced in \cite{dafermos_new_2010}, and the method of inferring improved decay for each additional $T$-derivative was introduced in \cite{angelopoulos_vector_2018}.

In the proposition below, we split $F=F_1+F_2$. In the context of the proof of \cref{thm:main}, $F_1$ will be the term that is allowed to depend non-trivially on $\phi$ and its derivatives, whereas $F_2$ will be a fixed inhomogeneous function, arising, for instance, from the subtraction of an approximate solution.

\begin{prop}[Energy decay with inhomogeneity]\label{prop:inhom:energydecay}
	Fix $\tau_0\geq 0$ and let $\tilde{p}\in(0,2]$. Consider initial data for $\Box_{g_M}\phi=F$ along $\Sigma_{\tau_0}$ such that:
	\begin{equation}\label{eq:inhom:prop:energyass}
		E_N[\phi](\tau_0)+E_T[T\phi](\tau_0)+\int_{\C_{\tau_0}} r^{\tilde{p}} (\pv\psi)^2+r^2 (\pv T\psi)^2\dd \mu\lesV[\Vc]1 
	\end{equation}

	Assume that for some $\eta>0$, the inhomogeneity $F$ is given by $F=F_1+F_2$, where, for all $n\in\N$:\footnote{In practice, the $F_2$ we consider will always satisfy the below estimate with (almost) one extra power in $\tau$!}
	\begin{equation}\label{eq:inhom:assumptiononF}
	rF_1\lesV \frac{1}{r^2\tau^{\eta}}\min_{q\in[-1,1]}(\tau^q/r^q)\quad\text{and}\quad \fbulkp{\tau_1}{\tau_2}{p}[T^n F_2]+ \int_{\C_{\tau_1}} r^{2+p}T^n(rF)^2\dd \mu \lesV[\Vc] \tau_1^{p-2-2n}  \text{ for all } \tau_1<\tau_2.
	\end{equation}
	Then we have for any $n\in\N$:
 	\begin{equation}\label{eq:inhom:energydecay}
		E_N[T^n\phi](\tau)\lesV[\Vc] \tau^{-\min(\tilde{p},2\eta+1-)-2n}.
	\end{equation}
	Furthermore, if $n=0$ and $p\in[0,\tilde{p}]$, or if $n>0$ and $p\in[0,2]$, we also have
		\begin{equation}\label{eq:inhom:energydecay:pweighted}
	\int_{\C_{\tau}}r^p (\pv T^n\psi)^2 \dd \mu\lesV[\Vc] \tau^{-\min(\tilde{p},2\eta+1-)+p-2n}.
	\end{equation}
\end{prop}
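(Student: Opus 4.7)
The plan is to combine a dyadic iteration of the $r^p$-hierarchy from \cref{prop:inhom:rp} (in the style of \cite{dafermos_new_2010}) with the $T$-commutation mechanism alluded to in \cref{lem:inhom:Tconversion} (going back to \cite{angelopoulos_vector_2018}). In essence, the case $n=0$ is proved by peeling off one power of $\tau$ at each step of the hierarchy $p=\tilde p\to \tilde p-1\to\cdots\to 0$, while the case $n\geq 1$ follows by commuting $\Box_{g_M}\phi=F$ with $T$ and using the conversion from $T$-derivatives to $r\pv$-derivatives to furnish the requisite $p=2$-energy data for $T^n\phi$ in exchange for a loss of two $r$-weights (hence a gain of two powers of $\tau$-decay per commutation).

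\textbf{Step 1 (the case $n=0$).} I would start by applying \cref{prop:inhom:rp} with $p=\tilde p$ on a dyadic window $[\tau_1,2\tau_1]$. The initial flux and energy are controlled by \cref{eq:inhom:prop:energyass}, the $F_2$-bulk is controlled by assumption on $\mathfrak{X}^{\epsilon,\tilde p}[F_2]$, and the $F_1$-bulk $\int\!\!\int r^{\tilde p+1}(rF_1)^2$ is estimated using the two-sided weight in \cref{eq:inhom:assumptiononF}: taking $q=-1$ in the region $r\leq \tau$ and $q=+1$ in $r\geq \tau$ shows in both regimes the integrand is bounded by $r^{\tilde p-5}\tau^{2-2\eta}$ or $r^{\tilde p-1}\tau^{-2-2\eta}$, yielding a per-$\tau$ contribution $\lesssim \tau^{\tilde p-2-2\eta}$. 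The integrated bulk on the LHS then gives control of $\int_{\tau_1}^{2\tau_1} E_{\tilde p-1}[\phi](\tau)\,d\tau$, and by the mean-value principle we find a slice $\tau^{\star}\in[\tau_1,2\tau_1]$ where $E_{\tilde p-1}(\tau^{\star})\lesssim \tau_1^{-1}$. Re-applying the hierarchy with $p=\tilde p-1$ starting from $\tau^{\star}$, and iterating over $p=\tilde p-1,\tilde p-2,\ldots,0$ (allowing arbitrarily small $\epsilon$-losses when $\tilde p\notin\N$), delivers \cref{eq:inhom:energydecay} with exponent $\tilde p$, up to the cap $2\eta+1-$ coming from requiring $\tau$-integrability of the $F_1$-bulk.

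\textbf{Step 2 (higher $T$-derivatives).} For $n\geq 1$, commute $\Box_{g_M}\phi=F$ with $T^n$, noting that $T$ commutes with $\Box_{g_M}$, and that the hypotheses on $F_1,F_2$ are preserved under $T$-commutation (the $\lesV$-notation builds in regularity under $\tau T$, giving an extra $\tau^{-1}$ per $T$, i.e.\ $T^n F_1$ still satisfies \cref{eq:inhom:assumptiononF} with $\eta$ replaced by $\eta+n$). The initial data issue is that \cref{eq:inhom:prop:energyass} only gives $p=2$ control of $\pv T\psi$, not the full tower. To obtain $p=2$ data for $T^n\psi$ for $n\geq 2$, I would apply \cref{lem:inhom:Tconversion} iteratively, trading each factor of $T$ beyond the first against a factor of $r\pv$ combined with $\Vc$-commutations, at the cost of two $r$-weights; the $\Vc^{N+1}$-bounds on the initial data furnished by \cref{eq:inhom:prop:energyass} then close the argument. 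With the $p=2$ data in hand, Step~1 applied to $T^n\phi$ gives the full hierarchy of length $2$, producing the $\tau^{-2n-\min(\tilde p,2\eta+1-)}$ decay, with the $p$-weighted version \cref{eq:inhom:energydecay:pweighted} following from one last application of the $r^p$-hierarchy once the sharp time-decay is available.

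\textbf{Main obstacle.} The genuine difficulty lies in the interplay between the mean-value dyadic argument and the $F_1$-bulk. Since $F_1$ is allowed to depend on the solution, its pointwise bound must propagate through the iteration without accumulating loss, which is why the two-sided range $q\in[-1,1]$ in \cref{eq:inhom:assumptiononF} is essential: it ensures that the bulk contribution $\int_{\tau_1}^{2\tau_1}\!\!\int r^{p+1}(rF_1)^2$ scales cleanly as $\tau_1^{p-1-2\eta}$ in every dyadic window, independently of whether $r\lessgtr \tau$. Any weaker assumption (for instance, only the $q=1$ profile) would degrade the integrability near $\mathcal{I}^+$ and obstruct the passage from $p=\tilde p$ all the way down to $p=0$; this would force a truncation of the decay rate at some $p<\tilde p$, which is exactly the quantitative origin of the "$\min$" in the statement.
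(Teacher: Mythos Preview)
Your Step 1 is essentially the paper's argument and is fine. The gap is in Step 2.

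You propose to obtain $p=2$ initial data for $T^n\psi$ at $\tau_0$ via \cref{lem:inhom:Tconversion}, and then run Step 1 for $T^n\phi$. But a hierarchy of length $2$ starting from data bounded by a constant at $\tau_0$ yields at best $E_N[T^n\phi](\tau)\lesssim \tau^{-2}$, not $\tau^{-2n-\tilde p}$. The improved $\eta\to\eta+n$ for $T^nF_1$ and the $\tau_1^{p-2-2n}$ bound for $T^nF_2$ only help with the bulk caps; the initial-data contribution still limits you to $\tau^{-2}$. The $2n$ powers do \emph{not} come from the inhomogeneity.

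The mechanism you are missing is that \cref{lem:inhom:Tconversion} must be applied at \emph{late} times $\tau$, \emph{after} decay has been established for one fewer $T$. Concretely, once Step 1 gives $\int_{\C_\tau}r^0(\pv(r\Xx)^{n_2}\sl^{n_1}\psi)^2\lesssim \tau^{-\tilde p}$ (uniformly in the $\Vc$-commutations), \cref{lem:inhom:Tconversion} with $p=1$ converts this into $\int_{\C_\tau}r^2(\pv T\psi)^2\lesssim \tau^{-\tilde p}$ at \emph{every} $\tau$. You then re-run the full $p\in[0,2]$ hierarchy for $T\phi$ with this \emph{decaying} ``initial'' flux on a dyadic sequence, gaining two further powers: $E_N[T\phi](\tau)\lesssim \tau^{-\tilde p-2}$. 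Inducting on $n$ gives the $\tau^{-2n}$ improvement. This is how the paper proceeds; see also the discussion around \cref{eq:introconvtrder} in the introduction.

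Two smaller points. First, you do not address how \cref{eq:inhom:energydecay:pweighted} is obtained for non-integer $p$; the paper does this by a simple interpolation, splitting the $r$-integral at $r=\tau$ and comparing $r^p$ with $r^{\lceil p\rceil}$ and $r^{\lfloor p\rfloor}$. Second, your ``Main obstacle'' overstates what is needed from $F_1$: the paper (see \cref{rem:inhom:rp:generalised}) notes that only $q\in[-1,0+]$ is actually used in the proof, and even $q\in[-1,0]$ works at the cost of an $\epsilon$-loss; the $q=1$ end of the range is not essential here.
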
 
\begin{rem}\label{rem:inhom:rp:generalised}
	The assumption on $F_1$ in \cref{eq:inhom:assumptiononF} has been stated to tie in directly with the form of the inhomogeneity encountered in \cref{eq:time:aprioriF}. However, in the proof, we only need the range $q\in[-1,0+]$ (see \cref{eq:inhom:lalala}). Furthermore, if we only have the range $q\in[-1,0]$, i.e.~if $r^3F$ attains a nonzero limit towards $\scrip$, the proof still applies with an arbitrarily small loss (because we can only apply \cref{eq:inhom:rp} estimates with $p<2$). More generally, if $rF_1\lesV r^{-2}\tau^{-\eta}\min_{q\in[-1,-1+\tilde{p}/2+]}$, then the result still holds. 
	The nontrivial part in proving this is to show that each time-derivative still gains one full power in decay (as opposed to $\tilde{p}/2$ powers). See \cite{gajic_late-time_2023}[Section 8] for how this works.
	
	Notice, in any case, that if we have more $\tau$-decay at our disposal, then we require less $r$-decay on the inhomogeneity in view of \cref{rem:inhom:rp:u}, and the above comments are then not relevant.
	\end{rem}
\begin{proof}
We first consider the case $\tilde{p}=2$, where we give a brief sketch following \cite{dafermos_new_2010}. Let $\tau_k$ be a dyadically growing sequence of times, say $\tau_k=\tau_02^k$.	Note that the first inequality of \cref{eq:inhom:assumptiononF} implies that, for $\epsilon \ll 1$,
	\begin{equation}\label{eq:inhom:lalala}
	\fbulk{\tau_k}{\tau_{k+1}}[T^n F_1]\lesssim \tau_k^{-2\eta-1-2n+}, \qquad \fbulkp{\tau_k}{\tau_{k+1}}{p}[T^n F_1]=\int_{\tau_k}^{\tau_{k+1}} \int_{\C_{\tau}}r^{\max(p,\epsilon)+1} (rF)^2\lesssim \tau_k^{-2\eta-1-2n+p+}.
\end{equation}
Thus, if we apply \cref{eq:inhom:rp} with $p=2$, we get
	\begin{equation}
		\int_{\tau_k}^{\tau_{k+1}}\dd \tau \int_{\C_{\tau}} r (\pv\psi)^2\dd \mu \lesssim \int_{\C_{\tau_k}} r^2(\pv\psi)^2\dd \mu +O(\tau_k^{\gamma}),
	\end{equation}
	where $\gamma=\max(-2\eta+1+,0)$. 
By the mean value theorem, we can find another dyadic sequence of times $\tau_k'\in[\tau_k,\tau_{k+1}]$ such that
	\begin{equation}
		\int_{\C_{\tau'_{k}}}r(\pv\psi)^2\dd \mu \lesssim \frac{1}{\tau'_k}\int_{\C_{\tau_{k}}}r^2(\pv\psi)^2\dd \mu+O({\tau'_k}^{\gamma-1}).
	\end{equation}
	Now, we apply \cref{eq:inhom:rp} with $p=1$ to obtain 
	\begin{equation}
		\int_{\tau'_k}^{\tau'_{k+1}} E_N[\phi](\tau) \dd \tau  \lesssim \frac{1}{\tau'_k}\int_{\C_{\tau_{k}}}r^2(\pv\psi)^2\dd \mu + E_N[\Vk^1 \phi](\tau'_k) +O({\tau'_k}^{\gamma-1}).
	\end{equation}
	Applying the MVT and energy boundedness (\cref{prop:inhom:energy}), we may deduce $E_N[\phi](\tau)\lesssim \tau^{\gamma-1}$ for all $\tau$. Plugging this back into the above and repeating the argument gives (after an additional commutation with $\Vk$) (cf.~\cite{dafermos_new_2010}[Section 4])\footnote{Alternatively, one may first apply the $p=1$ estimate to deduce $\tau^{\gamma-1}$-decay for the energy, and then apply the $p=2$ estimate to improve to $\tau^{\gamma-2}$. Note moreover that we have not optimized the loss of derivatives on the right-hand side of \eqref{eq:Entau2decay} as this is not relevant in the present paper.}
	\begin{equation}
	\label{eq:Entau2decay}
		E_N[\phi](\tau)\lesssim \tau^{\gamma-2}\big( E_N[\Vk^2\phi]+\int_{\C_{\tau_0}}r^2 (\pv(\Vk^2\psi))^2\dd \mu.\big)
	\end{equation}
	
		In order to prove faster decay for time derivatives, we simply observe that the $T$-commuted $r^p$-flux for $p=2$ can be bounded by the $(r\pv)$-commuted flux with $p=0$, another flux involving angular derivatives and a term featuring $F$; cf.~\cref{lem:inhom:Tconversion}. Since all the latter terms already decay like $\tau^{\gamma-2}$, we therefore gain two extra powers of decay for each time derivative; this proves \cref{eq:inhom:energydecay}.
	
	We use an interpolation argument to infer \cref{eq:inhom:energydecay:pweighted}: More precisely, we split
	\begin{equation}
\int r^p (\pv T^n \psi)^2\dd \mu  =\tau^{p-\lceil p \rceil}\int_{\{r\geq \tau\}} r^{\lceil p \rceil} (\pv T^n \psi)^2\dd \mu+\tau^{p-\lfloor p \rfloor}\int_{\{r\leq \tau\}} r^{\lfloor p \rfloor}(\pv T^n \psi)^2 \dd \mu
	\end{equation}	
	and then apply our existing estimates for integer $p$.
	
	It is left to extend {to} the case $\tilde{p}<2$. Assume first that $\tilde{p}>1$. We then have an estimate for  the $r^{\tilde{p}}$ and $r^{\tilde{p}-1}$ flux, from which we can get an estimate for $p=1$ by interpolation. Once we act with time-derivatives to prove more decay, we use that the time-derivatives have finite $r^2$ flux.
	If $\tilde{p}<1$, then we directly  appeal to \cref{lem:inhom:Tconversion} to go from the $r^{\tilde{p}}$-flux to the $T$-commuted $r^{1+\tilde{p}}$-flux, and only then interpolate.
\end{proof}

\subsection{Pointwise decay estimates}

In this section, we use the hierarchy of elliptic estimates from \cref{prop:inhom:elliptic1} to convert energy decay into pointwise decay. 
\begin{cor}\label{cor:inhom:decay}
	Under the assumptions of \cref{prop:inhom:energydecay}, assuming in addition that $\tilde{p}>1$ as well as\footnote{In our applications, $F_2$ will typically decay almost half a power faster!} $F_2\lesV r^{-2} \tau^{-3/2}$, we have
	\begin{equation}\label{eq:inhom:cor:all:l}
		\psi \lesV (\tau^{-\frac{\tilde{p}-1}{2}}+\tau^{-\eta+}) \min_{q\in[0,1]}\(\frac{r}{\tau}\)^q.
	\end{equation}
If we assume, in addition, that for  $0\leq\ell_0\in\N$ there is some $ \eta_{\ell_0}\in\R$ (not necessarily positive) such that
	\begin{equation}\label{eq:inhom:cor:high:l}
		F_{1,\ell\geq \ell_0}\lesV\frac{r^{\ell_0}}{\tau^{\ell_0}} \frac{1}{r^2\tau^{\eta_{\ell_0}+1}}\quad F_{2,\ell\geq\ell_0}\lesV\frac{r^{\ell_0}}{\tau^{\ell_0}} \frac{1}{r^2\tau^{3/2}}, \quad \text{then} \quad r^{-\ell_0}\phi_{\ell\geq \ell_0} \lesV \tau^{-\ell_0-1-\min(\frac{\tilde{p}-1}{2},\eta+)}+\tau^{-\ell_0-\eta_{\ell_0}-1}.
	\end{equation}
\end{cor}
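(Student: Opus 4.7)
The plan is to upgrade the energy and $r^p$-flux decay statements of \cref{prop:inhom:energydecay} to pointwise bounds by three standard ingredients: (i) commutation with $\Vc$ (in particular with the $\mathcal{O}_i$) together with the Sobolev embedding $H^2(S^2)\hookrightarrow L^\infty(S^2)$, to gain pointwise control on spheres; (ii) a fundamental-theorem-of-calculus argument in $v$ along $\C_\tau$ to propagate decay out to $\scrip$; and (iii) the elliptic estimates of \cref{prop:inhom:elliptic1} to close the estimate in the bounded-$r$ region and to exploit high angular frequencies. Set $\alpha:=\min(\tilde p,2\eta+1-)$. By \cref{prop:inhom:energydecay} applied to $\phi$ and to $T\phi$, and commuted with $\Vc$, we have $E_N[T^n\phi](\tau)\lesV[\Vc]\tau^{-\alpha-2n}$ together with $\int_{\C_\tau}r^p(\pv T^n\psi)^2\,d\mu\lesV[\Vc]\tau^{p-\alpha-2n}$, for the admissible ranges of $p$ stated there.

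\textbf{The branch $q=0$ (near $\scrip$).} Fix $\tau$ and let $v_\ast=v_\ast(\tau)$ be such that $r(\tau,v_\ast)\sim\tau$. Along $\C_\tau$, write
\begin{equation*}
\psi(\tau,v,\theta)-\psi(\tau,v_\ast,\theta)=\int_{v_\ast}^v\pv\psi\,dv'.
\end{equation*}
Picking any $p_0\in(1,\tilde p]$ and applying Cauchy-Schwarz,
\begin{equation*}
\bigl|\psi(\tau,v,\theta)-\psi(\tau,v_\ast,\theta)\bigr|^2\le\Bigl(\int_{v_\ast}^\infty r^{-p_0}\,dv'\Bigr)\Bigl(\int_{v_\ast}^\infty r^{p_0}(\pv\psi)^2\,dv'\Bigr)\lesssim\tau^{1-p_0}\cdot\tau^{p_0-\alpha}=\tau^{1-\alpha},
\end{equation*}
using $p_0>1$ for the first factor and $p_0\le\tilde p$ for the second. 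Commuting with $\mathcal{O}_i\in\Vc$ and applying Sobolev on $S^2$ promotes this to a pointwise-in-$\theta$ estimate, giving $|\psi(\tau,v,\theta)-\psi(\tau,v_\ast,\theta)|\lesV\tau^{-(\alpha-1)/2}$. Here the $\tau^{-\eta+}$ contribution in \cref{eq:inhom:cor:all:l} is exactly the case where $\alpha=2\eta+1-$.

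\textbf{The branch $q=1$ (bounded $r/\tau$) and the $\min$-structure.} In the region $r\lesssim\tau$ (in particular $r\le\rc$, where $\psi(\tau,v_\ast)$ was left to be estimated above), combine the elliptic estimate \cref{eq:inhom:ellipticmain1} (after commutation with $\Vc$) with $E_N[T\phi](\tau)\lesV\tau^{-\alpha-2}$ and the $r^{-2}\tau^{-3/2}$ decay of $F_2$ to bound $r^2|\phi|_{L^2(\Sigma_\tau^{\text{bounded}})}^2\lesV\tau^{-\alpha-2}$, and then use Sobolev in the angular and $\partial_r|_\Sigma$ variables to promote this to $|\phi|\lesV\tau^{-(\alpha+1)/2}$. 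Rewriting in terms of $\psi=r\phi$ gives $|\psi|\lesV r\,\tau^{-(\alpha+1)/2}=(r/\tau)\tau^{-(\alpha-1)/2}$, which is precisely the $q=1$ branch. Since the two branches match at the interface $r\sim\tau$, combining them yields the $\min_{q\in[0,1]}(r/\tau)^q$ structure of \cref{eq:inhom:cor:all:l}. The value $\psi(\tau,v_\ast,\theta)$ in Step above is controlled by the same estimate restricted to $r\sim\tau$.

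\textbf{The high-$\ell$ refinement.} For $\phi=\phi_{\ell\ge\ell_0}$, the $r^{-q}$-weighted estimates of \cref{prop:inhom:elliptic1} are available for any $q<2\ell_0+1$. Taking $q$ close to $2\ell_0+1$ extracts an additional $r^{\ell_0}$-weight worth of elliptic smoothing, which when coupled with the improved decay assumption $F_{1,\ell\ge\ell_0}\lesV r^{\ell_0}\tau^{-\ell_0}\cdot r^{-2}\tau^{-\eta_{\ell_0}-1}$ (and analogously for $F_2$) and with the resulting energy decay for $T\phi_{\ge\ell_0}$ produces $|r^{-\ell_0}\phi_{\ell\ge\ell_0}|\lesV\tau^{-\ell_0-1-\min((\tilde p-1)/2,\eta+)}+\tau^{-\ell_0-\eta_{\ell_0}-1}$. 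The repetition of the Sobolev/FTC argument above with these enhanced weights gives \cref{eq:inhom:cor:high:l}; the extra factor $r^{\ell_0}/\tau^{\ell_0}$ in the hypothesis on $F$ is exactly what is needed to keep the iteration consistent at each $\ell$.

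\textbf{Expected main difficulty.} The main technical obstacle is a careful bookkeeping of weights so that the $q=0$ and $q=1$ regimes match in the transition zone $r\sim\tau$, and so that the $\tau^{-\eta+}$ loss is correctly tracked (in particular, the ``$+$'' in $\min(\tilde p,2\eta+1-)$ cannot be upgraded to sharp equality in the regime $2\eta+1<\tilde p$). For the high-$\ell$ estimate, the subtlety is that taking $q$ close to $2\ell_0+1$ is necessary to capture the full $r^{\ell_0}$-weight while staying in the range of admissible weights in \cref{prop:inhom:elliptic1}; this forces a careful choice of how to split $F=F_1+F_2$ and to propagate the high-$\ell$ decay of each piece through the iteration.
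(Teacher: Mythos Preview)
Your overall strategy matches the paper's: fundamental theorem of calculus plus Cauchy--Schwarz for the $q=0$ branch, and the $r^{-q}$-weighted elliptic estimates of \cref{prop:inhom:elliptic1} (trading $r$-weights for $T$-derivatives) for the $q=1$ branch and the high-$\ell$ refinement. The paper in fact derives the $q=1$ case of \cref{eq:inhom:cor:all:l} as the special case $\ell_0=0$ of the high-$\ell$ argument rather than via a separate bounded-$r$ step, but this is only an organisational difference.

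There is, however, a genuine gap in your high-$\ell$ argument. You write ``taking $q$ close to $2\ell_0+1$'', but \cref{prop:inhom:elliptic1} requires $q<2\ell_0+1$ strictly, so the iterated estimate only yields
\[
\int_{\S^{r_1}_\tau} r^{-2\ell_0+\epsilon}\phi_{\ell\ge\ell_0}^2
\lesssim \tau^{-2\ell_0-1+\epsilon-\alpha}+\tau^{-2\ell_0-2-2\eta_{\ell_0}+\epsilon},
\]
i.e.\ \cref{eq:inhom:cor:high:l} with an $\epsilon$-loss. The paper removes this loss via a separate, non-obvious step: restrict to the single mode $\ell=\ell_0$ (for $\ell\ge\ell_0+1$ one can take $\epsilon=0$ directly), work with the twisted quantity $\philc=w_\ell^{-1}\phi_\ell$, and exploit that the intermediate estimate $\int_{\Sigma_\tau} r^{-q+2}w_\ell^2(\partial_r|_\Sigma\philc)^2\,d\mu_D$ from the proof of \cref{prop:inhom:elliptic1} holds \emph{without} the upper bound $q<2\ell_0+1$ (since no Hardy inequality is needed at that stage). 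One then interpolates via
\[
\int_{\S_\tau}\philc^2\lesssim\sqrt{\int_{\Sigma_\tau}\philc^2\cdot\int_{\Sigma_\tau}(\partial_r|_\Sigma\philc)^2},
\]
and iterates the elliptic estimate on each factor (with different effective $q$'s) to land exactly on integer-weighted fluxes of $T^\ell\psi$ and $T^{\ell+1}\psi$, recovering the sharp exponent. Your proposal gives no mechanism to avoid the $\epsilon$-loss, so as written it falls short of the stated conclusion.
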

\begin{obs}
	The requirement $\tilde{p}>1$ is because $\psi$ does not attain a finite limit at $\scrip$ if only the $r^p$-flux with $p\leq 1$ is finite. However, if we work with $T\psi$ rather than $\psi$, the assumption that $\tilde{p}>1$ is unnecessary because $T\psi$ has a finite $r^p$-flux with $p=2$ by \cref{eq:inhom:prop:energyass}.
\end{obs}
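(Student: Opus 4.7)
The plan is to justify two separate points contained in the Observation: the necessity of $\tilde p > 1$ for pointwise control of $\psi$ near $\scrip$, and the redundancy of that condition when one instead estimates $T\psi$.

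For the first point, I would recall the standard route by which pointwise bounds near $\scrip$ are extracted from an $r^p$-weighted flux. For any function $f$ and any $v\geq v_0$, the fundamental theorem of calculus combined with Cauchy--Schwarz gives, at fixed $u$ and $\theta$,
\begin{equation*}
\bigl|f(u,v,\theta) - f(u,v_0,\theta)\bigr|^2 \;\leq\; \Bigl(\int_{v_0}^{v} r^{-p}\,dv'\Bigr)\Bigl(\int_{v_0}^{v} r^{p} (\pv f)^2 \,dv'\Bigr).
\end{equation*}
Since $v-u = r^\ast$ and $dv \sim dr$ for large $r$, the weight integral $\int r^{-p}\,dv$ is finite as $v\to\infty$ only when $p > 1$. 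Applied to $f = \psi$ (after Sobolev embedding on $S^2$), this shows that finiteness of the $r^{\tilde p}$-flux with $\tilde p \leq 1$ alone does not even guarantee that $\psi$ extends continuously to $\scrip$, much less a quantitative pointwise bound of the form \cref{eq:inhom:cor:all:l}. This is the structural reason for the hypothesis $\tilde p > 1$ in \cref{cor:inhom:decay}.

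For the second point, the initial-data assumption \cref{eq:inhom:prop:energyass} already includes
\begin{equation*}
\int_{\C_{\tau_0}} r^{2}(\pv T\psi)^2\,d\mu \;\lesV[\Vc]\; 1,
\end{equation*}
i.e.~a genuine $p=2$ flux for $T\psi$ at initial time, regardless of the value $\tilde p$ assumed for $\psi$ itself. Since $T$ is Killing on Schwarzschild, $T\phi$ solves $\Box_{g_M}(T\phi) = TF$, and the structural decay hypotheses \cref{eq:inhom:assumptiononF}, together with the $\V$-regularity encoded in $\lesV$, propagate to $TF_1$ and $TF_2$ with an additional power of $\tau$-decay. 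Running \cref{prop:inhom:energydecay} with $\tilde p = 2$ on $T\phi$ therefore produces decay of $\int_{\C_\tau} r^2 (\pv T\psi)^2\,d\mu$, to which the Cauchy--Schwarz argument above applies with $p=2>1$, yielding pointwise control of $T\psi$ up to and including $\scrip$ without any restriction on $\tilde p$.

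The only delicate aspect is bookkeeping: one must verify that the full chain of estimates in \cref{sec:inhom} (energy boundedness, the $r^p$-hierarchy, and $\Vc$-commutations) can be run with $T\phi$ in place of $\phi$ under the weaker initial-data hypothesis on $\psi$. This is immediate, since all these estimates have been stated in $\Vc$-commuted form and $T\in\Vk\subset \Vc$, so no new argument is required; the main obstacle, such as it is, is simply checking that the commuted inhomogeneity $TF$ still satisfies the split hypotheses of \cref{prop:inhom:energydecay}, which it does by the definition of the $\lesV$ notation.
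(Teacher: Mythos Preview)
Your justification is correct and captures exactly the intended reasoning. In the paper this statement is an unproved remark immediately following \cref{cor:inhom:decay}; the paper simply asserts both points without further argument, relying on the reader to see that the fundamental-theorem-of-calculus/Cauchy--Schwarz step in the proof of the corollary requires $\tilde p>1$ for the weight integral to converge, and that the $p=2$ flux for $T\psi$ is already built into \cref{eq:inhom:prop:energyass}. Your expansion of these two points is accurate and no more is needed.
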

\begin{proof}
	We first prove the large-$r$ estimate, i.e.~\cref{eq:inhom:cor:all:l} with $q=0$.
	Fix $r_1\geq2M$ and $\chi$ be a cut-off localising to $r\geq 2r_1$. 
	Then we use the fundamental theorem of calculus, Cauchy--Schwarz and Hardy's inequality to write
	\begin{equation}
		\int_{\S_{\tau}^{r_1}} \psi^2  =\int_{\Sigma_\tau^{[r_1,\infty)}} \partial_r|_{\Sigma}(\chi \psi^2)
		 \lesssim \int_{\Sigma_\tau^{[r_1,\infty)}} \chi' \psi^2 +\sqrt{\int_{\Sigma_\tau^{[r_1,\infty)}}r^{\tilde{p}}(\partial_r|_{\Sigma} \psi)^2 \cdot \int_{\Sigma_\tau^{[r_1,\infty)}} r^{2-\tilde{p}} (\partial_r|_{\Sigma} \psi)^2}\lesssim \tau^{-\tilde{p}+1}+\tau^{-2\eta+}.
	\end{equation} 
	
	Next, we prove \cref{eq:inhom:cor:high:l} with an $\epsilon$-loss. 
	For this, we apply the elliptic estimate of \cref{prop:inhom:elliptic1} $\ell_0+1$ times, for $\epsilon>0$ arbitrarily small ({the requirement $\epsilon>0$} is necessary since  \cref{prop:inhom:elliptic1} requires $q<2\ell_0+1$):
	\begin{nalign}
	&	\int_{\S_{\tau}^{r_1}} r^{-2\ell_0+\epsilon}\phi_{\ell\geq \ell_0}^2 
		\lesssim \int_{\Sigma_\tau^{[r_1,\infty)}} r^{-2\ell_0-1+\epsilon} \phi_{\ell\geq \ell_0}^2+r^{-2\ell+1+\epsilon} (\partial_r|_{\Sigma}\phi_{\ell\geq \ell_0})^2\\
	&	\lesssim  \int_{\Sigma_\tau^{[r_1,\infty)}}  r^{-2\ell_0+1+\epsilon} \left((\partial_r|_{\Sigma} (rT\phi_{\ell\geq\ell_0}))^2+(\sl\partial_r|_{\Sigma} (r T\phi_{\ell\geq\ell_0}))^2+((\sl+1)rF_{\ell\geq \ell_0})^2\right)\\
	&	\lesssim \sum_{n=0}^{\ell_0+1} \int_{\Sigma_\tau^{[r_1,\infty)}}  r^{1+\epsilon} \left((\sl^n\partial_r|_{\Sigma} (rT^{\ell+1}\phi_{\ell\geq\ell_0}))^2\right)+ r^{2(n-\ell_0)+\epsilon+1}  (\sum_{k=0}^n\sl^k T^n rF_{\ell\geq \ell_0})^2
	\\
	&\lesssim \tau^{-2\ell_0-2+1+\epsilon-\min(\tilde{p}, 2\eta+1-)}+\tau^{-2\ell_0-2-2\eta_{\ell_0}+\epsilon}.
	\end{nalign}
	
	To finally remove this $\epsilon$-loss, we may restrict to $\phi$ supported on $\ell=\ell_0$ (if $\phi$ is supported on $\ell\geq\ell_0+1$, we can simply redo the computations above with $\epsilon=0$). 
	Then, we notice that with \cref{eq:inhom:elliptic:suitablyadd1}, we obtain control over 
	\begin{equation}
		\int_{\Sigma_{\tau}}r^{-q+2}w_{\ell}^2 (\partial_r|_{\Sigma} \philc)^2 \dd \mu_D, 
	\end{equation}
	without requiring an upper bound for $q$, cf.~\cref{footnote:hardy}.
	Thus, we may estimate, for $\ell=\ell_0$ (dropping the $F$-terms):
	\begin{nalign}
\int_{\S_{\tau}} \philc^2 \lesssim \sqrt{\int_{\Sigma_\tau} \philc^2 \cdot \int_{\Sigma_{\tau}} (\partial_r|_{\Sigma} \philc)^2}
&\lesssim \sqrt{\int_{\Sigma_{\tau}} r^{-2\ell+2}(\partial_r|_{\Sigma} T \psi)^2 \int_{\Sigma_{\tau}} r^{-2\ell}(\partial_r|_{\Sigma} T \psi)^2}\\
&\lesssim \sqrt{\int_{\Sigma_{\tau}} r^{-2\ell+2\ell}(\partial_r|_{\Sigma} T^{\ell} \psi)^2 \int_{\Sigma_{\tau}} r^{-2\ell+2\ell}(\partial_r|_{\Sigma} T^{\ell+1} \psi)^2},
	\end{nalign}
	which then recovers \cref{eq:inhom:cor:high:l} without $\epsilon$-loss, as well as \cref{eq:inhom:cor:all:l} with $q=-1$.
%
%
\end{proof}

\section{Construction of approximate solutions}\label{sec:approx}
\newcommand{\pubar}{\partial_{\bar{u}}}
\newcommand{\pvbar}{\partial_{\bar{v}}}

In this section, we construct a class of approximate solutions to $\Box_{g_M}\phi=0$ (these will also be approximate solutions to $\Box_g\phi=0$).
We keep the construction here specific, but we will explain at the end of the section how to generalise it.

First of all, we want to make reference to solutions to the Minkowskian wave equation. We do this with respect to $u,r$-coordinates; that is we define, for $r\geq \rc$: 
\begin{equation}
r^2	\Box_\eta \phi: =\Xx(r^2\Xx \phi)-r\Xx(rT\phi)+\Dl \phi.
\end{equation}
We now define, for $\ell\in\N$, $m=\{-\ell,\dots,\ell\}$,  $\psi^{\eta}_{\ell,m}=r\phi_{\ell,m}^\eta = (r/(\tilde{\tau}(\tilde{\tau}+r)) )^{\ell+1} $, and note the following:
\begin{lemma}
	In the region where $\tilde{\tau}=u$, we have $\Box_\eta(\phi^{\eta}_{\ell,m}\cdot Y_{\ell,m})=0$.
\end{lemma}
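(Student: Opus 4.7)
The plan is to recognise $\Box_\eta$ as the flat wave operator expressed in $(u,r)$ coordinates and reduce the statement to a one-line identity in the standard Minkowski null coordinates. In the region where $\tilde{\tau}=u$, the combination $\tilde{\tau}+r$ is literally the outgoing Minkowski coordinate, so I would set $\bar u := u$ and $\bar v := u+r$, so that $r = \bar v - \bar u$. Since $\Xx = \p{r}|_u$ and $T = \p{u}|_r$ are coordinate vector fields in $(u,r)$, the chain rule gives $\Xx = \pvbar|_{\bar u}$ and $T = \pubar|_{\bar v} + \pvbar|_{\bar u}$, and in particular $\Xx(\Xx - T) = -\pubar\pvbar$ on smooth functions.

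A direct computation starting from $r^2\Box_\eta\phi = \Xx(r^2\Xx\phi) - r\Xx(rT\phi) + \Dl\phi$, using $T(1/r)=0$ and $\Xx(r^2\Xx(\psi/r)) = r\Xx^2\psi$, yields
\begin{equation*}
 r\,\Box_\eta(\psi/r)
 \;=\; \Xx^{2}\psi - \Xx T\psi + r^{-2}\Dl\psi
 \;=\; -\,\pubar\pvbar\psi + r^{-2}\Dl\psi.
\end{equation*}
Combined with $\Dl Y_{\ell,m} = -\ell(\ell+1)Y_{\ell,m}$, the Lemma thus reduces to the scalar identity $\pubar\pvbar\,\psi^{\eta}_{\ell,m} = -\,\ell(\ell+1)\,r^{-2}\,\psi^{\eta}_{\ell,m}$.

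Next I would exploit the factorisation
\begin{equation*}
 \psi^{\eta}_{\ell,m}
 \;=\; \left(\frac{r}{\bar u\,\bar v}\right)^{\ell+1}
 \;=\; \left(\frac{1}{\bar u}-\frac{1}{\bar v}\right)^{\ell+1},
\end{equation*}
valid because $\bar v - \bar u = r$. Differentiating once in $\bar v$ produces $(\ell+1)(1/\bar u - 1/\bar v)^{\ell}/\bar v^{2}$; the factor $1/\bar v^{2}$ is then annihilated by $\pubar$, so a further application of $\pubar$ acts only on the $\ell$th power, producing
\begin{equation*}
 \pubar\pvbar\,\psi^{\eta}_{\ell,m}
 \;=\; -\,\frac{\ell(\ell+1)}{(\bar u\,\bar v)^{2}}\left(\frac{1}{\bar u}-\frac{1}{\bar v}\right)^{\ell-1}
 \;=\; -\,\ell(\ell+1)\,\frac{\psi^{\eta}_{\ell,m}}{r^{2}},
\end{equation*}
the last equality using $(1/\bar u - 1/\bar v)^{2} = r^{2}/(\bar u\,\bar v)^{2}$. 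This is precisely the required identity.

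There is no real obstacle here: the Lemma is essentially the classical observation that $(1/\bar u - 1/\bar v)^{\ell+1}Y_{\ell,m}$ is a spherical-harmonic solution of the flat wave equation on Minkowski. The only point needing care is the coordinate bookkeeping that converts $(\Xx, T)$ into $(\pubar,\pvbar)$ so as to bring $\Box_\eta$ into the null form $-\pubar\pvbar + r^{-2}\Dl$; after that, the Lemma is a single differentiation.
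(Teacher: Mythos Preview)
Your proof is correct and follows essentially the same approach as the paper: introducing the Minkowski null coordinates $\bar u=u$, $\bar v=u+r$, rewriting $\Box_\eta$ as $-\pubar\pvbar+r^{-2}\Dl$ acting on $\psi$, and then exploiting the factorisation $\psi^{\eta}_{\ell,m}=(1/\bar u-1/\bar v)^{\ell+1}$ together with $\pubar\pvbar(1/\bar u-1/\bar v)=0$. Your derivation of the null form of $\Box_\eta$ is in fact more explicit than the paper's, which simply asserts the equivalence.
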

\begin{proof}
	We define the coordinates $\bar{u}=u$ and $\bar{v}=r+u$. Then $\Box_\eta\phi=0$ is equivalent to  $\pubar\pvbar\psi=r^{-2}\Dl\psi$.\footnote{Note that this is a different equation from the one we would get by taking $M=0$ in $(u,v)$-coordinates!}
	 Define now $f=1/\bar{u}-1/{\bar{v}}=r/(u(u+r)$. 
	Since $\pubar\pvbar f^{\ell+1}=f^{\ell-1}\ell(\ell+1)\pvbar f \pubar f+f^{\ell}(\ell+1)\pubar\pvbar f$, the result follows from
	\begin{equation}
		\pvbar f \pubar f=-\bar{u}^{-2}\bar{v}^{-2}=-r^{-2}f^2 \quad \text{and}\quad \pubar\pvbar f=0.\qedhere
	\end{equation}
\end{proof}
We now define out of these Minkowskian solutions approximate solutions for $\Box_{g_M}\phi=0$ by matching to the stationary solutions $w_\ell$ in the finite $r$-region:
\begin{prop}\label{prop:approx}
	Let $\chi(\tfrac{u}{r})$ be a cutoff identically one for $u/r\geq 3/4$ and zero for $u/r\leq 1/4$.
	We define
	\begin{equation}
	\appsi=r\apphi,\qquad	\apphi:=  \phi^{\eta}_{\ell,m}(\chi+(1-\chi)\frac{w_{\ell}}{r^{\ell}})=\frac{\chi r^{\ell}+(1-\chi)w_{\ell}}{\tilde{\tau}^{\ell+1}(\tilde{\tau}+r)^{\ell+1}}\lesV \frac{r^{\ell}}{\tau^{\ell+1}(\tau+r)^{\ell+1}}.
	\end{equation}
	Then we have
	\begin{equation}
		\Box_{g_M}(\apphi \cdot Y_{\ell,m})\lesV \frac{1}{(\tau+r)r^2 } \frac{r^{\ell}}{\tau^{\ell+1}(\tau+r)^{\ell+1}}.
	\end{equation}
\end{prop}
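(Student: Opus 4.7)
The plan is to split $\mathcal{M}$ into two regimes determined by the cut-off and control $\Box_{g_M}(\apphi Y_{\ell,m})$ separately in each regime, exploiting two exact algebraic identities: $\Box_{g_M}(w_\ell Y_{\ell,m}) = 0$ (by the defining property of $w_\ell$) and $\Box_\eta(\phi^\eta_{\ell,m}Y_{\ell,m}) = 0$ (by the preceding lemma, in the region where $\tilde\tau = u$). In either regime $\apphi$ agrees up to lower-order terms with one of these two exact solutions multiplied by the universal time factor $F := [\tilde\tau(\tilde\tau+r)]^{-\ell-1}$, so writing $\apphi Y_{\ell,m}$ schematically as $(\text{exact solution})\cdot F$ reduces the estimate to controlling one derivative of $F$, which is where the gain of a factor $1/(\tau+r)$ in the target bound comes from.

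\paragraph{Step 1: pointwise preliminaries.} First I would collect the elementary pointwise estimates
\begin{equation*}
|\partial_r^j F|\lesssim \frac{F}{(\tilde\tau+r)^j},\qquad |T^k F|\lesssim \frac{F}{\tilde\tau^k},
\end{equation*}
together with the mixed versions obtained by interchanging derivatives, as well as $w_\ell=r^\ell+O(r^{\ell-1})$ with $|w_\ell'|\lesssim r^{\ell-1}$ and $|w_\ell''|\lesssim r^{\ell-2}$, and the resulting pointwise bound $\apphi\lesV r^\ell/[\tau^{\ell+1}(\tau+r)^{\ell+1}]$ stated in the proposition itself. The cut-off factor $\chi(u/r)$ is bounded, with $\chi'$ supported in $\{1/4\le u/r\le 3/4\}$ and contributing at most $1/(u+r)$ after each derivative.

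\paragraph{Step 2: the region where $\apphi$ is (a multiple of) $w_\ell F$.} In this region the relevant identity is $\Box_{g_M}(w_\ell Y_{\ell,m})=0$, so the Leibniz rule gives
\begin{equation*}
\Box_{g_M}(w_\ell Y_{\ell,m}\,F)=w_\ell Y_{\ell,m}\,\Box_{g_M}F+2\,g_M^{\alpha\beta}\partial_\alpha(w_\ell Y_{\ell,m})\,\partial_\beta F.
\end{equation*}
Working in $(v,r)$-coordinates near the horizon (where all components of $g_M^{-1}$ are smooth, cf.\ $g_M^{vv}=0$, $g_M^{vr}=1$, $g_M^{rr}=D$) and in $(u,r)$-coordinates for $r\ge \rc$, both terms involve at least one derivative of $F$, hence are bounded by $w_\ell F/((\tau+r)r^2)$ times finite combinations of $r^{\ell-1}$ over $w_\ell\sim r^\ell$.

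\paragraph{Step 3: the region where $\apphi$ is (a multiple of) $\phi^\eta_{\ell,m}$.} Here the relevant identity is $\Box_\eta(\phi^\eta_{\ell,m}Y_{\ell,m})=0$, so
\begin{equation*}
\Box_{g_M}(\phi^\eta_{\ell,m}Y_{\ell,m})=(\Box_{g_M}-\Box_\eta)(\phi^\eta_{\ell,m}Y_{\ell,m}).
\end{equation*}
The difference of operators, computed from \cref{eq:Sch:wave} and its Minkowski analogue, equals $r^{-2}\partial_r(r^2(D-1)\Xx\,\cdot\,)=-r^{-2}(2M\Xx+2Mr\Xx^2)$ acting on spherically symmetric data, so the error comes with a factor $M/r^2$ and is controlled using the sharp bounds $|\Xx^j\phi^\eta_{\ell,m}|\lesssim r^{\ell-j}/[\tilde\tau^{\ell+1}(\tilde\tau+r)^{\ell+1}]$ for $j=1,2$, giving precisely the target decay in this region where $r$ is large enough that $M/r$ provides an effective factor of $r/(\tau+r)$. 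In the intermediate strip where $\tilde\tau$ differs from $u$ (i.e.\ $|r-\rc|\le 1$), the correction is compactly supported in $r$, so everything is bounded by direct computation.

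\paragraph{Step 4: the transition region and the commutations.} In the transition region, where $\chi'\neq 0$, one applies the Leibniz rule directly. The derivatives of $\chi$ land either on $\chi$ itself (producing bounded factors) or via $\partial_r$/$T$ on $\chi(u/r)$ (producing factors $\lesssim 1/r$, which is $\sim 1/(\tau+r)$ since $u\sim r$ on the support of $\chi'$). The difference $w_\ell-r^\ell$ is $O(r^{\ell-1})$, giving an additional factor of $1/r\sim 1/(\tau+r)$ to control cross terms. Finally, for the $\lesV$ statement, since $T$ and $\Omega_i$ commute with $\Box_{g_M}$ and preserve the product structure, and $\tau T$ satisfies $|\tau T\apphi|\lesssim \apphi$, while $r\partial_r|_\Sigma$ generates only controlled commutators that can be absorbed, the same bounds propagate to arbitrary $\V$-commutations.

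\paragraph{Main obstacle.} The delicate point is the near-horizon part of Step 2: one has to verify that the apparent $1/D$ singularities from the $\partial_u\partial_v$ part of $g_M^{-1}$ cancel against the $D$ in the $\partial_r$-direction of $F$, which is cleanest in $(v,r)$-coordinates. One must also be careful that the sharp gain of exactly one factor of $1/(\tau+r)$ rather than $1/\tilde\tau$ is preserved, which is why the cut-off $\chi$ is introduced in the first place: in the regime where $w_\ell$ and $r^\ell$ genuinely differ (bounded $r$, large $\tilde\tau$), only the $w_\ell$-piece gives the correct cancellation with $\Box_{g_M}$.
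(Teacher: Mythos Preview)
Your overall strategy—split into the $w_\ell F$ region, the $\phi^\eta$ region, and the transition, using the two exact identities $\Box_{g_M}(w_\ell Y_{\ell,m})=0$ and $\Box_\eta(\phi^\eta_{\ell,m}Y_{\ell,m})=0$—matches the paper's proof. However, there is a genuine gap in Step~2.

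Your claim that ``both terms involve at least one derivative of $F$, hence are bounded by $w_\ell F/((\tau+r)r^2)$'' fails for $\ell\ge 1$ in the range $\rc\ll r\lesssim\tau$ (which is still part of the $w_\ell F$ region). A direct computation gives, for $F=[\tilde\tau(\tilde\tau+r)]^{-\ell-1}$,
\[
\Box_{g_M}F=-\ell(\ell+1)\,\frac{F}{\tau(\tau+r)}+O\!\Big(\frac{M}{r}\cdot\frac{F}{r(\tau+r)}\Big),
\]
so the first Leibniz term $w_\ell Y_{\ell,m}\Box_{g_M}F$ is of size $\sim r^\ell F/(\tau(\tau+r))$, exceeding the target $r^{\ell-2}F/(\tau+r)$ by the factor $r^2/\tau$, which is unbounded along (say) $r=\tau/10$. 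The cross term $2g_M^{\alpha\beta}\partial_\alpha(w_\ell Y_{\ell,m})\partial_\beta F$ has the \emph{same} leading behaviour with the \emph{opposite} sign, and the two cancel. This cancellation is precisely the Minkowski identity $\Box_\eta(r^\ell FY_{\ell,m})=0$ in disguise (to leading order in $M/r$ one has $w_\ell\approx r^\ell$ and $\Box_{g_M}\approx\Box_\eta$), and it is essential: without it your Leibniz bound holds only where $r$ is genuinely bounded. The paper's proof makes this explicit by rewriting the twisted operator $r^2w_\ell\Box_{g_M}(\cdot)$ so as to isolate a term proportional to $\Box_\eta(\phi^\eta Y_{\ell,m})=0$; all remaining pieces then carry an honest $O(M/r)$ coefficient and the bound follows. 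In short, \emph{both} exact identities are needed throughout $\rc\lesssim r\lesssim\tau$, not one per region.

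A side remark: as literally stated, the cutoff has $\chi=1$ for $u/r\ge 3/4$, which would give $\apphi=r^\ell F$ (not $w_\ell F$) near the horizon—this would actually make the proposition fail for $\ell\ge 1$, since $\Box_{g_M}(r^\ell Y_{\ell,m})=-2M\ell^2 r^{\ell-3}Y_{\ell,m}\neq 0$. Your proof implicitly uses the reversed convention ($w_\ell F$ in finite $r$, $\phi^\eta$ near $\scrip$), which is also what the paper's own proof does; this appears to be a typo in the statement rather than an error on your part.
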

\begin{rem}
	In general, the weights of the wave operator $\Box_{g_M}$ are $1/(r \min(u,r))$. Thus, the approximate solutions gain two weights near null infinity, and one weight towards future timelike infinity compared to a generic function.
\end{rem}
\begin{proof}
The proof is a direct computation. Let us however also give a more conceptual proof by working in three different spacetime regions. Recall that $\tilde{\tau}=u$ for $r\geq \rc+1$ and $\tilde{\tau}=v$ for $r\leq \rc-1$. Throughout the proof, we will simply write $\apphi\cdot Y_{\ell,m}=\phi$.

\textit{Near $\scrip$: $r\geq 4|u|/5$.}
	\begin{nalign}
		r^2\Box_{g_{M}}\phi=\Xx(r^2D\Xx\phi)-rT\Xx(r\phi)+\Dl\phi
	=r^2\Box_{\eta}\phi+2M(-\Xx^2(r\phi)+\Xx\phi)=r^2\Box_\eta\phi +O(r^{-1})|\V^2\phi|
\end{nalign}
We thus see the gain in three weights, as desired.

\textit{Intermediate region: $|u|/5\leq r \leq 4|u|/5$.}
Observe the following:
	If we consider $\phi\cdot f$ instead of $\phi$, with $f$ a function satisfying $f=1+O_{\V}(r^{-1})$, then we no longer gain three powers in $r$: In the term $rT\Xx (r\phi\cdot f)$, we will get
\begin{equation}\label{eq:approx:cutoffrelevance}
	r \Xx f\cdot T(r\phi)= O(r^{-1}) T(r\phi) =O(\tau^{-1})|\V^1\phi|.
\end{equation}
However, restricted to the region $r\leq 4|u|/5$, we still gain $r^{-3}$ decay if we take $f=r^{-\ell}(\chi r^{\ell}+(1-\chi)w_{\ell})$.

\textit{Finite $r$-region: $2M\leq r\leq |u|/5$.}
Notice that in this region, $\philc=\tilde{\tau}^{-\ell-1}(\tilde{\tau}+r)^{-\ell-1}=r^{-\ell}\phi^{\eta}_\ell$. 
Then, in the subregion $r\geq \rc+1$, where $\tilde{\tau}=u$, we use that the wave operator can be written as
	\begin{nalign}
	r^2w_{\ell}\Box_{g_M}\phi=&	\Xx(r^2 Dw_{\ell}^2 \Xx\philc)-w_{\ell}2r\Xx(rT\phi_{\ell})
	=	\Xx(r^2 r^{2\ell} \frac{Dw_{\ell}^2}{r^{2\ell}} \Xx\philc)-w_{\ell}r\Xx(rT\frac{w_{\ell}}{r^{\ell}} r^{\ell}\philc)\\
	=& \frac{D w_{\ell}^2}{r^{2\ell}}\Xx(r^2 r^{2\ell}\Xx\philc)-\frac{w_{\ell}^2}{r^{\ell}}r\Xx(rT\philc r^{\ell})+\Xx(D\frac{w_{\ell}^2}{r^{2\ell}})r^{2\ell+2}\Xx \philc -w_{\ell}r\Xx(w_\ell/r^{\ell})T(r^{\ell+1}\philc)\\
	=r^{\ell+2}\Box_\eta \phi &+\left(\frac{Dw_\ell^2}{r^{2\ell}}-1\right) \Xx(r^2 r^{2\ell}\Xx \philc) -\left(\frac{w_\ell^2}{r^{\ell}}-r^{\ell}\right)+\Xx(D\frac{w_{\ell}^2}{r^{2\ell}})r^{2\ell+2}\Xx \philc -w_{\ell}r\Xx(w_\ell/r^{\ell})T(r^{\ell+1}\philc)
\end{nalign}
The result then follows from the fact that $\Xx\philc$ gains an extra power in $\tau$; that is $\tau\Xx\philc \lesV \tau^{-2\ell-3}$.
Lastly, in the region $r\leq \rc+1$, where $\tilde{\tau}$ interpolates from $u$ to $v$, we again use that both $\Xb$ (or $\Xx$, respectively) and $T$ produce an extra $\tau$-weight when acting on $\philc$.
 \end{proof}

\begin{rem}[Comments on generalisations]\label{rem:approx}
	\begin{itemize}
		\item The cut-off is $\chi$ is present to improve the decay in $r$ as we approach $\scrip$, as is clear from \cref{eq:approx:cutoffrelevance}. However, for all our asymptotic statements of \cref{thm:main}, we could also omit $\chi$ and simply work with $r^{-\ell}w_\ell \phi^{\eta}_{\ell,m}$ instead.
		\item We only used that $\phi^{\eta}_{\ell,m}$ solves $\Box_\eta(\phi^{\eta}_{\ell,m}Y_{\ell,m})=0$ with respect to $u,r$-coordinates, rather than using the precise functional form of $\phi^\eta_{\ell,m}$. In particular, we can let $\phi^\eta$ be a Minkowskian solution arising from more complicated initial data! This is relevant for deducing higher-order asymptotics, see \cref{rem:time:secondorder}.
		\item Entirely analogous arguments apply to $\Box_{g_M}\phi=V(r)\phi$ with $V(r)$ a potential satisfying the asymptotics $V(r)\sim \alpha r^{-2}$, with $\alpha>-\frac{1}{4}$ (and assuming that smooth $w_{\ell}$ do not have finite energies, i.e.\ the absence of \emph{zero-energy resonances}). See \cite{gajic_late-time_2023}[Lemma 10.1].
	
	\end{itemize}
\end{rem}

Up to multiplication by a constant factor, \cref{prop:approx} provides a single approximate solution for each $\ell$. We can easily construct further approximate solutions by differentiating or integrating in $T$.
For later reference, the following observation will be important:
\begin{lemma}\label{lem:approx}
	Let $\apphi$ be as in \cref{prop:approx}. Then we have, for large $r\gg \rc$:
	\begin{equation}\label{eq:approx:time}
		rT^{-1}\apphi(\tau_0,r):=-r\int_{\tau_0}^{\infty} \apphi(\tau',r)\dd \tau'=\sum_{i=0}^{\ell-1}\frac{c^{(\ell)}_i}{r^i}+(-1)^{\ell} \binom{2\ell}{\ell}\frac{\log r}{r^{\ell}}+O(r^{-\ell})
	\end{equation}
	for some coefficients $c^{(\ell)}_i$ whose precise values do not matter.
\end{lemma}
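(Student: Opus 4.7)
The plan is to exploit the explicit form of $\apphi$ for $r \gg \rc$ (where $\tilde\tau = u$), rescale by $s = \tau'/r$, and reduce the problem to the asymptotic analysis of a single model integral as $a := \tau_0/r \to 0^+$.

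Since $w_\ell(r) = b_\ell P_\ell(r/M-1)$ is a polynomial in $r$ of degree exactly $\ell$, and the normalisation $\lim_{r\to\infty} r^{-\ell} w_\ell = 1$ fixes its leading coefficient to $1$, I would write $w_\ell(r) = r^\ell + Q_{\ell-1}(r)$ with $\deg Q_{\ell-1} \leq \ell-1$. Then $\chi(\tau'/r) r^\ell + (1-\chi(\tau'/r)) w_\ell(r) = r^\ell + (1-\chi(\tau'/r)) Q_{\ell-1}(r)$, and rescaling $s = \tau'/r$ yields
\[
-r\int_{\tau_0}^{\infty}\apphi(\tau',r)\,\dd\tau' = -\frac{I(\tau_0/r)}{r^{\ell}} - \sum_{k=0}^{\ell-1}\frac{q_k\, J(\tau_0/r)}{r^{2\ell-k}},
\]
where the $q_k$ are the coefficients of $Q_{\ell-1}(r) = \sum_{k=0}^{\ell-1} q_k r^k$ and
\[
I(a) := \int_a^{\infty}\frac{\dd s}{s^{\ell+1}(1+s)^{\ell+1}},\qquad J(a) := \int_a^{\infty}\frac{(1-\chi(s))\,\dd s}{s^{\ell+1}(1+s)^{\ell+1}}.
\]

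The heart of the proof is the small-$a$ expansion of $I(a)$. I would split $I = \int_a^1 + \int_1^\infty$, apply the binomial expansion $(1+s)^{-\ell-1} = \sum_{k\geq 0}\binom{-\ell-1}{k}s^k$ on $(0,1)$, and integrate term by term using
\[
\int_a^1 s^{k-\ell-1}\,\dd s = \begin{cases} (1-a^{k-\ell})/(k-\ell), & k\neq\ell,\\ -\log a, & k = \ell.\end{cases}
\]
The indices $k = 0,\dots,\ell-1$ produce divergent contributions of the form $(\ell-k)^{-1}a^{-(\ell-k)}$, the index $k = \ell$ produces $-\binom{-\ell-1}{\ell}\log a$, and the tail $k > \ell$ together with $\int_1^\infty$ contributes $O(1) + O(a)$. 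Substituting $a = \tau_0/r$, multiplying by $-1/r^\ell$, and using the identity $\binom{-\ell-1}{\ell} = (-1)^{\ell}\binom{2\ell}{\ell}$, I recover an expansion of the stated form, with the log-coefficient determined (up to sign) by this binomial.

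The only bookkeeping point, and a mild one, is to verify that the $J$-contributions in the second sum do not produce an extra $r^{-\ell}\log r$ term. Since $(1-\chi)$ has compact support in $[0,3/4]$, the same analysis applied to $J$ yields $J(a) = \sum_{j=1}^{\ell}\tilde{C}_j a^{-j} + \tilde{C}\log(1/a) + O(1)$ as $a \to 0^+$. After multiplication by $q_k/r^{2\ell-k}$ for $k \leq \ell-1$, each resulting power $r^{-i}$ has $i \geq 1$, so fits inside the $c_i^{(\ell)}/r^i$ sum for $1\leq i \leq \ell-1$ or into the $O(r^{-\ell})$ remainder, while the associated log contribution comes with weight $r^{-(\ell+1)}\log r$ or smaller and is likewise absorbed into $O(r^{-\ell})$. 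The whole argument is essentially an elementary calculation; no single step is really hard.
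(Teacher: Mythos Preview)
Your proof is correct and complete; the approach differs from the paper's. The paper first replaces $\apphi$ by $\phi^{\eta}_{\ell,m}$, then exploits the Minkowskian identity
\[
(r^2\Xx)^{\ell+1}\frac{r^{\ell+1}}{\tau^{\ell+1}(\tau+r)^{\ell+1}}=\frac{(2\ell+1)!}{\ell!}\Big(\frac{r}{\tau+r}\Big)^{2\ell+2},
\]
integrates in $\tau$ to get $\frac{(2\ell)!}{\ell!}\,r+\dots$, and then inverts $(r^2\Xx)$ a total of $\ell+1$ times to pick up the logarithm. Your route---the scaling $s=\tau'/r$ followed by a binomial expansion of $(1+s)^{-\ell-1}$---is entirely elementary calculus and avoids any use of the commuted equation; the paper's route is more structural but requires tracking the sign through $\ell+1$ antiderivatives in $1/r$. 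Both identify the same binomial coefficient.

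Two small remarks. First, your split $\int_a^1+\int_1^\infty$ runs into a harmless convergence issue: the binomial series for $(1+s)^{-\ell-1}$ is only conditionally convergent at $s=1$, so the tail sum $\sum_{k>\ell}\binom{-\ell-1}{k}(k-\ell)^{-1}$ need not converge. Splitting at $s=1/2$ instead (so the series converges absolutely and uniformly on $[a,1/2]$, while $\int_{1/2}^\infty$ is a harmless constant) fixes this with no change to the argument. Second, your hedge ``up to sign'' is prudent: a direct check for $\ell=0,1$ (e.g.\ $-r\int_{\tau_0}^\infty\frac{d\tau'}{\tau'(\tau'+r)}=-\log r+O(1)$) gives $(-1)^{\ell+1}\binom{2\ell}{\ell}$ rather than the $(-1)^{\ell}$ printed in the statement, and the paper's own final line in the proof also arrives at $(-1)^{\ell+1}$. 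This sign is immaterial for the use of the lemma (only the order of the logarithm and the magnitude of the coefficient matter for the subsequent subtraction argument).
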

\begin{proof}
	First of all, we can replace $\apphi$ simply by $\phi^{\eta}_{\ell,m}$, as the difference between the two is subleading in $r$ and therefore will contribute a logarithm at later order.
	We then note that $(r^2\Xx)^{\ell+1}\frac{r^{\ell+1}}{\tau^{\ell+1}(\tau+r)^{\ell+1}}=\frac{(2\ell+1)!}{\ell!}\left(\frac{r}{\tau+r}\right)^{2\ell+2}$. We may now take the $\int_{\tau_0}^{\infty}$-integral; this gives 
	\begin{equation}
		\int_{\tau_0}^{\infty}(r^2\Xx)^{\ell+1}\frac{r^{\ell+1}}{\tau^{\ell+1}(\tau+r)^{\ell+1}}\dd\tau= \frac{1}{2\ell+1}\frac{(2\ell+1)!}{\ell!}r+\dots, 
	\end{equation}
	from which the result follows after $\ell+1$ integrations in $1/r$, giving $\frac{(-1)^{\ell+1}}{\ell!}   (1/r)^{\ell} \log r \frac{(2\ell)!}{\ell!}$.
\end{proof}

\section{Proof of \cref{thm:main}}\label{sec:time}

In this section, we prove the main result of the paper, namely \cref{thm:main}. In particular, we will assume throughout this entire section that \cref{ass:main} holds. Furthermore, for ease of notation, we will assume in addition that $\beta\geq 2+\delta$. To remove this assumption, the reader simply has to replace every appearance of $2+\delta$ with $\min(\beta,2+\delta)$ below. (For the readers convenience, we keep track of $\beta$ in \cref{sec:time:first,sec:time:subtract,sec:time:second,sec:time:inbetween}.)
We begin with some preliminaries in \cref{sec:time:prelim}, and give an overview of the proof in \cref{sec:time:overview}.
\subsection{Preliminaries}\label{sec:time:prelim}
We shall prove \cref{thm:main} by iteratively improving the following assumption:
\begin{iterate}\label{ass:time}
Assume that $g$ and $\phi$ are as in \cref{ass:main}, and assume that for some $\alpha\in\R$:
	\begin{equation}\label{eq:time:apriori}\tag{$\mathfrak{I}1(\alpha)$}
		\phi\lesV\tau^{-3/2-\alpha} \min_{q\in[0,1]}\(\frac{\tau}{r}\)^q \quad \(\iff r\phi\lesV \tau^{-1/2-\alpha}\min_{q\in[0,1]} \left(\frac{r}{\tau}\right)^q\).
	\end{equation}
\end{iterate}
By \cref{ass:main}, this holds with $\alpha=-1/2$. We now make a few preliminary remarks on this assumption:
\begin{rem}
Disregarding $\tau$-decay,	\cref{eq:time:apriori} implies that $\pu\phi \lesssim D$ towards $\hplus$. We therefore have that $G^{v\beta}\partial_\beta\phi\lesssim D$ towards $\hplus$ (since $G^{vA}$ vanishes identically near $\hplus$.)
\end{rem}
\begin{rem}
	In view of \cref{eq:dyn:Guv}--\cref{eq:dyn:GAB}, \cref{eq:time:apriori} implies the following bound for the inhomogeneity in $\Box_{g_M}\phi=F$:
	\begin{equation}\label{eq:time:aprioriF}
		-\detgm F=	\partial_\mu (G^{\mu\nu}\partial_\nu \phi)\lesssim r^{-1}\tau^{-3/2-\alpha-\delta}\min_{q\in[-1,1]}\left(\frac{\tau}{r}\right)^q.
	\end{equation}
	We note that if we {did not} assume the improved $\tau$-decay for $g$ away from $\scrip$ (cf.~\cref{eq:dyn:improved1} and \cref{eq:dyn:pol:Omega}--\cref{eq:dyn:pol:trx} with $n=-1$), we would only be able to take $q\in[0,1]$ in \cref{eq:time:aprioriF}.
		Moreover, in addition to $\V$-regularity, \cref{eq:time:apriori} implies, for all $k\in\N$:
	\begin{equation}\label{eq:time:apriori:r2X}
		(r^2\Xx)^{k}\psi \lesV_k \tau^{-1/2-\alpha+k+}\min_{q\in[0,1]}\left(\frac{r}{\tau}\right)^q, \qquad 	(r^2\Xx)^{k}\phi \lesV_k \tau^{-3/2-\alpha+k+}\min_{q\in[0,1]}\left(\frac{r}{\tau}\right)^q;
	\end{equation}
	as follows directly from integrating in $u$ the $(r^2\Xx)$-commuted wave equation \cref{eq:inhom:r2Xcommuted}. 
\end{rem}
Notice that we can directly upgrade \cref{eq:time:apriori}:
\begin{lemma}\label{lemma:time:zeroth}
	Under \cref{ass:main}, \cref{eq:time:apriori} holds with $\alpha=0$.
\end{lemma}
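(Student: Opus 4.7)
The plan is to apply \cref{prop:inhom:main} (equivalently, \cref{prop:inhom:energydecay} followed by \cref{cor:inhom:decay}) to the inhomogeneous Schwarzschild wave equation $\Box_{g_M}\phi = F$, with the splitting
\begin{equation*}
F = F_1 + F_2, \qquad F_1 := -\frac{\partial_\mu(G^{\mu\nu}\partial_\nu\phi)}{\sqrt{-\det g_M}}, \qquad F_2 := \mathfrak{F}.
\end{equation*}
The strategy is to use the a priori bound \cref{eq:main:ass:phi} (which is exactly \cref{eq:time:apriori} with $\alpha = -1/2$) to verify the hypotheses of \cref{cor:inhom:decay} with $\tilde p = 2$ and $\eta = 1+\delta$, which then outputs \cref{eq:time:apriori} with $\alpha = 0$.

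First I would verify the assumption on $F_1$. Combining \cref{eq:main:ass:phi} with the componentwise bounds of \cref{lem:dyn:G}, and noting that $G^{uu} = G^{vv} = 0$ (so that $G^{\mu\nu}\partial_\nu\phi$ never features $\Xb\phi$ times a bad $r$-weight near $\hplus$), a direct calculation gives
\begin{equation*}
r F_1 \lesV r^{-2}\, \tau^{-1-\delta}\,\min_{q\in[-1,1]}\!\left(\tfrac{\tau}{r}\right)^{q},
\end{equation*}
where the improved interior $\tau$-decay of $g$ (the $n = -1$ bound in \cref{eq:dyn:pol:Omega}--\cref{eq:dyn:pol:trx} and \cref{eq:dyn:improved1}) is what allows the lower endpoint $q = -1$. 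This is exactly the hypothesis on $F_1$ in \cref{eq:inhom:assumptiononF} with $\eta = 1+\delta$. The same calculation applied after commuting with $\Vc$ upgrades this to $\lesV[\Vc]$, using $\Vc$-regularity of $g$ from \cref{ass:dyn:pol} and of $\phi$ from \cref{eq:main:ass:phi}.

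Second I would check the assumption on $F_2 = \mathfrak{F}$. The pointwise bound on $\mathfrak{F}$ coming from \cref{eq:main:assonF} with $\ell_0 = 0$ and $\beta > 2$ gives $r F_2 \lesV r^{-3}\tau^{-\beta}\min_{q\in[-2,1]}(\tau/r)^q$, and analogously for $T^n F_2$, which decays one additional $\tau$-power per $T$-derivative. Inserting this into the definitions of $\fbulkp{\tau_1}{\tau_2}{p}[T^n F_2]$ and $\int_{\C_{\tau_1}} r^{2+p} T^n(rF_2)^2\dd\mu$, and using $\beta > 2$, the resulting integrals are bounded by $\tau_1^{p-2\beta-2n+1}$ (times constants depending on $p, n$), which is stronger than $\tau_1^{p-2-2n}$ for any $p \in [0,2]$. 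In particular, $F_2 \lesV r^{-2}\tau^{-3/2}$ holds trivially.

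Third, since $\phi$ arises from smooth compactly supported data on $\tilde\Sigma_{\tilde\tau_0}$, local well-posedness and finite-speed of propagation (together with the a priori regularity in \cref{eq:main:ass:phi}) imply that $\Sigma_{\tau_0}$-data have finite, $\Vc$-commuted $r^p$-weighted energies for any $p$; in particular \cref{eq:inhom:prop:energyass} holds with $\tilde p = 2$. Applying \cref{cor:inhom:decay} with $\tilde p = 2 > 1$ and $\eta = 1+\delta$ now gives
\begin{equation*}
\psi \lesV \bigl(\tau^{-1/2} + \tau^{-1-\delta+}\bigr)\,\min_{q\in[0,1]}\!\left(\tfrac{r}{\tau}\right)^{q} \lesV \tau^{-1/2}\,\min_{q\in[0,1]}\!\left(\tfrac{r}{\tau}\right)^{q},
\end{equation*}
which is exactly \cref{eq:time:apriori} with $\alpha = 0$.

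The proof is essentially bookkeeping: the substantive analytic content lives in \cref{prop:inhom:energydecay} and \cref{cor:inhom:decay}. The only mildly delicate point is making sure that \cref{eq:main:ass:phi} is strong enough (both in the lower endpoint $q = 0$ and in the $\Vc$-regularity sense) to produce the lower endpoint $q = -1$ in the bound on $rF_1$; this is precisely what the combination of the improved interior $\tau$-decay for $g-g_M$ from \cref{ass:dyn:pol} and the $\lesV$ structure of \cref{eq:main:ass:phi} affords. No subtraction of an approximate solution and no time-integration are needed for this first zeroth step, which is why the upgrade is exactly from $\alpha = -1/2$ to $\alpha = 0$.
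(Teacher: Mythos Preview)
Your proposal is correct and takes essentially the same approach as the paper: apply \cref{prop:inhom:main} (via \cref{prop:inhom:energydecay} and \cref{cor:inhom:decay}) using the a priori bound \cref{eq:main:ass:phi} together with \cref{eq:time:aprioriF} to control $F_1$, and \cref{eq:main:assonF} to control $F_2=\mathfrak{F}$. The only cosmetic difference is that you take $\eta=1+\delta$ (which is indeed what \cref{eq:time:aprioriF} with $\alpha=-1/2$ yields) and conclude in a single application, whereas the paper records the more conservative $\eta=\delta$ and iterates; the content is the same.
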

\begin{proof}
	Since $\phi$ is compactly supported initially, and in view of \cref{eq:time:aprioriF}, we may apply \cref{prop:inhom:main} with $\eta=\delta$ and $\tilde{p}=2$, yielding that  \cref{eq:time:apriori} holds with $\alpha=-1/2+\delta-$. 
	We may iteratively apply \cref{prop:inhom:main} with an improved $\eta$ to eventually recover $\alpha=0$. 
\end{proof}

\paragraph{Preliminaries on time integrals} To increase the $\alpha$ in \cref{eq:time:apriori}, time integrals will play a crucial role. Here we provide some preliminaries on time integrals. Given a sufficiently fast decaying function $\phi$, we define the time integral via:
\begin{equation}\label{eq:time:inverse}
	T^{-1}\phi(u,v,\theta^A) =-\int_u^{\infty} \phi(u',v+u'-u,\theta^A) \dd u'=-\int_{v}^\infty \phi(u+v'-v,v')\dd v'=-\int_{\tau}^{\infty} \phi(\tau',r,\theta^A)\dd \tau'.
\end{equation}
For $n\geq 1$, we inductively define $T^{-n}\phi:=T^{-1}T^{-n+1}\phi$ whenever this is well-defined, i.e.~whenever $\phi\lesssim \tau^{-n-}$. Under this assumption, we then also have, via integration by parts: 
\begin{equation}
	T^{-n}\phi(\tau,r,\theta^A)=\frac{-1}{(n-1)!}\int_{\tau}^{\infty}(\tau-\tau')^{n-1} \phi(\tau',r,\theta^A)\dd \tau'.
\end{equation}

If $\Box_{g_M}\phi=F$, then we use that $T$ commutes with the Schwarzschild wave operator $\Box_{g_M}$ to obtain:
\begin{equation}\label{eq:time:inverseF}
	\Box_{g_M}T^{-1}\phi =-\int_{u}^{\infty}  F(u',v+u'-u)\dd u'=-\int_{v}^{\infty} F(u+v'-v,v')\dd v'.
\end{equation}

\subsection{Overview of the proof of \cref{thm:main}}\label{sec:time:overview}
As outlined in \cref{sec:sketchproof}, we will prove \cref{thm:main} by writing \cref{eq:main:wave} as an inhomogeneous wave equation on Schwarzschild. 
The general approach is as follows:
Our initial assumption (in this case  \cref{eq:time:apriori} with $\alpha=0$ in view of \cref{lemma:time:zeroth}) on $\phi$ implies a decay estimate for the inhomogeneity as in \cref{eq:time:aprioriF}. 
We then define the first time integral of $\phi$, and prove that along the initial hypersurface $\Sigma_{\tau_0}$, $T^{-1}\phi$ satisfies the assumptions of \cref{prop:inhom:main}, allowing us to deduce decay for $T^{-1}\phi$, and thus inferring improved decay for $\phi=TT^{-1}\phi$. 

Let us now be more precise. Purely for simplicity, we will first make the additional assumption that $\delta>1/2$ in \cref{ass:main}. 
We then explicitly compute the asymptotic behaviour of the initial data of the \textbf{first time integral} for fixed $\ell$-modes to infer that $r^2\Xx(T^{-1}\psi_{\ell=0})$ and $\frac{r^2}{\log r}\Xx(T^{-1}\psi_{\ell=1})$ attain finite limits, and that $r^2\Xx(rT^{-1}\psi_\ell)\lesssim 1$ for $\ell>1$.
 Together with the elliptic estimates in \cref{prop:inhom:elliptic2}, which we may apply to $\phi_{\ell\geq\ell_0}$ for a suitably large $\ell_0$, this suffices to conclude the finiteness of the $r^p$-flux of $T^{-1}\psi$ with $p=2$. 
We may then appeal to \cref{cor:inhom:decay} to show that $T^{-1}\phi$ satisfies \cref{eq:time:apriori} with $\alpha=0$, which also implies that $\phi$ satisfies \cref{eq:time:apriori} with $\alpha=1$.

In order to increase the $\alpha$ in  \cref{eq:time:apriori} further, we wish to take the \textbf{second time integral}. It will turn out, however, that for $\ell=0,1$, $r\Xx(T^{-2}\psi_\ell)$ attains a finite limit, while $\frac{r^2}{\log r}\Xx(T^{-2}\psi_{\ell})$ for $\ell\geq2$ also attains a finite limit. 
Thus, the $\ell=0$ and $\ell=1$-modes only have a finite $r^p$-flux with $p<1$, which by itself would only suffice to derive \cref{eq:time:apriori} with $\alpha=3/2-$. 

We can, however, subtract an approximate solution $\appsi[\ell\leq 1]$ (which itself also satisfies \cref{eq:time:apriori} with $\alpha=3/2$) such that the limits of $r^2\Xx(T^{-1}\widehat{\psi}^{[1]}_{\ell=0})$ and $\frac{r^2}{\log r}\Xx(T^{-1}\widehat{\psi}^{[1]}_{\ell=1})$ vanish, where $\widehat{\psi}^{[1]}=\psi-\appsi[\ell\leq1]$.
Then, the corresponding time integral of the difference will admit finite $r^p$-fluxes with $p=2$, and we can show that \cref{eq:time:apriori} holds with $\alpha=2$ for $\widehat{\psi}^{[1]}$, which, in fact, already proves the leading-order asymptotics of $\psi$ in \cref{thm:main}.

After the above procedure, we moreover obtain the validity of the following assumption with $\mathfrak{L}=2$:
\begin{iterate}\label{ass:time:n32}
	Let $\phi$ satisfy \cref{eq:time:apriori} with $\alpha=3/2$. Assume, in addition, that for $\mathfrak{L}\in\R$ satisfying $\mathfrak{L}<2+\delta$, \cref{eq:time:apriori} holds with $\alpha= \mathfrak{L} $ when $\phi$ is replaced by $\phi_{\ell\geq \lfloor \mathfrak{L}\rfloor}$, i.e.~assume
	\begin{equation}\label{eq:time:ass2}\tag{$\mathfrak{I}2(\mathfrak{L})$}
\phi\lesV \tau^{-3} \min_{q\in[0,1]}\left(\frac{\tau}{r}\right)^q\quad \text{and}	\quad	\phi_{\ell\geq \lfloor \mathfrak{L} \rfloor} \lesV \tau^{-3/2-\mathfrak{L}}\min_{q\in[0,1]}\left(\frac{\tau}{r}\right)^q
	\end{equation}
\end{iterate}
 \begin{figure}[h!t!]
	\includegraphics[width=0.30\textwidth]{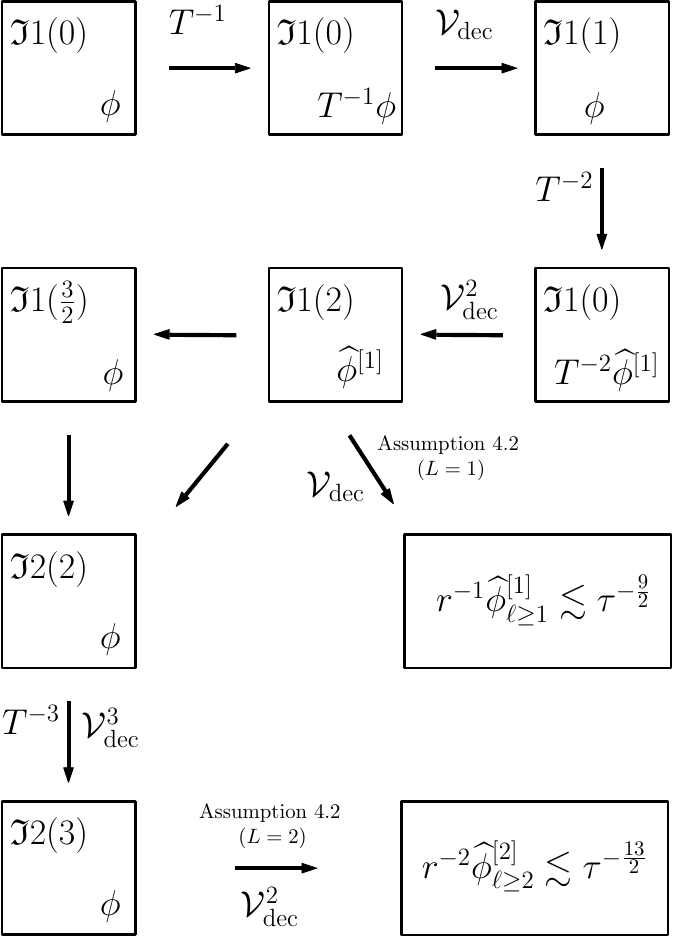}
	\caption{An overview of iteration argument for improving $\alpha$ in $\mathfrak{I}1(\alpha)$ and deriving $\mathfrak{I}2(3)$ (in the case $\delta>1$).}\label{fig:diagram1}
\end{figure}

In order to also establish the leading-order asymptotics for higher $\ell$-modes, we proceed similarly as before, but we now have to iteratively increase the $\mathfrak{L}$ in \cref{eq:time:ass2}, rather than the $\alpha$ in \cref{eq:time:apriori}.
As we will see, the maximum $\mathfrak{L}$ is restricted by the value of $\delta$; see also \cref{rem:intro:modecoupling}. See \cref{fig:diagram1} for a diagram of the logic of the proof.

We shall make the additional assumption that $\delta>1/2$ in  \cref{sec:time:first,sec:time:subtract,sec:time:second}. This assumption will be removed in \cref{sec:time:inbetween}.

\subsection{First time integration}\label{sec:time:first}
In this section, we determine the behaviour of $T^{-1}\phi$ along $\Sigma_{\tau_0}$ and show that \cref{eq:time:apriori} holds with $\alpha=1$.
\subsubsection{Initial data for first time integral, $\ell=0$}
We first consider $T^{-1}\psi_{\ell}$ with $\ell=0$.
\begin{prop}\label{prop:time:l=0:1}
Let $\delta>1/2$.	The $\ell=0$ time integral $T^{-1}\psi_{\ell=0}(\tau_0,r,\theta^A)$ along $\Sigma_{\tau_0}$ satisfies:
	
	a) 	 $r^2\pv( T^{-1}\psi_{\ell=0})(\tau_0,r)\lesV 1$ if  \cref{eq:time:apriori} holds with $\alpha=0$, and we have
		\begin{equation}\label{eq:time:l=0:firststatement}
	\lim_{r\to\infty}r^2\Xx(T^{-1}\psi_{\ell=0})|_{\Sigma_{\tau_0}}=\frac{1}{4\pi}\int_{\scrip}( \mathfrak{m}-M)\psi+\frac{\mathfrak{F}^{(4)}}{2}\dd u \dd \sigma +MC_0=:{I}_0[\phi], \quad \text{where}
	\end{equation}
 \begin{equation}\label{eq:time:C0}
		4\pi C_0=\int_{\S_{\tau_0}^{\rc}} r^2 \phi\dd \sigma+\int_{\C_{\tau_0}} r \Xx \psi -\frac{G^{u\beta}\partial_\beta\phi}{2D\sin \vartheta}\dd\mu_D+\int_{\Cbar_{\tau_0}}-r\Xb\psi-\frac{G^{v\beta}\partial_\beta\phi}{2D\sin \vartheta}\dd\mu_D+\int_{\hplus} \frac{G^{u\beta}\partial_\beta\phi}{2} \dd \tilde{\sigma} \dd v 	-\int_{\D^{2M,\infty}_{\tau_0,\infty}} r^2 \mathfrak{F}\dd \mu_D.
	\end{equation}
	
	b) If  \cref{eq:time:apriori} holds with $\alpha=1$, then we moreover have:
	\begin{equation}\label{eq:time:l=0NPconstant}
	 r^2\pv(T^{-1}\psi_{\ell=0})(\tau_0,r) -\frac{1}{4\pi}\int_{\scrip}  \mathfrak{m} \psi+ \frac{\mathfrak{F^{(4)}}}{2} \dd u \dd \sigma \lesV r^{-\max(1,\delta+1/2-)}.
	\end{equation}

\end{prop}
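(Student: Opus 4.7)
The plan is to derive and integrate a radial ODE for $T^{-1}\phi_{\ell=0}$ along $\Sigma_{\tau_0}$. Projecting \cref{eq:wave:dynamical:inhom} onto $\ell=0$ kills the $\Dl$ term in \cref{eq:Sch:wave}, so applying $T^{-1}$ (which commutes with $\Box_{g_M}$) and using $TT^{-1}\phi=\phi$ yields the radial identity along $\C_{\tau_0}$:
\begin{equation*}
\Xx\bigl(r^2 D \Xx T^{-1}\phi_{\ell=0}\bigr) \;=\; r\Xx(r\phi_{\ell=0}) \;+\; r^2\,T^{-1}F_{\ell=0},
\end{equation*}
with an analogous equation along $\Cbar_{\tau_0}$ (involving $\Xb$ and a sign flip on the first right-hand side term). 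The time integral $T^{-1}\phi$ is well-defined on $\Sigma_{\tau_0}$ and $\hplus$ thanks to \cref{lemma:time:zeroth} giving $\phi\lesssim \tau^{-3/2}$, which is integrable in $\tau$.

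For part~(a), I would integrate this ODE once in $r$ along $\C_{\tau_0}$ from $\rc$ to $r$, integrate by parts the $\int r'\Xx(r'\phi_{\ell=0})\,dr'$ piece to produce a boundary contribution $r^2\phi_{\ell=0}|_{\rc}^r$ together with $-\int r'\phi_{\ell=0}\,dr'$, and use Fubini to convert the double integral $\int r'^2 T^{-1}F_{\ell=0}\,dr'$ into a spacetime integral of $r^2 F_{\ell=0}$ over $\D_{\tau_0,\infty}^{\rc,\infty}$. The remaining boundary value $r^2 D\Xx T^{-1}\phi_{\ell=0}(\tau_0,\rc)$ is in turn obtained by integrating the ingoing variant of the ODE along $\Cbar_{\tau_0}$ from $\hplus$ to $\rc$, using that $T^{-1}\phi$ on $\hplus$ is well-defined. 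Boundedness in $r$ then follows from the assumption $\mathfrak{F}\lesssim r^{-4}\tau^{-\beta}$ with $\beta>2$ in \cref{eq:main:assonF} and the expansions for $G^{\mu\nu}$ in \cref{lem:dyn:G} (which require the standing $\delta>\tfrac12$), both of which make the relevant $r$- and $\tau$-integrals of $r^2 F_{\ell=0} = r^2\mathfrak{F}_{\ell=0} - \partial_\mu(G^{\mu\nu}\partial_\nu\phi)_{\ell=0}/(2D\sin\vartheta)$ converge.

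To identify the limit in \cref{eq:time:l=0:firststatement}, I would apply the divergence theorem to the spacetime integral of $\partial_\mu(G^{\mu\nu}\partial_\nu\phi)$ over $\D^{2M,\infty}_{\tau_0,\infty}$, producing boundary fluxes at $\scrip$, $\hplus$, and $\Sigma_{\tau_0}$. The $\scrip$-flux is evaluated using \cref{lem:dyn} via $\lim_{v\to\infty}(-TG^{uv}) = 4\mathfrak{m}-4M-\mathring{\div}b^{(1)}$ together with the limits of $G^{vA}$ and $rG^{AB}$, and becomes $\frac{1}{4\pi}\int_{\scrip}(\mathfrak{m}-M)\psi\,du\,d\sigma$ once the $\mathring{\div}b^{(1)}$ contribution vanishes on integration against the spherically symmetric $\psi_{\ell=0}$ (since $\int_{S^2}\mathring{\div}b^{(1)}\,d\tilde{\sigma}=0$). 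Extracting the leading $r^{-2}$ coefficient of $r^2\mathfrak{F}$, which equals $\mathfrak{F}^{(4)}$ by \cref{eq:main:assonF}, produces the $\tfrac{1}{8\pi}\int_{\scrip}\mathfrak{F}^{(4)}\,du\,d\sigma$ term. The remaining $\hplus$- and $\Sigma_{\tau_0}$-boundary fluxes from the divergence theorem, together with the $r^2\mathfrak{F}$ bulk term and the $\S_{\tau_0}^{\rc}$ boundary contribution from the $r$-integration by parts, then assemble precisely into $4\pi MC_0$ with $C_0$ as given in \cref{eq:time:C0}.

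For part~(b), the improved decay $\phi\lesssim\tau^{-5/2}$ from \cref{eq:time:apriori} with $\alpha=1$ gives analogous improved decay for $F_{\ell=0}$, so that the difference $r^2\pv T^{-1}\psi_{\ell=0}(\tau_0,r) - I_0[\phi]$ equals the tail of the above $r$-integrations from $r$ to $\infty$. The rate $r^{-\max(1,\delta+\tfrac12-)}$ reflects two competing error sources: the next subleading coefficient $(r^2\mathfrak{F}_{\ell=0})^{(3)} = \mathfrak{F}_{\ell=0}^{(5)}$ produces an $O(r^{-1})$ error after $r$-integration, while the $O(r^{-\delta-\tfrac12+})$ error comes from the $n=1$ expansions in \cref{lem:dyn:G} combined with the $\tau^{-5/2}$ decay of $\phi$ after the $\tau$-integration implicit in $T^{-1}F$. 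The hard part throughout is the bookkeeping in the third paragraph: carefully tracking all boundary terms from the divergence theorem into the explicit form of $C_0$ in \cref{eq:time:C0}, while using the precise gauge limits of \cref{lem:dyn} to replace the $G^{\mu\nu}$-contributions at $\scrip$ in terms of $\mathfrak{m}$, $b^{(1)}$, and $\gsh^{(1)}$ and verifying that only the $\mathfrak{m}$-contribution survives the $\ell=0$ projection.
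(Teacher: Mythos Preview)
Your outline has the right ingredients (radial ODE for $T^{-1}\phi_{\ell=0}$, integration along $\Sigma_{\tau_0}$, divergence theorem on the $G$-term), but there is a genuine error in the bookkeeping that makes the argument as written incorrect.

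Integrating the ODE $\Xx(r^2D\Xx T^{-1}\phi_{\ell=0})=r\Xx(r\phi_{\ell=0})+r^2T^{-1}F_{\ell=0}$ produces an expansion for $r^2D\Xx T^{-1}\phi_{\ell=0}$, \emph{not} for $r^2\Xx T^{-1}\psi_{\ell=0}$. One gets $r^2D\Xx T^{-1}\phi_{\ell=0}=C_0+C_1/r+O(r^{-\gamma})$, where the $\hplus$-, $\Sigma_{\tau_0}$- and $\mathfrak{F}$-bulk contributions you list assemble into $4\pi C_0$ (no factor of $M$), while the $\scri^+$-flux gives $C_1$. (Incidentally, the $\S_{\tau_0}^{\rc}$ term in $C_0$ arises from the junction identity $D\Xx-D\Xb=T$ at $r=\rc$, not from your integration by parts.) To pass from this to $r^2\Xx(T^{-1}\psi_{\ell=0})=r^2\Xx(rT^{-1}\phi_{\ell=0})$ one must expand $\Xx T^{-1}\phi_{\ell=0}=(r^2D)^{-1}(C_0+C_1/r+\ldots)=C_0/r^2+(C_1+2MC_0)/r^3+\ldots$, integrate once more to obtain $T^{-1}\phi_{\ell=0}$, and recombine; only then does the $2M/r$ in $D^{-1}$ produce the $MC_0$ contribution, yielding the limit $(C_1+2MC_0)/2$. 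Your claim that the boundary fluxes ``assemble precisely into $4\pi MC_0$'' skips this mechanism and is off by a factor of $M$.

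This gap propagates to part~(b). The statement there asserts convergence to $\frac{1}{4\pi}\int_{\scrip}\mathfrak{m}\psi+\frac{\mathfrak{F}^{(4)}}{2}$, not to $I_0[\phi]$ as you wrote, so you also need the identity $MC_0=\frac{M}{4\pi}\int_{\scrip}\psi\,du\,d\sigma$. The paper obtains this via a \emph{second} divergence theorem, applied now to the full relation $0=\sqrt{-\det g}\,\Box_g\phi=\sqrt{-\det g_M}\,\Box_{g_M}\phi+\partial_\alpha(G^{\alpha\beta}\partial_\beta\phi)$ over the whole spacetime, which converts the $\hplus$ and $\Sigma_{\tau_0}$ contributions in $C_0$ into the missing $\int_{\scrip}\psi$. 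You do not mention this step, and without it the alternative form of the constant in (b) remains unestablished.
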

\begin{rem}
	Notice that \cref{eq:time:l=0NPconstant} provides more detailed asymptotics than \cref{eq:time:l=0:firststatement} and also gives an alternative form for ${I}_0[\phi]$ (which requires an improved $\alpha$ in \cref{eq:time:apriori}).
\end{rem}

\begin{proof}
\textit{Proof of a):} We first give the proof with $\mathfrak{F}=0$. Recall from \cref{eq:dyn:wave:X}, \cref{eq:time:inverse,eq:time:inverseF} that:
\begin{equation}\label{eq:time:l=0:waveeq}
	\Xb(r^2D\Xb T^{-1}\phi)+r\Xb(r\phi)+\Dl T^{-1}\phi=\int_{\tau}^{\infty}\frac{\partial_\alpha\left(G^{\alpha\beta}\partial_\beta\phi\right)}{2D\sin \vartheta}\dd \tau'=	\Xx   (r^2D\Xx   T^{-1}\phi)-r\Xx   (r\phi)+\Dl T^{-1}\phi.
\end{equation}
In order to find the behaviour of $T^{-1}\phi_{\ell=0}$, we integrate over $S^2$ and along $\partial_r|_{\Sigma}$ up until some $r_\infty>\rc$; this gives
\begin{nalign}\label{eq:time:l=0:proof1}
	\int_{\S_{\tau_0}^{r_{\infty}}} r^2 D\Xx T^{-1}\phi \dd \sigma=
	&\int_{\S_{\tau_0}^{\rc}} r^2 \phi\dd \sigma
	+\int_{\C_{\tau_0}^{\rc, r_{\infty}}} r \Xx (r\phi) \dd r \dd \sigma
	-\int_{\Cbar_{\tau_0}}
r\Xb(r\phi)\dd r \dd \sigma
	+\int_{\D_{\tau_0,\infty}^{2M, r_{\infty}}}  \frac{\partial_\alpha\left(G^{\alpha\beta}\partial_\beta\phi\right)}{2D\sin \vartheta}\dd r\dd \tau\dd \sigma\\
	=&\int_{\S_{\tau_0}^{\rc}} r^2 \phi\dd \sigma+\int_{\C_{\tau_0}^{\rc, r_{\infty}}} r \Xx (r\phi) -\frac{G^{u\beta}\partial_\beta\phi}{2D\sin \vartheta}\dd r \dd \sigma+\int_{\Cbar_{\tau_0}}-r\Xb(r\phi)-\frac{G^{v\beta}\partial_\beta\phi}{2D\sin \vartheta} \dd r \dd \sigma\\
	&\qquad\qquad \qquad +\frac12 \int_{\hplus} G^{u\beta}\partial_\beta\phi \dd \tilde{\sigma} \dd v+\frac12\int_{\Gamma_{r_{\infty}}} (G^{v\beta}-G^{u\beta})\partial_\beta\phi \dd \tilde{\sigma} \dd u,
\end{nalign}
where we have used that $G^{v\beta}\partial_\beta\phi\lesssim D$, that the $\partial_A$ derivatives lead to a total angular derivative and therefore do not contribute, that the boundary terms as $\tau\to\infty$ vanish by our a priori decay assumption, and finally, that 
\begin{equation}
		\int_{\S_{\tau_0}^{\rc}}r^2 D \Xx   T^{-1}\phi \dd \sigma =\int_{\S_{\tau_0}^{\rc}} r^2 D\Xb T^{-1}\phi+r^2\phi \dd \sigma
\end{equation}
along with the fact that $D\Xb T^{-1}\phi$ vanishes at the horizon.
We now convert the integral along $\Gamma_{r_{\infty}}$ into an integral along $\scrip$: Observe
\begin{align}\label{eq:time:l=0:proof2}\nonumber
&	\int_{\Gamma_{r_{\infty}}} (G^{v\beta}-G^{u\beta})\partial_\beta\phi \dd \tilde{\sigma} \dd u =	\int_{\Gamma_{r_{\infty}}} G^{uv}(T\phi-2\pv\phi)+(G^{vA}-G^{uA})\partial_A\phi  \dd \tilde{\sigma} \dd u\\
	&=\int_{\S_{\tau_0}^{r_{\infty}}} -G^{uv}\phi \dd \tilde{\sigma}+\int_{\Gamma_{r_{\infty}}}-TG^{uv}\phi-2G^{uv}\pv\phi+(G^{vA}-G^{uA})\partial_A\phi \dd u \dd\tilde{\sigma}\\
	=&2\cdot \frac{4\pi C_1}{r_{\infty}}
	+\frac1{r_{\infty}}\int_{\C_{\tau_0}^{r_{\infty},\infty}} 2\pv (G^{uv}r \phi )\dd v \dd \tilde{\sigma}-\frac1{r_{\infty}}\int_{r_{\infty}}^{\infty} \int_{\Gamma_{r_{\infty}}} \Xx \(-TG^{uv}r\phi -2G^{uv}r\pv\phi+(G^{vA}-G^{uA})\partial_A(r\phi)\) \dd u \dd \tilde{\sigma},\nonumber
\end{align}
where
\begin{equation}
2\cdot 4\pi 	C_1=\int_{\S_{\tau_0}^{\infty}}-G^{uv}r\phi\dd\tilde{\sigma}+\int_{\scrip} -TG^{uv}r\phi+G^{vA}\partial_A(r\phi) \dd u \dd \tilde{\sigma}.
\end{equation}
The last two terms in \cref{eq:time:l=0:proof2} decay like $O(r^{-1/2-\delta+}+r^{-2})$ by  \cref{eq:time:apriori} with $\alpha=0$.\footnote{For instance, when integrating $\Xx (G^{vA}\partial_A(r\phi))\lesssim r^{-1} u^{-\delta-1/2}\min_{q\in[-1,1]}(u/r)^q$, we take $q=\min(1/2-\delta+,1)$ so that the double integral is bounded by $r^{1/2-\delta+}+r^{-1}$.} Further, notice that under \cref{ass:dyn:pol}, $G^{uA}=0$ near $\scrip$.
Thus, combining \cref{eq:time:l=0:proof1} and \cref{eq:time:l=0:proof2}, we have now established that along $\Sigma_{\tau_0}$, for $C_0$ as in \cref{eq:time:C0}, and for $\gamma=\min(2,1/2+\delta-)$:
\begin{nalign}\label{eq:time:l=0:NP}
	r^2D\Xx   T^{-1}\phi_{\ell=0}&=C_0+\frac{C_1}{r}+O(r^{-\gamma})\implies \Xx   T^{-1}\phi_{\ell=0}= \frac{C_0}{r^2}+\frac{C_1+2MC_0}{r^3}+O(r^{-\gamma-2})\\
	\implies -T^{-1}\phi_{\ell=0}& =\frac{C_0}{r}+\frac{C_1+2MC_0}{2r^2}+O(r^{-\gamma-1})\implies r^2\Xx  (rT^{-1}\phi_{\ell=0})=\frac{(C_1+2MC_0)}{2}+O(r^{-\gamma+1}),
\end{nalign}
where we have used in the second line that the limit of $T^{-1}\phi_{\ell=0}$ towards $\scrip$ vanishes. This already proves \cref{eq:time:l=0:firststatement}.

\textit{Proof of b):} In order to also prove \cref{eq:time:l=0NPconstant} under the stronger decay assumption \cref{eq:time:apriori} with $\alpha=1$, we convert the terms in \cref{eq:time:l=0:proof1} (i.e.~the $C_0$-term in \cref{eq:time:C0}) into an integral along $\scrip$ by applying the divergence theorem to the identity:
\begin{equation}\label{eq:time:l=0:rewrite0}
0=	\sqrt{-\det g}\Box_{g}\phi=\sqrt{-\det g_M}\Box_{g_M}\phi+\partial_\alpha(G^{\alpha\beta}\partial_\beta\phi).
\end{equation}

On the one hand, we have
\begin{nalign}
	&\frac12\int_{u_0}^{\infty}\int_{v_0}^{\infty}\int_{S^2} \sqrt{-\det g_M} \Box_{g_M}\phi \dd \vartheta\dd \varphi \dd u \dd v\\
	=&\int_{v_0}^{\infty}\int_{2M}^{\rc}\int_{S^2} \Xb (Dr^2\Xb \phi)+Tr\Xb (r\phi)+\Dl\phi \dd \sigma \dd r \dd v+2\int_{u_0}^{\infty}\int_{{\rc}}^\infty \int_{S^2}\Xx   (Dr^2\Xx   \phi)-Tr\Xx   (r\phi)+\Dl\phi \dd \sigma \dd r \dd u \\
	=&\int_{\Gamma_{\rc}} Dr^2(\Xb-\Xx) \phi \dd u \dd \sigma
	+\int_{\scrip}Dr^2\Xx   \phi\dd u \dd \sigma  -\int_{\Cbar_{\tau_0}} r\Xb(r\phi) \dd r \dd \sigma +\int_{\C_{\tau_0}} r\Xx  ( r\phi) \dd r' \dd \sigma	\\
	&{+\lim_{\tau_\infty\to\infty}\int_{\Cbar_{\tau_{\infty}}} r\Xb (r\phi)  \dd r \dd \sigma-\lim_{\tau_\infty \to \infty}\int_{\C_{\tau_{\infty}}} r\Xx  (r\phi) \dd r \dd \sigma}
\end{nalign}
The first limit vanishes directly by our decay assumptions; for the second limit, we integrate  $\pu\pv(r\phi_{\ell=0})=-2MDr^{-3} r\phi_{\ell=0}+O(r^{-3}\tau^{-3/2-\delta})$ to see that it also vanishes.
Using also that $D\Xb-D\Xx   =-T$ and that $r^2\Xx\phi=-r\phi+O(1/r)$, we further simplify the $\Gamma_{\rc}$ and the $\scrip$ integral as follows:

\begin{equation}\label{eq:time:l=0:rewrite4}
	\int_{\Gamma_{\rc}}-T(r^2\phi)\dd u \dd \sigma=\int_{\S_{\tau_0}^{\rc}}r^2\phi \dd \sigma,
	\qquad \int_{\scrip}Dr^2\Xx   \phi\dd u \dd \sigma=-\int_{u_0, r=\infty}^{\infty} r\phi \dd u \dd \sigma.
\end{equation}

On the other hand, we have
\begin{multline}\label{eq:time:l=0:rewrite5}
\frac12	\int_{u_0}^{\infty}\int_{v_0}^{\infty}\int_{S^2} \partial_\alpha(G^{\alpha\beta}\partial_\beta\phi) \dd \vartheta\dd \varphi \dd u \dd v\\=\frac12
	\int_{\mathcal{H}^+} G^{u\beta}\partial_\beta \phi\dd v \dd\tilde{\sigma}- \frac12\int_{\C_{\tau_0}} G^{u\beta}\partial_\beta \phi\dd v \dd\tilde{\sigma}
	+\frac12\int_{\scrip}G^{v\beta}\partial_\beta\phi \dd u \dd\tilde{\sigma}-\frac12\int_{\Cbar_{\tau_0}}G^{v\beta}\partial_\beta\phi \dd u \dd\tilde{\sigma}.
\end{multline}

Combining \cref{eq:time:l=0:rewrite0}--\cref{eq:time:l=0:rewrite5} allows us to re-write the horizon term as
\begin{multline}\label{eq:time:rewrite:5}
\frac12	\int_{\mathcal{H}^+}G^{u\beta}\partial_\beta \phi\dd u \dd\tilde{\sigma}=\frac12\int_{\C_{\tau_0}} G^{u\beta}\partial_\beta \phi\dd v \dd\tilde{\sigma}
	-\frac12\int_{\scrip}G^{v\beta}\partial_\beta\phi \dd u \dd\tilde{\sigma}+\frac12\int_{\Cbar_{\tau_0}}G^{v\beta}\partial_\beta\phi \dd u \dd\tilde{\sigma}\\
	-\left(-\int_{\Cbar_{\tau_0}} r\Xb (r\phi) \dd r \dd \sigma +\int_{\C_{\tau_0}} r\Xx   r\phi \dd r \dd \sigma -\int_{\scrip} r\phi \dd u \dd \sigma+\int_{\S_{\tau_0}^{\rc}}r^2\phi \dd \sigma\right)
\end{multline}

Combining \cref{eq:time:rewrite:5} with \cref{eq:time:l=0:proof1,eq:time:l=0:proof2}, we finally have
\begin{nalign}
	\int_{\S_{\tau_0}^{r_{\infty}}}r^2 D\Xx   T^{-1}\phi (u_0,r)\dd \sigma\
	&=
\int_{\scrip} r\phi \dd u \dd \sigma	+\frac1{2r_{\infty}}\(\int_{\S_{\tau_0}^{\infty}}-G^{uv}r\phi\dd\tilde{\sigma}+\int_{\scrip} -TG^{uv}r\phi+G^{vA}\partial_A(r\phi) \dd u \dd \tilde{\sigma} \)\\
&	- \int_{\C_{\tau_0}^{r_{\infty},\infty}} r \Xx  (r\phi)-\frac{G^{u\beta}\partial_\beta\phi}{2D\sin \vartheta} \dd \sigma \dd r+O(r^{-2}).
\end{nalign}
Thus, as in \cref{eq:time:l=0:NP} (where we may now take $\gamma=\min(2,3/2+\delta-)$ as we can replace $\delta+0$ by $\delta+1$), we  conclude that
\begin{equation}
4\pi	r^2\Xx  (T^{-1}\psi_0)(u_0,r)=\frac12\int_{\scrip} (\frac{-TG^{uv}}{2\sin \vartheta}+2M)r\phi+\frac12 \frac{G^{vA}\partial_A(r\phi)}{\sin \vartheta} \dd u \dd \sigma+O(r^{-\gamma})
	=\int_{\scrip}  \mathfrak{m} r\phi (u,\theta^A) \dd u \dd\sigma+O(r^{-\gamma}),
\end{equation}
where we have integrated by parts over the sphere and used \cref{lem:dyn} in the final step.

\textit{The case $\mathfrak{F}\neq0$:} If we now also allow for $\mathfrak{F}\neq 0$, then we get in \cref{eq:time:l=0:proof1} an additional term of the form
\begin{equation}
-	\int_{\D^{2M,r_{\infty}}_{\tau_0,\infty}} r^2 F\dd \mu_D=\frac{\int_{\scrip} \mathfrak{F^{(4)}}\dd u \dd \sigma}{r_{\infty}}-	\int_{\D^{2M,\infty}_{\tau_0,\infty}} r^2 \mathfrak{F}\dd \mu_D.
\end{equation}
When applying the divergence theorem to \cref{eq:time:l=0:rewrite0} (with LHS=$\detgm \mathfrak{F}$) {in the proof of \textit{b)}}, the spacetime integral over $\mathfrak{F}$ will then cancel out.
\end{proof}
\begin{lemma}\label{prop:time:rewritingl=0}
	Under the assumption of \cref{prop:time:l=0:1} \textit{a)}, we have (denoting $n_{\Cbar}^{\mu}=g^{\mu \nu} (\dd v)_\nu$, $n_{\C}^{\mu}=g^{\mu \nu} (\dd u)_\nu$)
	\begin{multline}
		4\pi {\integral}_0[\phi]=\int_{\scrip} ( \mathfrak{m}-M)\psi +\frac12\mathfrak{F}^{(4)}\dd u \dd \sigma +\frac12\int_{\hplus} \otrx \phi \detgs \dd\tilde{\sigma}\dd v\\
		+\frac{M}2\int_{\M} r^2\mathfrak{F} \dd \mu_D+\frac{M}2 \int_{\S_{\tau_0}^{2M}}  \phi \detgs \dd\tilde{\sigma} +\frac{M}2\int_{\Cbar_{\tau_0}} n_{\Cbar}^\mu \partial_\mu \phi \detgs \dd\tilde{\sigma} \dd u+\frac{M}2\int_{\C_{\tau_0}} n_{\C}^\mu \partial_\mu \phi \detgs \dd\tilde{\sigma} \dd v.
	\end{multline}
\end{lemma}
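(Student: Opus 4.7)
The strategy is to apply the divergence theorem to the identity $\partial_\alpha(\sqrt{-\det g}\,g^{\alpha\beta}\partial_\beta\phi) = \sqrt{-\det g_M}\,\mathfrak{F}$, which is the spelled-out form of $\sqrt{-\det g}\,\Box_g\phi = \sqrt{-\det g_M}\,\mathfrak{F}$, integrated over $\M\cap\{r\leq r_\infty\}$ with respect to $du\,dv\,d\tilde{\sigma}$, and to combine the resulting identity in the limit $r_\infty\to\infty$ with the expression \eqref{eq:time:C0} for $4\pi C_0$ in order to rewrite $4\pi M C_0$ in the desired form.

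The main steps are as follows. On $\hplus$, where $g$ is in the form \eqref{eq:dyn:doublenullbdv} and hence $G^{uA} = G^{uu} = 0$ and $G^{uv} = -(\detgs - \detgsm)$, the $\hplus$-term in $C_0$ equals $\frac{1}{2}\int_\hplus G^{u\beta}\partial_\beta\phi\,dv\,d\tilde{\sigma} = -\frac{1}{2}\int_\hplus(\detgs - \detgsm)\pv\phi\,dv\,d\tilde{\sigma}$; integrating by parts in $v$ using $\pv\detgs = \otrx\detgs$ from \eqref{eq:dyn:DlogG}, together with $\otrx_M|_\hplus = 0$ and the decay of $\phi$ as $v\to\infty$, produces the claimed $\int_\hplus\otrx\phi\detgs\,dv\,d\tilde{\sigma}$-contribution modulo a sphere term at $\S_{\tau_0}^{2M}$. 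On $\Cbar_{\tau_0}$ and $\C_{\tau_0}$, I would rewrite the $r\Xb\psi$, $r\Xx\psi$, and $G^{\mu\beta}\partial_\beta\phi$ contributions in $C_0$ in terms of $n_\Cbar^\mu\partial_\mu\phi\cdot\detgs$ and $n_\C^\mu\partial_\mu\phi\cdot\detgs$, using $n_\Cbar^\mu = g^{\mu v}$, $n_\C^\mu = g^{\mu u}$ together with $\sqrt{-\det g} = 2\Omega^2\detgs$ and the decomposition $r\Xx\psi\,d\mu_D = r\Xx(r\phi)\,dr\,d\sigma$ on $\C_{\tau_0}$ (and analogously for $\Cbar_{\tau_0}$). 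Finally, the $\Gamma_{r_\infty}$-boundary arising from restricting to $\{r\leq r_\infty\}$ is identified, in the limit $r_\infty\to\infty$, with $4\pi C_0$ via the leading-order asymptotic $r^2 D\Xx T^{-1}\phi_{\ell=0}|_{\Sigma_{\tau_0}}\to C_0$ established in the proof of Proposition \ref{prop:time:l=0:1}(a).

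The main obstacle is careful bookkeeping: one needs to match the various factors of $\Omega^2$ and $D$ that appear in the passage between coordinate-form fluxes $\sqrt{-\det g}\,g^{\mu\nu}\partial_\nu\phi$ and the geometric fluxes $\detgs\cdot n^\mu\partial_\mu\phi$ appearing in the target, to produce the correct coefficient $\frac{M}{2}$ in the bulk term $\frac{M}{2}\int_\M r^2\mathfrak{F}\,d\mu_D$ via the interplay of the divergence-theorem bulk identity $\frac{1}{2}\int_\M\sqrt{-\det g_M}\mathfrak{F}\,du\,dv\,d\tilde{\sigma} = \int_\M r^2\mathfrak{F}\,d\mu_D$ with the $-\int r^2\mathfrak{F}\,d\mu_D$ term already present in $C_0$, and to correctly handle the $\scrip$-boundary in a regime where only the leading-order constant $C_0$ of the $\frac{1}{r}$-expansion is controlled under the $\alpha=0$ decay hypothesis of Proposition \ref{prop:time:l=0:1}(a). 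Once these factors are correctly reconciled, the claimed identity follows by purely algebraic manipulation.
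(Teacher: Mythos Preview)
Your strategy—a global divergence theorem on $\sqrt{-\det g}\,\Box_g\phi$ combined with \eqref{eq:time:C0}—differs from the paper's, which proceeds by direct term-by-term rewriting of \eqref{eq:time:C0}: (i) expand the horizon term $\frac12\int_{\hplus}G^{u\beta}\partial_\beta\phi$ and integrate by parts in $v$ (using \eqref{eq:dyn:DlogG} with $u\leftrightarrow v$) to produce $\frac12\int_{\hplus}\otrx\,\phi\,\detgs$ plus a sphere contribution at $\S_{\tau_0}^{2M}$; (ii) rewrite the $\S_{\tau_0}^{\rc}$ term as the average of its two fundamental-theorem-of-calculus expressions from $r=2M$ and from $r=\infty$ (using that $r^2\phi\to 0$ at $\scrip$); then insert into \eqref{eq:time:C0} together with \eqref{eq:dyn:Gspelledout}. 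No spacetime divergence theorem enters.

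Your proposal has two concrete problems. First, near $\hplus$ the metric is in the form \eqref{eq:dyn:doublenullbdv}, for which $g^{vA}=0$ but $g^{uA}=-b^A/(2\Omega^2)\neq 0$; correspondingly $\pv\log\detgs=\otrx-\div^g b$ there, not $\otrx$. Your two errors happen to cancel after integrating over the sphere, but the reasoning is incorrect. Second, and more substantively, the claim that the $\Gamma_{r_\infty}$-boundary is identified with $4\pi C_0$ does not make sense: $C_0$ is defined as $\lim r^2D\Xx T^{-1}\phi_{\ell=0}$, involving the \emph{time integral} of $\phi$, whereas the divergence theorem for $\sqrt{-\det g}\,\Box_g\phi$ yields fluxes of $\phi$ itself (in the limit, simply $-\int_{\scrip}\psi$, as in \eqref{eq:time:l=0:rewrite4}). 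Without this identification your argument does not close, and the bookkeeping you flag—the factors of $2\Omega^2$ relating $\sqrt{-\det g}\,g^{\mu\nu}\partial_\nu\phi$ to $\detgs\,n^\mu\partial_\mu\phi$, and the origin of the $M/2$ coefficients—remains unresolved.
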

\begin{proof}
	We re-write the terms appearing in \cref{eq:time:C0} as follows:
For the integral along $\hplus$, we write
\begin{nalign}
	&\frac12 \int_{\hplus} G^{u\beta}\partial_\beta \phi \dd \tilde{\sigma}\dd u=\frac12 \int_{\hplus} \detgsm \pv\phi -\detgs (\pv\phi +b^A\sl^g_A \phi) \dd \tilde{\sigma}\dd v\\
	=&\frac12 \int_{\hplus} \pv\big((\detgsm -\detgs)\phi\big) +\pv(\detgs) \phi +\detgs \,\div^g (b) \phi) \dd \tilde{\sigma}\dd v\\
	=&\frac12 \int_{\S_{\tau_0}^{2M}} (\detgsm -\detgs)\phi\dd \tilde{\sigma}+\frac12 \int_{\hplus} \otrx \detgs \dd v \dd\tilde{\sigma},
\end{nalign}
where we inserted the expressions for $G^{u\beta}$ from \cref{eq:dyn:Gspelledout} and  used \cref{eq:dyn:DlogG} (with $u$ and $v$ interchanged).

For the integral along $\S_{\tau_0}^{\rc}$ appearing in \cref{eq:time:C0}, we use that $\lim_{r\to\infty}\int_{\S_{\tau_0}^r} r^2\phi \dd \tilde{\sigma}=0$ to write
\begin{equation}
	\int_{\S_{\tau_0}^{\rc}} r^2\phi \dd \sigma = \frac12\Big( \Big(\int_{\S_{\tau_0}^{2M}} r^2\phi \dd \sigma +\int \Xb(r^2\phi) \dd \mu_D \Big) -\int_{\C_{\tau_0}} \Xx(r^2\phi) \dd \mu_D\Big),
\end{equation}

The result then follows by  inserting the two identities above, along with the expressions \cref{eq:dyn:Gspelledout}, into \cref{eq:time:C0}. 
\end{proof}

\subsubsection{Initial data for first time integral, $\ell>0$}
We now consider $T^{-1}\psi_{\ell}$ with $\ell>0$. Recall the definition of $H_{\ell,m}^{(\ell+1)}$ in \cref{eq:main:thm:integralexpression} (see also \cref{eq:time:proof:loggenerator}):
\begin{prop}\label{prop:time:l>0:1}
Let $\delta>1/2$ and let $\ell\geq 1$. Then the time integral $T^{-1}\phi_{\ell}(\tau_0,r,\vartheta,\varphi)$ satisfies along $\Sigma_{\tau_0}$:
	
	a) If  \cref{eq:time:apriori} holds with $\alpha=0$, then $r^2\pv(r T^{-1}\psi_{\ell})|_{\Sigma_{0}}\lesV 1$ for $\ell\geq 2$, while, for $\ell=1$: 
		\begin{equation}\label{eq:time:prop:first:l=1}
	T^{-1}\psi_{\ell=1,m}(\tau_0,r)+\frac{\int_{\scrip} H_{\ell=1,m}^{(2)} \dd u \dd \tilde{\sigma}}{2\ell+1}\frac{\log r}{r}\lesV r^{-1}.
	\end{equation}

	b) If  \cref{eq:time:apriori} holds for some $\alpha\in\R_{\geq0}$ with $\alpha<\beta$, then we have, for $ \ell\geq \alpha+2$: $(r^2\Xx)^{\lfloor \alpha\rfloor+1}( T^{-1}\psi_{\ell})|_{\Sigma_0} \lesV 1$. Furthermore,  for $1\leq \ell <\max( \alpha+1/2+\delta,\beta)$, we have
	\begin{equation}\label{eq:time:proplog}
	T^{-1}\psi_{\ell,m}(\tau_0,r)-\sum_{j=1}^{\ell-1} \frac{c^{(\ell)}_i}{r^i} +\frac{\int_{\scrip} H_{\ell,m}^{(\ell+1)} \dd u \dd \tilde{\sigma}}{2\ell+1}\frac{\log r}{r^{\ell}}\lesV r^{-\ell}.
	\end{equation}
		
	c) In the case where  \cref{eq:time:apriori} holds with $\alpha=3/2$,  and if  $\min(2+\delta,\beta)>\mathfrak{L}\in \N$, then we have, for $\ell\geq \mathfrak{L}+2$: $(r^2\Xx)^{\mathfrak{L}+1}(T^{-1}\psi_\ell)|_{\Sigma_{\tau_0}}\lesV 1$. On the other hand, for $\ell<\min(2+\delta,\beta)$, we have $(r^2\Xx)^{\ell}(T^{-1}\psi_\ell)|_{\Sigma_{\tau_0}}\lesV \log r$.
\end{prop}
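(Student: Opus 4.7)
The plan is to adapt the strategy of \cref{prop:time:l=0:1} by using the twisted operator $\check{\mathcal{L}}_\ell$ of \cref{eq:Sch:EllEllbar} to eliminate the Laplacian term for $\ell>0$. Projecting $\Box_{g_M}T^{-1}\phi = T^{-1}F$ onto $Y_{\ell,m}$ and multiplying by $w_\ell$ recasts the ODE along $\C_{\tau_0}$ as an exact $\Xx$-derivative,
\begin{equation*}
\Xx\bigl(r^2 D w_\ell^{2}\, \Xx (w_\ell^{-1} T^{-1}\phi_{\ell,m})\bigr) \;=\; w_\ell\, r\,\Xx\psi_{\ell,m} \;+\; w_\ell\, r^2\, (T^{-1}F)_{\ell,m}.
\end{equation*}
Integrating from $\rc$ to $r$ then expresses $r^2 D w_\ell^{2}\,\Xx(w_\ell^{-1}T^{-1}\phi_{\ell,m})$ as a boundary value at $\rc$ plus two bulk integrals. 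I will treat the boundary value at $\rc$ by performing the analogous integration along $\Cbar_{\tau_0}$ and applying the divergence theorem in the bounded-$r$ region, exactly as in \cref{prop:time:l=0:1}; since $r$ is bounded there, no logarithm is produced.

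The main analytic work is the asymptotic analysis of the two bulk integrals as $r\to\infty$. I plan to use three inputs: (i) $w_\ell = b_\ell P_\ell(r/M-1)$ is a polynomial of degree $\ell$ in $r$; (ii) $\psi_{\ell,m}|_{\C_{\tau_0}}$ admits a $1/r$-expansion to sufficiently high order, inherited from the compactly supported initial data on $\tilde{\Sigma}_{\tilde{\tau}_0}$ together with the conformal regularity $N=\infty$ of \cref{ass:main}(I); (iii) $F = \mathfrak{F} - \partial_\mu(G^{\mu\nu}\partial_\nu\phi)/\detgm$ has a $1/r$-expansion controlled by \cref{lem:dyn:G} and \cref{eq:main:assonF}. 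Each $1/r^i$-coefficient of the integrand antidifferentiates to a polynomial in $r$ except at $i=1$, which produces a logarithm; combined with the polynomial structure of $w_\ell$ and the final multiplication by $w_\ell^{-1}\cdot r$, the $\log r$-contribution reorganises to appear in $T^{-1}\psi_{\ell,m}$ at order $\log r/r^\ell$, with the remainder bounded by $r^{-\ell}$.

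To identify the $\log r$-coefficient with $(2\ell+1)^{-1}\int_{\scrip}H^{(\ell+1)}_{\ell,m}\,du\,d\tilde\sigma$, I will isolate the $r^{-1}$-coefficient of $w_\ell r\,\Xx\psi_{\ell,m} + w_\ell r^2\,(T^{-1}F)_{\ell,m}$. For the $T^{-1}F$-term, I use that $T = \partial_\tau|_r$ along constant-$r$ curves to integrate by parts in $\tau$: the $\tau\to\infty$ boundary vanishes by \cref{eq:time:apriori}, while the $\tau=\tau_0$ boundary combines with the $w_\ell r\,\Xx\psi_{\ell,m}$ contribution. Inserting the explicit form of $G^{\mu\nu}$ from \cref{eq:dyn:Gspelledout} and integrating by parts on spheres then recovers exactly the integrand $H^{(\ell+1)}_{\ell,m}$ of \cref{eq:main:thm:integralexpression} along $\scrip$; the prefactor $1/(2\ell+1)$ arises from dividing by the leading $r^{2\ell+2}$-coefficient of $r^2Dw_\ell^{2}$ and antidifferentiating $r^{-\ell-2}$ in the final step.

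Parts (b) and (c) will then follow by refining the same estimates. Under $\mathfrak{I}1(\alpha)$ with $\alpha>0$, \cref{eq:time:aprioriF} gives faster $\tau$-decay of $F$, making the bulk integrands integrable against additional $r$-weights and pushing remainder terms to higher $r$-orders; a careful reading shows that for $\ell \geq \alpha+2$ the logarithm appears sufficiently deep in the $1/r$-expansion that $(r^2\Xx)^{\lfloor\alpha\rfloor+1}(T^{-1}\psi_{\ell,m})$ remains bounded. Part (c) iterates the same scheme with higher-order time integrals $T^{-n}\phi$, which are well-defined thanks to the $\tau^{-3}$-decay implied by $\mathfrak{I}1(3/2)$, and uses the $r^2\Xx$-commuted elliptic estimates of \cref{prop:inhom:elliptic2} to trade $r$-regularity for further inverse time integrals at the level of initial data. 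I expect the main obstacle to be the bookkeeping in the third paragraph: extracting the precise $r^{-1}$-coefficient requires a careful combination of the polynomial expansion of $w_\ell$, the $1/r$-expansions of $G^{\mu\nu}$ and $\mathfrak{F}$, and the boundary terms produced by the $\tau$-integration of $\partial_\mu(G^{\mu\nu}\partial_\nu\phi)$, all while verifying that the outcome matches \cref{eq:main:thm:integralexpression} verbatim.
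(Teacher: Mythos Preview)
Your plan for parts (a) and (b) is essentially the paper's own approach: twist by $w_\ell$, integrate along $\C_{\tau_0}$, and analyse the resulting spacetime integral to extract the $\log r$-coefficient. One point of difference worth noting: the paper does not integrate by parts in $\tau$ as you suggest, but rather exploits the divergence structure of $\partial_\alpha(G^{\alpha\beta}\partial_\beta\phi)$ directly. Integrating the spacetime term $\int w_\ell\,\partial_\alpha(G^{\alpha\beta}\partial_\beta\phi)\,Y_{\ell,m}$ by parts in $\alpha$ produces boundary terms on $\hplus$, $\Gamma_{r_\infty}$, $\C_{\tau_0}$, $\Cbar_{\tau_0}$ plus bulk terms where the derivative hits $w_\ell'$ and $\partial_A Y_{\ell,m}$; these bulk terms are precisely what generates $H^{(\ell+1)}_{\ell,m}$. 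Your $\tau$-integration route can be made to work, but the identification with \eqref{eq:main:thm:integralexpression} is cleaner via the divergence structure.

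There is a genuine misreading in your treatment of part (c). The statement still concerns the \emph{first} time integral $T^{-1}\psi_\ell$, merely under the stronger hypothesis $\mathfrak{I}1(3/2)$; higher-order time integrals $T^{-n}\phi$ with $n\geq 2$ play no role here, and the elliptic estimate of \cref{prop:inhom:elliptic2} is not invoked at this stage (it enters later, when summing over large $\ell$). Part (c) is simply part (b) specialised to $\alpha=3/2$: for $\ell<2+\delta$ the conclusion is literally that of (b), while for $\ell\geq\mathfrak{L}+2$ one observes that $H^{(\mathfrak{L}+1)}_{\ell,m}\lesssim u^{-1-}$ under $\mathfrak{I}1(3/2)$, so the $1/r$-expansion of $T^{-1}\psi_\ell$ persists to order $r^{-\mathfrak{L}-1}$ without logarithms.
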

\begin{proof}
To also cover the computations for higher time integrals later on, we provide formulae for $T^{-N}\phi$ for a general $N\in\N$.  The computations below are then justified if  \cref{eq:time:apriori} holds for $\alpha=N$; we will  make this assumption for now. 

\textit{Proof of a):}  As before, we will first give the proof with $\mathfrak{F}=0$. Analogously to \cref{eq:time:l=0:waveeq}, we have
\begin{multline}
	\Xx   (r^2 Dw_\ell^2\Xx   T^{-N-1}\check{\phi}_{\ell,m})-rw_{\ell}\Xx   (T^{-N}r\phi_{\ell,m})	=\Xb(r^2D w_{\ell}^2 \Xb T^{-N-1}\philcm)+r w_{\ell} \Xb(T^{-N}r\phi_{\ell,m})\\
	=-\int_{\tau_0}^{\infty}\int_{S^2}r^2 w_{\ell} \frac{(\tau_0-\tau')^{N}}{N!}F[\phi](\tau',r,\vartheta',\varphi')Y_{\ell,m}\dd \tau' \dd \vartheta' \dd \varphi'.
\end{multline}
As in the $\ell=0$-case, we then have, for any $r_{\infty}>\rc$:
\begin{align}\label{eq:time:higherell:wave}\nonumber
r^2 Dw_{\ell}^2 \Xx T^{-1-N}\philcm(\tau_0, r_{\infty})&=r^2 w_{\ell} T^{-N}\phi_{\ell,m}(u_0,\rc) +\int_{\rc}^{r_{\infty}} r w_\ell \Xx (T^{-N}r\phi_{\ell,m}) \dd r+\int_{2M}^{\rc}-rw_{\ell}\Xb(T^{-N}r\phi_{\ell,m})\dd r \dd \sigma\\
&\qquad\qquad +\int_{\D_{\tau_0,\infty}^{2M, r_{\infty}}}  w_{\ell}\frac{(\tau_0-\tau)^N}{N!}\frac{\partial_\alpha\left(G^{\alpha\beta}\partial_\beta\phi\right)}{2D\sin \vartheta} Y_{\ell,m}\dd r\dd \tau\dd \sigma.
\end{align}
We now compute the spacetime integral. Compared to the $\ell=0$ case, we now do not only get boundary terms, but also bulk terms coming from when derivatives hit $w_\ell$, $(\tau-\tau_0)^N$ or $Y_{\ell,m}$, respectively.
\begin{nalign}\label{eq:time:higherl:proof2}
&2\int_{\D_{\tau_0,\infty}^{2M, r_{\infty}}}  w_{\ell}\frac{(\tau_0-\tau)^N}{N!}\frac{\partial_\alpha\left(G^{\alpha\beta}\partial_\beta\phi\right)}{2D\sin \vartheta} Y_{\ell,m}\dd r\dd \tau\dd \sigma\\
=-&\int_{\D_{\tau_0,\infty}^{2M, r_{\infty}}}  \frac{Dw_\ell'}{w_{\ell}}\left(w_\ell\frac{(\tau_0-\tau)^N}{N!} (G^{v\beta}-G^{u\beta})\partial_\beta\phi \cdot Y_{\ell,m} \right)
+w_{\ell} \frac{(\tau_0-\tau)^N}{N!} G^{A\beta}\partial_\beta\phi \cdot \partial_A Y_{\ell,m}\\
&\qquad\qquad\qquad  -w_{\ell} \frac{(\tau_0-\tau)^{N-1}}{(N-1)!} \(\mathbf{1}_{r\leq \rc} G^{v\beta}+\mathbf{1}_{r\geq \rc}G^{u\beta}\)\partial_\beta\phi \cdot Y_{\ell,m} \dd u \dd v \dd \tilde{\sigma}\\
+&\int_{\hplus} w_{\ell} \frac{(\tau_0-\tau)^N}{N!} G^{u\beta}\partial_\beta\phi \cdot Y_{\ell,m} \dd v \dd \tilde{\sigma} 
+\int_{\Gamma_{r_{\infty}}} w_{\ell}\frac{(\tau_0-\tau)^N}{N!}(G^{v\beta}-G^{u\beta})\partial_\beta\phi\cdot Y_{\ell,m} \dd u \dd \tilde{\sigma}\\
-&\int_{\C_{\tau_0}} w_{\ell} \frac{(\tau_0-\tau)^N}{N!} G^{u\beta}\partial_\beta\phi \cdot Y_{\ell,m} \dd v \dd \tilde{\sigma}
-\int_{\Cbar_{\tau_0}} w_\ell \frac{(\tau_0-\tau)^N}{N!} G^{v\beta}\partial_\beta\phi \cdot Y_{\ell,m} \dd u \dd \tilde{\sigma},
\end{nalign}
where the limiting boundary terms as $\tau\to\infty$ vanish by  \cref{eq:time:apriori} (with $\alpha=N$), and we used that $\pu w_{\ell}=-Dw_\ell'$.
Now, if  \cref{eq:time:apriori} holds for $\alpha=N$, then the integrals along $\hplus$, $\Cbar_{\tau_0}$ and $\C_{\tau_0}$ in \cref{eq:time:higherl:proof2} will each give a finite contribution as $r_\infty\to\infty$. 

For the boundary term along $\Gamma_{r_{\infty}}$, we argue analogously to \cref{eq:time:l=0:proof2} to infer:
\begin{nalign}
		\int_{\Gamma_{r_{\infty}}} w_{\ell} \frac{(\tau_0-\tau)^N}{N!} (G^{v\beta}-G^{u\beta})\partial_\beta\phi Y_{\ell,m} \dd u \dd \tilde{\sigma}
	= r_{\infty}^{\ell-1}\int_{\scrip} \frac{(\tau_0-\tau)^N}{N!} G^{v\beta}\partial_\beta(r\phi) Y_{\ell,m}\dd u \dd \tilde{\sigma}+O(r_{\infty}^{\ell-2}+r_{\infty}^{\ell-1/2-\delta+}).
\end{nalign}

Moving on, we analyse the bulk terms in \cref{eq:time:higherl:proof2}. Since the term restricted to $r\leq \rc$ gives a finite contribution, we need only consider the $r\geq \rc$ part of the bulk term:
\begin{nalign}\label{eq:time:higherl:bulklimit1}
&\int_{\D_{\tau_0,\infty}^{\rc, r_{\infty}}} Dw_\ell'{\frac{(\tau_0-\tau)^N}{N!}} (G^{v\beta}-G^{u\beta})\partial_\beta\phi \cdot Y_{\ell,m} 
+w_{\ell}{ \frac{(\tau_0-\tau)^N}{N!}} G^{A\beta}\partial_\beta\phi \cdot \partial_A Y_{\ell,m} 
-w_{\ell} {\frac{N(\tau_0-\tau)^{N-1}}{N!}} G^{u\beta}\partial_\beta \phi \cdot Y_{\ell,m} \dd \tilde{\mu}\\
=&\int_{\rc}^{r_{\infty}} \frac{w_\ell'}{r} \dd v \int_{\scrip} {\frac{(\tau_0-\tau)^N}{N!}} G^{v\beta}\partial_\beta(r\phi) Y_{\ell,m}\dd u \dd \tilde{\sigma}+\int_{\rc}^{r_{\infty}} \frac{w_\ell}{r^2}\dd v\int_{\scrip} {\frac{(\tau_0-\tau)^N}{N!}} r^2 G^{A\beta}\partial_\beta\phi \cdot \partial_A Y_{\ell,m} \dd u \dd \tilde{\sigma}\\
&\qquad\qquad\qquad\qquad-\int_{\rc}^{r_{\infty}} \frac{w_\ell}{r^2}\dd v \int_{\scrip}  {\frac{N(\tau_0-\tau)^{N-1}}{N!}} r^2  G^{u\beta}\partial_\beta \phi Y_{\ell,m} \dd u \dd \tilde{\sigma}+O(r_{\infty}^{\ell-2}(\delta_{\ell,2}\log r_{\infty}+1)+r_{\infty}^{\ell-1/2-\delta+}),
\end{nalign}
where the logarithmic term for $\ell=2$ is due to integration of higher-order terms of the form $r^{-3}w_\ell$ etc. 
From now on, we set $\alpha=N=0$ and work with $T^{-1}\phi$.
Note that
\begin{equation}
		\int_{\rc}^{r_{\infty}}\frac{ \ell^{-1} w_\ell'}{r}\dd v\,,		\int_{\rc}^{r_{\infty}}\frac{  w_\ell}{r^2}\dd  v =\int_{\rc}^{r_{\infty}}  r^{\ell-2}+O(r^{-\ell-3})\dd v=\begin{cases}
		\frac{1}{\ell-1}r_{\infty}^{\ell-1}+O(r_{\infty}^{\ell-2}(\delta_{\ell,2}\log r_{\infty}+1)),&\ell >1,\\
		\log {r_{\infty}} +O(1), &\ell=1.
	\end{cases}
\end{equation}
Now, for $\ell\geq2$, the computations above imply that, for constants $C_{\ell,m}$,
\begin{equation}\label{eq:time:higherell:proof:r2Xx}
	r^2 D w_{\ell}^2 T^{-1}\philcm(\tau_0,r_\infty)=C_{\ell,m} r_{\infty}^{\ell-1}+O(r_{\infty}^{\ell-2}(\delta_{\ell,2}\log r_{\infty}+1)+r_{\infty}^{\ell-1/2-\delta+}),
\end{equation}
which implies that $r T^{-1}\phi_\ell\lesV 1/r$, and that $r^2\pv(rT^{-1}\phi_\ell)\lesV 1$. This proves the $\ell\geq2$ statement of \textit{a)}.

On the other hand, for $\ell=1$, we have, using \cref{lem:dyn}  \todo{signs}
\begin{align}\nonumber
-\left.	r^2Dw_{\ell}^2 \Xx T^{-1}\check{\phi}_{\ell=1,m}
\right|_{\Sigma_{\tau_0}}	=&\frac{\log r}{2}\Big(\int_{\scrip}( G^{vB}\partial_B(r\phi) -TG^{uv} r\phi)Y_{1,m}\dd u \dd \tilde{\sigma}+\int_{\S_{\tau_0}^{\infty}} G^{uv}r\phi Y_{1,m} \dd \tilde{\sigma}\Big)\\\label{eq:time:timeinversionexpressionl=1}
	&+\frac{\log r}{2} \int_{\scrip} (r G^{AB}\partial_B(r\phi)-G^{vA}r\phi) \cdot \partial_A Y_{1,m} \dd u \dd \tilde{\sigma} +O_{\V}(1)\\
	=\frac{\log r}{2} \int_{\scrip}  (4 \mathfrak{m}-4M) \psi Y_{1,m}&-2\Big((\gsh^{(1)})^{AB}\sl_B\psi-2(b^{(1)})^{A} \psi )\Big)\sl_A Y_{1,m} \dd u \dd \sigma+O_{\V}(1)=-\tilde{C}_{1,m}\log r+O_{\V}(1).\nonumber
\end{align}
The result \cref{eq:time:prop:first:l=1} then follows from  computing (we drop all lower order terms):
\begin{equation}
	\Xx   T^{-1}\philcm=\tilde{C}_{1,m} r^{-2-2\ell}\log r+\dots\implies -T^{-1}\phi_{\ell,m} =\frac{\tilde{C}_{1,m}}{1+2\ell}r^{-1-\ell}\log r+\dots \implies r^2 \Xx   T^{-1}r\phi_{\ell,m}= \frac{\tilde{C}_{1,m}\ell }{(2\ell+1)}\log r+\dots
\end{equation}

\textit{Proof of b):} We assume that  \cref{eq:time:apriori} holds for some real $\alpha\geq 0$. We may then write
\begin{nalign}\label{eq:time:proof:loggenerator}
&-\frac12	\int_{\D_{\tau_0,\infty}^{\rc, r_{\infty}}} Dw_{\ell}' (G^{v\beta}-G^{u\beta})\partial_\beta\phi Y_{\ell,m}+w_{\ell} G^{A\beta}\partial_\beta \phi\partial_A Y_{\ell,m} \dd \tilde{\mu}\\
	=&\sum_{i=2}^{\lfloor \alpha+3/2+\delta-\rfloor} \int_{\rc}^{r_{\infty}} r^{\ell-i} \dd r \int_{\scrip} \underbrace{\left[-(2D)^{-1}r^{-\ell}\left(D w_{\ell}' (G^{v\beta}-G^{u\beta})\partial_\beta\phi Y_{\ell,m}+w_{\ell} G^{A\beta}\partial_\beta \phi\partial_A Y_{\ell,m}\right)\right]^{(i)}}_{H^{(i)}_{\ell,m}(u,\vartheta,\varphi)}   \dd u \dd \tilde{\sigma}\\
	&+O(r^{\ell-\alpha-1/2-\delta+}+r^{\ell-\alpha-2}(\delta_{\ell,\alpha+2}\log r+1)),
\end{nalign}
since each $H_{\ell,m}^{(i)}$ satisfies $H_{\ell,m}^{(i)}\lesssim u^{-1/2-\delta-\alpha+i-2}$. Thus, for $\ell <\alpha+1/2+\delta$,  the $i=\ell+1$-term generates a logarithmic term
\begin{equation}
	r^2Dw_{\ell}^2\Xx T^{-1} \philcm=\dots +\log r \int_{\scrip} H^{(\ell+1)}_{\ell,m} \dd u \dd \tilde{\sigma}+\dots
\implies 
r	T^{-1}\phi_{\ell,m} =\dots-\frac{1}{1+2\ell} r^{-\ell}\log r   \int_{\scrip} H_{\ell,m}^{(\ell+1)} \dd u\dd \tilde{\sigma} +\dots;
\end{equation}
this proves \cref{eq:time:proplog}.
If $\ell \geq \alpha+2$,  there are no logarithms for $T^{-1}\psi_\ell$ at order $\lfloor \alpha\rfloor+1$, completing the proof of~\textit{b)}.

\textit{Proof of \textit{c)}:} We apply the arguments above with $\alpha=3/2$: For $\ell<\alpha+1/2+\delta=2+\delta$, the statement is the same as in \textit{b)}. For $\ell\geq \mathfrak{L}+2$, we have that $H_{\ell,m}^{(\mathfrak{L}+1)}\lesssim u^{-1-}$. If $\ell\geq \mathfrak{L}+2$, this then gives a $1/r$-expansion for $T^{-1}\psi_{\ell}$ up to order $r^{-\mathfrak{L}-1}$.

\textit{The case $\mathfrak{F}\neq 0$:} We simply pick up an additional spacetime integral $-\int_{\D_{\tau_0,\infty}^{2M,r_{\infty}}} r^2w_{\ell} \mathfrak{F}\dd \mu_D$ in \cref{eq:time:higherell:wave}, resulting in $H^{(\ell+1)}_{\ell,m}$ as given in \cref{eq:main:thm:integralexpression}. 
\end{proof}
\begin{lemma}\label{cor:time:l=1alternative}
For $\ell=1$, if $g$ solves the Einstein vacuum equations, we can simplify the integral in \cref{eq:time:prop:first:l=1} as follows:
		\begin{equation}
		\int_{\scrip} H_{\ell=1,m}^{(2)} \dd u \dd \tilde{\sigma}=-\int_{\scrip}\psi\Big( (2 \mathfrak{m}-2M) Y_{\ell=1,m}  +(\gsh^{(1)})^{AB} \sl_A \sl_B Y_{\ell=1,m} +(r w_\ell\mathfrak{F})^{(4)} Y_{\ell,m}\Big) \dd u \dd {\sigma}.
	\end{equation}
\end{lemma}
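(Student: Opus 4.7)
The stated identity is exactly the second form of the integral $\int_{\scrip}H^{(2)}_{1,m}\,\dd u\,\dd\tilde{\sigma}$ already recorded in \cref{eq:main:thm:integral:l=1}. What requires the Einstein vacuum assumption is the transition from the unconditional first form to the compact second form; the plan is therefore to (i) obtain the first form by unpacking the definition, and (ii) collapse it to the second form by integration by parts on the sphere together with the null constraint $\mathring{\div}\,\gsh^{(1)} = b^{(1)}$.

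For step (i), I would start from \cref{eq:main:thm:integralexpression} with $\ell=1$, $w_1=r-M$, $w_1'=1$, and extract the $r^{-2}$-coefficient of the bracket. The relevant $\scrip$-limits are furnished by \cref{lem:dyn:G} and \cref{lem:dyn}: namely $\lim G^{vA}=-(b^{(1)})^A \sin\vartheta$, $\lim rG^{AB}=-2\sin\vartheta\,(\gsh^{(1)})^{AB}$, $\lim(-TG^{uv})=(4\mathfrak{m}-4M-\mathring{\div}\,b^{(1)})\sin\vartheta$, together with the fact that $G^{uA}\equiv 0$ in a neighbourhood of $\scrip$ by the gauge form \cref{eq:dyn:doublenulldu} of $g$ there, and the $1/r$-expansion $\psi = \psi^{(0)}+O(1/r)$. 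Integrating by parts in $u$ along $\scrip$ to convert $(G^{uv})^{(0)}\pu\psi^{(0)}$-contributions into $\psi^{(0)}\cdot(-TG^{uv})^{(0)}$, and carefully bookkeeping the $\sin\vartheta$-factors (which effectively convert $\dd\tilde{\sigma}$ into $\dd\sigma$), produces the first form; this calculation is essentially the one appearing in the proof of \cref{prop:time:l>0:1}, cf.\ \cref{eq:time:timeinversionexpressionl=1}. The $\mathfrak{F}$-contribution is passed through unchanged as $(rw_1\mathfrak{F})^{(4)}Y_{1,m}$.

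For step (ii), the only term in the first form that contains a sphere derivative of $\psi$ is $-(\gsh^{(1)})^{AB}\sl_A Y_{1,m}\,\sl_B\psi$. Integration by parts on $S^2$, using the symmetry of $\gsh^{(1)}$, gives
\begin{equation*}
-\int_{S^2}(\gsh^{(1)})^{AB}\sl_A Y_{1,m}\,\sl_B\psi\,\dd\sigma
= \int_{S^2}\psi\,\bigl[(\mathring{\div}\,\gsh^{(1)})^{A}\sl_A Y_{1,m} + (\gsh^{(1)})^{AB}\sl_A\sl_B Y_{1,m}\bigr]\,\dd\sigma.
\end{equation*}
Invoking the Einstein vacuum null constraint along $\scrip$, $\mathring{\div}\,\gsh^{(1)} = b^{(1)}$ (which arises from the appropriate leading-order projection of the Codazzi/constraint equations at $\scrip$; see \cref{app:motivation}), the right-hand side becomes $\int\psi[(b^{(1)})^{A}\sl_A Y_{1,m}+(\gsh^{(1)})^{AB}\sl_A\sl_B Y_{1,m}]\,\dd\sigma$. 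The $(b^{(1)})^A\psi\sl_A Y_{1,m}$-term so produced then exactly cancels against the remaining $(b^{(1)})$-contributions in the first form (namely those arising from $\lim G^{vA}$ and from the $\mathring{\div}\,b^{(1)}$-piece inside $\lim(-TG^{uv})$ after a further angular integration by parts), leaving only the clean contributions $(2\mathfrak{m}-2M)Y_{1,m}\,\psi$ and $\psi\,(\gsh^{(1)})^{AB}\sl_A\sl_B Y_{1,m}$ stated in the lemma.

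The main obstacle is purely algebraic and lies in step (ii): one must carefully track signs and $\sin\vartheta$-factors, and identify that the specific numerical coefficients of the three $b^{(1)}$-contributions arising along the way (the direct $(b^{(1)})^A\psi\sl_A Y_{1,m}$-term from $\lim G^{vA}$, the term from $\mathring{\div}\,b^{(1)}$ inside $-TG^{uv}$, and the $(\mathring{\div}\,\gsh^{(1)})$-term produced by sphere-IBP) are precisely compatible with cancellation under $\mathring{\div}\,\gsh^{(1)} = b^{(1)}$. Without the Einstein vacuum constraint the residual $b^{(1)}$-coupling cannot be eliminated, which is why the clean form of the lemma is a strictly Einsteinian phenomenon.
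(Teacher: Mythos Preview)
Your approach is correct and coincides with the paper's: the entire content of the proof is a single angular integration by parts on the $(\gsh^{(1)})^{AB}\sl_A Y_{1,m}\,\sl_B\psi$-term combined with the Einstein constraint $b^{(1)}=\mathring{\div}\,\gsh^{(1)}$ (recorded in \cref{app:motivation}), which is literally the one-line justification the paper gives.

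One minor remark on your step-(ii) bookkeeping: once the first form of \cref{eq:main:thm:integral:l=1} is in hand, the $\mathring{\div}\,b^{(1)}$-piece of $\lim(-TG^{uv})$ has already been absorbed (this happens in passing from the raw $H^{(2)}_{1,m}$ to the first form, cf.\ \cref{eq:time:timeinversionexpressionl=1}). What remains in the first form is a single explicit $(b^{(1)})^A\psi\,\sl_A Y_{1,m}$-term, and your spherical integration by parts on the $\gsh^{(1)}$-term produces exactly one $(\mathring{\div}\,\gsh^{(1)})^A\sl_A Y_{1,m}$-contribution; the cancellation is therefore between these two terms alone via $\mathring{\div}\,\gsh^{(1)}=b^{(1)}$, with no further angular integration by parts required at that stage.
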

\begin{proof}
		The result follows from using the relation $b^{(1)}=\mathring{\div} \gsh^{(1)}$ derived in \cref{app:motivation}.
%
%
\end{proof}

\subsubsection{Summing in $\ell$ and improving $\alpha$ in \cref{eq:time:apriori}}
Having already obtained estimates for the time inverse for fixed $\ell$-modes, we now prove 
\begin{prop}\label{prop:time:firstcomplete}
Let $\delta>1/2$.	If \cref{eq:time:apriori} holds with $\alpha=0$, we in fact have that
	\begin{equation}
E_N[T^{-1}\phi](\tau_0)+\int_{\C_{\tau_0}} r^2 (\pv (rT^{-1}\phi))^2 \dd \mu \lesV[\Vc]1.
	\end{equation}
\end{prop}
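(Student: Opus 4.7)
The plan is to prove the bound by splitting $T^{-1}\phi$ into low and high angular mode contributions, and to exploit the identity $TT^{-1}\phi = \phi$ to simplify the $\Vc$-commuted estimates. Since $TT^{-1}\phi = \phi$, any commutation $\Vc^{\alpha}T^{-1}\phi$ containing at least one $T$-factor reduces after cancellation to $\Vc^{\alpha'}\phi$, which by the a priori bound \cref{eq:time:apriori} combined with the compact support of the initial data has uniformly bounded Schwarzschildean energies on $\Sigma_{\tau_0}$. The angular operators $\mathcal{O}_i$ commute with $T^{-1}$, so they reduce to an application of $T^{-1}$ to $\mathcal{O}_i^{k}\phi$. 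The essential task is therefore to control $(r\partial_r|_{\Sigma})^{k} T^{-1}\phi$ for $k \geq 0$.

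For the low-mode contribution, I would fix a threshold $L \geq 2$ and use the explicit asymptotics of \cref{prop:time:l=0:1} and \cref{prop:time:l>0:1}: for each $\ell \leq L$, the bounds $r^{2}\pv T^{-1}\psi_{\ell=0} \lesV 1$, $T^{-1}\psi_{1,m} = -\tfrac{1}{3}\bigl(\int_{\scrip} H_{1,m}^{(2)}\dd u \dd\tilde\sigma\bigr)\tfrac{\log r}{r} + O(r^{-1})$, and $r^{2}\pv T^{-1}\psi_{\ell} \lesV 1$ for $2 \leq \ell \leq L$ imply $r^{2}(\pv(rT^{-1}\phi_{\ell}))^{2} \lesV (\log r)^{2}/r^{2}$ near $\scrip$, which is integrable in $r$ on $\C_{\tau_0}$. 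Summing over the finitely many $\ell \leq L$ yields a finite contribution to $\int_{\C_{\tau_0}} r^{2}(\pv(rT^{-1}\phi_{\ell\leq L}))^{2}\dd\mu$. The remaining lower-order terms of $E_N[T^{-1}\phi_{\ell \leq L}]$ and the contribution on $\Cbar_{\tau_0}$ are controlled by standard elliptic regularity applied to $\mathcal{L}T^{-1}\phi_{\ell} = r\Xx(r\phi_{\ell}) + r^{2}T^{-1}F_{\ell}$, whose right-hand side is smooth on the bounded-$r$ region thanks to the compact support of $\phi|_{\Sigma_{\tau_0}}$ and the convergence of $T^{-1}\phi = -\int_{\tau_0}^{\infty}\phi\dd\tau'$ ensured by $\phi \lesV \tau^{-3/2}$.

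For high modes $\ell > L$, I would apply the $(r^{2}\Xx)$-commuted elliptic estimate \cref{prop:inhom:elliptic2} with $N = 1$ to $T^{-1}\phi_{\ell > L}$. The $T$-derivative terms on the right-hand side of that estimate reduce to $\Xx(r^{2}\Xx)^{n}\psi$ via $TT^{-1} = \mathrm{id}$, and on $\Sigma_{\tau_0}$ these are controlled by \cref{eq:time:apriori:r2X} together with the compact support of the initial data. The inhomogeneity $r^{3}T^{-1}F$ is bounded using \cref{eq:time:aprioriF}, exploiting the $\min_{q\in[-1,1]}(\tau/r)^{q}$ factor to extract additional $r$-decay, together with the $r^{-4}$-decay of $\mathfrak{F}$ from \cref{eq:main:assonF}. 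A Poincar\'e inequality on $S^{2}$ then translates the resulting angular-derivative control of $(r^{2}\Xx)T^{-1}\psi$ into the desired bound $\int_{\C_{\tau_0}} r^{2}(\pv(rT^{-1}\phi_{>L}))^{2}\dd\mu \lesssim 1$. The commutations with $r\partial_r|_{\Sigma}$ are handled analogously: for low modes one differentiates the asymptotic expansions of \cref{prop:time:l=0:1} and \cref{prop:time:l>0:1} directly (producing only integrable additional powers of $\log r$), while for high modes one invokes the iterated $(r\partial_r|_{\Sigma})^{k}$-commuted versions of \cref{prop:inhom:elliptic1} and \cref{prop:inhom:elliptic2}.

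The main obstacle lies in handling the inhomogeneity $T^{-1}F$ in the high-mode elliptic estimate: the bare decay $F \lesV r^{-1}$ would be insufficient for the required $r^{-q}$-weighted estimates, but the $\min_{q\in[-1,1]}(\tau/r)^{q}$ factor in \cref{eq:time:aprioriF}, itself a consequence of the improved away-from-$\scrip$ $\tau$-decay assumed on $g$ in \cref{ass:dyn:pol}, provides just enough additional $r$-decay to close the estimate uniformly in $\ell$. Once this is in place, combining the low- and high-mode bounds and summing over the finitely many $r\partial_r|_{\Sigma}$-commutations yields the claimed estimate $E_N[T^{-1}\phi](\tau_0) + \int_{\C_{\tau_0}} r^{2}(\pv(rT^{-1}\phi))^{2}\dd\mu \lesV[\Vc] 1$.
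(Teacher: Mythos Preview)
Your proposal is correct and follows essentially the same approach as the paper: split $T^{-1}\phi$ into low and high angular modes, handle the low modes via the explicit asymptotics of \cref{prop:time:l=0:1,prop:time:l>0:1}, and handle the high modes via the elliptic estimate \cref{prop:inhom:elliptic2} (the paper also mentions the alternative of using \cref{prop:inhom:elliptic1} with $q=-2$). Your elaboration on the $\Vc$-commutations and on the inhomogeneity $T^{-1}F$ fills in details that the paper leaves implicit, but the overall structure is the same.
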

\begin{proof}
	We write $\phi=\phi_{\ell\leq \ell_0}+\phi_{\ell>\ell_0}$ for $\ell_0$ sufficiently large. 
	For the lower angular modes, the result follows from \cref{prop:time:l=0:1,prop:time:l>0:1}.
	For the remainder, we apply the elliptic estimate from \cref{prop:inhom:elliptic1} with $q=-2$, or, alternatively, \cref{prop:inhom:elliptic2} with $q=2$ together with Hardy's inequality.
\end{proof}
We can now directly infer:
\begin{cor}\label{cor:time:firstiterate}
	Let $\delta>1/2$. If \cref{eq:time:apriori} holds for $\alpha=0$, then it also holds for $\alpha=1$.
\end{cor}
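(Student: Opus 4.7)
The plan is to apply Proposition~\ref{prop:inhom:main} to the first time integral $T^{-1}\phi$, which solves $\Box_{g_M}(T^{-1}\phi)=T^{-1}F$ with $F=-(\sqrt{-\det g_M})^{-1}\partial_\mu(G^{\mu\nu}\partial_\nu\phi)+\mathfrak{F}$, and then read off $\mathfrak{I}1(1)$ from the identity $\psi=T(T^{-1}\psi)$.

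\emph{Step 1: initial flux.} Proposition~\ref{prop:time:firstcomplete} gives exactly the initial data bound required for the $\tilde{p}=2$ case of \cref{prop:inhom:main}:
\begin{equation*}
	E_N[T^{-1}\phi](\tau_0)+\int_{\C_{\tau_0}} r^2 (\pv(rT^{-1}\phi))^2 \dd\mu \;\lesV[\Vc]\; 1.
\end{equation*}
This is the main geometric input, and is where the explicit computations of Propositions~\ref{prop:time:l=0:1} and~\ref{prop:time:l>0:1} for the low $\ell$-modes, combined with the elliptic estimates of \cref{prop:inhom:elliptic1}/\cref{prop:inhom:elliptic2} for the high $\ell$-modes, have already done the heavy lifting.

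\emph{Step 2: decay of the inhomogeneity $T^{-1}F$.} From \cref{eq:time:aprioriF} (applied with $\alpha=0$) together with the assumption $\mathfrak F\lesV r^{-4}\tau^{-\beta}$ with $\beta>2$, we have
\begin{equation*}
	r^3 F \;\lesV\; \tau^{-3/2-\delta}\min_{q\in[-1,1]}\!\left(\tfrac{\tau}{r}\right)^{q} \;+\; r^{-1}\tau^{-\beta}.
\end{equation*}
Because $\delta>1/2$, the $\tau$-integral defining $T^{-1}F$ converges, and one sees that $T^{-1}F$ verifies the hypothesis~\cref{eq:inhom:assumptiononF} of \cref{prop:inhom:energydecay} in the form $T^{-1}F = F_1 + F_2$ with $\eta = \min(1/2+\delta,\beta-1)>0$ (the $\mathfrak F$-contribution absorbed into $F_2$, which has much faster decay). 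See \cref{rem:inhom:rp:generalised} to justify that the slight loss of $r$-decay at the level of $T^{-1}F$ is harmless.

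\emph{Step 3: conclusion via \cref{prop:inhom:main}.} With $\tilde p=2$ and $\eta=\min(1/2+\delta,\beta-1)$, the $\tau T$-part of \cref{eq:inhom:prop:main:con}, applied to $T^{-1}\phi$, yields
\begin{equation*}
	\tau\, T(T^{-1}\psi) \;\lesV\; \Big(\tau^{-1/2}+\tau^{-\eta+}\Big)\min_{q\in[0,1]}\!\left(\tfrac{r}{\tau}\right)^{q}.
\end{equation*}
Since $T(T^{-1}\psi)=\psi$ and $\delta>1/2$ ensures $\tau^{-\eta+}\lesssim \tau^{-1/2}$, this is precisely $\psi\lesV \tau^{-3/2}\min_{q\in[0,1]}(r/\tau)^{q}$, which is $\mathfrak{I}1(1)$. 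The required $\V$-regularity is preserved throughout because all estimates in \cref{sec:inhom} are stated modulo $\V^{N}$.

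\emph{Expected obstacle.} The only delicate point is verifying that $T^{-1}F$ fits the template of \cref{eq:inhom:assumptiononF}. The geometric term $\partial_\mu(G^{\mu\nu}\partial_\nu\phi)$ does not have a nice $r^{-3}$ decay with a limit at $\scrip$; rather, its decay improves for $r<\tau$ (cf.\ the $q\in[-1,1]$ range in \cref{eq:time:aprioriF}), and this improvement is what allows the $\tau$-integration to produce an $\eta>0$ without losing the correct weights at $\scrip$. Once this is checked (either directly, splitting the integral $\int_\tau^\infty$ along the line $\tau'\sim r$, or by invoking \cref{rem:inhom:rp:generalised}), the rest is immediate.
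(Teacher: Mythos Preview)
Your proof is correct and follows essentially the same route as the paper: apply Proposition~\ref{prop:inhom:main} (equivalently \cref{prop:inhom:energydecay} and \cref{cor:inhom:decay}) to $T^{-1}\phi$, using Proposition~\ref{prop:time:firstcomplete} for the $\tilde p=2$ initial flux and \cref{eq:time:aprioriF} (time-integrated) for the inhomogeneity, and then read off $\mathfrak{I}1(1)$ from $\psi=T(T^{-1}\psi)$. One cosmetic inconsistency: you say the $\mathfrak F$-contribution is absorbed into $F_2$ but still write $\eta=\min(1/2+\delta,\beta-1)$; the paper puts $T^{-1}\mathfrak F$ into $F_2$ and takes $\eta=\min(1/2+\delta,\beta)$, though either choice gives $\eta>1/2$ and hence the same conclusion.
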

\begin{proof}
	By \cref{eq:main:assonF}, \cref{eq:time:apriori} and \cref{eq:time:aprioriF} with $\alpha=0$, we have  that $T^{-1}\phi$ satisfies
	\begin{equation}\label{eq:time:completfirstinhom}
		\Box_{g_M}T^{-1} \phi =T^{-1}F+T^{-1}\mathfrak{F}, \quad \text{with}\quad T^{-1}F \lesV r^{-3} (\tau^{-1/2-\alpha-\delta}+\tau^{-\beta})\min_{q\in[-1,1]}\left(\frac{\tau}{r}\right)^q
	\end{equation} 
	with $\alpha=0$.
	Since we also have \cref{prop:time:firstcomplete}, we are therefore in the setting of \cref{prop:inhom:energydecay} and \cref{cor:inhom:decay} with $F_1=T^{-1}F$, $F_2=T^{-1}\mathfrak{F}$,  $\eta=\min(1/2+\delta,\beta)$ and $\tilde{p}=2$ (and with $\phi$ replaced by $T^{-1}\phi$). 
	We may therefore conclude that $T^{-1}\psi\lesV \tau^{-\delta+}+\tau^{-1/2}$ and $T^{-1}\phi\lesV \tau^{-1-\delta+}+\tau^{-3/2}$; this proves the result.

Note that if we merely had $\delta>0$, we would still be able to conclude that \cref{eq:time:apriori} holds with $\alpha=\min(1/2+\delta-,1)$. We may then plug this new $\alpha$ into \cref{eq:time:completfirstinhom} and repeat the argument to recover $n=1$.
\end{proof}
For later purposes, we also need the following:
\begin{prop}\label{prop:time:first:elliptic}
Let $\delta>1/2$.	If \cref{eq:time:apriori} holds for $\alpha=n\in\N$, or if \cref{eq:time:ass2} holds for $\min(2+\delta,\beta)>\mathfrak{L}=n\in\N$, then there exists $\ell_0$ sufficiently large such that
	\begin{equation}
		\sum_{k=0}^n \int_{\C_{\tau_0}^{\rc,\infty}} r^2 (\pv (r^2\Xx)^k(rT^{-1}\phi_{\ell\geq \ell_0}))^2 \dd \mu \lesV[\Vc]1.
	\end{equation}
	
\end{prop}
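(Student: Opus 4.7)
The plan is to obtain the estimate by applying the $r^2\Xx$-commuted elliptic inequality of \cref{prop:inhom:elliptic2} to the time integral $T^{-1}\phi_{\ell\geq\ell_0}$, chosen with $\ell_0$ large enough (depending on $n$) that the Poincar\'e-based absorption arguments in that proposition are valid up to commutator order $n+1$. Under \cref{eq:time:apriori} with $\alpha=n\in\mathbb N$, the decay $\phi\lesV\tau^{-3/2-n}$ makes $T^{-1}\phi=-\int_\tau^\infty\phi\,\dd\tau'$ well-defined, and this time integral solves $\Box_{g_M}T^{-1}\phi=T^{-1}F+T^{-1}\mathfrak F$ with $F=-\partial_\mu(G^{\mu\nu}\partial_\nu\phi)/\detgm$. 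Under \cref{eq:time:ass2} with $\mathfrak L=n$, we may simply enlarge $\ell_0$ so that $\ell_0\geq n$, at which point $\phi_{\ell\geq\ell_0}\lesV\tau^{-3/2-n}$ and the two cases coincide.

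Set $\Psi:=rT^{-1}\phi_{\ell\geq\ell_0}$ and $\Psi^{[k]}:=(r^2\Xx)^k\Psi$. Since $T$ commutes with $r^2\Xx$ and with multiplication by $r$, we have the key identity $T\Psi^{[k]}=(r^2\Xx)^k(r\phi_{\ell\geq\ell_0})$, so the a priori bound \cref{eq:time:apriori:r2X} gives pointwise control of $T\Xx\Psi^{[k]}=r^{-2}(r^2\Xx)^{k+1}(r\phi_{\ell\geq\ell_0})$ for all $k\leq n$. Using $\pv\Psi^{[k]}=D\Xx\Psi^{[k]}=Dr^{-2}\Psi^{[k+1]}+O(\Psi^{[\leq k]})$, the quantity we wish to bound equals
\[
\int_{\C_{\tau_0}^{\rc,\infty}} r^{-2}D^2(\Psi^{[k+1]})^2\,\dd\mu \ +\ (\text{lower-order in } k).
\]
For $\ell_0$ sufficiently large, Poincar\'e's inequality on $S^2$ gives $(\Psi^{[k+1]})^2\lesssim\ell_0^{-4}|\Dl\Psi^{[k+1]}|^2$, matching precisely the LHS of \cref{prop:inhom:elliptic2} with $N=k+1$ and $q=4N+2$. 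Applying that proposition at $\tau=\tau_0$ reduces matters to bounding
\[
\sum_{n'\leq k+1}\int_{\C_{\tau_0}}|T\Xx\Psi^{[n']}|^2\,\dd\mu \ +\ \sum_{n'\leq k+1}\int_{\C_{\tau_0}}r^{-4}\bigl|(r^2\Xx)^{n'}(r^3(T^{-1}F+T^{-1}\mathfrak F))\bigr|^2\,\dd\mu
\]
plus initial-data volume terms on $\Sigma_{\tau_0}$.

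Each piece of the right-hand side is finite: the $T\Xx\Psi^{[n']}$-integrals reduce as above to $r$-weighted $\C_{\tau_0}$-fluxes of $(r^2\Xx)^{n'+1}(r\phi_{\ell\geq\ell_0})$, which are bounded by compact support and smoothness of initial data; the inhomogeneity integrals gain one power of $\tau$-decay from the $T^{-1}$-integration while inheriting the $r$-decay of $F$ from \cref{eq:time:aprioriF} and of $\mathfrak F$ from \cref{eq:main:assonF}, which together with the $r^{-4}$-factor in the integrand comfortably yield convergence even after $(r^2\Xx)^{n'}$-commutations; the $\Sigma_{\tau_0}$-volume terms are likewise bounded from the compactly supported data for $\phi$. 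The main obstacle is purely bookkeeping: choosing $\ell_0=\ell_0(n)$ large enough so that Poincar\'e absorption works uniformly for every $N\leq n+1$, and inductively controlling the lower-order $\Psi^{[\leq k]}$-remainders generated by the commutator $[\pv,(r^2\Xx)^k]$ using the cases of smaller $k$ already handled in the same inductive estimate.
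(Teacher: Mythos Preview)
Your overall strategy coincides with the paper's: apply \cref{prop:inhom:elliptic2} to $T^{-1}\phi_{\ell\geq\ell_0}$ with $N=n+1$ and $q=4N+2$, so that the $T\Xx\Psi^{[n']}$--terms on the right reduce to quantities built from $(r^2\Xx)^{n'+1}(r\phi)$ at $\tau_0$, hence finite by compact support. The paper applies the estimate once at top order and then iterates Hardy's inequality downward in $k$; your variant of applying it at each level and invoking Poincar\'e is equally valid.

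There is, however, a genuine gap in your treatment of the inhomogeneity. First, a bookkeeping slip: with $q=4N+2$ the weights on the right of \cref{prop:inhom:elliptic2} are $r^{2}$ for the $T\Xx$--term and $r^{-2}$ for the $F$--term, not $r^0$ and $r^{-4}$ as you wrote. Second, and more importantly, the phrase ``inheriting the $r$--decay of $F$ from \cref{eq:time:aprioriF}\ldots comfortably yield convergence even after $(r^2\Xx)^{n'}$--commutations'' is not justified: the bound \cref{eq:time:aprioriF} is $\V$--regular, but $r^2\Xx\notin\V$, and each $r^2\Xx$--commutation costs a full power of $\tau$ (this is precisely \cref{eq:time:apriori:r2X}, applied to $r^3F$). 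After $n{+}1$ commutations one is left with only $(r^2\Xx)^{n+1}(r^3F)\lesV\tau^{-1/2-\delta+}\min_{q\in[0,1]}(r/\tau)^q$, and integrability of $r^{-2}\bigl|T^{-1}(r^2\Xx)^{n+1}(r^3F)\bigr|^2$ along $\C_{\tau_0}$ then hinges on choosing $q<1/2$ to obtain $T^{-1}(r^2\Xx)^{n+1}(r^3F)\lesV \tau^{0-}r^{1/2-}$; this is only just integrable against $r^{-2}$. The paper carries out exactly this computation explicitly. With your (incorrect) $r^{-4}$ weight the margin would indeed be comfortable, but with the correct $r^{-2}$ it is not, and your argument as written does not close.
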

\begin{proof}
To ease the notion, we set $\mathfrak{F}=0$. We apply \cref{prop:inhom:elliptic2} to the equation \cref{eq:time:completfirstinhom} with $q=2+4N$ and $N=n+1$. 
This schematically gives, for $\phi=\phi_{\ell\geq \ell_0}$ sufficiently large:
\begin{equation}\label{eq:time:Hardyto}
	\int_{\C_{\tau_0}}( \Xx (r^2\Xx)^{n+1}T^{-1} \psi)^2 \dd \mu \lesssim \int_{\C_{\tau_0}} r^2( \Xx (r^2\Xx)^{n+1} \psi)^2 +r^{-2}( (r^2\Xx)^{n+1}(r^3 T^{-1}F))^2\dd \mu+\dots,
\end{equation}
where the RHS is finite since $\psi$ is compactly supported along $\Sigma_{\tau_0}$ and because the inhomogeneity satisfies, in view of \cref{eq:time:apriori:r2X}:
	\begin{equation}
	(r^2\Xx)^{n} (r^3F)\lesV \tau^{-3/2-\delta+}\implies (r^2\Xx)^{n+1}(r^3 F)=\tau^{-1/2-\delta+}\min_{q\in[0,1]}\left(\frac{r}{\tau}\right)^q\implies T^{-1}(r^2\Xx)^{n+1}(r^3F)\lesV \tau^{-}r^{1/2-},
\end{equation}
which is integrable when divided by $r^2$ (this argument also works for $\delta\leq1/2$, in which case we choose $q=1/2-\delta+$). 
The result then follows by applying  Hardy's inequality to \cref{eq:time:Hardyto}.
\end{proof}

\subsection{Subtraction of an approximate solution}\label{sec:time:subtract}
Henceforth, in view of \cref{cor:time:firstiterate}, we may assume that \cref{eq:time:apriori} holds with $\alpha=1$. As already explained in \cref{sec:time:overview}, we need to subtract an approximate solution to find the leading-order asymptotics:
	We define, for $\ell_0\in \N$ (whenever the integrals in \cref{eq:time:subtract:Ilm} below are well-defined): 
\begin{equation}
	\widehat{\phi}^{[\ell_0]}:= \phi-T\apphi[\ell=0] {\integral}_0[\phi]-\sum_{\ell=1}^{\ell_0}\sum_{m=-\ell}^{\ell} \apphi[\ell,m] {\integral}_{\ell,m}[\phi]Y_{\ell,m},
\end{equation}
where our choice of the coefficients, in view of \cref{prop:time:l=0:1} and \cref{prop:time:l>0:1}, is:
\begin{align}\label{eq:time:subtract:Ilm} 
{\integral}_0[\phi]=\frac{1}{4\pi}\int_{\scrip}  \mathfrak{m}\psi -\frac12 \mathfrak{F}^{(4)}\dd u \dd {\sigma}&&	{\integral}_{\ell,m}[\phi]: \frac{(-1)^{\ell}}{(2\ell+1)\binom{2\ell}{\ell}}\int_{\scrip}- H_{\ell,m}^{(\ell+1)} \dd u \dd \tilde{\sigma} \, \text{ if }\, \ell>0.
\end{align}
Indeed, this choice guarantees, in view of \cref{eq:time:l=0NPconstant} for $\ell=0$, and in view of  \cref{eq:time:proplog} and  \cref{eq:approx:time} for $\ell>0$:
\begin{lemma}
	If \cref{eq:time:apriori} holds with $\alpha=1$, then 
	\begin{equation}\label{eq:time:subtr:l=0}
		r^2\pv(T^{-1}\widehat{\phi}_{\ell=0}^{[0]})|_{\Sigma_{\tau_0}}\lesV r^{-\max(1,\delta+1/2-)}.
	\end{equation}
	Similarly, if  \cref{eq:time:apriori} holds for $\alpha\geq 1$, then, for $0<\ell <\max(\alpha+1/2+\delta,\beta)$ :
	\begin{equation}\label{eq:time:subtr:l>0}
		T^{-1}\widehat{\phi}^{[\ell]}_{\ell}|_{\Sigma_{\tau_0}}=\sum_{j=0}^{\ell-1}\frac{\bar{c}_i^{(\ell)}}{r^i}+O_{\V}(r^{-\ell}), \quad \text{or, equivalently,} \quad  (r^2\Xx)^{\ell} (T^{-1}\widehat{\phi}^{[\ell]}_{\ell})|_{\Sigma_{\tau_0}}\lesV 1.
	\end{equation}
\end{lemma}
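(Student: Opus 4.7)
The plan is to prove both parts of the lemma by direct comparison of asymptotic expansions: the previously-established expansions of $T^{-1}\phi$ and $T^{-1}\apphi[\ell,m]$ along $\Sigma_{\tau_0}$ match at leading order precisely because the constants $\integral_0[\phi]$ and $\integral_{\ell,m}[\phi]$ in \cref{eq:time:subtract:Ilm} are designed to enforce this cancellation. The first step is to observe that, since $\apphi[\ell=0]$ and $\apphi[\ell,m]$ decay sufficiently fast as $\tau\to\infty$, one has $T^{-1}(T\apphi[\ell=0])=\apphi[\ell=0]$ and, since $\apphi[\ell,m]$ is spherically symmetric, $T^{-1}(\apphi[\ell,m]Y_{\ell,m})=(T^{-1}\apphi[\ell,m])Y_{\ell,m}$. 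Applied to the definition of $\widehat{\phi}^{[\ell_0]}$, this yields
\begin{align*}
T^{-1}\widehat{\phi}^{[0]}_{\ell=0}\big|_{\Sigma_{\tau_0}} &= T^{-1}\phi_{\ell=0}\big|_{\Sigma_{\tau_0}} - \integral_0[\phi]\cdot \apphi[\ell=0]\big|_{\Sigma_{\tau_0}},\\
T^{-1}\widehat{\phi}^{[\ell]}_{\ell,m}\big|_{\Sigma_{\tau_0}} &= T^{-1}\phi_{\ell,m}\big|_{\Sigma_{\tau_0}} - \integral_{\ell,m}[\phi]\cdot T^{-1}\apphi[\ell,m]\big|_{\Sigma_{\tau_0}}\quad(\ell\geq 1).
\end{align*}

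For $\ell=0$, I will use the explicit formula $\apphi[\ell=0]|_{\Sigma_{\tau_0},\,r\geq \rc}=1/(\tau_0(\tau_0+r))$, which admits a convergent expansion in powers of $1/r$; the quantity $r^2\pv\apphi[\ell=0]|_{\Sigma_{\tau_0}}$ then has a smooth $1/r$-expansion whose leading constant is $1$. The hypothesis $\mathfrak{I}1(1)$ (i.e.~\cref{eq:time:apriori} with $\alpha=1$) allows me to invoke \cref{eq:time:l=0NPconstant} of \cref{prop:time:l=0:1}(b), identifying $r^2\pv(T^{-1}\psi_{\ell=0})|_{\Sigma_{\tau_0}}$ with $\integral_0[\phi]$ up to an $O(r^{-\max(1,\delta+1/2-)})$ remainder. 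By the definition of $\integral_0[\phi]$ in \cref{eq:time:subtract:Ilm}, the leading constants cancel after subtraction of $\integral_0[\phi]\cdot r^2\pv\apphi[\ell=0]|_{\Sigma_{\tau_0}}$, which yields \cref{eq:time:subtr:l=0}.

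For $\ell\geq 1$, the essential ingredient is the cancellation of the $(\log r)/r^\ell$ term. Under $\mathfrak{I}1(\alpha)$ with $\alpha\geq 1$ and for $\ell<\alpha+1/2+\delta$, \cref{eq:time:proplog} from \cref{prop:time:l>0:1}(b) gives
\begin{equation*}
T^{-1}\psi_{\ell,m}\big|_{\Sigma_{\tau_0}} = \sum_{j=1}^{\ell-1}\frac{c_j^{(\ell)}}{r^j} - \frac{\int_{\scrip}H_{\ell,m}^{(\ell+1)}\,du\,d\tilde{\sigma}}{2\ell+1}\,\frac{\log r}{r^\ell} + O_{\V}(r^{-\ell}),
\end{equation*}
while \cref{eq:approx:time} from \cref{lem:approx} gives
\begin{equation*}
rT^{-1}\apphi[\ell,m]\big|_{\Sigma_{\tau_0}} = \sum_{j=0}^{\ell-1}\frac{\tilde{c}_j^{(\ell)}}{r^j} + (-1)^\ell\binom{2\ell}{\ell}\,\frac{\log r}{r^\ell} + O_{\V}(r^{-\ell}).
\end{equation*}
The choice $\integral_{\ell,m}[\phi]=\frac{(-1)^\ell}{(2\ell+1)\binom{2\ell}{\ell}}\int_{\scrip}(-H_{\ell,m}^{(\ell+1)})\,du\,d\tilde{\sigma}$ in \cref{eq:time:subtract:Ilm} is tuned precisely so that $\integral_{\ell,m}[\phi]\cdot(-1)^\ell\binom{2\ell}{\ell}=-(2\ell+1)^{-1}\int_{\scrip}H_{\ell,m}^{(\ell+1)}$, and the $(\log r)/r^\ell$ contributions cancel upon subtraction, leaving a purely polynomial $1/r$-expansion through order $r^{-\ell+1}$ plus an $O_{\V}(r^{-\ell})$ remainder; this is exactly \cref{eq:time:subtr:l>0}. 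The entire argument is at the level of matching explicit coefficients once the previous propositions and \cref{lem:approx} are in hand; the only mild subtlety will be verifying that $\V$-regularity is preserved under the subtraction, which follows because both expansions above carry $O_{\V}$-control and the approximate solution is smooth.
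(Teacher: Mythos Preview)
Your proof is correct and matches the paper's approach exactly: the paper's one-line justification simply cites \cref{eq:time:l=0NPconstant} for $\ell=0$ and \cref{eq:time:proplog} together with \cref{eq:approx:time} for $\ell>0$, which are precisely the ingredients you invoke to exhibit the cancellation enforced by the choice \cref{eq:time:subtract:Ilm}. One minor notational slip (inherited from a $\phi$/$\psi$ typo in the lemma statement itself): the cited expansions are all for $\psi=r\phi$ and $\appsi=r\apphi$, so your claimed leading constant of $1$ is that of $r^2\pv\appsi[\ell=0]|_{\Sigma_{\tau_0}}$, not of $r^2\pv\apphi[\ell=0]|_{\Sigma_{\tau_0}}$ (which tends to $-1/\tau_0$)---but this does not affect the argument.
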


\subsection{Second time integration}\label{sec:time:second}
We now analyse the asymptotics of the initial data for the second time integral for the difference quantity~$\widehat{\phi}^{[1]}$.
\subsubsection{Initial data for second time integral, $\ell=0$}
We first consider $\widehat{\phi}^{[0]}_{\ell}$ with $\ell=0$.
\begin{prop}\label{prop:time:second:l=0}
	Assuming \cref{ass:time} with $\alpha=1$ and $\delta>1/2$, we have $r^2\pv(T^{-2}\widehat{\phi}^{[0]}_{\ell=0})|_{\Sigma_{\tau_0}}\lesV \log r$ along $\Sigma_{\tau_0}$.
\end{prop}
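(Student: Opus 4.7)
The strategy parallels the proof of \cref{prop:time:l=0:1}, applied now to $T^{-2}\widehat{\phi}^{[0]}$ in place of $T^{-1}\phi$. Starting from
\[
\Box_{g_M}\widehat{\phi}^{[0]} = F + \mathfrak{F} - \integral_0[\phi]\cdot \Box_{g_M}(T\apphi[0,0]), \qquad F = -\frac{\partial_\alpha(G^{\alpha\beta}\partial_\beta\phi)}{\sqrt{-\det g_M}},
\]
I apply $T^{-2}$ (using $T^{-2}(T\apphi[0,0])=T^{-1}\apphi[0,0]$ and that $T$ commutes with $\Box_{g_M}$), write the wave operator as in \cref{eq:dyn:wave:X}, and project to $\ell=0$. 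Integrating over $S^2$ and from $\rc$ to $r_\infty$ along $\Sigma_{\tau_0}$, while handling the boundary term at $\Gamma_{r_\infty}$ by the same IBP argument used in \cref{eq:time:l=0:proof2}, yields an identity of the schematic form
\[
\int_{\S^{r_\infty}_{\tau_0}} r^2 D\Xx T^{-2}\widehat{\phi}^{[0]}_{\ell=0}\,d\sigma = (\text{initial boundary terms at } \rc) + \int_{\C_{\tau_0}^{\rc,r_\infty}} r\Xx(rT^{-1}\widehat{\phi}^{[0]}_{\ell=0})\,d\mu + (\text{spacetime bulk}),
\]
where the spacetime bulk combines the $T^{-1}F$, $T^{-2}\mathfrak{F}$ and $-\integral_0[\phi]\cdot T^{-1}\Box_{g_M}\apphi[0,0]$ contributions.

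The key estimate is for the transport-type term. By \cref{eq:time:subtr:l=0}, $r^2\pv T^{-1}\widehat{\phi}^{[0]}_{\ell=0}|_{\Sigma_{\tau_0}}\lesV r^{-\max(1,\delta+1/2-)}$; integrating the pv in $r$ from $r$ to $\infty$ (noting $T^{-1}\widehat{\phi}^{[0]}_{\ell=0}\to 0$ as $r\to\infty$) gives $T^{-1}\widehat{\psi}^{[0]}_{\ell=0}=rT^{-1}\widehat{\phi}^{[0]}_{\ell=0}\lesV r^{-\max(1,\delta+1/2-)}$, and the product rule then yields
\[
r\Xx(rT^{-1}\widehat{\phi}^{[0]}_{\ell=0}) = T^{-1}\widehat{\psi}^{[0]}_{\ell=0} + \tfrac{r^2}{D}\pv T^{-1}\widehat{\phi}^{[0]}_{\ell=0} \lesV r^{-\max(1,\delta+1/2-)}.
\]
Integrating in $r$ from $\rc$ to $r_\infty$ gives at most $O(\log r_\infty)$ under $\delta>1/2$ (with the worst case when the exponent approaches $1$, which also dominates the bound from below).

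For the spacetime bulk terms, each is shown to be $O(1)$: the $T^{-2}F$-contribution is controlled using \cref{eq:time:aprioriF} with $\alpha=1$ (the $\min_q$-decomposition, split into $\{r\leq \tau\}$ and $\{r\geq \tau\}$, gives $r^2 T^{-2}F \lesV r^{-2}+r^{-3/2-\delta}$, both integrable in $r$ for $\delta>1/2$); the $T^{-2}\mathfrak{F}$-contribution is bounded using \cref{eq:main:assonF} with $\beta>2$; and the approximate-solution correction $T^{-1}\Box_{g_M}\apphi[0,0]$ is controlled via \cref{prop:approx}, which gives $|\Box_{g_M}\apphi[0,0]|\lesV (\tau+r)^{-3}r^{-2}\tau^{-1}$, so that $r^2 T^{-1}\Box_{g_M}\apphi[0,0]\lesV r^{-2}\log r$ (the $\log r$ coming from integrating $\sim 1/\tau$ from $\tau_0$ to $r$), which is integrable. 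The initial boundary data at $\rc$ are controlled by initial energy norms, contributing $O(1)$. Collecting all contributions produces the bound $r^2 D\Xx T^{-2}\widehat{\phi}^{[0]}_{\ell=0}|_{\Sigma_{\tau_0}}\lesV \log r$, and since $D\to 1$ at large $r$, the same bound holds for $r^2\pv T^{-2}\widehat{\phi}^{[0]}_{\ell=0}$. Regularity with respect to $\V$ is inherited by commuting the identity above with $\tau T$, $r\partial_r|_\Sigma$ and $\mathcal{O}_i$, which preserves all the estimates used above.

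The main obstacle will be the careful bookkeeping of the logarithmic contributions: both the integrated bulk $\int r\Xx(rT^{-1}\widehat{\phi}^{[0]}_{\ell=0})\,dr$ (at the borderline $\delta$ close to $1/2$) and the subtracted approximate solution (whose first time integral inherently generates a logarithm, since $T^{-1}\apphi[0,0] \sim -r^{-1}\log(1+r/\tilde{\tau})$ for large $r$) each contribute a $\log r$. A naive cancellation between the two would actually yield a stronger $O(1)$ bound, but the weaker $\log r$ bound stated is uniform in the allowed range of $\delta>1/2$ and suffices for the purposes of the subsequent energy estimates applied to $T^{-1}\widehat{\phi}^{[0]}$.
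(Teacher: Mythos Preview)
Your approach is essentially the same as the paper's: repeat the computation of \cref{prop:time:l=0:1} with $\phi_{\ell=0}$ replaced by $T^{-1}\widehat{\phi}^{[0]}_{\ell=0}$, and then estimate each contribution. The conclusion you reach is correct, but there are a few bookkeeping slips worth noting.

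First, your attribution of the $\log r$ to the transport term is an over-estimate. Under the assumption $\delta>1/2$ you can take the ``$-$'' loss in $\delta+1/2-$ small enough that $\max(1,\delta+1/2-)>1$ strictly, so $r^{-\max(1,\delta+1/2-)}$ is integrable and the transport term is in fact $O(1)$; the paper phrases this as $\Xx(rT^{-1}\widehat{\phi}^{[0]}_{\ell=0})\lesV r^{-3}$. Second, the bound you quote for $\Box_{g_M}\apphi[0,0]$ is off by one power (it is $\lesV (\tau+r)^{-2}r^{-2}\tau^{-1}$, not $(\tau+r)^{-3}r^{-2}\tau^{-1}$), though your conclusion $r^2 T^{-1}\Box_{g_M}\apphi[0,0]\lesV r^{-2}\log r$ remains valid with the correct bound. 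Third, there is a minor inconsistency: you write $T^{-1}F$ in the schematic but $T^{-2}F$ in the estimate; the latter is what you actually need and estimate.

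The paper's emphasis is slightly different: it singles out the approximate-solution correction as the ``only difference'' from \cref{prop:time:l=0:1}, and bounds its tail integral by $\log r_\infty/r_\infty$ (entering at the level of $r^{-1}$ in the expansion of $r^2D\Xx T^{-2}\widehat{\phi}^{[0]}_{\ell=0}$). In particular, your closing speculation about a ``naive cancellation'' between the transport term and the subtracted approximate solution is not on point---neither produces a leading-order $\log r$ in $r^2D\Xx T^{-2}\widehat{\phi}^{[0]}_{\ell=0}$, and there is nothing to cancel.
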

\begin{rem}\label{rem:time:secondorder}
	We can, in principle, also prove the more precise statement that $r^2\pv(rT^{-2}\widehat{\phi}^{[0]})|_{\Sigma}-{\integral}_{0}^{(1)}[\phi] \log r\lesV 1$ for an appropriate constant ${\integral}_{0}^{(1)}[\phi]$.
	This would then allow us to define a new approximate solution that we could subtract from $\widehat{\phi}^{[0]}$ (cf.~\cref{rem:approx}), i.e.~it would allow us to deduce \emph{higher-order} late-time asymptotics as well. We shall not do this here.
\end{rem}
\begin{proof}
By \cref{eq:time:subtr:l=0}, we have $\Xx(rT^{-1}\widehat{\phi}^{[0]}_{\ell=0})\lesV r^{-3}$ along $\C_{\tau_0}$. 
We can therefore repeat the computation in \cref{eq:time:l=0:proof1} with $\phi_{\ell=0}$ replaced by $T^{-1}\widehat{\phi}^{[0]}_{\ell=0}$. (Notice that $T^{-1}\widehat{\phi}^{[0]}$ satisfies \cref{eq:time:apriori} with $\alpha=0$.)
The only difference will then be that, on the RHS of \cref{eq:time:l=0:proof1}, we have an extra inhomogeneous term of the form 
\begin{equation}
	\int_{\D_{\tau_0,\infty}^{r_{\infty},\infty}} r^2 \Box_{g_M}(\apphi[0,0]\cdot Y_{0,0})\dd \mu \lesssim 	\int_{\D_{\tau_0,\infty}^{r_{\infty},\infty}} \frac{1}{\tau(\tau+r)^2} \dd \mu \lesssim \frac{\log r_{\infty}}{r_{\infty}},
\end{equation}
where we appealed to \cref{prop:approx}.
\end{proof}

\subsubsection{Initial data for the second time integral, $\ell>0$}
We next consider $\widehat{\phi}^{[1]}_{\ell}$ with $\ell>0$.
\begin{prop}\label{prop:time:second:highl}
	Let $\delta>1/2$, and let $\ell\geq 1$. Then $T^{-2}\widehat{\phi}^{[1]}_{\ell}$ satisfies along $\Sigma_{\tau_0}$:
	
	a) If  \cref{eq:time:apriori} holds with $\alpha=1$, then $r^2\pv(T^{-2}\widehat{\phi}^{[1]}_{\ell=1})\lesV \log r$, $r^2\pv(T^{-2}\psi_{\ell=2})\lesV \log r$ and $r^2\pv(T^{-2}\psi_{\ell})\lesV 1$  for $\ell > 2$.\footnote{Without the subtraction, we would merely have
	$r^2\pv(T^{-2}\psi_{\ell=1})\lesV r$. }
	
		b) If  \cref{eq:time:apriori} holds with $\alpha=3/2$, and if  $\min(2+\delta,\beta)>\mathfrak{L}\in\N$,  then we have, for $\ell\geq \mathfrak{L}+2$: $(r^2\Xx)^{\mathfrak{L}}(T^{-2}\psi_\ell)|_{\Sigma_{\tau_0}}\lesV 1$. On the other hand, for $1\leq \ell<\min(2+\delta,\beta)$, we have $(r^2\Xx)^{\ell-1}(T^{-2}\psi_\ell)|_{\Sigma_{\tau_0}}\lesV \log r$.

\end{prop}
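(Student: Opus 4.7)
The proof will follow the template of \cref{prop:time:l>0:1}, now applied to the second time integral $T^{-2}\widehat{\phi}^{[1]}$. First I would derive the analog of \cref{eq:time:higherell:wave} with $N=1$: this is the twisted ODE identity along $\Sigma_{\tau_0}$ whose left-hand side is $r^2Dw_\ell^2\Xx T^{-2}(\widehat{\phi}^{[1]}_{\ell,m}/w_\ell)$ at $r=r_\infty$, and whose right-hand side is a boundary term at $r=\rc$, the integral $\int_{\rc}^{r_\infty} rw_\ell \Xx(T^{-1}r\widehat{\phi}^{[1]}_{\ell,m})\dd r$, and a spacetime inhomogeneity. Because of the subtraction, the spacetime integral now picks up an extra contribution coming from $\Box_{g_M}(\apphi[\ell,m]\cdot\integral_{\ell,m}[\phi]Y_{\ell,m})$ for $\ell=1$, which by \cref{prop:approx} comes with two extra powers of $r$ near $\scrip$ and therefore only contributes at subleading order in $r_\infty$.

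For part \emph{a)} with $\ell=1$, the key input is \cref{eq:time:subtr:l>0}, which guarantees that the leading $\log r/r$ term in $T^{-1}\widehat{\phi}^{[1]}_{\ell=1}|_{\Sigma_{\tau_0}}$ has been cancelled by the subtraction. When iterating the twisted ODE identity once more for $T^{-2}$, a new $\log r$ contribution reappears at the subleading order $r^{-1}$, by the same mechanism as in the proof of \cref{prop:time:l>0:1}, yielding $r^2\pv T^{-2}\widehat{\phi}^{[1]}_{\ell=1}\lesV \log r$. For $\ell=2$, no subtraction has been applied, and \cref{prop:time:l>0:1} \emph{b)} (with $\alpha=1$, using $\delta>1/2$) provides the expansion $T^{-1}\psi_{\ell=2,m}\sim c_1/r+C\log r/r^2+O(r^{-2})$ on $\Sigma_{\tau_0}$. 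Feeding this into $rw_{\ell=2}\Xx(T^{-1}r\psi_{\ell=2,m})$ produces a leading $\log r$ integrand whose primitive in $r$ is $r\log r$; dividing by $r^2 D w_2^2\sim r^6$ and integrating once more in $r$ recovers the claimed $\log r$ bound. For $\ell>2$, \cref{prop:time:l>0:1} \emph{b)} gives $(r^2\Xx)^2 T^{-1}\psi_\ell\lesV 1$ with no logarithms at the leading orders appearing in the integrand, so the same ODE inversion yields the $O(1)$ bound (any subleading logs remain bounded at large $r$).

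Part \emph{b)} will be proved by the analogous iteration, starting instead from \cref{prop:time:l>0:1} \emph{c)}: under \cref{eq:time:apriori} with $\alpha=3/2$ and $\mathfrak{L}<\min(2+\delta,\beta)$, for $\ell\geq \mathfrak{L}+2$ the first time integral has a clean $1/r$-expansion on $\Sigma_{\tau_0}$ to order $r^{-\mathfrak{L}-1}$, so that the second integration yields $(r^2\Xx)^{\mathfrak{L}}T^{-2}\psi_\ell\lesV 1$. For $1\leq \ell<\min(2+\delta,\beta)$, the $\log r/r^\ell$ factor generated at the first integration propagates to a $\log r$ at order $r^{-\ell+1}$ of the second integration, exactly as in the $\ell=1,2$ cases above. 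The main technical obstacle will be the careful bookkeeping of subleading $r^{-j}\log r$ contributions across the integration: I must verify that no spurious $(\log r)^2$ or higher powers of $\log r$ appear, and that the subtraction of $\apphi$ at each relevant order cancels the dominant logarithm exactly. A secondary challenge is controlling the new spacetime integral coming from $\Box_{g_M}\apphi$, but by \cref{prop:approx} the two extra powers of $r$ near $\scrip$ ensure that this contribution is strictly subleading.
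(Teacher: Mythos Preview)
Your proposal is correct and follows essentially the same route as the paper's proof: you revisit the twisted identity \cref{eq:time:higherell:wave} with $N=1$, observe that the spacetime inhomogeneity terms behave exactly as in the first time inversion (the extra factor $(\tau_0-\tau)$ being compensated by the improved $\alpha$), and identify the $\int_{\rc}^{r_\infty} rw_\ell\Xx(T^{-1}\psi_{\ell,m})\dd r$ term as the sole source of new behaviour, pushing the $\log r$ one order forward; for $\ell=1$ the subtraction cancels the leading log, and the new inhomogeneity from $\Box_{g_M}\apphi$ is handled by \cref{prop:approx} and contributes only at subleading order (cf.~\cref{eq:time:inhomogeneityintegral}). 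Your worry about spurious $(\log r)^2$ terms is unfounded: the mechanism is that a single $\log r$ is generated at a fixed order and then merely shifted up by one power per time integration, so no higher powers of $\log r$ arise.
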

\begin{proof}
	For simpler reading, we again set $\mathfrak{F}=0$, the case $\mathfrak{F}\neq0$ can be obtained in a straightforward manner.
	
	\textit{$\ell\geq2$:}	We first prove the $\ell\geq2$-part of the proposition: We revisit the proof of \cref{prop:time:l>0:1} with $N=1$. We then see that the bulk term (the integral over $\D_{\tau_0,\infty}^{2M,r_{\infty}}$), as well as the integrals along $\hplus$ and $\Gamma_{r_{\infty}}$in \cref{eq:time:higherl:proof2} can be treated exactly as for the first time inversion $N=0$, since we now have exactly 1 power more decay to counter the additional $(\tau_0-\tau)$-factor (since we have upgraded  \cref{eq:time:apriori} from $\alpha=0$ to $\alpha=1$). 
	Furthermore, the terms along $\C_{\tau_0}$ and $\Cbar_{\tau_0}$ in the last line of \cref{eq:time:higherl:proof2} vanish identically for $N>0$.
	
	Thus, the only place that produces different asymptotic behaviour compared to the first time inversion is the first line of \cref{eq:time:higherell:wave}: Indeed, if we plug \cref{eq:time:proplog} into the $\C_{\tau_0}$-integral of \cref{eq:time:higherell:wave} we obtain schematically that, for $1\leq \ell <n+1/2+\delta$:
	\begin{equation}\label{eq:time:proof:highellsecondtimeinverse}
	r^{2}D w_{\ell^2} \Xx T^{-2} \philcm\sim 	\int_{\rc}^{r_{\infty}} rw_\ell \Xx(\dots+ \frac{\log r}{r^{\ell}}+\dots) \dd r \dd \sigma =\dots+r_{\infty}\log r_{\infty}+\dots,
	\end{equation}
	which then gives that $rT^{-2}\phi_{\ell}\sim \dots +\frac{\log r}{r^{\ell-1}}$, i.e.~the log-term is pushed one order forwards.\footnote{There will, however, still be a $\log r$-term in \cref{eq:time:proof:highellsecondtimeinverse}, i.e.~a $r^{-\ell}\log r$-term in $T^{-2}\psi_\ell$!} This is enough to prove the $\ell\geq 2$-statement of \textit{a)} as well as all of \textit{b)}.
	
\textit{$\ell= 1$:}	We now set $\ell=1$ and prove the remaining part of \textit{a)}: By construction of $\widehat{\psi}^{[1]}$, the leading-order log-term ($T^{-2}\psi_{\ell=1}\sim \log r$) generated in \cref{eq:time:proof:highellsecondtimeinverse} disappears if we consider $\widehat{\psi}^{[1]}_{\ell=1}$, cf.~\cref{eq:time:subtr:l>0}.
	
As in the $\ell=0$-case, the price to pay for the subtraction is a new inhomogeneity added to  \cref{eq:time:higherell:wave}, namely:
\begin{equation}
	\int_{\D_{\tau_0,\infty}^{2M,r_{\infty}}} w_\ell \frac{(\tau_0-\tau)^N}{N!} r^2\Box_{g_M} (\apphi\cdot Y_{\ell,m}) \dd u \dd v
	\end{equation}
with $N=\ell=1$. 
Then, we can estimate, for any $N=\ell\in\N$ (we keep this general for later reference):
\begin{equation}\label{eq:time:inhomogeneityintegral}
		\int_{\D_{\tau_0,\infty}^{2M,r_{\infty}}} w_\ell \frac{(\tau_0-\tau)^N}{N!} r^2\Box_{g_M} (\apphi\cdot Y_{\ell,m}) \dd u \dd v\lesssim 	\int_{\D_{\tau_0,\infty}^{2M,r_{\infty}}} w_\ell \frac{(\tau_0-\tau)^N}{N!} \frac{r^{\ell}}{(\tau+r)^{\ell+2}\tau^{\ell+1}} \dd u \dd v \lesssim 
r_{\infty}^{\ell-2 }\log r_{\infty}+C
\end{equation}
for some constant $C$. 
This logarithmic term will contribute to $T^{-\ell-1}\widehat{\psi}^{[\ell]}_{\ell}$ only at subleading order $r^{-2}\log r$.\footnote{ That is, in contrast to the $\ell=0$ case, the subtraction of an approximate solution induces a logarithmic term that appears \textit{two powers later} than the logarithmic term that we eliminated via the subtraction! In particular, if we wanted to find the next-to-leading order asymptotics for $\ell=1$, they would not depend come from the subtracted approximate solution.} 
We thus get that $r^2\Xx T^{-\ell-1}\widehat{\psi}^{\ell}_{\ell}\lesV \log r$ for $\ell=1$, as required.
\end{proof}

\subsubsection{Summing in $\ell$ and proving \cref{eq:time:ass2} with $\mathfrak{L}=2$}
\begin{prop}\label{prop:time:secondcomplete}
Let $\delta>1/2$. 	Under  \cref{eq:time:apriori} with $\alpha=1$, we in fact have that
	\begin{equation}\label{eq:time:second:propsum1}
		E_N[T^{-2}\widehat{\phi}^{[1]}](\tau_0)+\int_{\C_{\tau_0}^{R,\infty}} r^2 (\pv (rT^{-2}\widehat{\phi}^{[1]}))^2 \dd \mu \lesV[\Vc]1.
	\end{equation}

	Moreover, if \cref{eq:time:ass2} holds for $\min(2+\delta,\beta)>\mathfrak{L} \in\N$, then we have, for $\ell_0$ sufficiently large, that
		\begin{equation}\label{eq:time:second:propsum2}
		\sum_{k=0}^{\mathfrak{L}-1} \int_{\C_{\tau_0}^{\rc,\infty}} r^2 (\pv (r^2\Xx)^k(rT^{-2}\phi_{\ell\geq \ell_0}))^2 \dd \mu \lesV[\Vc]1.
	\end{equation}
\end{prop}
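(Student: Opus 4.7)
The proof follows the same general pattern as \cref{prop:time:firstcomplete,prop:time:first:elliptic}.

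For \cref{eq:time:second:propsum1}, I would decompose $\widehat{\phi}^{[1]} = \widehat{\phi}^{[1]}_{\ell\leq \ell_0} + \phi_{\ell>\ell_0}$ for some $\ell_0\in\N$ sufficiently large, noting that the subtraction only affects the $\ell=0,1$ modes so that $\widehat{\phi}^{[1]}_{\ell\geq 2}=\phi_{\ell\geq 2}$. For each of the finitely many low $\ell$-modes, the pointwise bounds of \cref{prop:time:second:l=0,prop:time:second:highl}, together with their $\Vc$-commuted analogues, control $r^2 \pv(r T^{-2}\widehat{\phi}^{[1]}_{\ell\leq \ell_0})$ with at worst $\log r$ growth, so that the integrand of the $r^p$-flux with $p=2$ is bounded by $\log^2 r / r^2$ near $\scrip$, which is integrable in $r$. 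The $N$-energy in the bounded-$r$ region is then trivially finite by smoothness and boundedness of all relevant quantities. For the high-$\ell$ component $\phi_{\ell>\ell_0}$, I would apply the elliptic estimate \cref{prop:inhom:elliptic2} to the wave equation $\Box_{g_M}T^{-2}\phi_{\ell>\ell_0}=T^{-2}F+T^{-2}\mathfrak{F}$ (with $F$ as in \cref{eq:time:aprioriF}) and combine with Hardy's inequality, following the exact strategy of \cref{prop:time:firstcomplete}. The right-hand side of the elliptic estimate is controlled by the initial (compactly supported) data for $\psi$ and by commuted, $r$-weighted bounds on $T^{-2}F$, which in turn follow from \cref{eq:time:apriori:r2X} applied under $\mathfrak{I}1(1)$.

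For \cref{eq:time:second:propsum2}, I would replicate the argument of \cref{prop:time:first:elliptic}, now exploiting the improved decay of $\phi_{\ell\geq\lfloor\mathfrak{L}\rfloor}$ provided by $\mathfrak{I}2(\mathfrak{L})$. Concretely, apply \cref{prop:inhom:elliptic2} with $N=\mathfrak{L}-1$ and a suitably large $q$ to $(r^2\Xx)^{\mathfrak{L}-1}\, rT^{-2}\phi_{\ell\geq \ell_0}$; the resulting right-hand side is then bounded by the initial energy of the compactly supported data together with the commuted inhomogeneous term $(r^2\Xx)^{\mathfrak{L}-1}(r^3 T^{-2}F)$. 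Using $F=-\partial_\alpha(G^{\alpha\beta}\partial_\beta\phi)/\detgm+\mathfrak{F}$ together with \cref{lem:dyn:G} and the improved $\tau$-decay of $\phi_{\ell\geq\lfloor\mathfrak{L}\rfloor}$ from \cref{eq:time:ass2}, these inhomogeneous contributions are rendered integrable after two time integrations. Hardy's inequality then yields the claimed weighted $r^2$-flux bound for each $k\leq \mathfrak{L}-1$.

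The only genuinely delicate point is verifying the integrability of the inhomogeneity at the last step: the $r$- and $\tau$-weights have to be tracked carefully through the $(r^2\Xx)^k$-commutators and the two time integrations. This is precisely the step where the restriction $\mathfrak{L}<2+\delta$ is essential, as it ensures $\tau$-integrability of the mode-coupling contributions coming from $G^{\mu\nu}$ at high angular frequencies (cf.~\cref{rem:intro:modecoupling}); beyond this threshold, integrability would fail unless further structural assumptions were placed on $g$.
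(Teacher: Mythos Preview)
Your proposal is correct and follows essentially the same route as the paper: decompose into low and high angular modes, invoke \cref{prop:time:second:l=0,prop:time:second:highl} for the finitely many low modes, and for the high modes (and for \cref{eq:time:second:propsum2}) appeal to \cref{prop:time:first:elliptic}, \cref{prop:inhom:elliptic2} and Hardy's inequality. One small remark: your closing paragraph slightly overstates the role of the restriction $\mathfrak{L}<2+\delta$ at \emph{this} step---for the second time integral the high-$\ell$ elliptic argument goes through comfortably, and the constraint only becomes sharp at the third and higher time integrations (cf.\ \cref{prop:time:third,prop:time:nth}), where the bound degenerates to $r^{\mathfrak{L}-1-\delta+}$.
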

\begin{proof}
	We write $\widehat{\phi}^{[1]}=\widehat{\phi}^{[1]}_{\ell\leq \ell_0}+\widehat{\phi}^{[1]}_{\ell>\ell_0}$ for $\ell_0$ sufficiently large. For the lower spherical harmonics, \cref{eq:time:second:propsum1} readily follows from \cref{prop:time:second:l=0,prop:time:second:highl}.
	For the higher spherical harmonics, we appeal to \cref{prop:time:first:elliptic}, \cref{prop:inhom:elliptic2} and Hardy's inequality; this also proves \cref{eq:time:second:propsum2}.
		\end{proof}

	\begin{cor}\label{cor:time:seconditerate}
	Let $\delta>1/2$.	If  \cref{eq:time:apriori} holds with $\alpha=1$, then it also holds with $\alpha=3/2$. More precisely, we have that \cref{eq:time:apriori} holds with $\alpha=2$ if $\phi$ is replaced by $\widehat{\phi}^{[1]}$ or with $\phi_{\ell \geq 2}$. In particular, \cref{eq:time:ass2} holds with $\mathfrak{L}=2$.
	\end{cor}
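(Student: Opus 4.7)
The plan is to run the argument of \cref{cor:time:firstiterate} once more, but now applying \cref{prop:inhom:energydecay,cor:inhom:decay} to the second time integral $T^{-2}\widehat{\phi}^{[1]}$ in place of $T^{-1}\phi$. The key input is \cref{prop:time:secondcomplete}, which ensures that $T^{-2}\widehat{\phi}^{[1]}$ has finite non-degenerate energy and finite $r^2$-weighted flux along $\Sigma_{\tau_0}$ (after commutations with $\Vc$), so that the hypotheses of \cref{prop:inhom:energydecay,cor:inhom:decay} are met with $\tilde{p}=2$.

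First I would write the wave equation for $\widehat{\phi}^{[1]}$ as
\begin{equation*}
\Box_{g_M}\widehat{\phi}^{[1]}=-\frac{\partial_\alpha(G^{\alpha\beta}\partial_\beta\phi)}{\sqrt{-\det g_M}}+\mathfrak{F}-\Box_{g_M}\Big(T\apphi[\ell=0]\integral_0[\phi]+\sum_{m=-1}^{1}\apphi[\ell=1,m]Y_{1,m}\integral_{\ell=1,m}[\phi]\Big),
\end{equation*}
apply $T^{-2}$ to both sides, and split the result as $F_1+F_2$ in the sense of \cref{prop:inhom:energydecay}. The $\phi$-dependent piece $F_1$ inherits from \cref{eq:time:apriori} with $\alpha=1$ (exactly as in the transition from $\alpha=0$ to the form of $T^{-1}F$ in \cref{eq:time:completfirstinhom} during the first iteration) the bound $rF_1\lesV r^{-2}\tau^{-1/2-\delta}\min_{q\in[-1,1]}(\tau/r)^q$; the standing assumption $\delta>1/2$ is what provides the effective $\eta:=1/2+\delta>1$. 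The remaining pieces constitute $F_2$: by \cref{prop:approx}, $|\Box_{g_M}(\apphi Y_{\ell,m})|\lesV(\tau+r)^{-1}r^{-2}\cdot r^{\ell}/(\tau^{\ell+1}(\tau+r)^{\ell+1})$, while \cref{eq:main:assonF} handles $\mathfrak{F}$, and after $T^{-2}$ the bulk norms of the combined $F_2$ match the bookkeeping in \cref{eq:inhom:assumptiononF}.

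Next I would apply \cref{prop:inhom:energydecay} with $\tilde{p}=2$ to $T^{-2}\widehat{\phi}^{[1]}$ and then re-commute with $T^2$ (each time derivative contributing two extra powers of $\tau$-decay at the energy level): this yields $E_N[\widehat{\phi}^{[1]}](\tau)\lesV\tau^{-6}$ and $\int_{\C_\tau}r^p(\pv(r\widehat{\phi}^{[1]}))^2\dd\mu\lesV\tau^{p-6}$ for $p\in[0,2]$. Combined with the elliptic estimates of \cref{prop:inhom:elliptic1} to produce the $\min_{q\in[0,1]}(\tau/r)^q$ factor away from $\scrip$, the Cauchy--Schwarz argument underlying \cref{cor:inhom:decay} then delivers
\begin{equation*}
\widehat{\phi}^{[1]}\lesV\tau^{-7/2}\min_{q\in[0,1]}\Big(\frac{\tau}{r}\Big)^{q},
\end{equation*}
i.e.~\cref{eq:time:apriori} with $\alpha=2$ for $\widehat{\phi}^{[1]}$.

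To conclude, I would reconstruct $\phi=\widehat{\phi}^{[1]}+T\apphi[\ell=0]\integral_0[\phi]+\sum_m\apphi[\ell=1,m]Y_{1,m}\integral_{\ell=1,m}[\phi]$. A direct inspection using the profile bound of \cref{prop:approx} shows that both $T\apphi[\ell=0]$ and $\apphi[\ell=1,m]Y_{1,m}$ saturate \cref{eq:time:apriori} with $\alpha=3/2$ (producing $r\phi\sim\tau^{-2}$ at $\scrip$ while decaying strictly faster than $\tau^{-3}$ at fixed $r$), so the reconstructed $\phi$ itself satisfies \cref{eq:time:apriori} with $\alpha=3/2$. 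Since the subtracted approximate solutions are supported purely on $\ell\in\{0,1\}$, we have $\phi_{\ell\geq 2}=\widehat{\phi}^{[1]}_{\ell\geq 2}$, which therefore satisfies \cref{eq:time:apriori} with $\alpha=2$. Taken together, these two bounds are exactly \cref{eq:time:ass2} with $\mathfrak{L}=2$. The main technical delicacy will be the inhomogeneity splitting for $T^{-2}\widehat{\phi}^{[1]}$: the standing $\delta>1/2$ ensures $\eta>1$, so the pointwise decay in \cref{cor:inhom:decay} is driven by $\tilde{p}=2$ rather than by $\eta$; for $\delta\leq 1/2$ one would need to iterate, analogously to the closing observation of \cref{cor:time:firstiterate}.
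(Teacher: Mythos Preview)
Your proposal is correct and follows essentially the same route as the paper's proof: invoke \cref{prop:time:secondcomplete} for the initial energies of $T^{-2}\widehat{\phi}^{[1]}$, split the inhomogeneity into the $\phi$-dependent piece $F_1$ (with $\eta=1/2+\delta$) and the piece $F_2$ generated by $\Box_{g_M}$ acting on the subtracted approximate solutions, apply \cref{prop:inhom:energydecay,cor:inhom:decay} with $\tilde{p}=2$, and then reconstruct $\phi$. One small slip: in your parenthetical, $T\apphi[\ell=0]$ decays \emph{exactly} like $\tau^{-3}$ at fixed $r$, not strictly faster---but this is precisely what is needed for $\alpha=3/2$, so the argument is unaffected.
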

\begin{proof}
We again drop $\mathfrak{F}$ for brevity.	The proof proceeds similarly to the one of \cref{cor:time:seconditerate}, except that we now have a new inhomogeneity.
	More precisely, $T^{-2}\widehat{\phi}^{[1]}$ satisfies the equation $	\Box_{g_M}T^{-2}\widehat{\phi}^{[1]}=T^{-2}F$, where
	\begin{equation}
  T^{-2}F\lesV \underbrace{r^{-3}\tau^{-1/2+1-\alpha-\delta}\min_{q\in[-1,1]}\(\frac{\tau}{r}\)^q}_{=F_1} +\underbrace{T^{-1}\Box_{g_M}( \apphi[0,0]\cdot Y_{00})+\left.\sum_{m=-\ell}^{\ell}T^{-1-\ell}\Box_{g_M}(\apphi\cdot{Y_{\ell,m}})\right|_{\ell=1}}_{F_2} 
	\end{equation}
	Notice that we have, for any $\ell\in \N$, and for $\epsilon>0$ arbitrarily small:
	\begin{equation}\label{eq:time:inhomF2}
		T^{-1}\Box_{g_M}( \apphi[0,0]Y_{0,0})\lesV \frac{1}{r^2\tau^{\epsilon}(\tau+r)^{2-\epsilon}}, \quad T^{-1-\ell}\Box_{g_M}\apphi\cdot{Y_{\ell,m}}\lesssim \int_{\tau}^{\infty} \frac{(\tau'-\tau)^{\ell}r^{\ell}}{\tau^{\ell+1}(\tau+r)^{\ell+2}}\dd \tau' \lesssim \frac{r^{\ell-2}}{\tau^{\epsilon}(\tau+r)^{\ell+2-\epsilon}}.
	\end{equation}
	Thus, $F_2$ satisfies all assumptions of \cref{prop:inhom:energydecay} and \cref{cor:inhom:decay},\footnote{In fact, we have $\fbulkp{\tau_1}{\tau_2}{p}[F_2]\lesssim \tau^{p-3+}$ and, similarly, $F_2\lesV \frac{r^{\ell-2-}}{\tau^{\ell+2-}}$.} 
	which means that we can apply \cref{cor:inhom:decay} with $\eta=1/2+\delta$.
	This gives, after taking two time derivatives:
$
		\widehat{\psi}^{[1]}\lesV \min(\tau^{-5/2},r \tau^{-7/2}).
$
\end{proof}
If we also want to show improved decay for higher $\ell$-modes, then we have to appeal to \cref{ass:main:aux}:
\begin{cor}\label{cor:time:seconditeratefine}
If \cref{eq:time:apriori} holds with $\alpha=1$, and if moreover \cref{ass:main:aux} holds with $L=1$, then the if-clause of \cref{eq:main:ass:aux} holds with $\tilde{\alpha}=5/2$ and for $\ell_1=1$ if $\phi$ is replaced by $\widehat{\phi}^{[1]}$, i.e.~we have $r^{-\ell}\widehat{\phi}^{[1]}_{\geq \ell}\lesV \tau^{-7/2-\ell}$ for $\ell\leq 1$.
\end{cor}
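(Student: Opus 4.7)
The $\ell=0$ statement $\widehat{\phi}^{[1]}\lesV \tau^{-7/2}$ is immediate from \cref{cor:time:seconditerate}, since the bound $\widehat{\psi}^{[1]}\lesV \min(\tau^{-5/2}, r\tau^{-7/2})$ gives $\widehat{\phi}^{[1]} = r^{-1}\widehat{\psi}^{[1]}\lesV \tau^{-7/2}$ globally. The main task is therefore to establish the improved $\ell\geq 1$ bound $r^{-1}\widehat{\phi}^{[1]}_{\geq 1}\lesV \tau^{-9/2}$.

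My plan is to apply the high-$\ell$ pointwise estimate \cref{eq:inhom:cor:high:l} of \cref{cor:inhom:decay} to the second time integral $T^{-2}\widehat{\phi}^{[1]}$, with $\ell_0=1$. The inhomogeneity $\Box_{g_M}T^{-2}\widehat{\phi}^{[1]} = F_1 + F_2$, with $F_1 = T^{-2}F[\phi]\lesV r^{-3}\tau^{-1/2-\delta}\min_{q\in[-1,1]}(\tau/r)^q$ and $F_2$ controlled as in the proof of \cref{cor:time:seconditerate}, together with the initial-data estimate from \cref{prop:time:secondcomplete}, allow me to take $\tilde{p}=2$ and $\eta = 1/2+\delta$ in \cref{cor:inhom:decay}. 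Since the first term in \cref{eq:inhom:cor:high:l} dominates (yielding $\tau^{-\ell_0-1-\min((\tilde{p}-1)/2,\eta+)} = \tau^{-5/2}$), I obtain $r^{-1}(T^{-2}\widehat{\phi}^{[1]})_{\geq 1}\lesV \tau^{-5/2}$, a bound which propagates to all $\V$-derivatives.

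To convert this to the desired bound on $\widehat{\phi}^{[1]}_{\geq 1}$, I would use the commutation identity $(\tau T)^2 T^{-2}\phi = \tau T^{-1}\phi + \tau^2\phi$, which is a direct consequence of $T\tau = 1$. Applying a single $\tau T$ and using the bound above gives $\tau r^{-1}(T^{-1}\widehat{\phi}^{[1]})_{\geq 1}\lesV \tau^{-5/2}$, while applying $(\tau T)^2$ gives $r^{-1}(\tau T^{-1}\widehat{\phi}^{[1]} + \tau^2\widehat{\phi}^{[1]})_{\geq 1}\lesV \tau^{-5/2}$. Combining these two pointwise bounds via the triangle inequality then yields $\tau^2 r^{-1}\widehat{\phi}^{[1]}_{\geq 1}\lesV \tau^{-5/2}$, i.e.~$r^{-1}\widehat{\phi}^{[1]}_{\geq 1}\lesV \tau^{-9/2}$. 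The role of \cref{ass:main:aux} with $L=1$ in the argument is to ensure that the $\ell\geq 1$ projection of the dynamical inhomogeneity $\partial_\mu(G^{\mu\nu}\partial_\nu\widehat{\phi}^{[1]})_{\geq 1}$ does not produce mode-coupling contributions beyond the baseline $F_{1,\geq 1}$ estimate used above; the main technical obstacle is the careful bookkeeping of the time-integral and spherical-projection commutations underlying the $\V$-regularity invoked in the last step.
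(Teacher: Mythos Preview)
Your reduction of the $\ell=0$ statement to \cref{cor:time:seconditerate} is fine, and the commutation identity $(\tau T)^2 T^{-2}\widehat{\phi}^{[1]}=\tau T^{-1}\widehat{\phi}^{[1]}+\tau^2\widehat{\phi}^{[1]}$ is correct. The gap is in your treatment of the inhomogeneity for $\ell\geq 1$. You feed into \cref{eq:inhom:cor:high:l} only the \emph{baseline} bound $F_1=T^{-2}F[\phi]\lesV r^{-3}\tau^{-1/2-\delta}\min_{q\in[-1,1]}(\tau/r)^q$, which is the generic estimate valid for all angular modes. Translated into the hypothesis of \cref{eq:inhom:cor:high:l} with $\ell_0=1$ (i.e.\ $F_{1,\geq 1}\lesV r^{-1}\tau^{-\eta_1-2}$), this only yields $\eta_1=\delta-\tfrac12$, since at bounded $r$ your bound gives $\tau^{-3/2-\delta}$. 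Consequently the second term $\tau^{-\ell_0-\eta_1-1}=\tau^{-3/2-\delta}$ in \cref{eq:inhom:cor:high:l} \emph{dominates} the first term $\tau^{-5/2}$ whenever $\delta<1$, and after two $T$-differentiations you would only obtain $r^{-1}\widehat{\phi}^{[1]}_{\geq 1}\lesV \tau^{-7/2-\delta}$, not $\tau^{-9/2}$. Iteration does not close the gap: the $\ell=0$ part of $\phi$ (decaying like $\tau^{-3}$) couples into $(\partial_\mu(G^{\mu\nu}\partial_\nu\phi))_{\geq 1}$ through the non-spherically-symmetric components of $G$, so $(F)_{\geq 1}$ never improves beyond the baseline without further input.

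This is exactly where \cref{ass:main:aux} enters, and your description of its role is backwards. It is not a technical check that mode coupling stays ``within baseline''; it is the \emph{mechanism} that upgrades $(F)_{\geq 1}$ beyond baseline. The paper's proof uses it as follows: from \cref{cor:time:seconditerate} one has $\phi_{\geq 1}\lesV\tau^{-3}$, which feeds into the if-clause of \cref{eq:main:ass:aux} to produce an improved bound on $r^{-1}(\partial_\mu(G^{\mu\nu}\partial_\nu\phi))_{\geq 1}$; this is what allows $\eta_{\ell_0=1}=2+\delta$ in \cref{eq:inhom:cor:high:l}, after which the desired estimate follows (by a single iteration). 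In short, you must invoke \cref{ass:main:aux} \emph{before} applying \cref{eq:inhom:cor:high:l}, to obtain a genuinely larger $\eta_1$, rather than afterwards as a sanity check.
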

\begin{proof}
	The proof is as exactly as above, except that we can now work with more $\tau$-decay coming from the term $\partial_\mu(G^{\mu\nu}\partial_\nu\phi)|_{\ell \geq 1}$ afforded by \cref{eq:main:ass:aux}. More precisely, we start with the estimate $\phi_{\ell\geq 1}\lesV \tau^{-3}$ and use \cref{eq:main:ass:aux}, which allows us to apply the second part of \cref{cor:inhom:decay} with $\eta_{\ell_0=1}=2+\delta$. The result then follows from a single iteration.
\end{proof}

\subsection{First conclusion: Leading-order global asymptotics}\label{sec:time:inbetween}
\begin{thm}\label{thm:time:first}
Let $\delta>0$ and assume \cref{ass:main} as well as \cref{ass:main:aux} with $0\leq L\leq 1$. Then the global leading-order asymptotics of $\phi$ are, for $\ell_0=0,1$ given by
\begin{equation}
|\widehat{\psi}^{[1]}_{\ell \geq \ell_0}|\lesV \tau^{-5/2}\min_{q\in[0,\min(\ell_0,L)+1]}\left(\frac{r}{\tau}\right)^{q}.
\end{equation}
\end{thm}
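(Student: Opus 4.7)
The plan is to realise \cref{thm:time:first} as an immediate consequence of \cref{cor:time:seconditerate} (for $\ell_0=0$) and \cref{cor:time:seconditeratefine} (for $\ell_0=1$, $L=1$), and then to remove the restriction $\delta>1/2$ appearing in \cref{sec:time:first,sec:time:subtract,sec:time:second} by iterating the gain-per-time-integration argument that already appeared at the end of the proof of \cref{cor:time:firstiterate}.

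First, I would verify that the two cited corollaries do match the statement. For $\ell_0=0$, \cref{cor:time:seconditerate} gives $\widehat{\psi}^{[1]}\lesV \min(\tau^{-5/2},r\tau^{-7/2})$, which is equivalent to $\widehat{\psi}^{[1]}\lesV \tau^{-5/2}\min_{q\in[0,1]}(r/\tau)^{q}$; this is the $\ell_0=0$ claim (valid for any $L\geq 0$, since $\min(0,L)+1=1$). For $\ell_0=1$ and $L=1$, \cref{cor:time:seconditeratefine} gives $r^{-1}\widehat{\phi}^{[1]}_{\geq 1}\lesV \tau^{-9/2}$, i.e.\ $\widehat{\psi}^{[1]}_{\geq 1}\lesV r^{2}\tau^{-9/2}=\tau^{-5/2}(r/\tau)^{2}$; combined with the $\ell_0=0$ bound restricted to the projection $\ell\geq 1$ (which is a legitimate operation since angular projections commute with all vector fields in $\V$), this covers the full range $q\in[0,2]$. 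The $\ell_0=1$ case with $L=0$ reduces to the $\ell_0=0$ case restricted to $\ell\geq 1$.

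Second, to drop the assumption $\delta>1/2$, I would inspect where it entered in \cref{sec:time:first,sec:time:subtract,sec:time:second}. It is used in essentially two places: (i) in the applications of \cref{prop:inhom:energydecay} and \cref{cor:inhom:decay}, where the exponent $\eta$ governing the $\tau$-decay of the inhomogeneity $T^{-1}F$ dictates the per-step gain in $\mathfrak{I}1(\alpha)$; and (ii) in the bulk integrals generating the logarithmic coefficients in \cref{prop:time:l>0:1,prop:time:second:highl}. For (i), as indicated at the end of the proof of \cref{cor:time:firstiterate}, each time integration gains $\min(1/2+\delta,1)-\varepsilon$ powers of $\tau$, so finitely many iterations recover the same $\alpha$. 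For (ii) I would check that, for any $\delta>0$, the integrals defining $\integral_{0}[\phi]$ and $\integral_{1,m}[\phi]$ still converge (the slowest-decaying integrand is of order $\tau^{-2}$ or $\tau^{-1-\delta}$), and that any residual polynomial growth of order $r^{1/2+\delta-\varepsilon}$ produced along $\Sigma_{\tau_0}$ by the bulk integrals is still finite after division by $r^{2}$, i.e.\ yields a finite $r^{p}$-flux of $T^{-2}\widehat{\phi}^{[1]}$ for some $p<2$. One then closes by running \cref{prop:inhom:energydecay} at this reduced $\tilde{p}$, at the cost of an $\varepsilon$-loss in the $\tau$-exponent, which is recovered by one further iteration.

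The hard part will be the $\ell=1$ case, where for $\delta\in(0,1/2]$ the residual terms mentioned above are \emph{not} killed by the subtraction $\widehat{\phi}^{[1]}$: the subtraction is designed only to remove the $\log r$-coefficient, whereas small $\delta$ introduces additional polynomial-in-$r$ contributions at orders strictly below the logarithm. The way around this is to note that these extra contributions are \emph{a priori} finite constants along $\Sigma_{\tau_0}$ (they depend only on integrals of $\phi|_{\scrip}$, $G^{\mu\nu}$ and their derivatives that are absolutely convergent under \cref{ass:main}), so they can be absorbed into the initial data for the time integral without affecting the finiteness of the lower-$p$ energy. Once the basic $\widehat{\psi}^{[1]}\lesV \tau^{-5/2}$ is established, the improvement by a factor of $(r/\tau)$ in the $\ell_0=0$ case, and by $(r/\tau)^{2}$ in the $\ell_0=1$, $L=1$ case, follows by invoking \cref{ass:main:aux} to upgrade the decay of $\partial_{\alpha}(G^{\alpha\beta}\partial_{\beta}\widehat{\phi}^{[1]})_{\ell\geq 1}$ and re-applying the second half of \cref{cor:inhom:decay} exactly as in the proof of \cref{cor:time:seconditeratefine}.
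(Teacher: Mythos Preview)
Your overall strategy matches the paper's: for $\delta>1/2$ the result is assembled from \cref{lemma:time:zeroth}, \cref{cor:time:firstiterate}, \cref{cor:time:seconditerate}, and \cref{cor:time:seconditeratefine}, and for $0<\delta\le 1/2$ one iterates. However, your handling of the small-$\delta$ regime is more convoluted than the paper's and contains some imprecisions. The paper does \emph{not} subtract the approximate solution while the residual error terms still dominate the $\log r$-coefficient; rather, it observes that for $\delta\le 1/2$ the computations of \cref{prop:time:l=0:1,prop:time:l>0:1} directly yield the weaker bound $r^{3/2+\delta-}\Xx(T^{-1}\psi_\ell)\lesV 1$, which already gives a finite $r^2$-flux and hence \cref{prop:time:firstcomplete}. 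Running \cref{cor:time:firstiterate} (with one extra iteration if needed) then upgrades to $\alpha=1$, after which \cref{prop:time:l=0:1,prop:time:l>0:1} hold verbatim with the $\log r$-term genuinely leading; the same two-step pattern is repeated for $T^{-2}\widehat\psi^{[1]}$. Thus there is no ``hard part'' requiring you to ``absorb extra contributions into the initial data''---once $\alpha$ is bootstrapped, those contributions are automatically subleading to $\log r$. Two minor slips: the residual growth is of order $r^{1/2-\delta+}$, not $r^{1/2+\delta-\varepsilon}$ (the sign on $\delta$ is reversed), and these residuals are not ``finite constants along $\Sigma_{\tau_0}$'' but $r$-dependent terms; neither error affects the validity of your argument.
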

\begin{proof}
	If $\delta>1/2$, this result follows from \cref{lemma:time:zeroth},  \cref{cor:time:firstiterate,cor:time:seconditerate,cor:time:seconditeratefine}.
	
	It is left to explain how to deal with the range $0<\delta\leq 1/2$:
	For this, we revisit the arguments of \cref{sec:time:first}--\cref{sec:time:second}:
	In \cref{sec:time:first}, we have used the assumption $\delta>1/2$ in \cref{eq:time:l=0:proof2} (for $\ell=0$) and \cref{eq:time:higherl:bulklimit1} (for higher $\ell$) in order for the integrals in those expressions to be well-defined. In turn, this allowed us to infer that $\frac{r^2}{\log r}\Xx(T^{-1}\psi_\ell)$ remains bounded along $\Sigma_{\tau_0}$ (\cref{eq:time:l=0:NP} and \cref{eq:time:higherell:proof:r2Xx}). If, in these steps, we had merely assumed $1/2\geq \delta>0$, then the same computations would give us that $r^{3/2+\delta-}\Xx(T^{-1}\psi_{\ell})$ remains bounded along $\Sigma_{\tau_0}$. 
	In turn, this is sufficient for proving \cref{prop:time:firstcomplete}, and, after a single iteration also \cref{cor:time:firstiterate} (first from $\alpha=0$ to $\alpha=1/2+\delta-$, then to $\alpha=1$). With this improved knowledge, we can redo the proofs of \cref{prop:time:l=0:1} and \cref{prop:time:l>0:1} without changes, so they also hold for $\delta>0$.
	
	Similarly, in \cref{sec:time:second}, if we have $\delta\leq1/2$, then we can at first only prove boundedness of  $r^{3/2+\delta-}\Xx T^{-2}\widehat{\psi}^{[1]}_{\ell}$. This is again enough to prove \cref{prop:time:secondcomplete}, and, in \cref{cor:time:seconditerate}, deduce that \cref{eq:time:apriori} holds with $\alpha=3/2$ and with $\alpha=3/2+\delta-$ for $\phi$ replaced by $\widehat{\phi}^{[1]}$. One then revisits \cref{prop:time:second:l=0} and \cref{prop:time:second:highl} and sees that they still hold for $\delta>0$. (In particular, we can upgrade the RHS of \cref{eq:time:l=0NPconstant} to $r^{-1}$.)
\end{proof}
\subsection{Third time integration and improving $\mathfrak{L}$ in \cref{eq:time:ass2}}\label{sec:time:third}

We can now similarly take the third time integral to find the asymptotic behaviour of the $\ell=2$ mode. Notice that  since we already found the leading-order asymptotics of $\psi$, we cannot improve \cref{eq:time:apriori} beyond $\alpha=3/2$; therefore, we will henceforth improve \cref{eq:time:ass2} instead.
\textbf{To ease notation, we will assume henceforth assume that $\beta\geq2+\delta$.} (The reader may replace any appearance of $\delta$ below with $\min(\beta-2,\delta)$.)
\begin{prop}\label{prop:time:third}
	Let $\ell\geq 2$. 
\begin{enumerate}[label=\emph{\alph*)}]
\item Let \cref{eq:time:ass2} hold with $\mathfrak{L}=2$. Then we have that $r^2\pv(T^{-3}r\widehat{\phi}^{[2]}_{\ell})\lesV\max( r^{1-\delta+},\log r)$ for  $\ell=2$ and $\ell=3$, and $r^2\pv(T^{-3}r\widehat{\phi}^{[2]}_{\ell})\lesV\max( r^{1-\delta+},1)$ for $\ell>3$.

\item If \cref{eq:time:ass2} holds for some $2+\delta>\mathfrak{L}\geq 2\in \N$,\footnote{This assumption is necessary to define the third time integral $\ell\geq2$!} then we have, for $\ell\geq \mathfrak{L}+2$: $(r^2\Xx)^{\mathfrak{L-1}}(T^{-3}\psi_\ell)|_{\Sigma_{\tau_0}}\lesV \max( r^{1-\delta+},1)$. On the other hand, for $2\leq \ell<2+\delta$, we have $(r^2\Xx)^{\ell-2}(T^{-3}\psi_\ell)|_{\Sigma_{\tau_0}}\lesV \max( r^{1-\delta+},\log r)$.
\end{enumerate}
\end{prop}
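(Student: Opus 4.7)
The plan is to apply the framework already developed in \cref{prop:time:l>0:1,prop:time:second:highl}, now using the master identity \cref{eq:time:higherell:wave} with $N=2$. Under \cref{eq:time:ass2} with $\mathfrak{L}\geq 2$, the extra $\tau^{-\mathfrak{L}+3/2}$-decay of $\phi_{\ell\geq\lfloor \mathfrak{L}\rfloor}$ absorbs the extra $(\tau_0-\tau)^2/2$-factor appearing in the spacetime integral, so all bulk and boundary terms in the analog of \cref{eq:time:higherl:proof2} converge; the $\Cbar_{\tau_0}$- and $\C_{\tau_0}$-boundary integrals moreover vanish identically because they carry the weight $(\tau_0-\tau)^N$ for $N\geq 1$, and the $\Gamma_{r_\infty}$-integral is converted to an integral along $\scrip$ exactly as in \cref{eq:time:l=0:proof2}. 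The new input is the integral $\int_{\rc}^{r_\infty} r w_\ell\Xx(T^{-2}r\phi_\ell)\dd r$ along $\C_{\tau_0}$, whose asymptotics are supplied by \cref{prop:time:second:highl}; by the push-forward rule already observed in \cref{eq:time:proof:highellsecondtimeinverse}, each additional time integration shifts the $\log r$-term one order forward in $1/r$.

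For part a), this push-forward immediately yields: for $\ell=3$, the $r^{-3}\log r$-term of $T^{-1}\psi_3$ becomes $r^{-1}\log r$ at the level of $T^{-3}\psi_3$, giving $r^2\pv(T^{-3}r\phi_3)\lesV\log r$; for $\ell>3$, the logarithm lies beyond leading order, and the direct $1/r$-expansion produces a bounded contribution; for $\ell=2$, the subtraction $\apphi[2,m]\integral_{2,m}[\phi]\, Y_{2,m}$ is designed precisely so that the would-be leading logarithm of $T^{-3}\widehat{\phi}^{[2]}_{\ell=2}$ is cancelled (cf.~\cref{eq:time:subtr:l>0}), while the attendant inhomogeneity $\Box_{g_M}(\apphi[2,m]Y_{2,m})$ contributes, via \cref{prop:approx} and the analog of \cref{eq:time:inhomogeneityintegral} with $N=\ell=2$, only a subleading $\log r$-correction to $r^2\pv(T^{-3}r\widehat{\phi}^{[2]}_{\ell=2})$. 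In all three cases the $r^{1-\delta+}$-contribution originates from the bulk integral \cref{eq:time:proof:loggenerator}: once the $\ell=0$-mode of $\psi$ (which decays only as $u^{-2}$) feeds into $H_{\ell,m}^{(\ell+1)}$ through the metric mode-coupling, the resulting $u^{-2-\delta+}$-decay is no longer sufficient to counter the $(\tau_0-\tau)^2$-weight, producing the stated polynomial loss.

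For part b), I iterate the same scheme with $(r^2\Xx)^{\mathfrak{L}-1}$-commutations, invoking \cref{prop:inhom:elliptic2}, \cref{prop:time:first:elliptic} and \cref{prop:time:secondcomplete} to propagate control of the commuted inhomogeneities. For $\ell\geq \mathfrak{L}+2$, the improved decay of $\phi_{\ell\geq \lfloor\mathfrak{L}\rfloor}$ afforded by \cref{eq:time:ass2} renders $H_{\ell,m}^{(\mathfrak{L}+1)}\lesssim u^{-1-}$ integrable, confining all logarithms to $1/r$-orders strictly beyond $\mathfrak{L}-1$, whence $(r^2\Xx)^{\mathfrak{L}-1}(T^{-3}\psi_\ell)\lesV \max(r^{1-\delta+},1)$. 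For $2\leq \ell<2+\delta$, the logarithm generated by the $i=\ell+1$-term in \cref{eq:time:proof:loggenerator} lands exactly at the $(r^2\Xx)^{\ell-2}$-level after three time integrations, accounting for the weaker $\max(r^{1-\delta+},\log r)$-bound.

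The main obstacle will be the bookkeeping of how the approximate-solution subtractions of \cref{sec:time:subtract} interact with the bulk inhomogeneity produced by $\Box_{g_M}\apphi$ under the stronger $(\tau_0-\tau)^2$-weight: one must verify that the residual logarithm of $T^{-3}\widehat{\phi}^{[2]}_{\ell=2}$ remains merely $\log r$ and is not amplified by the extra temporal weight, and that the polynomial loss $r^{1-\delta+}$ remains $O_\V$-stable under $(r^2\Xx)^{\mathfrak{L}-1}$-commutation. These are entirely parallel to the corresponding $N=1$ arguments of \cref{prop:time:second:highl}, and no essentially new analytic difficulty is expected once the $1/r$-expansion of $\Box_{g_M}\apphi$ from \cref{prop:approx} is invoked.
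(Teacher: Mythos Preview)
Your proposal is correct and follows the same route as the paper: the argument is the $N=2$ instance of the master identity \cref{eq:time:higherell:wave}, the push-forward of the logarithm observed in \cref{eq:time:proof:highellsecondtimeinverse}, the subtraction at $\ell=2$ controlled via \cref{eq:time:inhomogeneityintegral}, and the identification of the $r^{1-\delta+}$ loss with the bulk integral \cref{eq:time:proof:loggenerator} under the weight $(\tau_0-\tau)^2$. The paper's own proof is extremely terse, essentially just stating the bulk estimate $\int_{\D_{\tau_0,\infty}^{\rc,r_\infty}} r^{\max(\ell-1-\delta+,\ell-2)} u^{-1-}\,\dd\mu \lesssim \max(r^{\ell-\delta+},r^{\ell-1})$ and declaring the rest analogous to \cref{prop:time:second:highl}.

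One small misstep: in part~b) you invoke \cref{prop:inhom:elliptic2}, \cref{prop:time:first:elliptic} and \cref{prop:time:secondcomplete}, but these are not used here. \cref{prop:time:third} is a statement about \emph{fixed} $\ell$-modes, and the $(r^2\Xx)^{\mathfrak{L}-1}$-bound follows directly from the explicit $1/r$-expansion produced by the twisted integration (exactly as in \cref{prop:time:l>0:1}~b),~c) and \cref{prop:time:second:highl}~b)). The elliptic estimates enter only at the next stage, \cref{prop:time:third:sum}, where one sums over high angular modes to control the $r^p$-flux. This is a bookkeeping slip rather than a gap in the argument.
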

\begin{proof}
	The proof is analogous to that of \cref{prop:time:second:highl}, with the only new element being the $r^{1-\delta+}$-upper bound. 
The reason of this upper bound is that, for $\delta\leq 1$, we have instead of \cref{eq:time:proof:loggenerator} the following expression contributing to $r^2Dw_{\ell}^2 \Xx T^{-3}\widehat{\phi}^{[2]}_{\ell}$, where $N=2$ (we again drop the $r^2 w_\ell \mathfrak{F}Y_{\ell,m}$ term):
\begin{nalign}\label{eq:time:proof:loggenerator2}
	&	\int_{\D_{\tau_0,\infty}^{\rc, r_{\infty}}}\frac{(\tau_0-\tau)^{N}}{N!}\left( -Dw_{\ell}' (G^{v\beta}-G^{u\beta})\partial_\beta\phi Y_{\ell,m}-w_{\ell} G^{A\beta}\partial_\beta \phi\partial_A Y_{\ell,m} \right)+w_\ell\frac{n(\tau_0-\tau)^{n-1}}{n!} G^{u\beta}\partial_\beta \phi \cdot Y_{\ell,m} \dd \mu \\
	&\lesssim	\int_{\D_{\tau_0,\infty}^{\rc, r_{\infty}}} r^{\max(\ell-1-\delta+,\ell-2)}u^{-1-}\dd \mu \lesssim \max (r^{\ell-\delta+}, r^{\ell-1}).
\end{nalign}
For $\ell=2$, the additional inhomogeneity coming from the subtracted approximate solution is bounded exactly as in \cref{eq:time:inhomogeneityintegral}. The result then follows.
\end{proof}
Analogously to \cref{prop:time:secondcomplete}, we then have
\begin{prop}\label{prop:time:third:sum}
Let $\tilde{p}=\max(2,1+2\delta-)$ and assume that \cref{eq:time:ass2} holds with $\mathfrak{L}=2$. Then:
		\begin{equation}\label{eq:time:third:propsum1}
		E_N[T^{-3}\widehat{\phi}^{[2]}_{\ell\geq 2}](\tau_0)+\int_{\C_{\tau_0}^{R,\infty}} r^{\tilde{p}} (\pv (T^{-3}\widehat{\psi}^{[2]}_{\ell\geq 2}))^2 \dd \mu \lesV[\Vc] 1.
	\end{equation}
	
	Moreover, we have for $2+\delta>\mathfrak{L}\in\N$ and for $\ell_0$ sufficiently large that
	\begin{equation}\label{eq:time:third:propsum2}
		\sum_{k=0}^{\mathfrak{L}-2} \int_{\C_{\tau_0}^{\rc,\infty}} r^{\tilde{p}}(\pv (r^2\Xx)^k(T^{-
		3}\psi_{\ell\geq \ell_0}))^2 \dd \mu \lesV[\Vc]1.
	\end{equation}
\end{prop}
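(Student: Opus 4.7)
The proposition mirrors \cref{prop:time:secondcomplete} one step further down the time-integration hierarchy, and I will follow the same two-step template: split the angular frequencies into a finite set and a high-frequency tail, handle the former directly from the pointwise estimates of \cref{prop:time:third}, and handle the latter via the elliptic estimate \cref{prop:inhom:elliptic2} combined with Hardy's inequality.

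For the low-frequency piece, write $\widehat{\phi}^{[2]}_{\ell\geq 2}=\widehat{\phi}^{[2]}_{2\leq\ell\leq \ell_0}+\widehat{\phi}^{[2]}_{\ell>\ell_0}$ with $\ell_0$ chosen large enough for the elliptic estimates below to apply. For each fixed $\ell\in\{2,\dots,\ell_0\}$, \cref{prop:time:third}(a) gives $\pv(T^{-3}\widehat{\psi}^{[2]}_\ell)\lesV\max(r^{-1-\delta+},r^{-2}\log r)$ along $\C_{\tau_0}$. Squaring and multiplying by $r^{\tilde p}$ produces an integrable density under the stated choice of $\tilde p$ (strictly below $1+2\delta$ in the regime $\delta\le 1/2$). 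The resulting bound on the $r^{\tilde p}$-flux, together with Hardy's inequality along $\C_{\tau_0}$ and standard estimates on the compact piece $\Cbar_{\tau_0}$, yields the non-degenerate energy bound $E_N[T^{-3}\widehat{\phi}^{[2]}_{\ell\geq 2}](\tau_0)\lesV[\Vc] 1$ for the low-frequency part. The low-frequency contribution to \cref{eq:time:third:propsum2} is bounded analogously using \cref{prop:time:third}(b), which supplies control of $(r^2\Xx)^{\ell-2}(T^{-3}\psi_\ell)$ for $2\leq\ell<2+\delta$ and of $(r^2\Xx)^{\mathfrak L-1}(T^{-3}\psi_\ell)$ for $\ell\geq \mathfrak L+2$.

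For the high-frequency tail, the input is \cref{prop:time:secondcomplete}, which already controls the $(r^2\Xx)^k$-commuted $r^2$-fluxes of $rT^{-2}\phi_{\ell\geq\ell_0}$ up to $k=\mathfrak L-1$. Since $\Box_{g_M}T^{-3}\phi = T^{-3}(F+\mathfrak F)$, with the inhomogeneity (including the sources produced by the subtractions $T\apphi[0,0]$ and $\apphi[\ell,m]$) having suitable $r$-decay at $\Sigma_{\tau_0}$ via \cref{eq:time:apriori:r2X} and \cref{eq:time:inhomF2}, one applies \cref{prop:inhom:elliptic2} with $N=k+1$ and $q=\tilde p+2+4(k+1)$, exactly as in the proof of \cref{prop:time:first:elliptic}: for $\ell_0$ large enough this bounds $\int_{\C_{\tau_0}} r^{\tilde p}(\Xx(r^2\Xx)^{k+1}T^{-3}\psi)^2\,\dd\mu$ (using Poincar\'e on the high-$\ell$ modes to absorb the $|\sl\Xx\psiNN|$ norm into the LHS of \cref{prop:inhom:elliptic2}) by the analogous quantity for $T^{-2}\psi$---which is finite by \cref{prop:time:secondcomplete}---plus a lower-order inhomogeneous contribution, finite by direct computation. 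Hardy's inequality in $r$ then converts this estimate into the claimed bound on $\int r^{\tilde p}(\pv(r^2\Xx)^kT^{-3}\psi)^2\,\dd\mu$, and an elliptic estimate in the bounded-$r$ region together with \cref{prop:inhom:redshift} propagates the result into the non-degenerate energy.

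The main technical obstacle is the bookkeeping of $r$-weights in the elliptic-plus-Hardy chain: one has to verify that the source terms $T^{-3}F$ and $T^{-n-1}\Box_{g_M}(\apphi Y_{\ell,m})$ for $n\leq 2$ carry enough $r$-decay on $\Sigma_{\tau_0}$ for the right-hand side of \cref{prop:inhom:elliptic2} to be finite at the required weight, and that the final flux weight matches the threshold forced by the low-frequency calculation. This is precisely where the definition of $\tilde p$ enters: it is limited from above by $1+2\delta$ through the $r^{1-\delta+}$ growth in \cref{prop:time:third}(a) when $\delta\le 1/2$, and saturates at $\tilde p=2$ (the maximum allowed by the $r^p$-hierarchy) for larger $\delta$.
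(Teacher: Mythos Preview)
Your proposal is correct and follows essentially the same approach as the paper: split into low and high angular frequencies, handle the former via the pointwise bounds of \cref{prop:time:third}, and handle the latter via the elliptic estimate \cref{prop:inhom:elliptic2} together with Hardy's inequality, with \cref{prop:time:secondcomplete} supplying the input on the $T^{-2}$ level. The paper's own proof is terser but structurally identical; the only explicit computation it carries out is the check that the inhomogeneity $(r^2\Xx)^{N-1}(r^3 T^{-3}F)$ is square-integrable against $r^{\tilde p-4}$ on $\C_{\tau_0}$, which you correctly flag as the main bookkeeping point. (Note that the $\max$ in the statement of $\tilde p$ should evidently be $\min$, consistent with the general case in \cref{prop:time:nthsum} and with your interpretation.)
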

\begin{proof}
	The proof of \cref{eq:time:third:propsum1} proceeds as for the second time inversion. Notice that the upper bound  $\pv(T^{-3}r\widehat{\phi}^{[2]})\lesV r^{-1-\delta}$  is sufficient for the finiteness of the $r^p$ flux with $p=\tilde{p}$.
	
	For the proof of \cref{eq:time:third:propsum2}, we argue as in \cref{prop:time:first:elliptic}. 
	Denoting $\lfloor 2+\delta -\rfloor=N$, we need to control, in particular, the integral
	\begin{equation}\label{eq:time:r2xinhomproof}
		\int_{\C_{\tau_0}} r^{\tilde{p}-4} (r^2\Xx)^{N-1}(r^3T^{-3}F)^2 \dd \mu.
	\end{equation}
	Since \cref{ass:time} holds with $\alpha=3/2$, we have
	\begin{equation}
		(r^2\Xx)^{N-1}(r^3F)\lesV \frac{1}{\tau^{2+\delta-N+2}}\min_{q\in[0,1]}\left(\frac{r}{\tau}\right)^q\lesV \frac{1}{\tau^{3+}} r^{N-1-\delta+}\implies T^{-3}(r^2\Xx)^{N-1}(r^3F)\lesV \tau^{-}r^{N-1-\delta+},
	\end{equation}
which gives that the integral \cref{eq:time:r2xinhomproof} is finite (the condition $\tilde{p}-2-\delta<-1$ is satisfied).
	\end{proof}
	We can finally infer improved decay:
	\begin{cor}\label{cor:time:third}
	If \cref{eq:time:ass2} holds with $\mathfrak{L}=2$, then it also holds with $\mathfrak{L}=\min(3,5/2+\delta-)$.

	If, in addition, \cref{ass:main:aux} holds with $L\leq 2$, then we also have the estimate 
	\begin{equation}
		\widehat{\psi}^{[2]}_{\ell\geq 2}\lesV (\tau^{-7/2}+\tau^{-3-\delta+})\min_{q\in[0,L+1]} \left(\frac{\tau}{r}\right)^{q}.
	\end{equation}
	\end{cor}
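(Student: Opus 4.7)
The plan is to mirror the argument of \cref{cor:time:seconditerate}, this time using the \emph{third} time integral applied to the twice-subtracted quantity $\widehat{\phi}^{[2]}_{\ell\geq 2}$. Since the $\ell=0,1$ subtractions in $\widehat{\phi}^{[2]}$ vanish under the $\ell\geq 2$ projection, the inhomogeneous equation for $T^{-3}\widehat{\phi}^{[2]}_{\ell\geq 2}$ reads
\begin{equation*}
\Box_{g_M}\bigl(T^{-3}\widehat{\phi}^{[2]}_{\ell\geq 2}\bigr)=T^{-3}F_{\ell\geq 2}-\sum_{m=-2}^{2}\integral_{2,m}[\phi]\,T^{-3}\Box_{g_M}(\apphi[2,m]Y_{2,m})=:F_1+F_2.
\end{equation*}
Under $\mathfrak{I}2(2)$, \cref{eq:dyn:Guv}--\cref{eq:dyn:GAB} yield $F_1\lesV r^{-3}\tau^{-\min(2+\delta,\beta-1)+}\min_{q\in[-1,1]}(\tau/r)^q$, while \cref{prop:approx} combined with three time integrations of $\apphi[2,m]$ (as in \cref{eq:time:inhomF2}) shows that $F_2$ falls into the admissible inhomogeneity class of \cref{prop:inhom:energydecay}.

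First I would combine this with \cref{prop:time:third:sum}, which provides the required initial $r^{\tilde{p}}$-flux bound for $T^{-3}\widehat{\psi}^{[2]}_{\ell\geq 2}$ with $\tilde{p}=\max(2,1+2\delta-)$, and apply \cref{prop:inhom:energydecay} and \cref{cor:inhom:decay} with $\eta=\min(2+\delta,\beta-1)$. This yields
\begin{equation*}
T^{-3}\widehat{\phi}^{[2]}_{\ell\geq 2}\lesV \tau^{-\min((\tilde{p}-1)/2,\,\eta-)+}\min_{q\in[0,1]}(\tau/r)^q,
\end{equation*}
and acting with three $T$-derivatives (each of which produces an extra power of $\tau$-decay via the energy decay part of \cref{prop:inhom:energydecay}, using the $\tau T$-regularity propagated from $\mathfrak{I}1(3/2)$) then yields $\widehat{\phi}^{[2]}_{\ell\geq 2}\lesV \tau^{-3-\min((\tilde{p}-1)/2,\eta-)+}\min_{q\in[0,1]}(\tau/r)^q$. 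Since $\widehat{\phi}^{[2]}_{\ell\geq 3}=\phi_{\ell\geq 3}$, rewriting in terms of $\psi_{\ell\geq 3}$ gives exactly $\mathfrak{I}2(\mathfrak{L})$ with $\mathfrak{L}=\min(3,5/2+\delta-)$.

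For the refined pointwise bound in the second assertion, I would invoke \cref{ass:main:aux} with $L\leq 2$ to upgrade the effective $\tau$-decay of $\partial_\mu(G^{\mu\nu}\partial_\nu\phi)_{\ell\geq\ell_1}$ for $\ell_1\in\{1,2\}$, and then, within a single additional iteration analogous to \cref{cor:time:seconditeratefine}, invoke the second part of \cref{cor:inhom:decay} to extract the announced $\min_{q\in[0,L+1]}(\tau/r)^q$-weighted bound. The main obstacle will be the saturation of $\tilde{p}$ in \cref{prop:time:third:sum} when $\delta\leq 1/2$: the mode-coupling term of size $r^{1-\delta+}$ appearing in $r^2\Xx T^{-3}\widehat{\psi}^{[2]}_{\ell}|_{\Sigma_{\tau_0}}$ (cf.\ \cref{prop:time:third}) forces $\tilde{p}<2$, which is precisely the structural reason we cannot push $\mathfrak{L}$ beyond $5/2+\delta-$ in this sub-threshold regime. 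Passing this threshold would require subtracting a further approximate solution encoding the next-to-leading-order asymptotic profile, in the spirit of \cref{rem:time:secondorder}.
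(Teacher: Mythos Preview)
Your approach is correct and precisely matches the paper's intent: its own proof simply declares the argument ``entirely analogous to the proofs of \cref{cor:time:seconditerate} and \cref{cor:time:seconditeratefine}'' and notes that \cref{eq:time:inhomF2} holds for any $\ell$. One small caveat: after three time integrals the metric contribution to $F_1$ satisfies only $T^{-3}F\lesV r^{-3}\tau^{-\delta}\min_{q\in[-1,1]}(\tau/r)^q$, so the correct value is $\eta=\delta$ rather than $\min(2+\delta,\beta-1)$; since the binding constraint in \cref{cor:inhom:decay} comes from the $\tilde p$-flux anyway, your conclusion is unaffected.
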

\begin{proof}
	The proof is entirely analogous to the proofs of \cref{cor:time:seconditerate} and \cref{cor:time:seconditeratefine} and therefore left to the reader. (Note that \cref{eq:time:inhomF2} holds for any $\ell$!)
\end{proof}
\subsection{Higher-order time integration and mode-coupling}\label{sec:time:nth}
 If we wanted to take the fourth time integral, then we would find that the integral in time of $H_{3m}^{(4)}$, which schematically looks like $u^{2-\delta}\cdot r\phi$, is no longer well-defined if $\delta<1$ (since the $\ell=0,1$ modes of $r\phi$ decay like $u^{-2}$).
  Indeed, if $\delta\leq1$, then the asymptotics of the $\ell\geq 3$ modes would now be determined by the behaviour of the $\ell=0,1$ modes, and one can at best prove the upper bound $\psi_{\ell\geq 3}|_{\scrip} \lesssim u^{-3-\delta+}$.
  
   \begin{figure}[htb]
 	\includegraphics[width=0.6\textwidth]{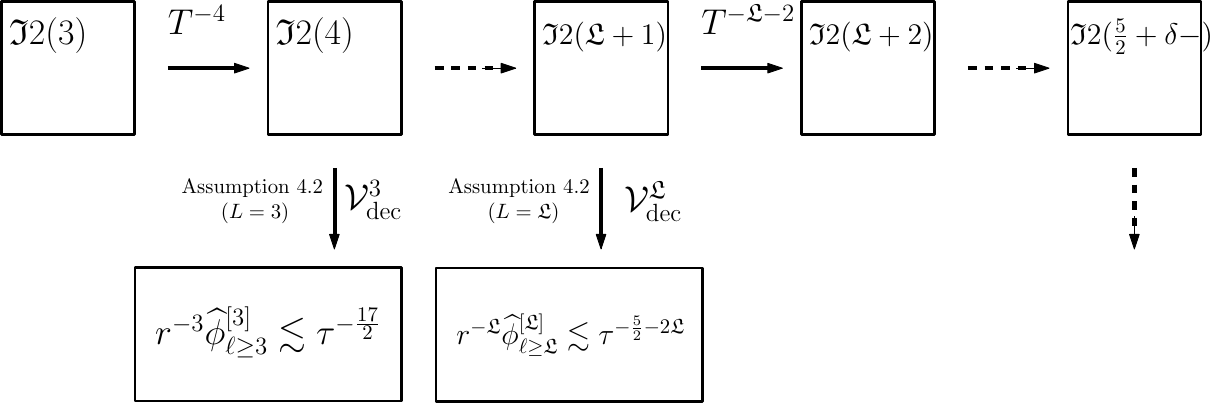}
 	\caption{An overview of iteration argument for improving $\mathfrak{L}$ in \cref{eq:time:ass2} (assuming  $\delta$ is sufficiently large).}
 \end{figure}

In general, we can prove the leading-order asymptotics for $\psi_{\geq \ell}\sim u^{-1-\ell}$ if and only if  $\ell<2+\delta$. This is the content of the following statements:
\begin{prop}\label{prop:time:nth}
	Assume that \cref{eq:time:ass2} holds with $2+\delta>\mathfrak{L}\in\N$, and let $\ell\geq \mathfrak{L}$. 
	
	\begin{enumerate}[label=\emph{\alph*)}]
	\item $(r^2\Xx T^{-\mathfrak{L}-1} \widehat{\psi}^{[\mathfrak{L}]}_\ell)|_{\Sigma_{\tau_0}}\lesV \max (r^{\mathfrak{L}-1-\delta},\log r)$, and
	
	\item For all $k\leq \mathfrak{L}$, we have, for $\ell\geq \mathfrak{L}+2$: $(r^2\Xx)^{\mathfrak{L}+1-k}(T^{-1-k}\psi_\ell)|_{\Sigma_{\tau_0}}\lesV \max( r^{k-1-\delta+},1)$. 
	On the other hand, for $\mathfrak{L}\leq \ell<2+\delta$, we have $(r^2\Xx)^{\ell-2}(T^{-\mathfrak{L}-1}\psi_\ell)|_{\Sigma_{\tau_0}}\lesV \max( r^{1-\delta+},\log r)$.
	\end{enumerate}
\end{prop}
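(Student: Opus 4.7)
The plan is to prove \cref{prop:time:nth} by induction on $\mathfrak{L}$, with the base case $\mathfrak{L}=2$ already supplied by \cref{prop:time:third}. The inductive step mirrors the transition from \cref{prop:time:second:highl} to \cref{prop:time:third}, but must be carried out at every order simultaneously.

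For the inductive step, assume the proposition holds at levels $2,\ldots,\mathfrak{L}-1$, and assume $\mathfrak{I}2(\mathfrak{L})$. Fix $\ell\geq\mathfrak{L}$ and integrate the identity
\begin{equation}
\Xx\!\left(r^2 Dw_\ell^2\,\Xx T^{-\mathfrak{L}-1}\widehat{\philc}^{[\mathfrak{L}]}_{\ell,m}\right) - r w_\ell\,\Xx\!\left(T^{-\mathfrak{L}}r\widehat{\phi}^{[\mathfrak{L}]}_{\ell,m}\right) = -\tfrac{1}{\mathfrak{L}!}\!\int_{\tau_0}^\infty (\tau_0-\tau)^{\mathfrak{L}}\, w_\ell \tfrac{r^2\,\partial_\alpha(G^{\alpha\beta}\partial_\beta\phi)\,Y_{\ell,m}}{2D\sin\vartheta}\,\dd\tau \;+\;\cdots
\end{equation}
along $\Sigma_{\tau_0}$ from $\rc$ to $r$, obtaining the direct $N=\mathfrak{L}$ analogue of \cref{eq:time:higherell:wave,eq:time:higherl:proof2}, augmented by the inhomogeneity $w_\ell r^2\mathfrak{F} Y_{\ell,m}$ and by the extra inhomogeneities $\Box_{g_M}(\apphi[k,m]Y_{k,m})$ generated by the subtractions $k=1,\ldots,\mathfrak{L}$ entering the definition of $\widehat{\phi}^{[\mathfrak{L}]}$.

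First I would analyse the spacetime bulk term arising from $G^{\alpha\beta}$. Reading off its $1/r$-expansion as in \cref{eq:time:proof:loggenerator,eq:time:proof:loggenerator2} and invoking \cref{lem:dyn:G} together with the decay of $\phi$ from \cref{eq:time:ass2} and \cref{ass:main:aux}, the coefficient $H^{(i)}_{\ell,m}(u,\theta)$ of this integrand at $1/r$-order $i$ decays in $u$ at least as fast as $u^{-1+i-\mathfrak{L}-\delta-2-}$ as long as the relevant mode-coupling exponent is not saturated. Upon multiplication by $(\tau_0-\tau)^{\mathfrak{L}}/\mathfrak{L}!$ and $u$-integration, the coefficients produce a finite $\scrip$-integral exactly when $i>\mathfrak{L}+1+\delta$; after $r$-integration from $\rc$ to $r$, these contribute a polynomial in $1/r$ with a subleading logarithm at the critical order. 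The slowest-decaying contribution, coming from the coupling between $G^{\mu\nu}\sim \tau^{-\delta}$ and the $\ell=0,1$ modes of $\phi$ which only decay like $\tau^{-2}$, produces after the $\tau$-integration against $(\tau_0-\tau)^{\mathfrak{L}}$ a bulk term of size at most $r^{\mathfrak{L}-1-\delta+}$, yielding the first term in the $\max$. The second term in the $\max$ originates from the order-$(\ell+1)$ coefficient whose $u$-integral is finite but non-zero, giving the $\log r$ factor at the critical order exactly as in \cref{prop:time:l>0:1,prop:time:second:highl,prop:time:third}.

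Second, the integral $\int_{\rc}^r rw_\ell\,\Xx(T^{-\mathfrak{L}}r\widehat{\phi}^{[\mathfrak{L}]}_{\ell,m})\dd r$ is evaluated by inserting the inductive hypothesis at level $\mathfrak{L}-1$: the latter supplies a $1/r$-expansion of $T^{-\mathfrak{L}}\widehat{\phi}^{[\mathfrak{L}]}_{\ell}|_{\Sigma_{\tau_0}}$ up to an error of size $\max(r^{\mathfrak{L}-2-\delta+},1)$ (with a possible $\log r$ at the critical order), and multiplication by $rw_\ell\sim r^{\ell+1}$ followed by $r$-integration pushes any subleading log-term one order forward in $r$, exactly as in the step between \cref{prop:time:l>0:1,prop:time:second:highl}. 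The contributions from $\Box_{g_M}(\apphi[k,m]Y_{k,m})$ are then controlled using the sharp estimate $|\Box_{g_M}(\apphi Y_{\ell,m})|\lesV (\tau+r)^{-1}r^{-2}|\apphi|$ from \cref{prop:approx}, which, after integration against $(\tau_0-\tau)^{\mathfrak{L}} w_\ell r^2$ on $\D_{\tau_0,\infty}^{2M,r_\infty}$ as in \cref{eq:time:inhomogeneityintegral}, yields $r_\infty^{k-2}\log r_\infty+C$; this is strictly subleading in both the assertion a) and the second assertion of b). Part b) for $\ell\geq\mathfrak{L}+2$ follows from the same scheme run with $N<\mathfrak{L}$ time integrations and no subtraction, since the slow-decay log-term only emerges at subleading $1/r$-order.

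The main obstacle is the combinatorial bookkeeping: simultaneously tracking the $1/r$-order and the $\tau$-decay of every contribution through $\mathfrak{L}+1$ time integrations, while keeping consistent with the precise subtractions encoded in $\widehat{\phi}^{[\mathfrak{L}]}$. The threshold $\mathfrak{L}<2+\delta$ is sharp, and arises for precisely the reason flagged in \cref{rem:intro:modecoupling}: the $\scrip$-integrands defining the leading coefficients contain the $\ell=0,1$ modes of $\psi$ (decaying only like $u^{-2}$) multiplied by the slowest metric decay factor $G^{\mu\nu}\sim \tau^{-\delta}$; the polynomial factor $(\tau_0-\tau)^{\mathfrak{L}}/\mathfrak{L}!$ introduced by $\mathfrak{L}+1$ time integrations destroys the integrability of these couplings exactly when $\mathfrak{L}\geq 2+\delta$, which is why the induction halts at this value and the leading asymptotics for $\psi_{\geq\ell}$ beyond this threshold lie outside the scope of the present method.
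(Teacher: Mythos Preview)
Your proposal is essentially correct and follows the same recursive mechanism as the paper: compute $T^{-k-1}\psi_\ell|_{\Sigma_{\tau_0}}$ from $T^{-k}\psi_\ell|_{\Sigma_{\tau_0}}$ via \cref{eq:time:higherell:wave,eq:time:higherl:proof2}, track the $r$-power of the bulk term from $G^{\alpha\beta}$ as in \cref{eq:time:proof:loggenerator2}, and control the subtracted inhomogeneities via \cref{eq:time:inhomogeneityintegral}. The paper's own proof is the one-liner ``as in \cref{prop:time:third}, along with an induction in $k\leq\mathfrak{L}$'', and your expansion faithfully fills this in.

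One framing point is worth noting. You phrase the argument as an induction on $\mathfrak{L}$, invoking the proposition at level $\mathfrak{L}-1$ to obtain the $1/r$-expansion of $T^{-\mathfrak{L}}\widehat{\phi}^{[\mathfrak{L}]}_\ell$. Strictly speaking this is not quite self-contained: the proposition at level $\mathfrak{L}-1$ is stated under the hypothesis $\mathfrak{I}2(\mathfrak{L}-1)$, and $\mathfrak{I}2(\mathfrak{L})$ does \emph{not} imply $\mathfrak{I}2(\mathfrak{L}-1)$ for $\mathfrak{L}\geq 3$ (the latter requires decay $\tau^{-1/2-\mathfrak{L}}$ for $\phi_{\ell=\mathfrak{L}-1}$, which $\mathfrak{I}2(\mathfrak{L})$ does not supply). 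The paper avoids this by inducting on $k$ for \emph{fixed} $\mathfrak{L}$, starting from the $k=0$ base case of \cref{prop:time:l>0:1}~c) (which only needs $\mathfrak{I}1(3/2)$, a consequence of $\mathfrak{I}2(\mathfrak{L})$). In practice your argument does exactly this internal $k$-recursion---you even say so for part~b)---so the content is correct; only the outer ``induction on $\mathfrak{L}$'' label should be replaced by ``induction on $k\leq\mathfrak{L}$''.
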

\begin{proof}
	The proof is as in \cref{prop:time:third}, along with an induction in $k\leq \mathfrak{L}$.
\end{proof}

	\begin{prop}\label{prop:time:nthsum}
		Assume that \cref{eq:time:ass2} holds with $2+\delta>\mathfrak{L}\in\N$. 
		Then we have, for $\tilde{p}=\min(1+2(2+\delta-\mathfrak{L})-,2)$:
		\begin{equation}
			E_N[T^{-1-\mathfrak{L}}\widehat{\phi}^{[\mathfrak{L}]}_{\ell\geq\mathfrak{L}}](\tau_0)+\int_{\C_{\tau_0}} r^{\tilde{p}}(\pv T^{-1-\mathfrak{L}} \widehat{\psi}^{[\mathfrak{L}]}_{\ell\geq\mathfrak{L}})^2\dd \mu \lesV[\Vc] 1.
		\end{equation}
		Moreover, for $\ell_0$ sufficiently large, we have for all $k\leq \mathfrak{L}$
		\begin{equation}
			\sum_{j=0}^{\mathfrak{L}-k}\int_{\C_{\tau_0}} r^{\tilde{p}}(\pv(r^2\Xx)^j T^{-k-1} 	\widehat{\psi}^{[\mathfrak{L}]}_{\ell\geq\mathfrak{L}})^2\dd \mu \lesV[\Vc] 1
		\end{equation}
	\end{prop}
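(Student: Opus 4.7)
The proof will closely mirror the pattern established in \cref{prop:time:secondcomplete} and \cref{prop:time:third:sum}: split $\widehat{\phi}^{[\mathfrak{L}]}_{\ell\geq\mathfrak{L}}=\widehat{\phi}^{[\mathfrak{L}]}_{\mathfrak{L}\leq\ell\leq\ell_0}+\widehat{\phi}^{[\mathfrak{L}]}_{\ell>\ell_0}$ for $\ell_0$ sufficiently large, handle the bounded-$\ell$ block by the fixed-mode pointwise estimates of \cref{prop:time:nth}, and handle the high-$\ell$ tail by the commuted elliptic estimate of \cref{prop:inhom:elliptic2} combined with Hardy's inequality. Energy boundedness on the initial hypersurface follows immediately from the pointwise bounds together with energy boundedness and \cref{eq:inhom:energyT} applied to the $T^{-1-\mathfrak{L}}$-inverted equation, so the nontrivial content is the finiteness of the $r^{\tilde p}$-weighted flux on $\C_{\tau_0}$.

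For the bounded-$\ell$ piece ($\mathfrak{L}\leq \ell<2+\delta$), \cref{prop:time:nth}(a) yields $\Xx(T^{-\mathfrak{L}-1}\widehat{\psi}^{[\mathfrak{L}]}_\ell)\lesV\max(r^{\mathfrak{L}-3-\delta+},r^{-2}\log r)$; squaring and integrating against $r^{\tilde p}$ with $\tilde p<\min(5+2\delta-2\mathfrak{L},3)$ gives a finite integral, and the choice $\tilde p=\min(1+2(2+\delta-\mathfrak{L})-,2)$ is admissible. For modes with $\ell\geq \mathfrak{L}+2$, part~(b) of \cref{prop:time:nth} provides the stronger bound $(r^2\Xx)^{\mathfrak{L}+1-k}T^{-1-k}\psi_\ell\lesV\max(r^{k-1-\delta+},1)$, which translates via $\pv=D^{-1}\Xx$ and the identity $r^2\Xx(f)=r^{2\mathfrak{L}+2}\Xx((r^2\Xx)^{\mathfrak{L}+1-k}f)/r^{2(\mathfrak{L}+1-k)}$ (and an inductive interpolation as in the proof of \cref{prop:time:third:sum}) into the desired $r^{\tilde p}$-flux bound. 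Note that these bounds are stated only for $\mathfrak{L}\leq \ell$; the strict inequality $\ell<2+\delta$ forces $\tilde{p}<3$, explaining the $\min(\cdot,2)$ truncation.

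For the high-$\ell$ piece, I would apply \cref{prop:inhom:elliptic2} with $N=\mathfrak{L}+1-k$ and $q=\tilde p$ to the wave equation $\Box_{g_M}T^{-k-1}\phi=T^{-k-1}F+T^{-k-1}\mathfrak{F}$. This produces a bound on $\int_{\C_{\tau_0}}r^{\tilde p-2}|\Dl(r^2\Xx)^{\mathfrak{L}+1-k}T^{-k-1}\psi|^2/r^{4(\mathfrak{L}+1-k)}$ in terms of a $T$-derivative flux, an $r^{\tilde p}$-weighted flux of the inhomogeneity $(r^2\Xx)^{\mathfrak{L}-k}(r^3 T^{-k-1}F)$, and initial-data fluxes already controlled by \cref{prop:time:secondcomplete}/\cref{prop:time:third:sum} (which we assume to have iterated to this stage). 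Compact support of $\phi|_{\tilde\Sigma_{\tilde\tau_0}}$ eliminates contributions from $\psi$ itself, while Hardy's inequality in the form used in the proof of \cref{prop:time:first:elliptic} converts the Laplacian/$\Xx$ flux into the full $\pv$-flux with an additional Poincaré gain that is available for $\ell$ sufficiently large.

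The main technical obstacle is controlling the commuted inhomogeneity $(r^2\Xx)^{\mathfrak{L}-k}(r^3 T^{-k-1}F)$ on $\C_{\tau_0}$ uniformly in $r$. Under \cref{eq:time:ass2}, $\psi_{\ell\geq\mathfrak{L}}\lesV\tau^{-3/2-\mathfrak{L}}\min_q(\tau/r)^q$, so by the argument producing \cref{eq:time:aprioriF} we have $r^3 F\lesV r^{-\delta}\tau^{-\mathfrak{L}-\delta-1/2+}\min_{q\in[-1,1]}(\tau/r)^q$; each $r^2\Xx$-commutation costs (at most) one factor of $\tau^{-1}r$ through \cref{eq:inhom:r2Xcommuted} and \cref{eq:time:apriori:r2X}, so
\[
(r^2\Xx)^{\mathfrak{L}-k}(r^3 T^{-k-1}F)\lesV \tau^{-(2+\delta-\mathfrak{L})+}r^{\mathfrak{L}-k-\delta+}.
\]
Applying $T^{-k-1}$ preserves a small positive $\tau$-power (since $\mathfrak{L}<2+\delta$), and the resulting bound multiplied by $r^{(\tilde p-4)/2}$ is square-integrable on $\C_{\tau_0}$ precisely because $\tilde p<1+2(2+\delta-\mathfrak{L})$. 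The inhomogeneity $T^{-k-1}\mathfrak{F}$ is harmless since \cref{eq:main:assonF} already encodes strong decay. With these inputs in place, the conclusion follows from the same interpolation argument used in the $n=2$ case.
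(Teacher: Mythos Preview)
Your approach is correct and matches the paper's, which simply records that the argument is entirely analogous to that of \cref{prop:time:third:sum}: decompose into low and high angular modes, use the pointwise bounds of \cref{prop:time:nth} for the former, and the elliptic estimate \cref{prop:inhom:elliptic2} together with Hardy for the latter.

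One slip worth correcting: in your treatment of the inhomogeneity you invoke the improved decay $\psi_{\ell\geq\mathfrak{L}}\lesV\tau^{-3/2-\mathfrak{L}}$ to bound $r^3F$, but $F=-(\sqrt{-\det g_M})^{-1}\partial_\mu(G^{\mu\nu}\partial_\nu\phi)$ depends on the \emph{full} $\phi$, not on its projection, so the correct input is the $\alpha=3/2$ bound from the first clause of \cref{eq:time:ass2}, yielding $r^3F\lesV\tau^{-3-\delta}\min_{q\in[-1,1]}(\tau/r)^q$ exactly as in the proof of \cref{prop:time:third:sum}. This only makes things easier: after $(r^2\Xx)^{\mathfrak{L}-k}$ and $T^{-k-1}$ one obtains $\tau^{-(2+\delta-\mathfrak{L})+}$ with no adverse $r$-growth (choose $q=0$), so the inhomogeneity flux is finite for any $\tilde p<3$ and is not the binding constraint---the restriction $\tilde p<1+2(2+\delta-\mathfrak{L})$ comes solely from the fixed-$\ell$ pointwise bound $r^2\Xx T^{-\mathfrak{L}-1}\widehat\psi^{[\mathfrak{L}]}_\ell\lesV r^{\mathfrak{L}-1-\delta+}$ of \cref{prop:time:nth}(a). (Also, $\pv=D\Xx$, not $D^{-1}\Xx$.)
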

	\begin{proof}
		The proof is entirely analogous to that of \cref{prop:time:third:sum}.
	\end{proof}
	Analogously to \cref{cor:time:third}, we then have
	\begin{cor}\label{cor:time:nth}
		If \cref{eq:time:ass2} holds with $2+\delta>\mathfrak{L}\in\N$, then it also holds with $\max(\mathfrak{L}+1,5/2+\delta-)$.
		
		If, in addition, \cref{ass:main:aux} holds with $L\leq \mathfrak{L}$, then we also have
		\begin{equation}
		\widehat{\psi}^{[\mathfrak{L}]}_{\ell\geq \mathfrak{L}}\lesV (\tau^{-3-\delta}+\tau^{-3/2-\mathfrak{L}})\min_{q\in[0,L+1]}\left(\frac{r}{\tau}\right)^{q}.
		\end{equation} 
	\end{cor}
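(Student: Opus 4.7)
The plan is to mirror the proof of \cref{cor:time:third} exactly, with $\mathfrak{L}=2$ replaced by a general $\mathfrak{L} < 2+\delta$. Throughout, I work with the $(\mathfrak{L}+1)$-fold time integral $T^{-\mathfrak{L}-1}\widehat{\phi}^{[\mathfrak{L}]}_{\ell\geq \mathfrak{L}}$, which is well-defined precisely because the hypothesis $\mathfrak{L}<2+\delta$ ensures $\phi$ decays faster than $\tau^{-\mathfrak{L}-1-}$ at order relevant to justifying iterated time integration.

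First, the wave equation for the time integral reads
\begin{equation*}
\Box_{g_M} T^{-\mathfrak{L}-1}\widehat{\phi}^{[\mathfrak{L}]}_{\ell\geq \mathfrak{L}} = T^{-\mathfrak{L}-1}F + F_2^{(\mathfrak{L})},
\end{equation*}
where $F = -\partial_\mu(G^{\mu\nu}\partial_\nu \phi)/(-\det g_M) + \mathfrak{F}$ is controlled via \cref{eq:time:aprioriF} with $\alpha = 3/2$ (valid by \cref{thm:time:first}), and where $F_2^{(\mathfrak{L})}$ collects the contributions from the subtracted approximate solutions $T \apphi[0,0]$ and $\apphi[\ell,m]$ for $1\leq \ell \leq \mathfrak{L}$, after appropriate time integrations. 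Each time integration gains one power of $\tau$-decay, so $T^{-\mathfrak{L}-1}F \lesV r^{-3}\tau^{-(2+\delta-\mathfrak{L})}\min_{q\in[-1,1]}(\tau/r)^q$. Similarly, the bounds \cref{eq:time:inhomF2} from \cref{prop:approx}, after integration, give $F_2^{(\mathfrak{L})}$ satisfying the structural decay required by \cref{prop:inhom:energydecay} in the $F_2$-slot.

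Second, combining the $r^{\tilde{p}}$-flux bound from \cref{prop:time:nthsum} with $\tilde{p} = \min(1+2(2+\delta-\mathfrak{L})-, 2)$, I apply \cref{prop:inhom:energydecay} and \cref{cor:inhom:decay} to $T^{-\mathfrak{L}-1}\widehat{\phi}^{[\mathfrak{L}]}_{\ell\geq\mathfrak{L}}$ with $\eta = 2+\delta-\mathfrak{L}$. This yields
\begin{equation*}
rT^{-\mathfrak{L}-1}\widehat{\phi}^{[\mathfrak{L}]}_{\ell\geq\mathfrak{L}} \lesV \left(\tau^{-(\tilde{p}-1)/2} + \tau^{-\eta+}\right)\min_{q\in[0,1]}(r/\tau)^q,
\end{equation*}
both rates being essentially $\tau^{-\min(1/2, 2+\delta-\mathfrak{L})-}$. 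Writing $\widehat{\phi}^{[\mathfrak{L}]}_{\ell\geq\mathfrak{L}} = T^{\mathfrak{L}+1}T^{-\mathfrak{L}-1}\widehat{\phi}^{[\mathfrak{L}]}_{\ell\geq\mathfrak{L}}$ and using that each $T$-commutation improves decay by one power (via the $T$-commuted version of the same energy decay estimate), I pick up $\mathfrak{L}+1$ extra powers of $\tau^{-1}$, obtaining $\widehat{\psi}^{[\mathfrak{L}]}_{\ell\geq\mathfrak{L}} \lesV \tau^{-\mathfrak{L}-1-\min(1/2,2+\delta-\mathfrak{L})-}$. This translates into $\mathfrak{I}2(\min(\mathfrak{L}+1, 5/2+\delta-))$ for $\phi$.

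Third, for the refined estimate under \cref{ass:main:aux} with $L \leq \mathfrak{L}$, I follow \cref{cor:time:seconditeratefine}: the assumption $r^{-\ell_1}\phi_{\ell \geq \ell_1}\lesV \tau^{-\tilde\alpha-2\ell_1}$ propagates to improved $\tau$-decay for the inhomogeneity $r^{-\ell_1}\partial_\mu(G^{\mu\nu}\partial_\nu\phi)_{\ell\geq\ell_1}$ at the level of each fixed high-$\ell$ mode. Plugging this into the second part of \cref{cor:inhom:decay}, applied to $T^{-\mathfrak{L}-1}\widehat{\phi}^{[\mathfrak{L}]}_{\ell\geq \ell_1}$ for $1\leq \ell_1 \leq L$, yields the improved $r$-weighted estimate with $q\in[0,L+1]$ as stated. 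The main obstacle in executing this is bookkeeping: verifying that the loss in $\tilde{p}$ for $\mathfrak{L}$ close to $2+\delta$ (where $\tilde{p}<2$ strictly) is still compatible with the $(\tilde{p}-1)/2$-rate matching $(2+\delta-\mathfrak{L})$ after $\mathfrak{L}+1$ time differentiations, which is precisely why the $5/2+\delta-$ cap appears in the conclusion.
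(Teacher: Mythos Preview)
Your proposal is correct and follows the same approach as the paper, which itself simply defers to the proofs of \cref{cor:time:seconditerate}, \cref{cor:time:seconditeratefine} and \cref{cor:time:third}. You have spelled out explicitly the steps that the paper leaves implicit: the identification of $\eta = 2+\delta-\mathfrak{L}$ for $F_1$, the use of \cref{eq:time:inhomF2} for $F_2$, and the bookkeeping linking $\tilde{p}$ to the final rate; your observation that the conclusion should read $\min(\mathfrak{L}+1, 5/2+\delta-)$ rather than $\max$ is also correct (the latter is a typo in the statement, as confirmed by comparison with \cref{cor:time:third}).
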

	
	Finally, inductively iterating \cref{prop:time:nth}, \cref{prop:time:nthsum} and \cref{cor:time:nth}, we finally deduce
	\begin{thm}
		\cref{eq:time:ass2} holds with $\mathfrak{L}=2\dots, \lfloor5/2+\delta-\rfloor$ and with  $\mathfrak{L}=5/2+\delta-$. 
			Furthermore, if \cref{ass:main:aux} holds for $L\in\N_{\geq 1}$, then \cref{eq:main:mainthm:higherell} holds for all $1\leq \ell_0\leq \min(2+\delta,L)$.
	\end{thm}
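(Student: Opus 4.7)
The plan is to run the iteration scheme established in \cref{sec:time:third,sec:time:nth} starting from the base case $\mathfrak{L}=2$ provided by \cref{cor:time:seconditerate}, and then combine the resulting sharp decay with a suitable subtraction of approximate solutions in order to extract the precise leading-order asymptotics for each $\ell_0 \leq \min(2+\delta,L)$.

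First I would prove the iterate $\mathfrak{I}2(\mathfrak{L})$ inductively. The base case $\mathfrak{L}=2$ has been shown in \cref{cor:time:seconditerate} (together with \cref{cor:time:seconditeratefine} for the higher $\ell$-mode upgrade). Assuming $\mathfrak{I}2(\mathfrak{L})$ with $\mathfrak{L}\in\N$ and $2\leq \mathfrak{L}<2+\delta$, I would apply \cref{cor:time:nth} to conclude that $\mathfrak{I}2(\max(\mathfrak{L}+1,5/2+\delta-))$ holds. Iterating this upgrade yields $\mathfrak{I}2(\mathfrak{L})$ for every $\mathfrak{L}\in\{2,3,\dots,\lfloor 5/2+\delta-\rfloor\}$, and a final application at the largest integer step delivers $\mathfrak{I}2(5/2+\delta-)$, which is the stated saturation. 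The induction terminates precisely because the mode-coupling obstruction discussed in \cref{rem:intro:modecoupling} restricts the integrability of $H^{(\ell+1)}_{\ell,m}$ along $\scrip$ to $\ell<2+\delta$, in agreement with the bookkeeping in the proof of \cref{prop:time:nth}.

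Next, to derive the precise asymptotics \cref{eq:main:mainthm:higherell} for a fixed $1\leq \ell_0 \leq \min(2+\delta,L)$, I would apply the subtraction procedure of \cref{sec:time:subtract} with approximate solutions $\apphi[\ell_0,m] Y_{\ell_0,m}$ carrying the coefficients $\integral_{\ell_0,m}[\phi]$ defined in \cref{eq:time:subtract:Ilm}. By \cref{prop:time:nth}, after $\ell_0+1$ time integrations the contribution of the logarithmic term at order $r^{-\ell_0}$ in $T^{-\ell_0-1}\widehat{\psi}^{[\ell_0]}_{\ell_0}|_{\Sigma_{\tau_0}}$ is eliminated, so that the $p=2$ flux of $T^{-\ell_0-1}\widehat{\psi}^{[\ell_0]}_{\geq \ell_0}$ is finite (\cref{prop:time:nthsum}). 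The resulting quantitative estimate \cref{cor:time:nth}, combined with the $L$-dependent improvement coming from \cref{ass:main:aux} (which quantitatively suppresses mode coupling at high angular frequencies, yielding an $(\tau/r)^{L+1}$-gain away from $\scrip$), controls $\widehat{\psi}^{[\ell_0]}_{\geq \ell_0}$ by the right-hand side of \cref{eq:main:mainthm:higherell}. Adding the explicit profile $\appsi[\ell_0,m] Y_{\ell_0,m} \integral_{\ell_0,m}[\phi]$ back in then produces \cref{eq:main:mainthm:higherell}.

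The main obstacle will be bookkeeping the $\epsilon$-losses near the endpoint $\mathfrak{L}=5/2+\delta-$, where one must carefully track the interplay between the finite $r^{\tilde{p}}$-flux with $\tilde{p}=\min(1+2(2+\delta-\mathfrak{L})-,2)$ (see \cref{prop:time:nthsum}) and the energy decay output of \cref{prop:inhom:energydecay}, ensuring that the additional inhomogeneity sourced by the subtracted $\apphi$'s (bounded as in \cref{eq:time:inhomF2}) never dominates. The other delicate point is that the assumption $\ell_0\leq L$ is exactly what is needed for the auxiliary implication \cref{eq:main:ass:aux} to propagate the improved decay away from $\scrip$ all the way to the maximal power $(\tau/r)^{\min(L+1,\ell_0+1)}$ on the right-hand side of \cref{eq:main:mainthm:higherell}; beyond $L$, one loses control on the coupling terms in $\partial_\mu(G^{\mu\nu}\partial_\nu\phi)_{\ell\geq\ell_0}$ and the estimate saturates at $q=L+1$ as stated.
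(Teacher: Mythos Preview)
Your proposal is correct and follows essentially the same approach as the paper: the paper's proof consists of the single sentence ``inductively iterating \cref{prop:time:nth}, \cref{prop:time:nthsum} and \cref{cor:time:nth}'', which is precisely the scheme you spell out. Your additional remarks on the bookkeeping near the endpoint $\mathfrak{L}=5/2+\delta-$ and on the role of \cref{ass:main:aux} in supplying the $q\leq L+1$ range are accurate elaborations that the paper leaves implicit.
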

Together with \cref{thm:time:first}, this finishes the proof of \cref{thm:main}.

\subsection{Extensions}
\subsubsection{Improved error estimates}
\begin{obs}\label{obs:error}
	In \cref{thm:time:first}, we have proved an error term $\tau^{-5/2}$ using two time integrals. The reader may readily verify that we can, in fact, take a third time integral of $\widehat{\psi}^{[1]}$. Since $r^2\Xx T^{-2}\widehat{\psi}^{[1]}_{\ell}\sim \log r $ for $\ell=0,1$, the third time integral will satisfy $r^2\Xx T^{-3}\widehat{\psi}^{[1]}_\ell\sim r$. This is still enough to deduce the finiteness of the $r^p$-flux with $p<1$, which in turn translates into an error estimate of $\tau^{-3+}$ in \cref{thm:time:first}.
	Analogous improvements can be made in the other estimates of \cref{thm:main}.
\end{obs}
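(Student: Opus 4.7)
The plan is to iterate the subtraction argument behind \cref{thm:time:first} once more, by taking a \emph{third} time integral of $\widehat{\phi}^{[1]}$ in place of the two that were used in \cref{sec:time:second,sec:time:inbetween}. By \cref{thm:time:first}, the field $\widehat{\phi}^{[1]}$ satisfies \cref{eq:time:apriori} with $\alpha = 2$ and, in particular, $\widehat{\phi}^{[1]}\lesV \tau^{-7/2}$ at finite $r$, which is enough to define $T^{-3}\widehat{\phi}^{[1]}$ as a solution of $\Box_{g_M}T^{-3}\widehat{\phi}^{[1]} = T^{-3}F + T^{-3}\mathfrak{F}$ on $\Sigma_{\tau_0}$ (using the ODE-based construction from the introduction for the radiation field if the convergence of $T^{-3}\widehat{\psi}^{[1]}|_{\scrip}$ is delicate). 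The first step is then to repeat the computations of \cref{prop:time:second:highl} at $N=3$, using as input the bounds $r^2\Xx T^{-2}\widehat{\psi}^{[1]}_\ell \lesV \log r$ for $\ell\in\{0,1\}$ (supplied by \cref{prop:time:second:l=0,prop:time:second:highl}) inside the $\int_{\rc}^{r_\infty} r w_\ell \Xx(T^{-2}r\widehat{\phi}^{[1]}_\ell)\dd r$-integral appearing in the analogue of \cref{eq:time:higherell:wave}. The log is then pushed forward by one power of $r$, giving $r^2\Xx T^{-3}\widehat{\psi}^{[1]}_\ell \lesV r$ for $\ell\in\{0,1\}$, and a refined $r^2\Xx T^{-3}\widehat{\psi}^{[1]}_\ell \lesV \log r$ for $\ell=2$ (by the same mechanism starting from $r^2\Xx T^{-2}\widehat{\psi}^{[1]}_{\ell=2}\lesV \log r$).

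The estimate $r^2\Xx T^{-3}\widehat{\psi}^{[1]}_\ell\lesV r$ yields $\pv(T^{-3}\widehat{\psi}^{[1]}_\ell) = O(r^{-1})$ as $r\to\infty$, so that $r^{\tilde p}(\pv T^{-3}\widehat{\psi}^{[1]})^2 = O(r^{\tilde p - 2})$ is integrable along $\C_{\tau_0}$ for any $\tilde p < 1$. After summing in $\ell$ using \cref{prop:inhom:elliptic1,prop:inhom:elliptic2} (exactly as in \cref{prop:time:secondcomplete,prop:time:third:sum}), one obtains, for any $\tilde p \in (0,1)$,
\begin{equation*}
E_N[T^{-3}\widehat{\phi}^{[1]}](\tau_0)+\int_{\C_{\tau_0}} r^{\tilde p}(\pv T^{-3}\widehat{\psi}^{[1]})^2\dd \mu \lesV[\Vc] 1,
\end{equation*}
while the $T$-commuted $r^2$-flux of $T(T^{-3}\widehat{\phi}^{[1]}) = T^{-2}\widehat{\phi}^{[1]}$ required by \cref{eq:inhom:prop:energyass} is already finite by \cref{prop:time:secondcomplete}. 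The subtraction-induced source terms satisfy the $F_2$-hypothesis of \cref{prop:inhom:energydecay} (as in \cref{eq:time:inhomF2}), so one can apply \cref{prop:inhom:energydecay} with $n=3$ and $\tilde p<1$. Combining the resulting $r^{\tilde p}$- and $r^{2-\tilde p}$-flux decay for $T^{3}(T^{-3}\widehat{\phi}^{[1]}) = \widehat{\phi}^{[1]}$ via the large-$r$ Sobolev argument in the proof of \cref{cor:inhom:decay} produces $\widehat{\psi}^{[1]} \lesV \tau^{(1-\tilde p)/2 - 3}$, which, upon sending $\tilde p\to 1^-$, gives the claimed $\widehat{\psi}^{[1]} \lesV \tau^{-3+}\min_{q\in[0,1]}(r/\tau)^q$. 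The improved decay away from $\scrip$ follows as before from the hierarchy of elliptic estimates in \cref{prop:inhom:elliptic1}.

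The main technical obstacle is checking that the bookkeeping in the low-$\ell$ analysis of \cref{prop:time:second:highl}, when pushed to $N=3$, does not generate any new $r$-power beyond the expected $r$ from the $\int rw_\ell \Xx(T^{-2}r\widehat{\phi}^{[1]}_\ell)\dd r$ term; in particular, the additional bulk contribution $\int w_\ell\tfrac{(\tau_0-\tau)^2}{2}r^2\Box_{g_M}(\apphi[\ell,m]Y_{\ell,m})\dd \mu$ from the subtraction must still be controlled, but this is handled exactly as in \cref{eq:time:inhomogeneityintegral} using \cref{prop:approx}, where the extra $\tau$-weight from $(\tau_0-\tau)^2$ is absorbed by the gain of $(\tau+r)^{-2}$ for $\apphi$. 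The analogous improvements for the higher-$\ell$ estimates \cref{eq:main:mainthm:higherell} proceed identically: for each $\mathfrak{L}\in\N$ with $\mathfrak{L}<\min(2+\delta,\beta)$, one takes $(\mathfrak{L}+2)$ time integrals of $\widehat{\phi}^{[\mathfrak{L}]}$ rather than the $(\mathfrak{L}+1)$ used in \cref{sec:time:nth}, which yields $r^2\Xx T^{-\mathfrak{L}-2}\widehat{\psi}^{[\mathfrak{L}]}_\ell \lesV r$ for the top-order modes, a finite $r^{\tilde p}$-flux for $\tilde p<1$, and consequently upgrades the $\tau^{-\ell-3/2}$ error in \cref{eq:main:mainthm:higherell} to $\tau^{-\ell-2+}$.
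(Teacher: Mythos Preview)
Your proof is correct and follows precisely the approach the observation sketches (the paper itself gives no further proof). One inessential slip: your claim that $r^2\Xx T^{-3}\widehat{\psi}^{[1]}_{\ell=2}\lesV \log r$ is off by one power---the pushing-forward mechanism of \cref{eq:time:proof:highellsecondtimeinverse} applied to $r^2\Xx T^{-2}\psi_{\ell=2}\lesV \log r$ actually yields $T^{-3}\psi_{\ell=2}\lesV \log r$ and hence $r^2\Xx T^{-3}\psi_{\ell=2}\lesV r$ (compare \cref{prop:time:third}\textit{b)} with $\ell=2$), but since this is the same bound as for $\ell=0,1$, it does not affect the finiteness of the $r^{\tilde p}$-flux with $\tilde p<1$ or the conclusion.
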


\subsubsection{Weakening the improved interior decay of the metric}
\begin{obs}\label{obs:weaken:away}
	Suppose that our \cref{ass:dyn:pol} did not contain improved decay away from $\scrip$, i.e.~suppose we merely had $g-g_M\lesssim r^{-1} \tau^{-\delta}$ rather than $(\tau+r)^{-1}\tau^{-\delta}$. Then, revisiting the proof of \cref{cor:time:seconditerate} with $\alpha=1$, we would merely have the bound $F_1\lesssim r^{-2} \tau^{-1/2-\delta}$ away from $\scrip$. We can then iteratively apply \cref{cor:inhom:decay} to eventually obtain the bound $\widehat{\psi}^{[1]}\lesV r\tau^{-\max(7/2,3+\delta)}$. In other words, we can still prove the global leading-order asymptotics for $\ell=0$ without the improved interior decay for the metric. 
\end{obs}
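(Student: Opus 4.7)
The goal is to re-derive the global $\ell=0$ leading-order asymptotics \cref{eq:main:mainthm:l=0} when the interior $\tau$-decay improvement in \cref{ass:dyn:pol} is dropped. The plan is to audit the proof of \cref{thm:time:first} (summarising \cref{lemma:time:zeroth}, \cref{cor:time:firstiterate}, \cref{cor:time:seconditerate}, \cref{cor:time:seconditeratefine}) and to isolate the one estimate that genuinely uses the improvement, then to compensate by an iteration that exploits the fact that the \emph{near-$\scrip$} behavior of $g-g_M$ is unaltered.

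Concretely, I first note that the only place the improved interior decay enters is the bound \cref{eq:time:aprioriF} on $F=-\detgm^{-1}\partial_\mu(G^{\mu\nu}\partial_\nu\phi)$: under the original assumption the $\min$-exponent range is $q\in[-1,1]$, and the case $q=-1$ gives the extra $\tau^{-1}$ at bounded $r$. Under the weakening, this range collapses to $q\in[0,1]$, so $F$ (and hence any finite time-integral of it) loses exactly one power of $\tau$ at bounded $r$, while its near-$\scrip$ decay is unchanged. Consequently, every initial-data estimate (\cref{prop:time:firstcomplete}, \cref{prop:time:secondcomplete}) goes through verbatim, since those arguments only use geometry along $\Sigma_{\tau_0}$ and the behavior of the inhomogeneity towards $\scrip$. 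Similarly, the near-$\scrip$ pointwise bound $\widehat{\psi}^{[1]}\lesV \tau^{-5/2}$ obtained from \cref{cor:inhom:decay} with $\eta=1/2+\delta$ is unaffected.

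The substance of the proof is therefore to upgrade the bounded-$r$ decay by iteration. I first reprove \cref{cor:time:firstiterate}: applying \cref{cor:inhom:decay} to $T^{-1}\phi$, one initially gets a weaker bounded-$r$ bound (say $\psi\lesV \tau^{-1-\delta+}$ at bounded $r$ for small $\delta$), but plugging this improved pointwise bound (together with its $\V$-regularity) back into the estimate for $F$ at bounded $r$ produces one further power of $\tau^{-1}$; a finite iteration, essentially the scheme indicated in \cref{rem:inhom:rp:generalised}, recovers \cref{eq:time:apriori} with $\alpha=1$. Next, I revisit \cref{cor:time:seconditerate}: with $\alpha=1$, the inhomogeneity for $T^{-2}\widehat{\phi}^{[1]}$ splits as $T^{-2}F=F_1+F_2$ with $F_2$ unchanged and satisfying \cref{eq:time:inhomF2}, while $F_1\lesV r^{-3}\tau^{-1/2-\delta}\min_{q\in[0,1]}(\tau/r)^q$, so $F_1\lesV r^{-2}\tau^{-1/2-\delta}$ at bounded $r$. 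Applying \cref{cor:inhom:decay} gives a first interior bound on $\widehat{\psi}^{[1]}$ (weaker than $r\tau^{-7/2}$ by roughly one power of $\tau$). Feeding this improved bound and its $\V$-derivatives back into the structural expression $F\sim (g-g_M)\partial^2\phi+(\partial g)\partial\phi$, I gain one more power of $\tau^{-1}$ in $F$ at bounded $r$, and reapply \cref{cor:inhom:decay}. Each iteration gains one power of $\tau$ at bounded $r$ until the improvement saturates: it stops either at $r\tau^{-7/2}$ (the rate of the original proof, dictated by the $r^p$-hierarchy with $p=2$) or at $r\tau^{-3-\delta}$ (the rate dictated by the $\tau^{-\delta}$ metric decay, after which no further gain in $F$ is available), whichever is worse. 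Thus after finitely many iterations $\widehat{\psi}^{[1]}\lesV r\tau^{-\max(7/2,3+\delta)}$ in the interior, and together with the unchanged $\tau^{-5/2}$ near $\scrip$ this gives the global asymptotics \cref{eq:main:mainthm:l=0}.

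The main obstacle is closing the iteration cleanly: at each step one needs improved pointwise bounds not only for $\widehat{\phi}^{[1]}$ but for its $\V$-commuted versions (\cref{cor:inhom:decay} is stated in $\V$-regular form), and one needs to verify that the corresponding $\V$-commuted bounds on $F$ improve at the same rate. This works because $\V=\{\tau T, r\partial_r|_\Sigma, \mathcal{O}_i\}$ behaves well under the commutation arguments of \cref{sec:inhom} and because the improved assumption on $g-g_M$ and its $\V$-derivatives still produces the desired gain in $F$ pointwise. A secondary subtlety is that the energy fluxes of $T^{-2}\widehat{\phi}^{[1]}$ used in \cref{prop:inhom:energydecay} may only close with $\tilde{p}<2$ in the weakened regime (since the interior $F$-decay is weaker), but the resulting $\epsilon$-loss is harmless as long as $\delta>0$, which is exactly the setting of the observation.
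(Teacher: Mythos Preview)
Your approach is correct and is the same as the paper's brief justification: identify that only the interior bound on $F_1$ weakens (to $r^{-2}\tau^{-1/2-\delta}$ away from $\scrip$), leave all large-$r$ arguments untouched, and iterate \cref{cor:inhom:decay} at bounded $r$ until the gain saturates.

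Two small remarks. First, you correctly say ``whichever is worse'' for the saturation, and the iteration $a\mapsto\min(7/2,\delta+\min(a,3))$ indeed stabilises at $\min(7/2,3+\delta)$, not $\max$; the $\max$ in the displayed bound appears to be a typo in the paper (compare the remark following \cref{thm:main}, which records the interior error as $\tau^{-3-\delta+}$, i.e.\ the $\min$). Second, your closing worry about possibly needing $\tilde p<2$ is a false alarm: the $r^p$-fluxes and the initial-data estimate \cref{prop:time:secondcomplete} live entirely in the large-$r$ region and are unaffected by the interior weakening, so $\tilde p=2$ remains available; only the $\eta_{\ell_0}$-input to \cref{cor:inhom:decay} changes.
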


\subsubsection{Weakening the conformal regularity assumption on the metric}
\begin{obs}
	Using the expressions \cref{eq:dyn:Guv}--\cref{eq:dyn:GAB}, the reader will find that, in order to make sense of the expansion \cref{eq:time:proof:loggenerator} for the $\ell$-th mode, we merely require  conformal regularity $N=\ell$ in \cref{ass:dyn:pol}. More generally, if the metric $g$ is assumed to have a polyhomogeneous expansion in \cref{ass:dyn:pol}, then we will accordingly have a polyhomogeneous expansion in place of \cref{eq:time:proof:loggenerator}, which will then lead to new conformally irregular terms and thus determine the late-time asymptotics for sufficiently large $\ell$.
\end{obs}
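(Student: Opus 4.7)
First, I would verify the claim that $N=\ell$ suffices to make sense of \cref{eq:time:proof:loggenerator} for the $\ell$-th mode by a careful order count on the integrand of $H^{(\ell+1)}_{\ell,m}$. The integrand is $r^{-\ell}\bigl(w_\ell'(G^{v\beta}-G^{u\beta})\partial_\beta\phi\, Y_{\ell,m}+D^{-1}w_\ell G^{A\beta}\partial_\beta\phi\,\partial_A Y_{\ell,m}\bigr)$; since $w_\ell=r^\ell+O(r^{\ell-1})$ is a polynomial in $r$, the factor $r^{-\ell}w_\ell$ has a \emph{finite} expansion with highest order $r^0$ and lowest order $r^{-\ell}$, so the $(\ell+1)$-th coefficient exists as soon as the complementary factor $G^{\cdot\beta}\partial_\beta\phi$ has its expansion up to order $r^{-\ell-1}$. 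Using that $\phi$ has a smooth $1/r$-expansion with $\partial_v\phi=O(r^{-2})$, $\partial_A\phi=O(r^{-1})$, and combining with \cref{eq:dyn:Guv,eq:dyn:GvA,eq:dyn:GAB}, one checks term-by-term: $G^{uv}\partial_v\phi$ requires $(G^{uv})^{(i)}$ up to $i=\ell-1$, which needs $n=\ell+1\leq N+1$; $G^{vA}\partial_A\phi$ requires $(G^{vA})^{(i)}$ up to $i=\ell$, which needs $n=\ell\leq N$; $G^{AB}\partial_B\phi$ requires $(G^{AB})^{(i)}$ up to $i=\ell$, which needs $n=\ell\leq N$. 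In each case, $N=\ell$ is exactly enough.

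Next, to treat the polyhomogeneous generalisation, I would replace each $1/r$-expansion in \cref{ass:dyn:pol} by a finite polyhomogeneous expansion of the schematic form $f\sim \sum_{i,k} f^{(i,k)}(u,\theta)\, r^{-i}(\log r)^k$, with $k$ bounded at each order $i$ and appropriate $u$-weights preserved. Via \cref{lem:dyn:G}, each $G^{\mu\nu}$-component inherits such an expansion. Substituting into the integrand of \cref{eq:time:proof:loggenerator} and closing under products with the smooth $1/r$-expansion of $\phi$ gives a polyhomogeneous integrand, whose $r$-antiderivative is computed termwise via $\int r^{-i-1}(\log r)^k\,dr = -\tfrac{(\log r)^k}{i\, r^i}+\tfrac{k}{i}\int r^{-i-1}(\log r)^{k-1}\,dr$. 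This yields, in place of \cref{eq:time:proof:loggenerator}, an expansion $r^2 D w_\ell^2\Xx T^{-1}\check{\phi}_{\ell,m}\sim \sum_{j,k} I^{(j,k)}_{\ell,m}\, r^{\ell-j}(\log r)^{k}$ with explicit coefficients given by integrals along $\scrip$ of the polyhomogeneous coefficients of $G$ and $\phi$.

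The key observation is then that, whereas in the purely conformally smooth case the first conformal irregularity for $T^{-1}\psi_\ell$ occurs at order $r^{-\ell}\log r$ (corresponding to $j=\ell+1$, $k=1$), the polyhomogeneous setting generically produces terms with $j<\ell+1$ and $k\geq 1$, i.e.\ $r^{-\ell+j'}\log r$-type contributions in $rT^{-1}\phi_\ell$ with $j'\geq 1$. For $\ell$ large enough compared to the first nonzero polyhomogeneous order of $G$, these lower-order logarithmic terms dominate the smooth-case term and thus determine the leading asymptotics of the initial data for $T^{-1}\phi_\ell$; running them through the subtraction-and-iteration scheme of \cref{sec:time:subtract,sec:time:second,sec:time:third,sec:time:nth} exactly as in the proof of \cref{thm:main} produces a late-time tail for $\phi_\ell$ that decays strictly slower than $\tau^{-1-\ell}$. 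The main obstacle is the combinatorial bookkeeping required to track, simultaneously, which powers of $\log r$ appear at each $r$-order in the polyhomogeneous products, how these combine under the subsequent $u$-integrations along $\scrip$ (which may cause certain coefficients to vanish after integration by parts in $u$ and on the sphere), and how each surviving log term is amplified under successive time integrations; once this symbolic calculus on polyhomogeneous expansions is fixed, the analytic steps of \cref{sec:time} go through unchanged.
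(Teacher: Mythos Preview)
The paper provides no proof for this observation; it is stated as a remark and explicitly defers the verification to the reader. Your proposal carries out exactly the order-counting that the observation invites, and the overall reasoning is sound: the crucial term is the $G^{AB}\partial_B\phi$ contribution, which does require knowledge of $(G^{AB})^{(i)}$ up to $i=\ell$ and hence $n=\ell\leq N$ in \cref{eq:dyn:GAB}, so $N=\ell$ is indeed the sharp requirement. The polyhomogeneous discussion is also in line with what the paper intends.

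One minor bookkeeping slip: in your $G^{vA}$ count you write ``up to $i=\ell$, which needs $n=\ell\leq N$'', but these two statements are inconsistent with \cref{eq:dyn:GvA} (whose sum runs to $i=n-1$). In fact, for both terms involving $G^{vA}$---namely $r^{-\ell}w_\ell'\, G^{vA}\partial_A\phi$ and $r^{-\ell}w_\ell\, G^{vA}\partial_v\phi$---one only needs $(G^{vA})^{(i)}$ up to $i=\ell-1$, since $r^{-\ell}w_\ell'$ starts at order $r^{-1}$ and $\partial_v\phi$ starts at order $r^{-2}$. This gives $n=\ell\leq N$, consistent with your conclusion but not with your stated intermediate step. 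The conclusion $N=\ell$ stands.
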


\subsubsection{The case of non-compactly supported initial data}
\begin{obs}
	To simplify the presentation, we have assumed our initial data to be of compact support. If we instead had assumed that our initial data had a conformally regular expansion, i.e.~$\psi|_{\C_{\tau_0}}=1+1/r+\dots$, then we would now also have an initial data contribution in ${\integral}_{\ell,m}[\phi]$ for $\ell\geq1$, while, for $\ell=0$, the coefficient ${I}_0[\phi]$ would be completely determined from initial data: Indeed, for $\ell=0$, we need to subtract an approximate solution already at the level of the first time integral. The proof is otherwise unchanged.
	
	Of course, we may also consider conformally irregular data. This will simply mean less time-inversions (notice that even for compactly supported data for $\phi$, the associated data for $T^{-\ell}\phi_{\ell}$ are highly conformally irregular.)\todo{Perhaps further comments here later on.}
\end{obs}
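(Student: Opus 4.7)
The plan is to adapt the proof of \cref{thm:main} to the setting of initial data with a conformally regular $1/r$-expansion along $\C_{\tau_0}$, rather than compactly supported data. The only places in the proof of \cref{thm:main} where compact support of the data was used are in \cref{prop:time:l=0:1} and \cref{prop:time:l>0:1}: the outgoing null-hypersurface integrals $\int_{\C_{\tau_0}} r\Xx(r\phi) \dd r \dd \sigma$ and $\int_{\rc}^{r_\infty} rw_\ell \Xx(r\phi_{\ell,m}) \dd r$ in \cref{eq:time:l=0:proof1} and \cref{eq:time:higherell:wave}, respectively, trivially converged in $r$ because $\psi$ was compactly supported. Under a conformally regular expansion $\psi|_{\C_{\tau_0}} = \sum_{k=0}^{n}\psi^{(k)}(\theta^A)/r^k + O(r^{-n-1})$, these integrals acquire initial-data contributions to their $1/r$-expansions, exactly of the type analyzed in~\cite{gajic_relation_2022}.

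For $\ell \geq 1$, I would revisit the $r$-integrals in \cref{eq:time:higherell:wave}: plugging in the conformal expansion of $\psi$, the integral $\int_{\rc}^{r_\infty} r w_\ell \Xx(r\phi_{\ell,m}) \dd r$ contributes a term $c_{\ell,m}^{\mathrm{init}} \log r_\infty$ at order $r_\infty^{\ell-1}$, whose coefficient $c_{\ell,m}^{\mathrm{init}}$ is an explicit linear functional of $\psi^{(0)}, \dots, \psi^{(\ell-1)}$ along $\C_{\tau_0}$ (and analogous contributions along $\Cbar_{\tau_0}$). This modifies \cref{eq:time:timeinversionexpressionl=1} and \cref{eq:time:proof:loggenerator} by additive initial-data terms, so $\integral_{\ell,m}[\phi]$ becomes the sum of the dynamical contribution \cref{eq:main:thm:integralexpression} and this new initial-data contribution. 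The remainder of the argument of \cref{sec:time:subtract}--\cref{sec:time:third} then applies \emph{verbatim}: the subtraction of $\apphi Y_{\ell,m} \cdot \integral_{\ell,m}[\phi]$ eliminates the logarithm in $T^{-1}\widehat{\psi}_\ell^{[\ell]}$ as in \cref{eq:time:subtr:l>0}, and the higher time integrations proceed unchanged.

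For $\ell = 0$, the new feature is more dramatic: already at the first time integration, $r^2\Xx(T^{-1}\psi_{\ell=0})|_{\Sigma_{\tau_0}}$ acquires a $\log r$ term from $\int_{\rc}^{r_\infty} r\Xx(r\phi) \dd r$, with coefficient determined entirely by $\psi^{(0)}|_{\C_{\tau_0}}$. Consequently, $\integral_0[\phi]$ becomes purely an initial-data functional at leading order (the dynamical contributions of \cref{prop:time:rewritingl=0} are subleading), and, as observed in \cref{sec:introstep3}, one should subtract $\apphi[0,0] \cdot \integral_0[\phi]$ \emph{already} after the first time integration rather than after the second. Setting $\widehat{\phi}^{[0]} := \phi - \apphi[0,0] \cdot \integral_0[\phi]$, the first time integral of $\widehat{\phi}^{[0]}_{\ell=0}$ then has finite $r^p$-flux with $p=2$ and the argument of \cref{cor:time:firstiterate} applied to $\widehat{\phi}^{[0]}$ yields the Price-law rate $\psi_{\ell=0} \sim \apphi[0,0] \cdot \integral_0[\phi]$ (i.e.~$r\psi_{\ell=0} \sim \tau^{-1}$ near $\scrip$), one power slower than the compactly supported case.

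The conformally irregular case (e.g.~$\psi|_{\C_{\tau_0}} \sim r^{-\alpha}$ with $\alpha \in (0,1)$, or logarithmic terms) is handled analogously but with \emph{fewer} time integrations: the leading irregularity of $\psi|_{\C_{\tau_0}}$ itself already produces the dominant tail, and one subtracts a suitable Minkowskian approximate solution (cf.~\cref{rem:approx}, where $\phi^\eta$ need not be the specific profile used in \cref{prop:approx}) matched to the initial conformal irregularity. The main bookkeeping obstacle is to verify that, across both the $\ell = 0$ and $\ell \geq 1$ cases and across the various conformal regularities one may impose, the approximate solutions chosen at each level of time integration do not re-introduce obstructions at subsequent levels; this follows from \cref{prop:approx} and the structure of \cref{eq:time:inhomogeneityintegral}, which shows that the inhomogeneity generated by subtracting $\apphi$ appears two orders later in the $1/r$-expansion than the term it cancels, hence does not interfere with the leading-order asymptotics at the next step.
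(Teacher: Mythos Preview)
The paper offers no detailed proof of this observation---it simply asserts the modifications---so your elaboration is largely a fleshing-out of the same idea, and your structural outline is correct: the only places compact support entered were the $\C_{\tau_0}$-integrals in \cref{eq:time:l=0:proof1} and \cref{eq:time:higherell:wave}, and with a conformal expansion these now contribute nontrivially; for $\ell\geq 1$ this adds an initial-data piece to $\integral_{\ell,m}$, while for $\ell=0$ it forces the subtraction one time-integral earlier.

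There are, however, bookkeeping slips in your $\ell=0$ computation. Writing $\psi|_{\C_{\tau_0}} = c_0 + c_1/r + \dots$, one has $r\Xx\psi = -c_1/r + O(r^{-2})$, so the integral $\int_{\rc}^{r_\infty} r\Xx\psi\,dr$ contributes $-c_1\log r_\infty$ to $r^2 D\Xx T^{-1}\phi_{\ell=0}$: the relevant initial-data quantity is therefore $\psi^{(1)}$ (the initial Newman--Penrose constant of $\psi$), not $\psi^{(0)}$. Moreover, propagating this through \cref{eq:time:l=0:NP} gives $T^{-1}\psi_{\ell=0}\sim c_1\log r$, whence $r^2\Xx(T^{-1}\psi_{\ell=0})\sim c_1 r$ grows \emph{linearly}, not logarithmically. (The $\log r$ sits in $r^2 D\Xx T^{-1}\phi_{\ell=0}$, not in $r^2\Xx T^{-1}\psi_{\ell=0}$.) Your conclusion is unaffected---the $r^p$-flux of $T^{-1}\psi_{\ell=0}$ with $p=2$ is still obstructed, and subtracting $\appsi[0,0]\cdot c_1$ cancels the $\log r$ in $T^{-1}\psi_{\ell=0}$ as desired---but the computation is not quite as you wrote. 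An analogous index shift occurs for $\ell\geq 1$: the initial-data contribution to the $\log r$ coefficient in \cref{eq:time:proplog} depends on $\psi^{(k)}$ with $k$ up to $\ell+1$ (the leading piece being $\psi^{(\ell+1)}$), not merely on $\psi^{(0)},\dots,\psi^{(\ell-1)}$.
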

\section{Proofs of \cref{thm:main:F,thm:main:phi3,thm:main:phi4,thm:main:puphipvphi}}\label{sec:additionalproof}
	In this section, we provides proofs for \cref{thm:main:F,thm:main:phi3,thm:main:phi4}.
	The proofs will be simpler variations of the proof of \cref{thm:main}. We will therefore be brief:	\begin{proof}[\textbf{Proof of \cref{thm:main:F}}]
		\textit{Starting point:}
		First of all, by an argument analogous to \cref{lemma:time:zeroth}, we may assume that \cref{eq:time:apriori} holds with $\alpha=0-$.
		The reason why we cannot directly infer $\alpha=0$ is simply that we can only perform $r^p$-estimates with $p<2$, since only then will the spacetime integral over $r^{p+1}(r\mathfrak{F})^2\lesssim r^{p-3}$ be well-defined. This arbitrarily small loss in decay is entirely inconsequential to our argument. (In fact, we can still take the full hierarchy of $r^p$-estimates up to $p=2$ if we simply estimate the bulk term as described in \cref{rem:inhom:rp:u}!)
		
		\textit{Initial data of $T^{-1}\phi_{\ell=0}$:} 
		Repeating the computations of \cref{prop:time:l=0:1}, we find directly that
		\begin{equation}
			\lim_{r\to\infty}\frac{r^2}{\log r} D\Xx T^{-1}\phi_{\ell=0}|_{\Sigma_{\tau_0}}=-\int_{\scrip} [\mathfrak{F}]^{(3)}\dd \mu\implies \lim_{r\to\infty}  T^{-1}\psi_{\ell=0}|_{\Sigma_{\tau_0}} =\int_{\scrip} [\mathfrak{F}]^{(3)}\dd \mu.
		\end{equation}
		Thus, for $\ell=0$, we can only deduce a finite $r^p$-flux with $p<1$. 
		
		\textit{Subtraction of approximate solution for $\ell=0$:}
		At this stage, we recall from \cref{prop:approx} that $\lim_{\scrip} T^{-1}\appsi[0,0]=\log r$. We may then infer that
		\begin{equation}
	\Xx	T^{-1}\big(\psi_{\ell=0}-\int_{\scrip}[\mathfrak{F}]^{(3)}\dd \mu \cdot \appsi[\ell=0]\big)\big|_{\Sigma_{\tau_0}}\lesV r^{-2}\log r,
		\end{equation}		
		which allows to infer the finiteness of the $r^p$-flux of the time integral of the difference with $p=2$. 
		
	\textit{Completing the first time-inversion:} Since the proof of \cref{prop:time:l>0:1} is unaffected by the slower decay-rate in $r$ of $\mathfrak{F}$, we may at this stage already deduce the leading-order asymptotics of the solution; in particular, we deduce the $\ell=0$-part of \cref{thm:main:F}.
	
	\textit{Further time-integrals for higher $\ell$:} 
	For higher $\ell$, we can proceed exactly as in the previous section.
	\end{proof}

\textbf{Proofs of \cref{thm:main:phi3,thm:main:phi4}:}
We need a decay statement as a starting point. 
For this, we quote~\cite{tohaneanu_pointwise_2022}:
\begin{thm}[Theorem 2.1 of \cite{tohaneanu_pointwise_2022}]
	Let $3\leq P\in\N$. Given sufficiently small, smooth compactly supported initial data to the equation 
	\begin{equation}
		\Box_{g_M}\phi= \phi^P
	\end{equation}
	 along $\Sigma_{\tau_0}$, then the solution $\phi$ is global-in-time and satisfies for some $\delta>0$:
	 \begin{equation}\label{eq:tohan}
	 \psi\lesV[\Vc] \tau^{-\delta}\min_{q\in[0,1]}\left(\frac{r}{\tau}\right)^q.
	 \end{equation}
\end{thm}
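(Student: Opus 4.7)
The plan is to prove this small-data global existence and decay result via a standard continuity/bootstrap argument, making use of the inhomogeneous linear estimates of \cref{sec:inhom} with $F = \phi^P$ as an inhomogeneity. Since the initial data is smooth and compactly supported and the nonlinearity $\phi^P$ is analytic, standard local well-posedness yields a unique smooth solution on a maximal future interval $[\tau_0, T^*)$, and what needs to be shown is that $T^* = \infty$ together with the stated global decay.

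First, I would set up a bootstrap: on any interval $[\tau_0, T']$ with $T' < T^*$, assume that for a sufficiently large but fixed number $K$ of commutations with $\Vc = \{r\partial_r|_{\Sigma}, T, \O_i\}$, the commuted non-degenerate $N$-energy $E_N[\Vc^k \phi](\tau)$, the commuted $r^p$-weighted energies $\int_{\C_\tau} r^p (\pv \Vc^k \psi)^2$ for $p\in [0,2]$, and the commuted pointwise bound $|\Vc^k\psi| \leq C\varepsilon \tau^{-\delta_0}\min(r/\tau)^q$ for $q\in[0,1]$ all hold with some small $\delta_0>0$ and some constant $C$ multiplied by $\varepsilon$ measuring the initial data size. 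For $\varepsilon$ sufficiently small, smallness of the initial data gives the bound at $\tau = \tau_0$ with constant $C/2$, and I aim to close the bootstrap by improving $C$ to $C/2$ on $[\tau_0, T']$.

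Second, to close the bootstrap I would view the equation as $\Box_{g_M}\phi = \phi^P$ and treat the RHS as an inhomogeneity. Commuting the nonlinearity gives schematically $\Vc^k(\phi^P) \sim \sum \phi^{P-j}\prod \Vc^{k_i}\phi$ with $k_1 + \cdots + k_j = k$, which under the bootstrap satisfies pointwise bounds of the form $|\Vc^k(\phi^P)| \lesssim (C\varepsilon)^P r^{-P}\tau^{-P\delta_0}\min(r/\tau)^{Pq}$, using $\psi = r\phi$ and the bootstrap pointwise bound. Since $P\geq 3$, this yields $r^3 F \lesV (C\varepsilon)^P r^{3-P}\tau^{-P\delta_0}\min(r/\tau)^{q'}$ for $q' \in[0,1]$, which, together with finiteness of the commuted data fluxes, verifies the hypotheses of \cref{prop:inhom:main} with $\eta$ strictly larger than $\delta_0$ (for $\varepsilon$ small). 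Applying this proposition, combined with the $r^p$-weighted estimates \cref{prop:inhom:rp} and redshift \cref{prop:inhom:redshift} at the commuted level, yields improved energy and pointwise bounds proportional to $\varepsilon + (C\varepsilon)^P$. For $\varepsilon$ sufficiently small, this beats $C\varepsilon/2$, closing the bootstrap, hence $T^*= \infty$ and the pointwise bound \cref{eq:tohan} holds with any $\delta < \delta_0$ (or better).

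The main obstacle is organising the hierarchy of commuted estimates in a self-consistent way: the $r^p$-weighted estimates of \cref{prop:inhom:rp} come with a loss of derivatives upon commutation with $\Vc$ (because trapping at the photon sphere $r=3M$ forces a loss in the integrated local energy decay of \cref{prop:inhom:iled}), and after each commutation one must re-estimate the $\Vc^{k+1}$-commuted nonlinearity in terms of products of lower-order derivatives. This is handled by choosing $K$ large enough that any Sobolev/Klainerman-type embedding used to convert energy bounds into pointwise bounds (on spheres $\S_\tau^r$ via angular commutations and on slabs via integration in $r$) always sees bounds already controlled by strictly fewer commutators than $K$, so that the final estimate closes. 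A secondary difficulty, also handled by taking $\varepsilon$ small, is that the $r^p$-hierarchy is only valid for $p\leq 2$, so the improvement in decay per iteration of the bootstrap is limited, requiring care in choosing the bootstrap exponent $\delta_0$ strictly smaller than the decay rate that a single round of linear estimates produces.
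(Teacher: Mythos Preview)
This theorem is not proved in the paper at all: it is simply quoted from \cite{tohaneanu_pointwise_2022} as a black-box input (hence the attribution ``Theorem 2.1 of \cite{tohaneanu_pointwise_2022}'' in its title). The authors only add two remarks after the statement --- that the $r\partial_v$-commutations missing from \cite{tohaneanu_pointwise_2022} can be recovered analogously to \cref{prop:inhom:rX}, and that they do not need the sharp value of $\delta$ --- and then immediately use the result as the starting point for the proof of \cref{thm:main:phi3}.

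Your proposal is therefore not comparable to the paper's own argument; you are attempting to reprove the cited external result from scratch using the tools of \cref{sec:inhom}. As a sketch in that direction it is broadly reasonable: for $P\geq 3$ the inhomogeneity $F=\phi^P=r^{-P}\psi^P$ carries enough $r$-weight that the $r^p$-weighted bulk terms in \cref{prop:inhom:rp} are controllable, and the cubic-or-better smallness $(C\varepsilon)^P$ closes the amplitude part of the bootstrap. Two points would need more care if you actually carried this out. First, the decay exponent: under your bootstrap $|\psi|\lesssim C\varepsilon\tau^{-\delta_0}$, one has $r^3\phi^3\lesssim(C\varepsilon)^3\tau^{-3\delta_0}$ with no additional $r$-decay, so you are in the situation of \cref{rem:inhom:rp:generalised} (the $q$-range in \cref{eq:inhom:prop:main:ass} only reaches $q=0$, not $q=1$) and incur an $\epsilon$-loss; this is harmless but should be acknowledged. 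Second, the bootstrap as you state it presupposes $\delta_0>0$ from the start; one either needs a preliminary step establishing global boundedness (and then bootstrapping decay), or one must choose $\delta_0$ small enough that the linear output $\tau^{-1/2}$ from \cref{prop:inhom:main} with $\tilde p=2$ already beats the bootstrap rate $\tau^{-\delta_0}$ and the nonlinear output $\tau^{-3\delta_0+}$ is subordinate. You gesture at this in your final paragraph but do not make it precise.
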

\begin{rem}
	We make two comments on this theorem. Firstly, \cite{tohaneanu_pointwise_2022} does not show that  \cref{eq:tohan} commutes with $r\pv$-vector fields, nevertheless, such commutations are easy to recover, arguing analogously to \cref{prop:inhom:rX}. 
	Secondly, \cite{tohaneanu_pointwise_2022} proves \cref{eq:tohan} with the sharp value of $\delta$; we shall not make use of this to illustrate the fact that it is sufficient for the methods of the present paper to apply \cref{eq:tohan} with an arbitrarily small $\delta>0$. 
	In fact, we also do not actually need the full strength of the improved interior decay encoded in the $\min_{q\in[0,1]}$.
\end{rem}
	\begin{proof}[\textbf{Proof of \cref{thm:main:phi3}}]
		Firstly, we may immediately upgrade \cref{eq:tohan} by simply iterating the results of \cref{sec:inhom}:
		\begin{lemma}
			If $\psi=r\phi$ satisfies \cref{eq:tohan}, then $\phi$ satisfies \cref{eq:time:apriori} with $\alpha=0-$.
		\end{lemma}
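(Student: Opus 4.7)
The plan is to iterate the linear estimates of \cref{sec:inhom}, treating the cubic term as a prescribed inhomogeneity $F=\phi^3$ whose pointwise bound improves at every step. The assumed decay estimate \eqref{eq:tohan} plays only the role of a weak quantitative starting bound; the entire upgrade to \cref{eq:time:apriori} is then a bootstrap of the form ``$\Vc$-regularity at rate $\delta \Longrightarrow \V$-regularity at rate $3\delta$'', iterated until one hits the ceiling $\tau^{-1/2+}$ dictated by the length of the $r^p$-hierarchy on Schwarzschild.

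The first thing I would do is promote $\Vc$-regularity to $\V$-regularity at the current decay rate. Commuting $\Box_{g_M}\phi = \phi^3$ with $T$ yields $\Box_{g_M}(T\phi) = 3\phi^2\, T\phi$; the right-hand side has the same $r$-weight as $\phi^3$ but gains one power of $\tau$-decay because $T$ falls on a factor that is already bounded in $\Vc$. Applying \cref{prop:inhom:energydecay} to $T\phi$ (whose initial fluxes are finite for compactly supported data) and using \cref{lem:inhom:Tconversion} to trade $T$-derivatives for $r\pv$-derivatives, I would obtain a decay bound for $T\psi$ that is one power of $\tau$ faster than the bound for $\psi$. Equivalently, $\tau T\psi$ satisfies the same bound as $\psi$, producing the missing $\tau T$-regularity. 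The same argument, applied inductively in the number of $T$-commutations together with commutations with $\Vc$, gives the full $\V$-regularity.

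With $\V$-regularity in hand, $\phi\lesV r^{-1}\tau^{-\delta}\min_{q\in[0,1]}(r/\tau)^q$ immediately implies $r^3F = r^3\phi^3 \lesV \tau^{-3\delta}\min_{q\in[0,1/3]}(r/\tau)^{3q}$, which fits the hypothesis $r^3F\lesV \tau^{-\eta}\min_{q\in[-1,1]}(r/\tau)^q$ of \cref{prop:inhom:main} with $\eta = 3\delta$. The initial-data hypothesis \eqref{eq:inhom:prop:main:ass} is trivial since the data is smooth and compactly supported. Applying \cref{prop:inhom:main} with $\tilde p$ arbitrarily close to $2$ and $F_1=\phi^3$, $F_2=0$, I obtain
\begin{equation*}
\psi \lesV \bigl(\tau^{-1/2+} + \tau^{-3\delta+}\bigr)\min_{q\in[0,1]}(r/\tau)^q.
\end{equation*}
Iterating this argument $k$ times yields a decay rate $\min(1/2,\,3^k\delta)+$, which reaches $\tau^{-1/2+}$ after finitely many steps. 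This is exactly \cref{eq:time:apriori} with $\alpha = 0-$.

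The only genuinely delicate step is the first one, namely verifying that the $T$-commuted equation feeds back consistently into the bootstrap: commuting $\Box_{g_M}\phi = \phi^3$ with $T$ or $r\partial_r|_\Sigma$ produces commutator terms of the schematic form $\phi^2 \cdot Z\phi$ with $Z\in \Vc$, and these have to be controlled by the inductive hypothesis at the same level of regularity. This is not an obstacle in substance, only in bookkeeping: every such term is lower-order relative to the diagonal bound, and the corresponding spacetime integrals in $\fbulkp{\tau_1}{\tau_2}{p}$ remain finite at every iteration precisely because $\phi$ is initially already better than borderline-integrable in $\tau$. Everything else is a direct citation of \cref{prop:inhom:rp,prop:inhom:energydecay,cor:inhom:decay,lem:inhom:Tconversion}.
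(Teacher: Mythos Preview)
Your proposal is correct and follows the same route as the paper, which dispatches the lemma in a single sentence (``we may immediately upgrade \cref{eq:tohan} by simply iterating the results of \cref{sec:inhom}''); you have simply spelled out that iteration. One small imprecision: the bound $r^3\phi^3=\psi^3\lesV \tau^{-3\delta}\min_{q\in[0,1]}(r/\tau)^{q}$ does \emph{not} literally satisfy the full $q\in[-1,1]$ range in \cref{prop:inhom:main} (since $\psi^3$ has a nontrivial limit at $\scrip$), but this is exactly the situation handled by the remark following that proposition, and you correctly compensate by taking $\tilde p<2$, which is the origin of the ``$0-$'' rather than ``$0$''.
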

		In particular, the improved $T$-regularity, as well as the improved interior decay away from $\scrip$ follow from merely knowing \cref{eq:tohan}. 
		
		At this stage, we want to proceed as in the proof of \cref{thm:main:F}, with the only difference being that we need to iteratively improve the decay of $\mathfrak{F}=\phi^3$ as follows: 
		
	\textit{a)} Since $\phi$ satisfies \cref{eq:time:apriori} with $\alpha=0-$, we may apply \cref{thm:main:F} with $\beta=1/2-$, i.e.~we may take $\ell_0=0$ in \cref{thm:main:F}. This already gives that $\psi$ satisfies \cref{eq:time:apriori} with $\alpha=1/2$. 
	
	\textit{b)} We may apply \cref{thm:main:F} with $\beta=3$ and therefore with $\ell_0=0,1,2$. 
	
	\textit{c)} For higher $\ell$, we now make use of the fact that the projection $(\phi^3)_{\ell\geq \ell_0}$ will only contain products $\phi_{\ell_1}\phi_{\ell_2}\phi_{\ell_3}$ satisfying, in particular, $\sum {\ell_i}\geq \ell_0$.
	Thus, when considering the equation satisfied by $\psi_{\ell\geq \ell_0}$ with $\ell_0=3$, we may in fact take $\beta=6$ and apply \cref{thm:main:F} up until $\ell_0=5$. 
	
	\textit{d)} The full proof of \cref{thm:main:phi3} now follows by an inductive iteration. 
	\end{proof}
	
	\begin{proof}[\textbf{Proof of \cref{thm:main:phi4}}]
		We argue precisely as in the previous proof, except that we now appeal to \cref{thm:main} (with $g=g_M$) rather than to \cref{thm:main:F}.
	\end{proof}
	
	\textbf{Proof of \cref{thm:main:puphipvphi}:}
Note that global existence for $\Box_{g_M}\phi=\pu\phi\pv\phi$ is well-known. In particular, one may revisit the argument of \cite{dafermos_quasilinear_2022} and include $r\pv$-commutations to establish \cref{eq:main:additional:ass} for $\epsilon=1/2-$.
The additional regularity with respect to $\tau T$ is then obtained by iteratively applying the results of \cref{sec:inhom}.
	\begin{proof}[\textbf{Proof of \cref{thm:main:puphipvphi}}]
		For the higher $\ell>0$-modes, the proof is analogous to the previous ones, the only difference is for $\ell=0$. 
		Taking the first time-integral for $\ell=0$, we obtain that
		\begin{equation}
		4\pi	\lim_{r\to\infty}r^2\Xx(T^{-1}\psi_{\ell=0})=\frac{C_1+2MC_0}{2}, \qquad\text{where } C_1=\int_{\scrip}(\pu\phi\pv\phi)^{(4)}\dd\mu,
		\end{equation}
		and where $C_0$ is as in the stationary case (see~\cref{eq:time:C0}). 
		Here, we already used that $(\pu\phi\pv\phi)^{(3)}$ is a total derivative (see~\cref{eq:main:tralala}) and therefore does not contribute. 
		
		We next take the second time integral, where we now get two logarithmic contributions:
		\begin{equation}
		4\pi	r^2D\Xx T^{-2}\phi_{0}=\int_{\rc}^r \frac{C_1+2MC_0}{2r'}\dd r' -\int_{\rc}^{r}\frac1{r'}\dd r' \int_{\scrip}(\tau_0-\tau')\underbrace{(\pu\phi\pv\phi)^{(3)}}_{=-\frac12 \pu((\psi^{(0)})^2)}\dd\mu,
		\end{equation}
		from which the result follows after integration by parts.
	\end{proof}
\section{Comments on the genericity of the tails}\label{sec:gen}
In this section, we give a proof of the statement that the coefficients $\integral_{\ell,m}[\phi]$ featuring in \cref{eq:main:mainthm:higherell} are each non-vanishing in the complement of a set of initial data of codimension at least 1, within the space of smooth, compactly supported initial data on $\tilde{\Sigma}_{\tilde{\tau}_0}$.
The argument here is inspired by the genericity argument of \cite{luk_strong_2019}, and can also be extended to the nonlinear examples \cref{thm:main:phi3,thm:main:phi4,thm:main:puphipvphi}; we give a sketch of this as well.

\subsection{Linear equations} For concreteness, we focus on the setting of \cref{thm:main}. 
The idea is the following: Given a solution $\phi$ for which the integral $\integral_{\ell,m}[\phi]$ vanishes, we want to add a $\lambda$-multiple of a solution $\phi^f$ to $\Box_g\phi^f=0$ for which the integral $\integral_{\ell,m}[\phi^f]$ does not vanish and then argue by linearity that $\integral_{\ell,m}[\phi+\lambda \phi^f]$ only vanishes for $\lambda=0$.
 
In principle, we could construct $\phi^f$ via a backwards scattering construction starting from an appropriate, compactly supported radiation field $\psi^f|_{\scrip}=r\phi^f|_{\scrip}=f(u,\theta^A)$ along $\scrip$ and trivial data along $\hplus$; however, the arising solution would not induce compactly supported data on $\Sigma_{\tilde{\tau}_0}$. We will therefore construct an solution from a backwards scattering argument with initial data imposed on a finite ingoing null hypersurface, such that $\phi^f|_{\Sigma_0}$ is compactly supported at the expense of $\psi^f|_{\scrip}$ only \emph{approximating} a fixed function $f$.

Let $U\gg u_0$ and $R\gg 2M$ be sufficiently large, and let $V=U+r^{\ast}(R)$.
Let $f(u,\theta^A)$ be compactly supported in $(u_0,u_0+1)$.
Then we define $\phi^f$ as the solution to $\Box_{g}\phi^f=0$ arising from the following data:\footnote{Note that even when studying the equation $\Box_g\phi=\mathfrak{F}$ for some inhomogeneity $\mathfrak{F}$, we still want $\phi^f$ to solve the homogeneous equation! Note also that $\phi^f$ is defined in the entire future domain of dependence of $\tilde{\Sigma}_{\tilde{\tau}_0}$}
\begin{align*}
		\psi^f|_{\tilde{\Sigma}_{\tilde{\tau}=U}\cap\{v\leq V\}}\equiv &\:0,\\
	\psi^f(u,V)=&\: f(u,\theta^A)\quad &(u_0\leq u\leq U),\\
	\psi^f(u_0,v)=&0\quad &(v\geq V+1),\\
\psi^f|_{\mathcal{H}^+\cap J^-(\tilde{\Sigma}_{\tilde{\tau}=U})}\equiv &\:0.
\end{align*}
See Figure \ref{fig:gen} for a pictorial representation of the above set-up.
\begin{figure}[htb]
	\includegraphics[width=0.5\textwidth]{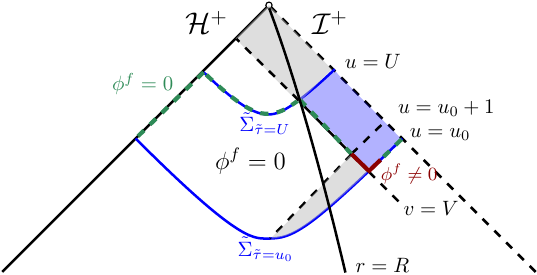}
	\caption{Set-up of the solution $\phi^f$. It is only nonvanishing in the shaded regions.}\label{fig:gen}
\end{figure}
Notice that all $\pv$-derivatives of $\psi^f(u_0,v)$ at $v=V$ are determined by integrals over $f(u)$ via the equation $\Box_{g}\phi^f=0$ and the first of the conditions above, and we then smoothly extend the data along $u=u_0$ to vanish for $v\geq V+1$.
By a standard semi-global scattering argument (see e.g.~\cite{kadar_scattering_2025}), we may directly obtain the  boundedness of our $r^p$-weighted and $\Vc$-as well as $r^2\partial_v$-commuted energies along $\tilde{\Sigma}_{\tilde{\tau}=u_0}$ \textit{uniformly in $U$ and $R$}. In turn, by \cref{thm:main}, we may then infer the following global decay estimate as $\tau\to\infty$:
\begin{equation}
	\psi^f \leq C(f) \tau^{-2}.
\end{equation}
We will now show that $f$ is a good approximation for $\psi^f|_{\scrip}$ for $u_0\leq u\leq U$.
\begin{lemma}\label{lem:gen}
	Let $\psi^f=r\phi^f$ be the solution to $\Box_{g}\phi^f=0$ constructed above. Then we have in $[u_0,U]\times [V,\infty)\times S^2$:
	\begin{equation}
	|	\psi_\ell^f(u,v,\theta^A)-f_\ell(u,\theta^A)|\leq \frac{C_{\ell}}{R}||f||_{L^{\infty}((u_0,u_0+1)\times S^2)}+C(f) R^{-1+}.
	\end{equation}
\end{lemma}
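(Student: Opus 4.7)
The plan is to write
\begin{equation*}
\psi^f_\ell(u,v,\theta^A) - f_\ell(u,\theta^A) = \int_V^v \partial_{v'} \psi^f_\ell(u,v',\theta^A)\,\dd v',
\end{equation*}
and estimate the integrand by combining a fundamental-theorem-of-calculus identity in $u$ with the projected wave equation, exploiting the largeness of $r \geq R$ in the integration region. As a first step, I would establish uniform (in $U, R$) $L^{\infty}$ bounds on $\psi^f$ and its commuted derivatives: a standard semi-global characteristic scattering argument in the region $\{u_0\leq u\leq U, v\geq V\}\cup\{u\leq u_0\}$ together with the energy and $r^p$-hierarchy estimates of \cref{sec:inhom} (which do not depend on $U, R$) yields $\|\mathcal{V}_c^{\leq N}\psi^f\|_{L^{\infty}} \leq C_N \|f\|_{L^{\infty}((u_0,u_0+1)\times S^2)}$ for the $\ell$-th mode we care about, plus lower-order corrections controlled by $C(f)$ coming from the metric perturbation.

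Next, I would use that on $\{u=u_0\}$ the function $\psi^f$ and its transversal derivatives vanish for $v\geq V+1$ by construction, while for $v\in[V, V+1]$ they are bounded by $C\|f\|_{L^\infty}$ via the local theory of the characteristic initial value problem posed on $\{v=V\}\cup\{u=u_0\}$. Hence
\begin{equation*}
\partial_{v'}\psi^f_\ell(u,v',\theta^A) = \partial_{v'}\psi^f_\ell(u_0,v',\theta^A) + \int_{u_0}^u \partial_u\partial_{v'}\psi^f_\ell(u',v',\theta^A)\,\dd u',
\end{equation*}
and the first term contributes $O(\|f\|_{L^\infty})$ only in the narrow strip $v'\in[V,V+1]$, which, after integrating in $v'$, gives a harmless $O(R^{-1}\|f\|_{L^\infty})$ contribution once one uses that $\psi^f$ is uniformly bounded in $r$ (so that this strip can be absorbed into the overall $r^{-1}$-type decay, obtained, e.g., by integrating $\pu\pv\psi^f_\ell$ once more in $u$ to gain an additional $1/r^2$ factor and invoking the support properties).

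For the second term I would use that $\psi^f$ satisfies \cref{eq:wave:dynamical:inhom} projected onto $\ell$, i.e.\ schematically
\begin{equation*}
\partial_u\partial_{v'}\psi^f_\ell = \frac{D\ell(\ell+1)}{r^2}\psi^f_\ell - \frac{2MD}{r^3}\psi^f_\ell + \mathcal{E}_\ell[\psi^f],
\end{equation*}
where the error term $\mathcal{E}_\ell[\psi^f]$ collects the contributions of $\partial_\alpha(G^{\alpha\beta}\partial_\beta\phi^f)/\detgm$. In the region $\{r\geq R\}$, \cref{ass:dyn:pol} (via \cref{eq:dyn:Guv}--\cref{eq:dyn:GAB} and \cref{lem:dyn:G}) gives $|\mathcal{E}_\ell[\psi^f]|\lesssim r^{-2}u^{-\delta}\,\|\mathcal{V}_c^{\leq 2}\psi^f\|_{L^\infty}\lesssim C(f)\,r^{-2}u^{-\delta}$. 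Substituting into the double integral and bounding $\|\psi^f_\ell\|_{L^\infty}\leq C_\ell\|f\|_{L^\infty}$,
\begin{equation*}
\int_V^v\int_{u_0}^u \big|\partial_u\partial_{v'}\psi^f_\ell\big|\,\dd u'\,\dd v' \lesssim \ell(\ell+1)\,\|f\|_{L^\infty}\int_V^v\frac{U-u_0}{r^2}\,\dd v' + C(f)\int_V^v\frac{u^{-\delta}}{r^2}\,\dd v',
\end{equation*}
where $r=r(u,v')\geq R$ on the domain of integration; the first term gives $\tfrac{C_\ell}{R}\|f\|_{L^\infty}$ after recalling that $U-u_0$ is absorbed into $C_\ell$ (it corresponds to the spatial diameter of the support of $f$) and the second gives $C(f)R^{-1+}$, noting the loss of $\tau^{0+}$ in the integral in $v'$.

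The main obstacle is the careful bookkeeping needed to ensure that the $L^\infty$ bounds on $\psi^f$ and its derivatives really depend only on $\|f\|_{L^\infty}$ (with possibly $\ell$-dependent but $R$-, $U$-independent constants), rather than on $\|f\|_{L^\infty}$ times a factor that itself grows in $U$. This is where one needs to combine the $r^p$-weighted energy estimates with the characteristic nature of the data along $\{v=V\}$: since $r\geq R$ on the initial cone $\{v=V\}$, all weighted energies of $\psi^f$ restricted there are dominated by $\|f\|_{L^\infty}$ (up to derivatives of $f$ in $u,\theta^A$), so the $r^p$-method carried out on the shifted foliation starting from $\{v=V\}$ propagates this control to the interior of the region without picking up factors of $U$.
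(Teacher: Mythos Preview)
Your approach has a genuine gap in the treatment of the $\ell(\ell+1)r^{-2}$ term. You claim that
\[
\ell(\ell+1)\,\|f\|_{L^\infty}\int_V^v\int_{u_0}^u \frac{\dd u'\,\dd v'}{r(u',v')^2}
\]
is bounded by $\tfrac{C_\ell}{R}\|f\|_{L^\infty}$ because ``$U-u_0$ corresponds to the spatial diameter of the support of $f$''. This is false: $f$ is supported in $(u_0,u_0+1)$, so its $u$-support has length $1$, whereas $U$ is an \emph{independent} large parameter (the lemma is claimed uniformly in $U,R$). If you bound crudely, the double integral gives $(U-u_0)/R$, which is useless. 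If instead you integrate carefully using $\int_{u_0}^u r^{-2}\,\dd u'\lesssim r(u,v')^{-1}$, the remaining $v'$-integral $\int_V^v r(u,v')^{-1}\,\dd v'$ produces $\log\bigl(r(u,v)/r(u,V)\bigr)$, which is unbounded as $v\to\infty$. Either way the estimate fails for $\ell\geq 1$.

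The paper avoids this by first solving the \emph{Minkowskian} characteristic problem with the same data and exploiting the commuted identities
\[
\partial_u\bigl(r^{-2k}\partial_v(r^2\partial_v)^k\psi^\eta_\ell\bigr)=(k(k+1)-\ell(\ell+1))\,r^{-2k-2}(r^2\partial_v)^k\psi^\eta_\ell,
\]
which at $k=\ell$ give a genuine conservation law along $u$. Since $\psi^\eta(u_0,\cdot)$ is compactly supported in $[V,V+1]$, the conserved quantity integrates to something bounded by $\|f\|_{L^\infty}$ \emph{uniformly in $U$ and $v$}; undoing the $\ell$ applications of $r^2\partial_v$ then yields $|\psi^\eta_\ell-f_\ell|\leq C_\ell R^{-1}\|f\|_{L^\infty}$. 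Only afterwards does one estimate $\psi^f-\psi^\eta$ by treating the Schwarzschild/dynamical corrections as an inhomogeneity of size $O(r^{-3})$; this is where the softer $C(f)R^{-1+}$ term arises. The conservation-law step is essential and cannot be replaced by your direct double integration.
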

\begin{proof}
	The proof proceeds in two steps, we first prove the estimate for the corresponding Minkowksian problem, and then estimate the difference between the Minkowskian solution and the actual solution.
	We define the Minkowskian wave operator similarly as in \cref{sec:approx}, but now with respect to $(u,v)$-coordinates (simply to keep the null cones the same), and we let $\phi^\eta$ denote the solution to $\Box_\eta\phi^\eta=0$ arising from data as above (restricted to the region $[u_0,U]\times [V,\infty)\times S^2$). 	Then we have the following identities: (cf.~\cref{eq:inhom:r2Xcommuted}:)
	\begin{equation}
		\pu(r^{-2k}\pv(r^2\pv)^k\psi^\eta_\ell)=(k(k+1)-\ell(\ell+1))r^{-2k-2}(r^2\pv)^k\psi^\eta_\ell.
	\end{equation}
	We may use these identities to determine the $\pv$-derivatives along $v=V$ of $\psi^\eta_\ell$; these will be bounded by $||f||_{L^{\infty}((u_0,u_0+1)\times S^2)}$. 
		We then set $k=\ell$ in the above to infer that
		\begin{equation}
			(r^2\pv)^\ell \psi^\eta_\ell(u,v)=(r^2\pv)^\ell \psi^\eta_\ell(u,V)+\int_{V}^{{v}} r(u,v')^{2\ell} r(u_0,v')^{-2\ell}\pv(r^2\pv)^\ell\psi^\eta_\ell(u_0,v')\dd v', 
		\end{equation}
(notice the the integrand above is compactly supported)	 and, after an inductive argument consisting of repeated integration, we obtain:
		\begin{equation}
		|	\psi^\eta_\ell(u,\bar{v})-f(u)|\leq C_\ell R^{-1} ||f||_{L^{\infty}((u_0,u_0+1))}.
		\end{equation}
		The estimate still holds after commuting with vector fields in $\V$.
		
		We now estimate the difference between Minkowskian and actual solution: 
	Notice that we have
		\begin{equation}
			\Box_{g_M}(\phi^f-\phi^\eta)= O(r^{-3})|\V^2\phi^\eta|+F \implies \pu\pv(\psi^f-\psi^\eta)=\frac{\Dl (\psi^f-\psi^\eta)}{r^2}+rF +O(r^{-3}\log r)|\V^2\psi^\eta|,
		\end{equation}
		from which the result follows after integrating in $u$ and $v$. (Since we actually have $\psi^f\lesssim u^{-2}$ at our disposal, we have a stronger estimate, but we wrote the estimate of \cref{lem:gen} to also cover the case where $\psi^f\lesssim u^{-1}$.)
\end{proof}

The upshot it is that we can construct solutions whose radiation fields are approximately $f(u)$ for $u\leq U$ (with error term $O(R^{-})$), and which are of size $U^{-2}$ for $u\geq U$.  Thus, if we want to deduce the generic nonvanishing of, say, $\integral_0[\phi]$ (and setting $\mathfrak{F}=0$ for simpler notation), we can then pick $f$ such that
\begin{equation}
	\int_{\scrip\cap\{u\leq U\}} \mathfrak{m} f \dd \mu\geq c_f
\end{equation}
for some $c_f>0$. Then, choosing $R$ and $U$ sufficiently large depending only on $f$ and $\mathfrak{F}$, we obtain that
\begin{equation}
	\int_{\scrip}\mathfrak{m}\psi^f \dd \mu \geq c_f/2.
\end{equation}

Therefore, given any solution to $\Box_{g}\phi=\mathfrak{F}$ such that $\integral_0[\psi]$ vanishes, we can deduce by linearity that $\integral_0[\psi+\lambda \psi^f]\neq 0$ for $\lambda\neq 0$, ensuring the generic nonvanishing of $\integral_0$. Entirely analogous arguments hold for the nonvanishing $\integral_{\ell=1,m}$, so we have:
\begin{prop}\label{prop:gen}
	In the setting of \cref{thm:main}, the subset of smooth, compactly supported data leading to vanishing $I_{0}[\phi]$ has at least codimension 1, and the set of initial data leading to vanishing $I_{\ell=1,m}[\phi]$ for all $m\in\{-1,0,1\}$ has codimension at least 3.
\end{prop}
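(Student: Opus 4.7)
The plan is to follow the scheme already outlined in the paragraph preceding the proposition: construct, for appropriately chosen functions $f$ on $\scrip$, comparison solutions $\phi^f$ whose integrals $\integral_{\ell,m}[\phi^f]$ are nonzero, and then use the linearity of $\integral_{\ell,m}$ in $\phi$ (when $\mathfrak{F}$ and the metric $g$ are kept fixed) to deduce the codimension statement.

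First I would fix a solution $\phi$ with compactly supported initial data such that $\integral_0[\phi]=0$. In order to perturb it off the zero set of $\integral_0$, I would pick a smooth function $f(u,\theta^A)$ compactly supported in $(u_0,u_0+1)\times S^2$ with $\int_{\scrip\cap\{u\leq U\}}\mathfrak{m}\, f\,\dd u\dd\sigma=c_f\neq 0$ (this is possible because $\mathfrak{m}$ does not vanish identically in $u$ for small enough $f$ perturbations, using \cref{lem:dyn:Bondi}). Applying the backwards scattering construction described above \cref{lem:gen}, I would produce the comparison solution $\phi^f$ with compactly supported data on $\tilde\Sigma_{\tilde\tau_0}$. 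By \cref{lem:gen}, with $R,U$ chosen sufficiently large depending only on $f$, we have $|\psi^f(u,\cdot)-f(u,\cdot)|\leq \tfrac12 c_f/\|\mathfrak{m}\|_{L^1 L^\infty}$ for $u\leq U$, while the global decay $|\psi^f|\lesssim \tau^{-2}$ makes the contribution from $u\geq U$ of size $O(U^{-1})$. Combining these estimates (and noting that the $\mathfrak{F}$-contribution in $\integral_0[\phi^f]$ vanishes by construction, since $\phi^f$ solves the homogeneous equation) one obtains $|\integral_0[\phi^f]|\geq c_f/4\pi$ for $R,U$ large enough. By linearity of the map $\phi\mapsto\integral_0[\phi]$ on the space of solutions with fixed metric and inhomogeneity, $\integral_0[\phi+\lambda\phi^f]=\lambda\,\integral_0[\phi^f]\neq 0$ for every $\lambda\neq 0$, proving the codimension-one statement.

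For the $\ell=1$ statement, the strategy is the same but carried out three times in parallel. I would choose three functions $f_{-1},f_0,f_1$ whose spherical profiles test the three $Y_{1,m}$-modes, i.e.\ with $f_m(u,\theta^A)=h(u)Y_{1,m}(\theta^A)$ for some non-negative bump $h$ compactly supported in $(u_0,u_0+1)$. Using the explicit form \cref{eq:main:thm:integral:l=1} for $\integral_{\ell=1,m}[\phi]$ (or \cref{eq:main:thm:integralexpression} in the general case), together with the orthogonality of spherical harmonics, the $3\times 3$ matrix $M_{m m'}:=\integral_{\ell=1,m}[\phi^{f_{m'}}]$ becomes approximately diagonal with nonzero diagonal entries (the diagonal captures $\int h(2\mathfrak{m}-2M-\lap_{S^2}(\gsh^{(1)}\cdot))\,Y_{1,m}^2\dd u\dd\sigma$, which is nonzero as soon as $\mathfrak{m}-M\not\equiv 0$ or the News function does not vanish on the support of $h$). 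Thus $M$ is invertible for $R,U$ large enough, so the map $\mathbb{R}^3\ni \lambda\mapsto (\integral_{1,m}[\phi+\sum_{m'}\lambda_{m'}\phi^{f_{m'}}])_m$ is affine with invertible linear part, showing that the joint vanishing locus has codimension at least three.

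The main obstacle is ensuring the nondegeneracy of the matrix $M$ in the second step, since $\integral_{\ell=1,m}[\phi^{f_{m'}}]$ involves coupling of $\psi^{f_{m'}}$ with metric components via $(G^{\mu\nu})^{(i)}$, $\mathfrak{m}$ and $\gsh^{(1)}$ along $\scrip$, and a priori these could conspire to make the diagonal vanish in special dynamical backgrounds. This is handled by choosing $h$ localised on a sub-interval of $(u_0,u_0+1)$ where at least one of $\mathfrak{m}-M$, $\gsh^{(1)}$, or $b^{(1)}$ is nonzero (if none is nonzero anywhere on $\scrip$, then $g\equiv g_M$ there and the problem reduces to the stationary case already covered in the literature). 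The remaining steps — proving the uniform-in-$U,R$ energy and $r^p$-weighted bounds needed to apply \cref{thm:main} to $\phi^f$, and justifying the $O(R^{-1})$ error in \cref{lem:gen} under commutations — are essentially routine semi-global scattering estimates of the type in \cite{kadar_scattering_2025} and pose no new analytic difficulty.
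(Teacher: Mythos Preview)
Your proposal is correct and follows essentially the same approach as the paper: construct the comparison solution $\phi^f$ via the backwards scattering set-up of \cref{lem:gen}, use that lemma to approximate $\psi^f|_{\scrip}\approx f$, pick $f$ so that the relevant integral is nonzero, and conclude by linearity of the functionals $\phi\mapsto\integral_{\ell,m}[\phi]$ (the $\mathfrak{F}$-part being a fixed constant independent of $\phi$). For $\ell=1$ you actually supply more detail than the paper, which simply asserts that ``entirely analogous arguments hold'' and states the proposition; your $3\times3$ matrix formulation is the natural way to unpack that assertion. One small caution: your claim that the matrix $M_{mm'}$ is ``approximately diagonal'' relies implicitly on the spherically symmetric part of $\mathfrak{m}-M$ dominating, which need not hold for all admissible $g$; a safer phrasing is that the three kernels $K_m$ appearing in \cref{eq:main:thm:integral:l=1} are linearly independent on $\scrip$ whenever the metric is genuinely dynamical, so one can always choose three (not necessarily pure-$Y_{1,m'}$) test functions $f$ making $M$ invertible.
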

The argument also directly generalises to higher $\ell$, except that we now first have to re-write the integrals $\integral_{\ell,m}[\phi]$ using the wave equation and integration by parts so that they only feature tangential derivatives of $\phi$ (and the vanishing for all $m$ will then imply codimension at least $2\ell+1$).

\subsection{Nonlinear equations}
The argument of \cref{lem:gen} still holds for nonlinear equations, so we can still construct nonlinear solutions $\psi^f$ whose radiation field is approximately $f$, for some freely prescribable compactly supported function $f$. 
However, $\psi+\lambda\psi^f$ need no longer be a solution to a linear equation, so we have to modify the argument. We will sketch a modification that will apply to general nonlinear equations.

For concreteness, we consider the equation $\Box_{g_M}\phi=\phi^3$, and denote by $P$ the nonlinear operator defined by:
\begin{equation}
	P\phi=\Box_{g_M}\phi-\phi^3.
\end{equation}
Recall from \cref{thm:main:phi3} that the $\ell=0$-late-time behaviour is determined by the integral
\begin{equation}
J_0[\phi]=\frac12\frac{1}{4\pi}	\int_{\scrip}\psi^3 \dd \mu.
\end{equation}
Suppose then that we have a  solution $\phi$ for which $J_0[\phi]=0$.
Roughly, we want to perturb the radiation field $\psi|_{\scrip}$ to $\psi_{\scrip}+\lambda f$, which would perturb the integral above according to
\begin{equation}\label{eq:lambda}
	\int_{\scrip}\psi^3+3\lambda \psi^2 f+3\lambda^2 \psi f^2+\lambda^3 f^3 \dd\mu.
\end{equation}
Notice then that unless $\psi|_{\scrip}$ vanishes identically along $\scrip$, the $\lambda f \psi^2$-term dominates for sufficiently small $\lambda$. We therefore need to distinguish between the following two cases:

\textbf{Case A): $\psi|_{\scrip}\equiv 0$.} 
If the radiation field is identically zero, then we first let $\phi^f$ be a size-$\epsilon$ solution to $P\phi^f=0$ such that $J_0[\phi^f]>\epsilon c_f/2>0$; this exists by \cref{lem:gen}. 
Secondly, we define $\tilde{\phi}_\lambda$ as a solution to $P \tilde{\phi}_{\lambda}=0$ arising from the initial data induced by the sum $\phi+\lambda \phi^f$ on $ \tilde{\Sigma}_{\tilde{\tau_0}}$.

All three functions $\psi=r\phi$, $\psi^f=r\phi^f$ and $\tilde{\psi}_{\lambda}=r\tilde{\phi}_{\lambda}$ can be globally, uniformly estimated by $C\cdot \epsilon \tau^{-1}$, where $\epsilon$ depends purely on a choice of initial data norm and is sufficiently small to guarantee global existence, and we take all solutions to lie in the same ball $B(0,\frac{\epsilon}{3})$ with respect to this initial data norm.

In particular, if we then study the equation satisfied by $\phi^{\Delta}:=\tilde{\phi}_{\lambda}-(\phi+\lambda\phi^f)$,
\begin{equation}
	\Box_{g_M}\phi^{\Delta}=\tilde{\phi}_{\lambda}^3-\phi^3-(\lambda\phi^f)^3, 
\end{equation}
then, since $\phi^{\Delta}$ has trivial data, on $\tilde{\Sigma}_{\tilde{\tau}=u_0}$ by construction, we can prove that $\psi^{\Delta}\leq  C\lambda  \epsilon^3\tau^{-1}$. Thus, using that $\psi|_{\scrip}\equiv 0$, we may deduce that
\begin{equation}
	J_{0}[\tilde{\phi}_{\lambda}] =J_0[\lambda\phi^f]+O(\lambda^4 \epsilon^5),
\end{equation}
from which it follows that $I_{0}[\tilde{\phi}_{\lambda}]\geq \epsilon \frac{c_f}{4}$, and from which the property of being codimension at least 1 follows for sufficiently small $\lambda$.

\textbf{Case B): $\psi|_{\scrip}$ not vanishing identically.}
If $\psi|_{\scrip}$ does not vanish identically, then there is an open set contained in $(u_1,u_2)\times S^2$ along which $\psi|_{\scrip}$ is bounded strictly away from zero, for some $u_2>u_1>u_0$. 
We then want to perturb the radiation field by function $f$ supported in $(u_1,u_2)$ such that the term linear in $\lambda$ in \cref{eq:lambda} is non-vanishing; in particular the choice of $f$ now depends on the value of $\psi|_{\scrip}$ in $(u_1,u_2)\times S^2$.

Rather than letting $\psi^f$ be a solution to $P \psi^f=0$, we instead consider the linearisation of $P$ around $\psi$; that is, we let $\dot{\phi}^f$ be solution to the linear problem
\begin{equation}
	\Box_{g_M}\dot{\phi}^f=3\dot{\phi}^f\phi^2
\end{equation}
such that the initial data of $\dot{\phi}^f$ induced on $\tilde{\Sigma}_{\tilde{\tau}_0}$ lies in a ball $ B(0,\frac{\epsilon}{3})$ and
\begin{equation}
	\int_{\scrip\cap \{u\leq U\}} 3\dot{\psi}^f \psi^2 \dd \mu >\delta(\epsilon) \cdot c_f, 
\end{equation}
where $\delta(\epsilon)$ depends on the size of $\psi$ in $(u_1,u_2)\times S^2$.

The construction of such a $\dot{\psi}^f$ is analogous to the construction in the previous subsection.
Notice that we have, in particular, that
\begin{equation}
	\Box_{g_M}(\phi+\lambda \dot{\phi}^f)=(\phi+\lambda\dot{\phi}^f)^3+O(\lambda^2).
\end{equation}
In fact, by defining $\tilde{\phi}_{\lambda}$ to be the solution to $P\tilde{\phi}_{\lambda}$ arising from the initial data induced by the sum $\phi+\lambda\dot{\phi}^f$ along $\tilde{\Sigma}_{\tilde{\tau}_0}$, analogously to the set-up of Case A), and by then considering the equation satisfied by the difference 
 $\phi^{\Delta}=\tilde{\phi}_{\lambda}-(\phi+\lambda\dot{\phi}^f)$, namely
 \begin{equation}
 	\Box_{g_M}\phi^{\Delta}=\phi^{\Delta}(\tilde{\phi}_{\lambda}^2+\tilde{\phi}_{\lambda} (\phi+\lambda\dot{\phi}^f)+ (\phi+\lambda\dot{\phi}^f)^2)+O(\lambda^2\epsilon^3),
 \end{equation} we can now deduce an improved estimate of the form $\psi^{\Delta}\leq C \lambda^2 \epsilon^3 \tau^{-1}$, from which it follows that
\begin{equation}
	J_0[\tilde{\phi}_{\lambda}]=\frac{1}{8\pi} 	\int_{\scrip} 3\lambda \dot{\psi}^f \psi^2 \dd \mu+O(\lambda^2\epsilon^5),
\end{equation}
from which it follows that we $J_0[\tilde{\phi}_{\lambda}]>\frac{1}{2}\delta(\epsilon) \cdot c_f$  for sufficiently small $\lambda$, and we can deduce the property of being codimension at least 1 (for sufficiently small $\lambda$). 

Notice that the above argument is not particular to the $\phi^3$ nonlinearity and generalises to other nonlinearities.

\appendix
\section{Motivation of assumptions on the dynamical metrics}\label{app:motivation}
The aim of this appendix is to assist the reader in motivating and interpreting \cref{ass:dyn:pol} defining the class of dynamical metrics studied in the present paper. 

\paragraph{Near null infinity}
We begin with a discussion of the assumptions near $\scrip$. 
First of all, our dynamical metrics satisfy, in particular, the central global gauge conditions near $\scrip$:
\begin{equation}\label{eq:app:gauge}
	\Omega^2=\Omega^2_M+O(r^{-2}), \qquad \slashed{g}_{AB}-\slashed{g}_{M,AB}=O(r), \qquad b^A=O(1),
\end{equation}
where the $O$-terms may include arbitrary $u$-dependency for now.

At this stage, we assume that $g$ satisfies the Einstein vacuum equations and we use the usual notation for the connection coefficients $\et, \etb, \chi, \underline{\chi}$:
\begin{equation}
	\chi_{AB}=g(\nabla^g_A e_4, e_B), \quad 	\underline{\chi}_{AB}=g(\nabla^g_A e_3, e_B), \quad 	\et_{AB}=-\frac12 g(\nabla^g_3 e_A, e_4),\quad \etb_{AB}=-\frac12 g(\nabla^g_4 e_A, e_3);
\end{equation}
furthermore,  $\xh$ and $\xhb$ denote the traceless parts of $\chi$ and $\underline{\chi}$, respectively. With respect to this decomposition, the Einstein vacuum equations (see, for instance, \cite{dafermos_non-linear_2021}[Section~1.2]) then, when also assuming \cref{eq:app:gauge} and assuming the curvature components $\rho, \sigma, \beta,\alpha$ to all decay faster than $r^{-2}$, imply the following relations along $\scrip$:
\begin{align}
	&	\pu \hat{\slashed{g}}^{(1)}_{AB}=2\xhb^{(1)}_{AB} \qquad &\text{\big(using \cite{dafermos_non-linear_2021}[eq. (1.2.5)]\big)},\\
	& \xh^{(2)}=-\frac12 \hat{\slashed{g}}^{(1)}  \qquad &\text{\big(using \cite{dafermos_non-linear_2021}[eq. (1.2.8)]\big)},\\
	&\pu \otrx^{(2)}=0\qquad &\text{\big(using \cite{dafermos_non-linear_2021}[eq. (1.2.10)]\big)},\\
	&\pu \otrxb^{(2)}=-|\xhb^{(1)}|^2 \qquad &\text{\big(using \cite{dafermos_non-linear_2021}[eq. (1.2.7)]\big)},\\
	&	{\pu \tr g^{(1)}=0}	\qquad &\text{\big(using \cite{dafermos_non-linear_2021}[eq. (1.2.5)]\big)}	,\\
	&\pu \et^{(2)}_A=-(\mathring{\div}\xhb^{(1)})_{A} \qquad &\text{\big(using \cite{dafermos_non-linear_2021}[eq. (1.2.14) and (1.2.19)]\big)},\\
	&\et^{(2)}_A=-\etb^{(2)}_A \qquad &\text{\big(using \cite{dafermos_non-linear_2021}[eq. (1.2.21)]\big)},\\
	&b^{(1)}_A=-2\et^{(2)}_A \qquad &\text{\big(using \cite{dafermos_non-linear_2021}[eq.~(1.2.34)]\big)}.
\end{align}
If we now assume that along a single cone, the quantities  $\otrxb^{(2)}$ and $\tr g^{(1)}$ vanish. Then they necessarily vanish everywhere, and we moreover have that $\otrxb^{(2)}=4m(u)$ (in view of \cref{lem:dyn:Bondi}). 
In addition, if we assume that $\eta^{(2)}$ vanishes as $u\to\infty$ (this, i.e.~the equivalent statement for $b^{(1)}$ is contained in \cref{ass:dyn:pol}!), then we have $\eta^{(2)}=-\frac12 \div \gsh^{(1)}$ and thus $b^{(1)}=\mathring{\div}\gsh^{(1)}$.

Thus, under these assumptions, the leading-order behaviour imposed in \cref{ass:dyn:pol} corresponds exactly to the assumption that $\gsh^{(1)}$ decays like $u^{-\delta}$. 
The behaviour of the higher-order terms in the expansions of \cref{ass:dyn:pol} then follows similarly. In fact, this will depend entirely on the conformal regularity, i.e.~the $1/r$-expansions, of the initial data.

Now, the assumptions \cref{eq:app:gauge}, along with the aforementioned assumptions on the limit of $\eta^{(2)}$ vanishing and  $\otrxb^{(2)}$ and $\tr g^{(1)}$ vanishing are satisfied in the $\scrip$-gauge of \cite{dafermos_non-linear_2021}; see \cite{dafermos_non-linear_2021}[Definition 2.2.1] (as well as the computations in \cite{dafermos_non-linear_2021}[Section 14], or the final pointwise estimates \cite{dafermos_non-linear_2021}[(eq. 6.1.22)]).

The work \cite{dafermos_non-linear_2021} thus essentially constructs spacetimes as in \cref{ass:dyn:pol} with $\delta=1$, and with some finite $N$ (though this depends solely on the assumption on the initial data). \textit{However,} improved decay with respect to time-derivatives are not proved in \cite{dafermos_non-linear_2021}, and, in the same vain, it is not shown that $\xhb^{(1)}$ decays one power faster than $\gsh^{(1)}$. That is to say, \cite{dafermos_non-linear_2021} merely establishes $\xhb^{(1)}\lesssim u^{-1}$ (though this could easily be improved to $u^{-2}$ even within the existing framework of \cite{dafermos_non-linear_2021}).

Notice that commutation estimates with $r\partial_r|_{\Sigma}$ are already contained in \cite{dafermos_non-linear_2021}, which is the crucial step towards getting improved decay for time derivatives (cf.~\cref{lem:inhom:Tconversion}).

\paragraph{Away from null infinity}
Away from null infinity, our \cref{ass:dyn:pol} assumes one additional power in $\tau$-decay. This additional power is not present in \cite{dafermos_non-linear_2021}. It would be recovered by showing that $T$-derivatives decay one power faster, and then performing an elliptic estimate analogous to the one leading to \cref{eq:inhom:cor:all:l} of \cref{cor:inhom:decay}.
Similarly, the assumption on faster decay for higher $\ell$-modes (\cref{ass:main:aux}) would follow from an estimate analogous to \cref{eq:inhom:cor:high:l}.

\paragraph{The global coordinate system}
The class of metrics in \cref{ass:dyn:pol} is given with respect to one global coordinate system (modulo the two coordinate charts necessary to cover the sphere). This is not the case for the metrics constructed in \cite{dafermos_non-linear_2021}, where there are two different coordinate systems (ignoring the two different coordinate systems for the sphere), one near $\hplus$ and one near $\scrip$. 
In order to construct out of these one global coordinate system, one can, for instance, glue the two coordinate systems along a suitable spacelike hypersurface sitting at ``$t=n$'', $t$ denoting a suitably glued time-coordinate, then backwards construct a full coordinate system from this spacelike hypersurface, and then sending $n$ to infinity. \todo{Maybe comment on purpose.}

\section{Deriving the a priori assumptions on $\phi$}\label{app:derivapriori}
In this appendix, we give a short sketch of how to derive (III) of \cref{ass:main}, i.e.~we give a sketch of the proof of the following
\begin{prop}
Let $g$ be as in \cref{ass:dyn:pol}, and let $\phi$ arise from smooth compactly supported data to $\Box_g\phi=0$.
Then \cref{eq:time:apriori} holds with $\alpha=0$.
\end{prop}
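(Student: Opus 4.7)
The plan is to view $\Box_g \phi=0$ as the inhomogeneous Schwarzschild problem
\[
\Box_{g_M}\phi = F, \qquad F = -\frac{\partial_\mu(G^{\mu\nu}\partial_\nu\phi)}{\sqrt{-\det g_M}},
\]
and to run a bootstrap argument in which the decay of $G^{\mu\nu}$ provided by \cref{lem:dyn:G} feeds a self-improving iteration of the inhomogeneous estimates of \cref{sec:inhom}. Throughout, I would commute freely with $\Vc=\{T,r\partial_r|_\Sigma,\mathcal O_i\}$; the commutators $[\Box_g,Z]$ produce only terms that, by \cref{ass:dyn:pol}, decay strictly faster and hence can be absorbed inductively.

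First, I would establish the basic analytic framework for perturbations of Schwarzschild: non-degenerate energy boundedness (\cref{prop:inhom:energy}), integrated local energy decay (\cref{prop:inhom:iled}) and the Dafermos--Rodnianski $r^p$-hierarchy (\cref{prop:inhom:rp}), all proved for $\Box_{g_M}$ there, applied here with $F$ absorbed on the right-hand side. Because $G^{\mu\nu}$ is small and polynomially decaying, the resulting bulk and boundary error terms are controlled after a small Grönwall argument in $\tau$, as in \cite{metcalfe_prices_2011}. Starting from the crude bootstrap assumption $E_N[\Vc^k\phi](\tau)\lesssim 1$ uniformly in $\tau$, an application of \cref{prop:inhom:rp} with $p=2$ and the usual dyadic mean-value-theorem argument yields $\tau^{-2}$ energy decay up to a loss from the inhomogeneity.

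Next, I would iterate: given the bootstrap $\phi\lesV[\Vc]\tau^{-\alpha_0}$ with $\alpha_0\geq 0$, \cref{lem:dyn:G} combined with \cref{ass:dyn:pol} yields
\[
rF\lesV r^{-2}\tau^{-\alpha_0-\delta}\min_{q\in[-1,1]}\!\bigl(\tfrac{\tau}{r}\bigr)^q,
\]
so \cref{prop:inhom:energydecay} applies with $\tilde p = 2$ and $\eta = \alpha_0+\delta$, producing $E_N[\Vc^k\phi](\tau)\lesssim \tau^{-2-2\min(\alpha_0+\delta,\,1/2)}$. Commuting once with $T$ and converting $T$-derivatives to $r\pv$-derivatives via \cref{lem:inhom:Tconversion} gives extra decay for time derivatives, which upon passing to the pointwise level through \cref{cor:inhom:decay} yields
\[
r\phi \lesV[\Vc] \tau^{-1/2-\min(\alpha_0+\delta,\,1/2)-}.
\]
Iterating finitely many times closes the bootstrap at $r\phi\lesV[\Vc] \tau^{-1/2}$, with regularity under $\Vc$.

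It remains to upgrade to regularity under $\tau T$ and to the improved interior decay $r\phi \lesV \tau^{-1/2}\min_{q\in[0,1]}(r/\tau)^q$. For the first point, each $T$-commutation gains one full power of $\tau$-decay for the energy (by the hierarchy with $p=2$ applied to $T\phi$ and $\fbulk{\cdot}{\cdot}[TF]$ estimated via the additional $\tau$-weight in \cref{eq:dyn:pol:Omega}--\cref{eq:dyn:pol:trx}), which translates into $\tau T\phi$ satisfying the same pointwise bound as $\phi$. For the second point, I would use the elliptic estimate \cref{prop:inhom:elliptic1} applied to $\phi$ with $q\in[-1,1]$: the right-hand side is controlled by the $T$-commuted fluxes on $\Sigma_\tau$ (which decay faster, as just established) plus the spacetime $r$-weighted $L^2$ norm of $r^2F$, which by the improved $\tau$-decay away from $\scrip$ in \cref{ass:dyn:pol} enjoys the full $\min_{q\in[0,1]}(\tau/r)^q$-improvement.

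The main obstacle is to confirm that the basic Schwarzschild estimates (integrated local energy decay, $r^p$-hierarchy, redshift) survive the dynamical perturbation \emph{uniformly in time}. This requires that all error terms generated by $G^{\mu\nu}$ and its derivatives be absorbable either by the built-in spacetime coercivity of these estimates or by a Grönwall argument; the smallness and the global $\tau^{-\delta}$-decay in \cref{ass:dyn:pol}, together with the $r$-decay of $G$ towards $\scrip$ from \cref{lem:dyn:G}, are precisely what makes this work.
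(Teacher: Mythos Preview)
Your overall strategy---treat $\Box_g\phi=0$ as an inhomogeneous Schwarzschild equation, establish the basic estimates perturbatively, then iterate to reach $\alpha=0$---is correct and converges with the paper's route after the first step. The difference lies in how the initial estimate $\psi\lesssim\tau^{-1/2}$ (and its $\Vc$-commuted analogues) is obtained. The paper's sketch uses the compatible-currents formalism of \cite{dafermos_quasilinear_2022}: one constructs a single current $J^{V,w,q,\varphi}[g_M,\phi]$ whose divergence identity is geometric and hence automatically robust under the replacement $g_M\to g$, packaging energy boundedness, ILED and the $r^p$-hierarchy at once and avoiding an explicit absorption step. You instead apply the individual estimates of \cref{sec:inhom} with the $G$-dependent terms kept on the right-hand side. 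This also works (the paper itself remarks that one may treat the equation as inhomogeneous on Schwarzschild at this stage), but your phrase ``because $G^{\mu\nu}$ is small'' is inaccurate: \cref{ass:dyn:pol} carries no smallness parameter, only polynomial decay. The closure of your bootstrap therefore hinges on the $\tau$-\emph{integrability} of $G$ in the bounded-$r$ region (the rate $\tau^{-1-\delta}$ is what makes the Gr\"onwall exponential converge) and on the $r$-decay of $G$ near $\scrip$ (allowing absorption into the $r^p$ bulk coercivity), not on a smallness parameter. Once the basic decay is in hand, your iteration to $\alpha=0$, the upgrade to $\tau T$-regularity, and the elliptic improvement away from $\scrip$ match the paper's sketch.
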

\begin{proof}[Sketch of proof]The cleanest way to prove this is to use the methods of compatible currents, where the argument simply becomes a stability argument around the Schwarzschildean wave equation. Indeed, in the notation of \cite{dafermos_quasilinear_2022}, we may construct a quadruple $V,w,q,\varphi$ such that the divergence of the associated compatible current $J^{V,w,q,\varphi}[g_M,\phi]$ satisfies appropriate coercivity properties, and such that the fluxes control the nondegenerate energy and the $r^p$-flux. 
After applying the divergence theorem\footnote{For this, we have to work with the smoothened out level sets $\tilde{\Sigma}_{\tilde{\tau}}$ of $\tilde{\tau}$. }, the resulting identity, being geometric in nature, then turns out to be robust with respect to perturbations of $g_M$ to $g$ and is sufficient to give the estimate $\psi\lesssim \tau^{-1/2}$. One proves an analogous estimate for commutations with $\sl, T$ and $r\partial_r|_{\Sigma}$. 
The improved $\tau$-decay of time-derivatives, and the improved $\tau$-decay away from $\scrip$, is then obtained exactly as in \cref{sec:inhom} (after a few iterations). (In fact, one may at this stage treat $\Box_g \phi=0$ as inhomogeneous equation on Schwarzschild again, which however requires one extra iteration if $\delta\leq 1/2$.)
\end{proof}

\small{\bibliographystyle{alpha}
\bibliography{allBib}}
\end{document}